\documentclass{article}
\usepackage{color, xcolor, colortbl}
\usepackage{graphicx}
\usepackage{geometry}
\usepackage{amsthm,amsmath,amssymb,amsfonts}
\usepackage{algorithm}
\usepackage{algorithmic}
\usepackage{dcolumn}
\usepackage{epstopdf}
\usepackage{bm}
\usepackage[caption=false]{subfig}
\usepackage{appendix}
\usepackage{multirow}
\usepackage{braket}
\usepackage[english]{babel}
\usepackage{hyperref}
\usepackage[capitalize]{cleveref}
\usepackage{xpatch}
\usepackage{tikz}
\usepackage{adjustbox}
\usepackage{xspace}

\newcommand{\mc}[1]{\mathcal{#1}}

\newcommand{\norm}[1]{\left\lVert#1\right\rVert}

\theoremstyle{definition}
\newtheorem{thm}{Theorem}
\newtheorem{lem}[thm]{\protect\lemmaname}
\newtheorem{rem}{\protect\remarkname}

\newtheorem{cor}[thm]{\protect\corollaryname}
\newtheorem{assumption}[thm]{\protect\assumptionname}
\newtheorem{defn}[thm]{\protect\definitionname}
\newtheorem{example}[thm]{\protect\examplename}

\newtheorem{proposition}[thm]{Proposition}
\newtheorem{corollary}[thm]{Corollary}

\numberwithin{equation}{section}
\numberwithin{thm}{section}
\numberwithin{rem}{section}

\providecommand{\definitionname}{Definition}
\providecommand{\assumptionname}{Assumption}
\providecommand{\corollaryname}{Corollary}
\providecommand{\lemmaname}{Lemma}
\providecommand{\propositionname}{Proposition}
\providecommand{\remarkname}{Remark}

\providecommand{\examplename}{Example}

%

\usetikzlibrary{fit}
\tikzset{%
  highlight/.style={rectangle,rounded corners,fill=blue!15,draw,fill opacity=0.3,thick,inner sep=0pt}
}

%


\usepackage{authblk}

\usepackage[normalem]{ulem}
\usepackage{amsthm}

\usetikzlibrary{decorations.pathreplacing,calc} 
\allowdisplaybreaks
\begin{document}

\title{Unified error bounds for perturbations of non‑Markovian open quantum systems in Gaussian environments}
\author[1]{Zhen Huang}
\author[2]{Yuanran Zhu}
\author[3]{Gunhee Park}
\author[1,2]{Lin Lin}
\affil[1]{Department of Mathematics, University of California, Berkeley, California, 94720, USA}
\affil[2]{Applied Mathematics and Computational Research Division, Lawrence Berkeley National Laboratory, Berkeley, California, 94720, USA}
\affil[3]{Division of Engineering and Applied Science, California Institute of Technology, Pasadena, California, 91125, USA}

\maketitle
\begin{abstract}
We present perturbative error bounds for the non-Markovian dynamics of observables  in  open quantum systems interacting with Gaussian environments, governed by a general Liouville dynamics. This extends the work of [Mascherpa et al., Phys. Rev. Lett. 118, 100401, 2017], which demonstrated qualitatively tighter bounds over the standard Gr\"onwall-type inequality for unitary system-bath evolution. Our results apply to systems with both bosonic and fermionic environments.  Our approach utilizes a superoperator formalism, which avoids the need for formal coherent state path integral calculations, or the dilation of Lindblad dynamics into an equivalent unitary framework with infinitely many degrees of freedom. This enables a unified treatment of a wide range of open quantum systems.  These findings provide a solid theoretical basis for various recently developed pseudomode methods in simulating open quantum system dynamics.
\end{abstract}

\vspace{1em}
\section{Introduction}
\label{sec:intro}
Quantum systems that interact with environments characterized by a large number of degrees of freedom are known as open quantum systems \cite{breuer2007open}. Under certain conditions, such as weak coupling or the presence of well-separated time scales, a Markovian approximation may be effectively applied, allowing for the neglect of environmental memory effects.  However, in general cases where information exchange occurs between the system and environment, these non-Markovian memory effects become significant, rendering the Markovian approximation insufficient. Such non-Markovian effects play a critical role in various disciplines, including  condensed matter physics \cite{LeggettRevModPhys.59.1, Hewson1993}, solid-state quantum devices~\cite{ChirolliBurkard2008review,Rotter_2015}, chemical physics \cite{Ishizaki2009, HuelgaPlenio2013review}, quantum thermodynamics~\cite{EspositoOchoa2015, TalknerHanggi2020} and quantum computation \cite{li2023succinct}.

Consider the dynamics of a quantum system (denoted by S) coupled with an environment (denoted by E), with $\mathcal H_{\text S}$ and $\mathcal H_{\text E}$ being the Hilbert spaces for the system and the environment, respectively. The total Hilbert space is $\mathcal H = \mathcal H_{\text E} \otimes \mathcal H_{\text S}$. Given an observable $\hat{O}_S$ acting on $\mathcal H_{\text S}$, the goal is to compute its time-dependent expectation value denoted by $O_{\text S}(t)$ by tracing out the contribution from $\mathcal H_{\text E}$ (see \cref{eq:rho_partial_trace}, \cref{eqn:expectation_O}). The most common example is that the system and environment together are described by a unitary dynamics on $\mathcal H$. By tracing out $\mathcal H_{\text E}$, the effective dynamics (called an open quantum dynamics) on $\mathcal H_{\text S}$ is in general non-Markovian. To make this problem analytically tractable, the environment as well as the system-environment coupling must be sufficiently simple. For instance, in a spin-boson model, the spin degree of freedom is viewed as the system, and the bosonic modes are viewed as the environment. In a fermionic impurity model, a few fermionic modes describing the fermionic impurities are viewed as the system, and the remaining fermionic modes describing the conduction electrons are viewed as the environment. Many numerical methods for simulating open quantum systems~\cite{PriorChinHuelgaEtAl2010, Vega2015, DeVegaAlsonso2017, TamascelliSmirneLimEtAl2019} are formulated in this unitary setting.

However, when $\mathcal H$ is a finite-dimensional space, the unitary dynamics on $\mathcal H$ only has a finite number of modes. Therefore, a large number of degrees of freedom in the environment may be needed to describe the long-time behavior of the system dynamics correctly.  An alternative way to tackle this problem is to consider a more general, non-unitary Liouville dynamics on $\mc{H}$, such as the Lindblad dynamics. This Liouville description  encompasses the unitary dynamics as a special case. It is also compatible with the unitary description, in the sense that a Lindblad dynamics can always be described by a unitary dynamics using an infinitely sized environment via a dilation process~\cite{HudsonParthasarathy1984,TamascelliSmirneHuelga2018}.  The advantage is that the number of degrees needed for the environment using the Lindblad dynamics can be significantly smaller.  This is the basis of various ``pseudomode'' theories used in practice \cite{Garraway1997,Schwarz2016,TamascelliSmirneHuelga2018,Mascherpa2020,Lotem2020, Brenes2020, Zwolak2020,Trivedi2021,lednev2024lindblad}. 

In this paper, we consider the open quantum system dynamics undergoing a general Liouville dynamics (see \cref{sec:preliminaries} for the setup in an abstract setting, and \cref{sec:applications} for applications to spin-boson and fermionic models). We assume that in the interaction picture formulation, each system-environment coupling operator satisfies Wick's conditions (see \cref{defn:Wick_condition}). Such an environment is called Gaussian, since the multi-point correlation function is characterized by the two-point bath correlation functions (BCFs), denoted as $C_{\alpha,\alpha'}(t-t')$ ($\alpha,\alpha' = 1,\cdots, N$)\footnote{This is the most common scenario in practice. Our results can also be stated with respect to a more general bath correlation function of the form $C_{\alpha,\alpha'}(t,t')$ (see \cref{rmk:single_variable_Corr}).}. Here, $N$ denotes the number of system-environment coupling operators. In such systems, the environment's influence on the reduced system dynamics is completely described by the BCF~\cite{breuer2007open}. However, we may only have approximate knowledge of the BCF, e.g., its values may be obtained by fitting to experimental results. Our main question is:

\vspace{1em}
\emph{How do errors in the bath correlation functions affect the expectation values of system observables?}
\vspace{0.5em}

Specifically, for a given system observable $\hat{O}_S$, if the two-point BCF is perturbed by a small quantity $\Delta C(t-t')$, how does the time-dependent expectation value $O_{\text S}(t)$ depend on the perturbation?

Since the Liouville dynamics is linear, we can derive the following  error estimate for $O_{\text S}(t)$, which we refer to as a Gr\"{o}nwall-type bound due to its resemblance to bounds obtained using the classical Gr\"{o}nwall inequality:
\begin{thm}[Gr\"{o}nwall-type error bound for system observables]
    Let $C(t-t')$ and $C'(t-t') = C(t-t')+\Delta C(t-t')$ be two-point BCFs corresponding to two different environments. For a bounded system operator $\hat O_{\text S}$, let $O_{\text S}(t)$, $O_{\text S}'(t)$ be the expectation value of system observables corresponding to the two environments. Then, we have
    $$|O_{\text S}(t) - O_{\text S}'(t)| \leq \|\hat O_{\text S}\| {\epsilon_1 t}\mathrm e^{\mathcal M_1 t}, \quad \text { for } t \in[0, T],$$
    where $\|\hat O_{\text S}\|$ is the operator norm of $\hat O_{\text S}$, $\mathcal M_1=  \max\{ \sum_{\alpha,\alpha'=1}^N   \|  C_{\alpha,\alpha'}  \|_{L^{1}([0,T])}, \sum_{\alpha,\alpha'=1}^N   \|  C'_{\alpha,\alpha'} \|_{L^{1}([0,T])}\}$, $\epsilon_1 = \sum_{\alpha,\alpha'=1}^N   \|  C_{\alpha,\alpha'} - C'_{\alpha,\alpha'} \|_{L^{1}([0,T])} $,  and $\|\cdot\|_{L^1[0,T]}$ is the $L^1$ norm of a function on $[0,T]$.
\label{lem:Gronwall_informal}
\end{thm}

According to \cref{lem:Gronwall_informal}, achieving a precision of $\epsilon$ for $O_{\text S}(t)$ up to time $T$ requires that the BCF be computed with a precision of $\epsilon_1 = O\left(\epsilon e^{-\mathcal{M}_1 T}\right)$. This stringent requirement becomes impractical for any time $T$ beyond $\log(1/\epsilon)$. In this work, we establish a substantially improved error bound that relaxes the required precision to $\epsilon_1 = O\left(\epsilon / T\right)$. Such an improvement allows simulations to be conducted over significantly longer time intervals while maintaining controlled precision on system observables.

\begin{thm}[Main result, for system observables, also see {\cref{thm:main_error_bound}}]
    Let $C(t-t')$, $C'(t-t')$, $O_{\text S}(t)$, $O_{\text S}'(t)$ and $\epsilon_1$ be the same as in Lemma \ref{lem:Gronwall_informal}. Then, we have
    $$ |O_{\text S}(t) - O_{\text S}'(t)| \leq \|\hat O_{\text S}\|(\mathrm{e}^{ \epsilon_1 t}-1), \quad \text { for } t \in[0, T]. $$
    \label{thm:main}
\end{thm}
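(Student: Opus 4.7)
My plan is to exploit the Gaussian (Wick) structure to express the reduced system dynamics as a time-ordered exponential whose exponent depends \emph{linearly} on the bath correlation function $C$, and then to compare the two such exponentials so that only $\delta C$-dependent contributions survive. This lets us avoid invoking Gr\"onwall on the full propagator and instead expand directly in the small parameter $\delta C$.

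\textbf{Step 1 (Wick expansion).} In the interaction picture I would write $O_{\text S}(t) = \Tr_{\text S}[\hat O_{\text S}\,\Phi_t^C(\rho_{\text S})]$, where the reduced superoperator $\Phi_t^C$ is obtained by Dyson-expanding the coupling part of the Liouvillian and then tracing out the environment order by order. Invoking the Wick condition (\cref{defn:Wick_condition}), the $2k$-th order term becomes a sum over pairings of the $2k$ coupling vertices; each pairing contributes a product of $k$ two-point BCFs $C_{\alpha\alpha'}(t_i-t_j)$ multiplying a time-ordered string of $2k$ system coupling superoperators together with $\hat O_{\text S}$.

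\textbf{Step 2 (resummation and contraction).} Summing over Wick pairings recasts this series as a formal time-ordered exponential
\begin{equation*}
\Phi_t^C \;=\; \mathcal T\exp\!\Bigl(\int_0^t\!\!\int_0^t \sum_{\alpha,\alpha'} C_{\alpha\alpha'}(s-s')\,\mathcal J_{\alpha\alpha'}(s,s')\,\ud s\,\ud s'\Bigr),
\end{equation*}
in which the $C$-dependence of the exponent is \emph{linear} and $\mathcal J_{\alpha\alpha'}(s,s')$ is a time-dependent superoperator built out of the system coupling operators. The second essential input is a uniform contraction estimate $|\Tr_{\text S}[\hat O_{\text S}\,\Phi_t^C(\rho_{\text S})]|\le\|\hat O_{\text S}\|$ holding for every admissible $C$; in the superoperator formalism this should follow from the fact that the full Liouvillian generates a trace-preserving (equivalently, Heisenberg-unital) evolution on $\mathcal H$, so its reduction onto the system is a contraction in the dual operator norm.

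\textbf{Step 3 (expansion in $\delta C$).} The linearity of the exponent in $C$ gives formally $\Phi_t^{C'} = \Phi_t^C\cdot \mathcal T\exp\!\bigl(\int\!\int \delta C\cdot\mathcal J\bigr)$, whence
\begin{equation*}
O_{\text S}'(t) - O_{\text S}(t) \;=\; \Tr_{\text S}\!\Bigl[\hat O_{\text S}\,\Phi_t^C\bigl(\mathcal T e^{\int\!\int \delta C\cdot\mathcal J} - I\bigr)\rho_{\text S}\Bigr].
\end{equation*}
Expanding the $\mathcal T$-exponential as $\sum_{k\ge 1}\tfrac{1}{k!}(\int\!\int \delta C\cdot\mathcal J)^k$, the contraction estimate of Step 2 absorbs both the $\Phi_t^C$ prefactor and every $C$-propagator generated by the interlaced time-ordering, so that each $k$-th term is bounded by $\|\hat O_{\text S}\|\cdot \tfrac{1}{k!}(N\|C-C'\|_{L^1([0,T])}\,t)^k = \|\hat O_{\text S}\|(\epsilon_1 t)^k/k!$. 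Summing over $k\ge 1$ yields exactly $\|\hat O_{\text S}\|(e^{\epsilon_1 t}-1)$.

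\textbf{Main obstacle.} The delicate point is Step 2: making rigorous the bilinear-in-$C$ time-ordered exponential representation in the \emph{general} Liouville (non-unitary) setting, and establishing the uniform contraction bound on $\Phi_t^C$ \emph{before} the expansion in $\delta C$. In the unitary case treated by Mascherpa et al., the analogous step is trivial via the Feynman--Vernon influence functional satisfying $|\mathcal F|\le 1$; in the superoperator setting it must instead rest on structural boundedness of the Liouville semigroup (which is precisely what the paper's abstract framework in \cref{sec:preliminaries} is designed to supply). Properly handling the time-ordering—since the $\mathcal J_{\alpha\alpha'}(s,s')$ at different times do not commute, so that the formal identity $\Phi_t^{C'} = \Phi_t^C\cdot\mathcal T e^{\int\!\int\delta C\cdot\mathcal J}$ must be read as a statement about the combined time-ordered diagrammatic series rather than a literal product of operators—is the main source of technical care.
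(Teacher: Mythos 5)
Your overall route is the paper's route: Wick-resum the Dyson series into a time-ordered exponential whose exponent is linear in the BCF (this is exactly \cref{thm:rhoS_Dyson_short} / \cref{rmk:formal_exponential_formula}), factor out the $\delta C$ part under time-ordering, and bound each term of the $\delta C$-expansion uniformly so the series sums to $\mathrm e^{\epsilon_1 t}-1$. However, there is a genuine gap precisely at the step you flag as the ``main obstacle,'' and your proposed fix does not close it. The contraction estimate you state in Step 2, $|\Tr_{\mathrm S}[\hat O_{\mathrm S}\,\Phi_t^C(\rho_{\mathrm S})]|\le\|\hat O_{\mathrm S}\|$, is a statement about the reduced map $\Phi_t^C$ applied to a density operator; it does not bound the objects that actually appear after expanding in $\delta C$, namely the $C$-dressed correlators
$\mathrm e^{\boldsymbol L_{\mathrm s}t}\,\mathbf T^{\pm}\bigl(\prod_{j}\boldsymbol S_{\alpha_j}(t_j)\,\mathrm e^{\frac12\int\!\!\int C\,\boldsymbol S\boldsymbol S}\bigr)\hat\rho_{\mathrm S}(0)$,
in which the $\delta C$ vertices are interlaced with the $C$-vertices under the time-ordering. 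Saying that the contraction estimate ``absorbs every $C$-propagator generated by the interlaced time-ordering'' is exactly the assertion that needs proof; as written, these dressed correlators are not of the form ``contraction applied to a trace-norm-one object,'' and in the non-unitary setting there is no influence-functional bound $|\mathcal F|\le 1$ to fall back on.

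The paper's resolution supplies the missing mechanism, and you would need something equivalent. First, the rearrangement of the double series ($C$-vertices versus $\delta C$-vertices) is justified by absolute convergence (\cref{lem:correlator_formal}). Second, and crucially, the dressed correlator is bounded by \emph{reversing} the Wick resummation: using the interaction-picture Dyson identity of \cref{lem:int_picture_rigorous} and a multinomial counting argument, \cref{lem:correlator_return_to_extended_system} shows it equals $(\pm)^{\sigma}\operatorname{tr}_{\mathrm E}\bigl(\mathrm e^{\boldsymbol L(t-t_{\sigma(1)})}\widetilde{\boldsymbol S}_{\alpha_{\sigma(1)}}\cdots\mathrm e^{\boldsymbol L t_{\sigma(2n)}}\hat\rho(0)\bigr)$, i.e.\ the partial trace of the \emph{joint} contractive semigroup $\mathrm e^{\boldsymbol L t}$ with bounded superoperator insertions; then $\|\mathrm e^{\boldsymbol Lt}\|_{\mathrm{tr}}\le1$, $\|\widetilde{\boldsymbol S}_{\alpha}\|_{\mathrm{tr}}=1$ and the fact that partial trace does not increase trace norm give the uniform bound $1$ for every term, independently of $C$. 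Note also that in the fermionic case the extension $\widetilde{\boldsymbol S}_{\alpha}$ cannot be taken as $\boldsymbol 1_{\mathrm e}\otimes\boldsymbol S_{\alpha}$: one must dress it with the environment parity superoperator, $\widetilde{\boldsymbol S}_{\alpha}=\boldsymbol P_{\mathrm e}\otimes\boldsymbol S_{\alpha}$ (\cref{sec:proof_correlator_return_to_extended_system_fermionic}), to make the $\mathbb Z_2$-graded signs come out while preserving $\|\widetilde{\boldsymbol S}_{\alpha}\|_{\mathrm{tr}}=1$ --- a point your proposal does not address. Without this lift back to the extended system (or an alternative uniform bound on the dressed correlators), the per-term estimate $\|\hat O_{\mathrm S}\|(\epsilon_1 t)^k/k!$ in your Step 3 is unsupported, and the argument reduces to the formal manipulation the paper explicitly distinguishes from a proof.
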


This error bound can be directly applied to simulating non-Markovian open quantum systems using pseudomode theories based on the Lindblad dynamics \cite{Garraway1997,Schwarz2016,TamascelliSmirneHuelga2018,Mascherpa2020,Lotem2020, Brenes2020, Zwolak2020,Trivedi2021,lednev2024lindblad}. To further reduce the number of degrees of freedom in the environment, a number of ``quasi-Lindblad'' pseudomode theories have also been proposed in recent years \cite{Lambert2019, Pleasance2020, Cirio2023, menczel2024nonhermitian,ParkHuangZhuetal2024,ThoennissVilkoviskiyAbanin2024}, which make use of non-Hermitian dynamics, or dynamics that may violate the complete positivity (CP) condition. We establish an error bound analogous to \cref{thm:main} in \cref{cor:quasi_lind_dynamics} under the setting of \cite{ParkHuangZhuetal2024}. Our result is also relevant beyond pseudomode theories, such as in the context of hierarchical equations of motion~\cite{Tanimura1989, Tanimura2020, Stockburger2022, DanShi2023, Nori2023qutip, xu2023universal, ivander2024unifiedframeworkopenquantum}, which can be interpreted as a certain quasi-Lindblad dynamics~\cite{xu2023universal, ivander2024unifiedframeworkopenquantum}.  

To our knowledge, an improved error bound similar to \cref{thm:main} was first proved in \cite{MascherpaSmirneHuelgaetal2017} for spin-boson systems undergoing a unitary dynamics. Their proof uses coherent state path integrals. However, the spin coherent path integrals, and other continuous-time  coherent state path integrals, may exhibit subtleties and unexpected breakdowns \cite{WilsonGalitski2011, Kochetov2019}. As a result, derivations based on coherent state path integrals should not be identified with rigorous proofs in general (see also \cite{KordasMistakidisKaranikas2014, KORDAS2016226}). Recently,  \cite{LiuLu2024} provides a rigorous proof under the settings of \cite{MascherpaSmirneHuelgaetal2017} for a unitary spin-boson dynamics, using a diagrammatic and combinatorial argument similar to that used in the analysis of inchworm algorithms \cite{CaiLuYang2020}. It is not yet clear whether the framework in \cite{LiuLu2024} could be easily generalized to non-unitary dynamics without much complication, especially with Lindblad-like jump operators. Furthermore, though all such arguments could be generalized to fermionic systems in principle, the fermionic case involves the Grassmann number (see, e.g., \cite{NegeleOrland1998}) or equivalent algebraic structures, which can be technically more complicated than their bosonic counterpart. To our knowledge, an error bound of the type of \cref{thm:main} has not been established in the literature for fermionic systems.

In this paper, we provide a mathematically rigorous proof of the error bound in \cref{thm:main} using the superoperator formalism \cite{Aurell_2020,cirio2022canonical}. This method is simple and rigorous and provides a unified analysis framework for various physical settings. By \textit{unified}, we mean that it treats unitary and non-unitary dynamics, as well as bosonic and fermionic environments, on the same footing. In particular, in the case of the fermionic environment, this approach avoids direct manipulation of Grassmann algebras.

This article is organized as follows. In \cref{sec:preliminaries}, we first describe our problem setup, introduce the general Liouville dynamics for open quantum systems, and introduce Wick's conditions. In \cref{sec:gronwall}, we give a proof for the Gr\"{o}nwall-type error bound (see \cref{thm:Gronwall_error_bound}). \cref{sec:main} presents the main results of the paper, where we also provide an informal yet intuitive argument why this improved error bound holds. We give a rigorous proof for our improved error bound (see \cref{thm:main_error_bound}) in \cref{sec:proof_main}, which is valid for general Liouville dynamics. The applications of the main theorem to different open quantum system dynamics are considered in \cref{sec:applications}, where we use the spin-boson model (\cref{sec:spin_boson}) and fermionic impurity model (\cref{sec:fermion}) as examples for bosonic and fermionic environments, respectively. For both models, we discuss the original unitary dynamics and  Lindblad pseudomode dynamics, demonstrating our results for both unitary and non-unitary cases. Finally, we discuss the quasi-Lindblad dynamics in \cref{sec:quasi-Lind}, which violates the CP condition and contractive property.

\subsection*{Acknowledgements}
The work is partially supported by the Simons Targeted Grants in Mathematics and Physical Sciences on Moir\'e Materials Magic (Z.H., L.L.), by the U.S. Department of Energy, Office of Science, Office of Advanced Scientific Computing Research and Office of Basic Energy Sciences, Scientific Discovery through Advanced Computing (SciDAC) program under Award Number DE–SC0022088 (G.P.), DE–SC0022198 (Y.Z.). This material is also based upon work supported by the U.S. Department of Energy, Office of Science, Accelerated Research in Quantum Computing Centers, Quantum Utility through Advanced Computational Quantum Algorithms under Grant Number DE-SC0025572 (L.L.). The authors would like to thank Garnet Kin-Lic Chan, Zhiyan Ding, Jianfeng Lu, Michael Ragone and Chao Yang for helpful discussions. 
\subsection*{Data Availability}
No new data were created or analyzed in this study. Data sharing is not applicable to this article.

\section{Error bounds of open quantum systems under  Liouville dynamics}

We consider the time evolution of  open quantum systems under a Liouville dynamics. In this section, after introducing  the problem setup, we obtain the Dyson series for a reduced system density operator (see \cref{thm:rhoS_Dyson_short}). A Gr\"onwall-type error bound can be derived directly from the series expansion of this Dyson series. Achieving a sharper bound, however, requires a resummation of the infinite series. Formally, this improved error bound can be obtained via (anti-)commuting operators under time-ordering. This formal proof is made rigorous by  employing a rigorous combinatorial argument.

This section provides a unified framework for various physical settings of interest to this work, while in \cref{sec:applications} and \cref{sec:quasi-Lind}, we introduce concrete examples for both bosonic and fermionic environments, and for both unitary and non-unitary dynamics, demonstrating the applications of \cref{thm:rhoS_Dyson_short} and \cref{thm:main_error_bound} in different scenarios.

\subsection{Preliminaries}
\label{sec:preliminaries}

Let \( \mathcal{H}_{\text S} \) and \( \mathcal{H}_{\text E} \) denote the Hilbert spaces of the (sub)system and the environment, respectively. The total Hilbert space of the combined system is then given by $\mathcal{H} = \mathcal{H}_{\text E} \otimes \mathcal{H}_{\text S}$. We assume the subsystem is finite-dimensional. The environment $\mathcal{H}_{\text E}$ is assumed to be a separable Hilbert space, potentially infinite-dimensional (e.g., a bosonic or fermionic Fock space).  We denote by $B(\mathcal{H})$ the Banach space of bounded linear operators on $\mathcal{H}$, equipped with the operator norm $\|\cdot\|$. We denote by $B_1(\mathcal{H})$ the Banach space of trace-class operators on $\mathcal{H}$, equipped with the trace norm $\|\cdot\|_{\text{tr}}$. The state of the system is described by a density operator $\hat\rho \in B_1(\mathcal{H})$, which is positive semi-definite  and has unit trace.

The time evolution of the density operator $\hat{\rho}(t)$ for the composite system (S + E) is governed by a strongly continuous one-parameter semigroup $(\Phi_t)_{t\geq 0} = (\mathrm{e}^{\boldsymbol{L}t})_{t\geq 0}$ acting on $B_1(\mathcal{H})$, such that
\begin{equation}
    \hat{\rho}(t) = \mathrm{e}^{\boldsymbol{L}t} \hat{\rho}(0).
\end{equation}
The generator $\boldsymbol{L}$, or the Liouvillian, is a densely defined, closed superoperator (linear map) on $B_1(\mathcal{H})$. We assume the Liouvillian $\boldsymbol{L}$ decomposes according to the system-environment structure:
\begin{equation}
    \boldsymbol{L} = \boldsymbol{L}_{\text{0}} + \boldsymbol{L}_{\text{\text{SE}}}.
    \label{eq:L_decomp}
\end{equation}
To define these components, we utilize the tensor product of superoperators. Let $\boldsymbol{S}$ and $\boldsymbol{E}$ act on $B_1(\mathcal{H}_{\text S})$ and $B_1(\mathcal{H}_{\text E})$, respectively. The tensor product $\boldsymbol{E} \otimes \boldsymbol{S}$ is defined on elementary tensors by
\begin{equation}
   (\boldsymbol{E} \otimes \boldsymbol{S}) (\hat{A} \otimes \hat{B}) := \boldsymbol{E}(\hat{A}) \otimes \boldsymbol{S}(\hat{B}),
\end{equation}
and extended by linearity and continuity (or by taking the closure, if the operators are unbounded).

Let $\boldsymbol{L}_{\text s}$ and $\boldsymbol{L}_{\text e}$ be the superoperators acting on $B_1(\mathcal{H}_{\text S})$ and $B_1(\mathcal{H}_{\text E})$, respectively. The free part of the Liouvillian is
\begin{equation}
    \boldsymbol{L}_{0} =  (\boldsymbol{1}_{\text e} \otimes \boldsymbol{L}_{\text s})+(\boldsymbol{L}_{\text e} \otimes \boldsymbol{1}_{\text s}),
\end{equation}
where $\boldsymbol{1}_{\text s}$ and $\boldsymbol{1}_{\text e}$ are the respective identity superoperators.  The interaction Liouvillian $\boldsymbol{L}_{\text{SE}}$ takes a separable form:
\begin{equation}
    \boldsymbol{L}_{\text{SE}} = \sum_{\alpha=1}^N \boldsymbol{E}_{\alpha} \otimes \boldsymbol{S}_{\alpha}.
    \label{eq:decomp_detail}
\end{equation}

Since $\mathcal{H}_{\text S}$ is assumed to be finite dimensional, $\boldsymbol{L}_{\text s}$ is bounded. In contrast, when $\mathcal{H}_{\text E}$ is infinite-dimensional, the Liouvillian $\boldsymbol{L}$ is typically unbounded. We assume that $\boldsymbol{L}_{\text{SE}}$ satisfies conditions relative to $\boldsymbol{L}_0$ (e.g., relative boundedness) such that the closure of their sum is the generator of a strongly continuous semigroup on $B_1(\mathcal{H})$~\cite{engel2000one,chebotarev1998sufficient}.

An important case is when the composite system is closed and evolves unitarily. The dynamics is generated by a total Hamiltonian $\hat H$, a self-adjoint operator on $\mathcal{H}$. The Liouvillian is given by the commutator: $ \boldsymbol{L}(\cdot) = -\mathrm{i}[\hat H, \cdot]$.

The Hamiltonian is structured as:
\begin{align*}
\hat H = \hat H_0 + \hat H_{\text{SE}},\quad \hat H_0 = \hat 1_{\text e}\otimes \hat H_{\text s}+\hat H_{\text e}\otimes \hat 1_{\text{s}},\quad \hat H_{\text{SE}} = \sum_{k=1}^{N_{\text U}} \hat E_k\otimes\hat S_k,
\end{align*}
where $\hat H_0$ represents the non-interacting part of the Hamiltonian, and \(\hat H_{\text{SE}}\)  describes the interaction between the subsystem and the environment. Here $\hat 1_{\text e/\text s}$  are identity operators on $\mathcal H_{\text e/\text s}$, and $\hat H_{\text s}, \hat S_k \in B(\mathcal{H}_{\text s})$. The environment operators $\hat H_{\text e}, \hat E_k$ are generally unbounded, densely defined operators on $\mathcal{H}_{\text e}$. The system and environment Liouvillians are 
$
\boldsymbol L_{\text s } = -\mathrm i[\hat H_{\text s }, \cdot]$, $\boldsymbol L_{ \text e} = -\mathrm i[\hat H_{ \text e}, \cdot]$, and the superoperators $\boldsymbol E_{\alpha},\boldsymbol{S}_{\alpha}$ are defined  from the operators $\hat E_k,\hat S_k$  using left- and right-multiplication superoperators (see \cref{sec:applications} for details). 

We utilize the induced trace norm for a superoperator $\boldsymbol{O}$ acting on $B_1(\mathcal{H})$. For a densely defined superoperator $\boldsymbol{O}$ with domain $\operatorname{Dom}(\boldsymbol{O})\subseteq B_1(\mathcal{H})$, the norm is defined as:
\begin{equation}
\|\boldsymbol{O}\|_{\text{tr}}:=\sup \{\|\boldsymbol{O}(\hat A)\|_{\text{tr}} : \hat A\in \operatorname{Dom}(\boldsymbol{O}), \|\hat A\|_{\text{tr}}=1 \}.
\end{equation}
We will repeatedly use the H\"older inequality for operator norms
\begin{equation}\label{eqn:optrace_inequality}
\|\hat O_1\hat O_2\|_{\text{tr}}\leq\|\hat O_1\|\|\hat O_2\|_{\text{tr}}, \quad \hat O_1\in B(\mathcal H), \quad \hat O_2\in B_1(\mathcal H).
\end{equation}
For instance, for $\boldsymbol{O}(\hat A)=[\hat{O},\hat{A}]$ with $\hat O\in B(\mathcal H)$, we have $\|\boldsymbol O\|_{\text{tr}}\le 2\norm{\hat O}$. Because $\mathcal{H}_{\text S}$ is finite-dimensional, each superoperator $\boldsymbol{S}_\alpha$ in \cref{eq:decomp_detail} is bounded. Without loss of generality, we may normalize them so that $\|\boldsymbol{S}_\alpha\|_{\text{tr}} = 1$ for all $\alpha = 1,\cdots, N$.

We start from an uncorrelated initial state: $\hat{\rho}(0) = \hat\rho_{\text S}(0) \otimes \hat\rho_{\text E}(0)$. The object of interest is the reduced dynamics of the subsystem, obtained by taking the partial trace over the environment:
\begin{equation}
    \hat\rho_{\text S}(t) = \operatorname{tr}_{\text E} (\hat\rho(t)).
    \label{eq:rho_partial_trace}
\end{equation}
For any system observable $\hat O_{\text S} \in B(\mathcal H_{\text S})$, the time-dependent expectation value of $\hat O_{\text S}$ is given by
\begin{equation}\label{eqn:expectation_O}
    O_{\text S}(t) = \operatorname{tr}(\hat O_{\text S} \hat\rho_{\text S}(t)).
\end{equation}
Errors in $O_{\text S}(t)$ are controlled by the errors in $\hat\rho_{\text S}(t)$ and by the operator norm of $\hat O_{\text S}$. The following lemma (a direct consequence of \eqref{eqn:optrace_inequality} and the bound $|\operatorname{tr}(\hat O_1)|\leq\|\hat O_1\|_{\text{tr}}$ for trace-class $\hat O_1$) makes this precise.

\begin{lem}
    Let $\hat O_{\text S} \in B(\mathcal H_{\text S})$ and $O_{\text S}(t)$, $O_{\text S}'(t)$ be the expectation value of $\hat O_{\text S}$ corresponding to two different density operators $\hat\rho_{\text S}(t)$ and $\hat\rho_{\text S}'(t)$, respectively. Then we have
    \begin{equation}
        |O_{\text S}(t) - O_{\text S}'(t)| \leq \|\hat O_{\text S}\| \|\hat\rho_{\text S}(t) - \hat\rho_{\text S}'(t)\|_{\text{tr}}.
    \end{equation}
    \label{lem:rho_to_O}
\end{lem}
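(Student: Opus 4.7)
The plan is to unpack the definition of the expectation value in \cref{eqn:expectation_O}, exploit the linearity of the trace, and then apply the two inequalities already cited in the paragraph immediately preceding the lemma statement. Concretely, I would first write
\begin{equation*}
O_{\text S}(t) - O_{\text S}'(t) = \operatorname{tr}\bigl(\hat O_{\text S}\hat\rho_{\text S}(t)\bigr) - \operatorname{tr}\bigl(\hat O_{\text S}\hat\rho_{\text S}'(t)\bigr) = \operatorname{tr}\bigl(\hat O_{\text S}(\hat\rho_{\text S}(t) - \hat\rho_{\text S}'(t))\bigr),
\end{equation*}
so that the scalar difference is expressed as the trace of a single operator.

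Next I would bound the absolute value of this trace by the trace norm of its argument, using the elementary inequality $|\operatorname{tr}(\hat A)|\le\|\hat A\|_{\text{tr}}$ quoted in the text. This reduces the problem to estimating $\|\hat O_{\text S}(\hat\rho_{\text S}(t) - \hat\rho_{\text S}'(t))\|_{\text{tr}}$. The final step is to apply the Hölder-type inequality \eqref{eqn:optrace_inequality}, i.e.\ $\|\hat O_1\hat O_2\|_{\text{tr}}\le\|\hat O_1\|\,\|\hat O_2\|_{\text{tr}}$, with $\hat O_1 = \hat O_{\text S}$ and $\hat O_2 = \hat\rho_{\text S}(t) - \hat\rho_{\text S}'(t)$, yielding the claimed bound.

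There is essentially no obstacle here: the bound is a one-line consequence of the two inequalities already stated in the text, and boundedness of $\hat O_{\text S}$ (i.e.\ $\hat O_{\text S}\in B(\mathcal H_{\text S})$) is precisely what guarantees that $\|\hat O_{\text S}\|<\infty$ so that the right-hand side makes sense. The only minor point worth noting is that $\hat\rho_{\text S}(t)-\hat\rho_{\text S}'(t)$ is trace class (being the difference of two density operators), so $\|\hat\rho_{\text S}(t)-\hat\rho_{\text S}'(t)\|_{\text{tr}}$ is well defined; this is automatic in the finite-dimensional setting considered here and needs no further comment.
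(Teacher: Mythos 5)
Your argument is exactly the route the paper intends: the text introduces the lemma as a direct consequence of the inequality \eqref{eqn:optrace_inequality} together with $|\operatorname{tr}(\hat O_1)|\leq\|\hat O_1\|_{\text{tr}}$, which are precisely the two steps you apply after using linearity of the trace. Your proposal is correct and matches the paper's (implicit) proof.
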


If there is no system-environment coupling, i.e., $\boldsymbol{L}_{\text{SE}} = 0$,  the total density operator evolves as $\mathrm e^{\boldsymbol{L}_0 t}\hat\rho(0)$, where $\mathrm e^{\boldsymbol{L}_0 t} = \mathrm e^{\boldsymbol{L}_{\text e} t}\otimes \mathrm e^{\boldsymbol{L}_{\text s} t}$. In general, $\hat\rho(t)$ has the following Dyson's series expansion for finite $t>0$:
\begin{equation}
    \begin{aligned}
     &   \hat\rho(t) = \mathrm e^{\boldsymbol{L}_0t}\hat\rho(0) + \int_0^t\mathrm dt_1 \mathrm e^{\boldsymbol{L}_0(t-t_1)}\boldsymbol{L}_{\text{SE}}\mathrm e^{\boldsymbol{L}_0t_1}\hat\rho(0) + \int_0^t\mathrm dt_1\int_0^{t_1}\mathrm dt_2 \mathrm e^{\boldsymbol{L}_0(t-t_1)}\boldsymbol{L}_{\text{SE}}\mathrm e^{\boldsymbol{L}_0(t_1-t_2)}\boldsymbol{L}_{\text{SE}}\mathrm e^{\boldsymbol{L}_0t_2}\hat\rho(0) + \cdots\\
    & =   \sum_{n=0}^{\infty}\iint\cdots\int_{0<t_n<\cdots<t_1<t} \mathrm e^{\boldsymbol{L}_0(t-t_1)}\boldsymbol{L}_{\text{SE}}\mathrm e^{\boldsymbol{L}_0(t_1-t_2)}\boldsymbol{L}_{\text{SE}}\cdots\mathrm e^{\boldsymbol{L}_0(t_{n-1}-t_n)}\boldsymbol{L}_{\text{SE}}\mathrm e^{\boldsymbol{L}_0t_n}\hat\rho(0)\mathrm dt_1\cdots\mathrm dt_n.
    \end{aligned}
    \label{eq:rho_perturbation_series}
   \end{equation}

Since $\boldsymbol{L}_{\text{SE}} = \sum_{\alpha=1}^N \boldsymbol{E}_{\alpha}\otimes\boldsymbol{S}_{\alpha}$ and $\hat\rho(0) = \hat \rho_{\text E}(0)\otimes \hat \rho_{\text S}(0)$, \cref{eq:rho_perturbation_series} is equivalent to the following:
\begin{equation}
    \begin{aligned}
        \hat\rho (t) =  \sum_{n=0}^{\infty}\sum_{\alpha_1,\cdots,\alpha_n=1}^N \iint\cdots&\int_{0<t_n<\cdots<t_1<t} \left(\mathrm e^{\boldsymbol{L}_{\text e}(t-t_1)}\boldsymbol{E}_{\alpha_1}\mathrm e^{\boldsymbol{L}_{\text e}(t_1-t_2)}\cdots \boldsymbol{E}_{\alpha_n}\mathrm e^{\boldsymbol{L}_{\text e}t_n}\right)\hat \rho_{\text E}(0) \\
        &\otimes \left(\mathrm e^{\boldsymbol{L}_{\text s}(t-t_1)}\boldsymbol{S}_{\alpha_1}\mathrm e^{\boldsymbol{L}_{\text s}(t_1-t_2)}\cdots \boldsymbol{S}_{\alpha_n}\mathrm e^{\boldsymbol{L}_{\text s}t_n}\right)\hat \rho_{\text S}(0)\mathrm dt_1\cdots\mathrm dt_n.
    \end{aligned}
\label{eq:rho_perturbation_series_SE}
\end{equation}

\cref{eq:rho_perturbation_series} could be formally verified by taking derivatives on both sides. We will later establish the absolute convergence of the above Dyson series for the specific cases under consideration. Specifically,  we verify that the bound given in \cref{eq:estimate} holds for the scenarios considered in this work. This bound, in turn, guarantees the absolute convergence of the series in \cref{eq:rho_perturbation_series_SE} and \cref{eq:expansion_longest}.

To take the partial trace with respect to the environment, it is natural to adopt the following trace-class assumption:

\begin{assumption}[Environment trace-class assumption]
   We assume that for all $n$  and for any $t\geq t_1\geq \cdots \geq t_n$, the following operator $\hat{\mathcal E}_{\alpha_1,\cdots,\alpha_n}(t,t_1,\cdots,t_n)$, known as the $n$-point correlator, is in the trace class:
\begin{equation}
    \hat{\mathcal E}_{\alpha_1,\cdots,\alpha_n}(t,t_1,\cdots,t_n)= \mathrm e^{\boldsymbol{L}_{\text e}(t-t_1)}\boldsymbol{E}_{\alpha_1}\mathrm e^{\boldsymbol{L}_{\text e}(t_1-t_2)}\cdots \boldsymbol{E}_{\alpha_n}\mathrm e^{\boldsymbol{L}_{\text e}t_n}\hat\rho_{\text E}(0) .
    \label{eq:env_correlator}
\end{equation}
\label{assume:trace}
\end{assumption}
As discussed in detail later, the trace-class assumption is satisfied in all cases studied in this paper, since the initial bath state $\hat\rho_{\text E}(0)$ is Gaussian and the system-environment coupling is linear. With the trace-class assumption \ref{assume:trace}, we can define the $n$-point environment superoperator correlation functions $C^{(n)}_{\alpha_1,\cdots,\alpha_n}(t_1,\cdots,t_n)$ (for $t\geq t_1\geq \cdots\geq t_n \geq 0$) as the trace of the above operators:
\begin{defn}[$n$-point environment superoperator correlation functions]
The $n$-point environment superoperator correlation functions $C^{(n)}_{\alpha_1,\cdots,\alpha_n}(t_1,\cdots,t_n)$ (for $t\geq t_1\geq \cdots\geq t_n\geq 0$) is defined as the trace of the above operator:
    \begin{equation}
    C^{(n)}_{\alpha_1,\cdots,\alpha_n}(t_1,\cdots,t_n) = \operatorname{tr}\left(\hat{\mathcal E}_{\alpha_1,\cdots,\alpha_n}(t,t_1,\cdots,t_n)\right).
    \label{eq:BCF_npoint}
\end{equation}
Note that $C^{(n)}_{\alpha_1,\cdots,\alpha_n}(t_1,\cdots,t_n)$ is independent of $t\in [t_1,\infty)$ since $\mathrm e^{\boldsymbol{L}_{\text e} t}$ is a trace preserving map for $t>0$. 
\end{defn}
\cref{assume:trace} directly indicates that the operator
$$\left(\mathrm e^{\boldsymbol{L}_{\text e}(t-t_1)}\boldsymbol{E}_{\alpha_1}\mathrm e^{\boldsymbol{L}_{\text e}(t_1-t_2)}\cdots \boldsymbol{E}_{\alpha_n}\mathrm e^{\boldsymbol{L}_{\text e}t_n}\right)\hat \rho_{\text E}(0)  \otimes \left(\mathrm e^{\boldsymbol{L}_{\text s}(t-t_1)}\boldsymbol{S}_{\alpha_1}\mathrm e^{\boldsymbol{L}_{\text s}(t_1-t_2)}\cdots \boldsymbol{S}_{\alpha_n}\mathrm e^{\boldsymbol{L}_{\text s}t_n}\right)\hat \rho_{\text S}(0).$$
is also trace-class for all $\alpha_1,\cdots,\alpha_n$ and $t \geq t_1\geq \cdots\geq t_n$, since $\boldsymbol{L}_{\text s}$ and $\boldsymbol{S}_\alpha$ are bounded in the induced trace norm. This leads to the following convergence criterion for the Dyson series in \cref{eq:rho_perturbation_series}:
\begin{lem}
\label{lem:dyson_conv}
Fix $t\ge 0$ and set
\begin{equation*}
    A_n(t) := \sup_{t\ge t_1\ge \cdots \ge t_n\ge 0} \bigl\|\hat{\mathcal E}_{\alpha_1,\ldots,\alpha_n}(t,t_1,\ldots,t_n) \bigr\|_{\text{tr}}, 
\end{equation*}
If
\begin{equation}
    \sum_{n=0}^\infty \frac{( N t)^n}{n!} \, A_n(t) < +\infty,
    \label{eq:bound}
\end{equation}
then the Dyson series in \cref{eq:rho_perturbation_series} (equivalently, \cref{eq:rho_perturbation_series_SE}) converges absolutely. Moreover, the summability condition \eqref{eq:bound} holds whenever there exists a constant $\widetilde A$ and  $\mathfrak{c}(t)>0$ such that 
\begin{equation}
    A_n(t) \le \sqrt{n!}\, (\mathfrak{c}(t))^{n}.
    \label{eq:estimate}
\end{equation}
\end{lem}
\begin{proof}
    Taking the trace norm of the right-hand side of \cref{eq:rho_perturbation_series_SE} and using $\|\boldsymbol{O}_{\text E}\otimes \boldsymbol{O}_{\text S}\|_{\text{tr}} = \|\boldsymbol{O}_{\text E}\|_{\text{tr}} \|\boldsymbol{O}_{\text S}\|_{\text{tr}}$ yields the first claim. For the second claim, observe that  $\lim_{n\rightarrow\infty}\frac{\sqrt{n} Nt\cdot \mathfrak{c}(t)}{n}= 0$. By the ratio test, the summation is therefore finite.
\end{proof}

If $\mathcal H_{\text E}$ is finite-dimensional, and thus $\boldsymbol{L}_{\text e}$ and $\boldsymbol{E}_\alpha$ are bounded, the estimate \cref{eq:estimate} holds naturally. This covers fermionic environments with a finite number of modes. We will show later that this estimate holds for systems of interest in this paper, especially when $\boldsymbol{L}_{\text E}$ and $\boldsymbol{E}_\alpha$ are unbounded, and the scaling in \cref{eq:estimate} arises naturally for Gaussian environments and linear system-environment coupling (see \cref{sec:estimate}). In particular, for bosonic Gaussian environments, the estimate in \cref{eq:estimate} is the direct result of the infrared regularity condition (see \cref{eq:infrared_regular}). (See \cref{eq:boson_growth_constant} for the detailed expression of the constant $\mathfrak c(t)$.) We note that \cref{eq:bound} guaranties that the $n$-point correlation functions $C^{(n)}_{\alpha_1,\cdots,\alpha_n}(t_1,\cdots,t_n)$ are bounded in the $L^1$ sense, namely $\int_{[0,T]^n}\left|C^{(n)}_{\alpha_1,\cdots,\alpha_n}(t_1,\cdots,t_n)\right|\mathrm dt_1\cdots\mathrm dt_n<\infty$ for all finite $T\geq 0$. One could consider more general correlators such as tempered Radon measures \cite{Trivedi2022, TrivediRudner2024}. Such scenarios, however, fall outside the scope of the systems of interest in this paper.

 \subsection{Tracing out the environment}
Now we are ready to take the partial trace over the environment. 
\begin{lem}
    For $\hat\rho(t)$, if the RHS of \cref{eq:rho_perturbation_series}(equivalently, \cref{eq:rho_perturbation_series_SE})  is absolutely convergent, then, with the environment trace-class assumption, we have
\begin{equation}
        \begin{aligned}
            \hat\rho_{\text S}(t) =  \sum_{n=0}^{\infty}\sum_{\alpha_1,\cdots,\alpha_n=1}^N \iint\cdots&\int_{0<t_n<\cdots<t_1<t} \mathrm dt_1\cdots\mathrm d t_n \\
        &C^{(n)}_{\alpha_1,\cdots,\alpha_n}(t_1,\cdots,t_n) \mathrm e^{\boldsymbol{L}_{\text s}(t-t_1)}\boldsymbol{S}_{\alpha_1}\mathrm e^{\boldsymbol{L}_{\text s}(t_1-t_2)}\cdots \boldsymbol{S}_{\alpha_n}\mathrm e^{\boldsymbol{L}_{\text s}t_n}\hat\rho_{\text S}(0).
        \end{aligned}
        \label{eq:rhoS_Dyson}
    \end{equation}
    \label{lem:reduced_dyson}
\end{lem}  
\begin{proof}
    The key here is that we have exchanged the partial trace first with the summation and then with the integration. The partial trace, as a linear mapping  defined on trace-class operators, is contractive in the trace norm; namely, for any trace-class $\hat A$ in $B_1(\mathcal H)$, $\|\operatorname{tr}_{\text E}\hat A\|_{\text{tr}}\leq \|A\|_{\text{tr}}$. The absolute convergence of the infinite series in \cref{eq:rho_perturbation_series_SE} implies the absolute convergence of the RHS of \cref{eq:rhoS_Dyson}.
    The  exchange of the partial trace, summation, and integration is thus justified by the linearity of the partial trace and the dominated convergence theorem.
\end{proof}

Let us introduce the notation $\boldsymbol S_{\alpha}(t)$, $\boldsymbol E_{\alpha}(t)$, and $ \boldsymbol{L}_{\text{SE}}(t)$:
\begin{equation}
    \boldsymbol S_{\alpha}(t) = \mathrm e^{-\boldsymbol L_{\text s}t}\boldsymbol S_{\alpha}\mathrm e^{\boldsymbol L_{\text s}t},\quad \boldsymbol E_{\alpha}(t) = \mathrm e^{-\boldsymbol L_{\text e}t}\boldsymbol E_{\alpha}\mathrm e^{\boldsymbol L_{\text e}t}, \quad 
    \boldsymbol{L}_{\text{SE}}(t)=\mathrm e^{-\boldsymbol L_{0}t}\boldsymbol L_{\text{SE}}\mathrm e^{\boldsymbol L_0t}.
    \label{eq:superO_int}
\end{equation}
We emphasize that this is just a shorthand notation for convenience since $\mathrm e^{\boldsymbol{L}_{\text s}t}$, $\mathrm e^{\boldsymbol{L}_{\text e}t}$, and $\mathrm e^{\boldsymbol{L}_{0}t}$ are not necessarily well-defined for $t<0$. Then  we could rewrite \cref{eq:BCF_npoint} and \cref{eq:rhoS_Dyson} as:
\begin{equation}
    C_{\alpha_1,\cdots,\alpha_n}^{(n)}(t_1,\cdots,t_n) = 
\operatorname{tr}\left(\mathrm e^{\boldsymbol L_{\text e}t}\boldsymbol E_{\alpha_1}(t_1)\cdots \boldsymbol E_{\alpha_n}(t_n) \hat\rho_{\text E}(0) 
    \right)
    \label{eq:BCF_npoint_formal}
\end{equation}
\begin{equation}
\begin{aligned}
      \hat\rho_{\text S}(t) = \mathrm e^{\boldsymbol L_{\text s}t} \sum_{n=0}^{\infty}\sum_{\alpha_1,\cdots,\alpha_n=1}^N \iint\cdots&\int_{0<t_n<\cdots<t_1<t}\mathrm dt_1\cdots\mathrm d t_n \\&C^{(n)}_{\alpha_1,\cdots,\alpha_n}(t_1,\cdots,t_n)  \boldsymbol S_{\alpha_1}(t_1)\cdots \boldsymbol S_{\alpha_n}(t_n)\hat\rho_{\text S}(0).
    \label{eq:rhoS_Dyson_S_alpha}  
\end{aligned}
\end{equation}
We emphasize that although $\boldsymbol E_{\alpha}(t)$ and $\boldsymbol S_{\alpha}(t)$ are formal expressions, \cref{eq:BCF_npoint_formal} and \cref{eq:rhoS_Dyson_S_alpha} are well-defined since $\mathrm e^{\boldsymbol{L}_{\text e}t}$ (resp. $\mathrm e^{\boldsymbol{L}_{\text s}t}$) appears in front of the sum. For unordered $t_1,\cdots,t_n$, let us define $C^{(n)}_{\alpha_1,\cdots,\alpha_n}(t_1,\cdots,t_n) $ and the time-ordering operation $\mathbf T^{\pm}$:
\begin{equation}\begin{aligned}
    C^{(n)}_{\alpha_1,\cdots,\alpha_n}(t_1,\cdots,t_n) &= (\pm)^\sigma C^{(n)}_{\alpha_{\sigma(1)},\cdots,\alpha_{\sigma(n)}}(t_{\sigma(1)},\cdots,t_{\sigma(n)}),\\
    \mathbf T^{\pm}\left( \boldsymbol S_{\alpha_1}(t_1)\cdots \boldsymbol S_{\alpha_n}(t_n)\right) &= (\pm)^\sigma \boldsymbol S_{\alpha_{\sigma(1)}}(t_{\sigma(1)})\cdots \boldsymbol S_{\alpha_{\sigma(n)}}(t_{\sigma(n)}),
\end{aligned}
\label{eq:permutations}
    \end{equation}
where $\sigma\in S_n$ is the permutation of $1,\cdots,n$ such that $t_{\sigma(1)}\geq \cdots\geq t_{\sigma(n)}$, and the plus and the minus sign correspond to bosonic and fermionic environments, respectively. Then we can rewrite \cref{eq:rhoS_Dyson_S_alpha} as:
\begin{equation}
    \hat\rho_{\text S}(t) = \mathrm e^{\boldsymbol L_{\text s}t} \sum_{n=0}^{\infty}\sum_{\alpha_1,\cdots,\alpha_n=1}^N \frac{1}{n!}\int_0^t\cdots\int_0^t C^{(n)}_{\alpha_1,\cdots,\alpha_n}(t_1,\cdots,t_n)  \mathbf T^{\pm}\left(\boldsymbol S_{\alpha_1}(t_1)\cdots \boldsymbol S_{\alpha_n}(t_n)\right)\mathrm dt_1\cdots\mathrm d t_n\hat\rho_{\text S}(0).
    \label{eq:rhoS_Dyson_n!}
\end{equation}
One can see that, operators under the time-ordering operation (anti-)commute. Let us also introduce the time-ordering operation $\boldsymbol{\mathcal T}$ without the sign change, which will be used in the proof of \cref{thm:main_error_bound}, defined as:
\begin{equation}
    \mathcal{\boldsymbol{\mathcal T}}\left(\boldsymbol S_{\alpha_1}(t_1)\cdots \boldsymbol S_{\alpha_n}(t_n)\right) = \boldsymbol S_{\alpha_{\sigma(1)}}(t_{\sigma(1)})\cdots \boldsymbol S_{\alpha_{\sigma(n)}}(t_{\sigma(n)}),
\end{equation}
where $\sigma\in S_n$ is the same permutation as above.
\begin{rem}
    Note that $C_{\alpha_1,\cdots,\alpha_n}^{(n)}(t_1,\cdots, t_n)$ is defined using environment superoperators $\boldsymbol E_{\alpha_i}$ ($i=1,\cdots,n$), while the commonly-used $n$-point BCF is defined using operators \cite{breuer2007open}. Nevertheless, we will refer to $C_{\alpha_1,\cdots,\alpha_n}^{(n)}(t_1,\cdots, t_n)$ as the $n$-point BCF in this section when there are no further confusions.
\end{rem}
\begin{rem}
    For time-ordering operation defined in \cref{eq:permutations}, later we will use the following expression
    $$
    \mathbf{T}^{ \pm}\left(\prod_{i=1}^{n} \boldsymbol{S}_{\alpha_i}\left(t_i\right) \prod_{j=1}^{k} \boldsymbol{S}_{\beta_j}\left(s_j\right)\right),
    $$
    in which the time-ordering is understood to be performing on $t_1,\cdots,t_{n},s_1,\cdots, s_{k}$. In other words, let $w_i = t_i$, $\gamma_i=\alpha_i$ ($i=1,\cdots,n$) and $w_{2n+j}=s_j$, $\gamma_{2n+j} = \beta_j$ ($j=1,\cdots, k$), and let $\sigma\in S_{n+k}$ be the permutation such that $w_1\geq \cdots\geq w_{n+k}$. Then we have $\mathbf{T}^{ \pm}\left(\prod_{i=1}^{n} \boldsymbol{S}_{\alpha_i}\left(t_i\right) \prod_{j=1}^{k} \boldsymbol{S}_{\beta_j}\left(s_j\right)\right)=(\pm)^\sigma \prod_{i=1}^{n+k}\boldsymbol{S}_{\gamma_i}(w_i)$.
    \label{rem:time_ordering}
\end{rem}
In this paper, we consider environment $\boldsymbol L_{\text e}$ and system-environment coupling $\boldsymbol L_{\text{SE}}$ such  that  the $n$-point BCFs satisfy the following Isserlis-Wick's condition:
\begin{defn}[Isserlis-Wick's conditions for BCFs]
    For $t_1,\cdots, t_n>0$, the $n$-point BCFs $C^{(n)}_{\alpha_1,\cdots,\alpha_n}(t_1,\cdots,t_n)$ satisfy the Isserlis-Wick's condition: for $n$ odd, $C^{(n)}_{\alpha_1,\cdots,\alpha_n}(t_1,\cdots,t_n)= 0$, while for $n$ even, i.e., $n=2m$,
    we have 
    \begin{equation}
        C^{(2m)}_{\alpha_1,\cdots,\alpha_{2m}}(t_1,\cdots,t_{2m}) = \sum_{\sigma \in \Pi_{2 m}}( \pm)^\sigma \prod_{i=1}^m C_{\alpha_{\sigma(2 i-1)}, \alpha_{\sigma(2 i)}}\left(t_{\sigma(2 i-1)}, t_{\sigma(2 i)}\right),
    \end{equation}
    where $\Pi_{2m}$ is the set of pairings of $\{1,\cdots,2m\}$, i.e.,
    $$
    \Pi_{2m} = \left\{\sigma \in S_{2m} \mid \sigma(2i-1)<\sigma(2i) \text{ for } i \in \{1,\cdots,m\}, \text{ and } \sigma(1)<\sigma(3)<\cdots<\sigma(2m-1)\right\}.
    $$
    \label{defn:Wick_condition} 
\end{defn}
Such conditions are known to be satisfied for linearly coupled Gaussian environments in both unitary and non-unitary dynamics. We refer to \cref{sec:applications} for examples.
With the Isserlis-Wick's conditions, we have
\begin{thm}[Dyson's series for $\hat\rho_{\text S}(t)$ after resummation]
    With the Isserlis-Wick's conditions in \cref{defn:Wick_condition}, we have
    \begin{equation}
        \begin{aligned}
            \hat\rho_{\text S}(t) =& \sum_{m=0}^{\infty} \frac{1}{2^m m!}\int_{0}^{t}\int_0 ^{t}\cdots\int_0 ^{t}\mathrm d t_{2m}\cdots \mathrm d t_1 \mathrm e^{\boldsymbol{L}_{\text s}t}\\
            &\sum_{\alpha_1, \cdots, \alpha_{2 m}=1}^N \prod_{i=1}^m C_{\alpha_{2 i-1}, \alpha_{2 i}}\left(t_{2 i-1}, t_{2 i}\right) \mathbf{T}^{\pm}\left(\boldsymbol{S}_{\alpha_1}\left(t_1\right) \cdots \boldsymbol{S}_{\alpha_{2 m}}\left(t_{2 m}\right)\right) \hat{\rho}_{\mathrm{S}}(0).      
        \end{aligned}
        \label{eq:rhoS_Dyson_short}
           \end{equation}
           \label{thm:rhoS_Dyson_short}
\end{thm}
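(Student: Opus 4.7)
The plan is to substitute Wick's condition (\cref{defn:Wick_condition}) directly into the Dyson series \eqref{eq:rhoS_Dyson_n!}, reducing it to a sum over $n=2m$ terms, each containing an inner sum over pairings $\sigma\in\Pi_{2m}$. The goal is to show that all $\lvert\Pi_{2m}\rvert=(2m)!/(2^m m!)$ pairings contribute the same quantity after a careful relabeling, which converts the combinatorial factor $\frac{1}{(2m)!}$ into $\frac{1}{2^m m!}$ and leaves behind the canonical pairing $(1,2),(3,4),\dots,(2m-1,2m)$ inside the integrand.

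For each fixed $\sigma\in\Pi_{2m}$, I would perform the change of integration and summation indices $s_k=t_{\sigma(k)}$, $\beta_k=\alpha_{\sigma(k)}$ for $k=1,\dots,2m$. The Jacobian is $1$, and the sum over $\alpha_1,\dots,\alpha_{2m}$ is invariant under permutation of labels, so the product of BCFs transforms cleanly:
\begin{equation*}
\prod_{i=1}^{m} C_{\alpha_{\sigma(2i-1)},\alpha_{\sigma(2i)}}(t_{\sigma(2i-1)},t_{\sigma(2i)})\;\longrightarrow\;\prod_{i=1}^{m} C_{\beta_{2i-1},\beta_{2i}}(s_{2i-1},s_{2i}).
\end{equation*}
The nontrivial piece is to re-express $\mathbf{T}^{\pm}(\boldsymbol{S}_{\alpha_1}(t_1)\cdots\boldsymbol{S}_{\alpha_{2m}}(t_{2m}))$ in the new variables. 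Writing $\boldsymbol{B}_k=\boldsymbol{S}_{\beta_k}(s_k)$, the input list becomes $(\boldsymbol{B}_{\sigma^{-1}(1)},\dots,\boldsymbol{B}_{\sigma^{-1}(2m)})$.

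The key lemma I would prove first (directly from \eqref{eq:permutations}) is the permutation identity for the time-ordering operator: for any $\tau\in S_n$,
\begin{equation*}
\mathbf{T}^{\pm}\bigl(\boldsymbol{B}_{\tau(1)}\cdots\boldsymbol{B}_{\tau(n)}\bigr)=(\pm)^{\tau}\,\mathbf{T}^{\pm}\bigl(\boldsymbol{B}_1\cdots\boldsymbol{B}_n\bigr).
\end{equation*}
This follows because if $\pi$ sorts the times associated with $\boldsymbol{B}_1,\dots,\boldsymbol{B}_n$ in decreasing order, then $\tau^{-1}\pi$ sorts the shuffled list, and $(\pm)^{\tau^{-1}\pi}(\pm)^{\pi}=(\pm)^{\tau}$ (using $(\pm)^{\tau^{-1}}=(\pm)^{\tau}$). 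Applying this identity with $\tau=\sigma^{-1}$ yields
\begin{equation*}
\mathbf{T}^{\pm}\bigl(\boldsymbol{S}_{\alpha_1}(t_1)\cdots\boldsymbol{S}_{\alpha_{2m}}(t_{2m})\bigr)=(\pm)^{\sigma}\,\mathbf{T}^{\pm}\bigl(\boldsymbol{S}_{\beta_1}(s_1)\cdots\boldsymbol{S}_{\beta_{2m}}(s_{2m})\bigr),
\end{equation*}
and this $(\pm)^{\sigma}$ exactly cancels the sign $(\pm)^{\sigma}$ appearing in Wick's formula. Hence every $\sigma\in\Pi_{2m}$ produces an identical integrand in the renamed variables; summing the $(2m)!/(2^m m!)$ copies and combining with the prefactor $1/(2m)!$ gives $1/(2^m m!)$, and relabeling $s\to t$, $\beta\to\alpha$ recovers \eqref{eq:rhoS_Dyson_short}.

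The only delicate step is the sign bookkeeping in the fermionic case; once the permutation identity for $\mathbf{T}^{\pm}$ is in place, the rest is routine relabeling. Convergence of the resummed series is inherited from the absolute convergence of \eqref{eq:rhoS_Dyson_n!} (which itself follows from the contractive semigroup bounds and boundedness of $\boldsymbol{S}_\alpha$ used earlier in the section), so term-by-term rearrangement is justified.
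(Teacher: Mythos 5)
Your proposal is correct and follows essentially the same route as the paper's proof: it uses the same permutation identity $\mathbf{T}^{\pm}(\boldsymbol S_{\alpha_{\sigma(1)}}(t_{\sigma(1)})\cdots) = (\pm)^{\sigma}\mathbf{T}^{\pm}(\boldsymbol S_{\alpha_1}(t_1)\cdots)$ to absorb the Wick sign, the same relabeling of integration and summation variables for each pairing, and the same count $\lvert\Pi_{2m}\rvert = (2m-1)!! = (2m)!/(2^m m!)$ to convert $1/(2m)!$ into $1/(2^m m!)$. The only difference is cosmetic: you spell out the proof of the sign identity and the rearrangement justification slightly more explicitly than the paper does.
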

\begin{proof}
    With the definition of $\mathbf T^{\pm}$, we have 
    $$
    \mathbf T^{\pm}\left( \boldsymbol S_{\alpha_1}(t_1)\cdots \boldsymbol S_{\alpha_n}(t_n)\right) = (\pm)^\sigma \mathbf T^{\pm}\left( \boldsymbol S_{\alpha_{\sigma(1)}}(t_{\sigma(1)})\cdots \boldsymbol S_{\alpha_{\sigma(n)}}(t_{\sigma(n)})\right).
    $$
    Using \cref{defn:Wick_condition}, 
    we can rewrite \cref{eq:rhoS_Dyson_n!} as:
    \begin{equation}\begin{aligned}
        & \begin{aligned}   \hat{{\rho}}_{\text S}(t)
       = & \sum_{m=0}^{\infty} \frac{1}{(2m)!}\int_{0}^{t}\int_0 ^{t}\cdots\int_0 ^{t}\mathrm d t_{2m}\cdots \mathrm d t_1 \mathrm e^{\boldsymbol{L}_{\text s}t} \\ 
        &\sum_{\alpha_1,\cdots,\alpha_{2m}=1}^N
     \sum_{\sigma\in \Pi_{2m}} (\pm)^{\sigma}\prod_{i=1}^{m} C_{\alpha_{\sigma(2i-1)},\alpha_{\sigma(2i)}}(t_{\sigma(2i-1)}, t_{\sigma(2i)})\mathbf{T}^{\pm} \left(\boldsymbol S_{\alpha_{1}}(t_{1})\cdots \boldsymbol S_{\alpha_{{2m}}}(t_{{2m}})\right)\hat\rho_{\text{S}}(0)
     \end{aligned}\\
    & \begin{aligned}   
        =  \sum_{m=0}^{\infty}&\sum_{\sigma\in \Pi_{2m}} \frac{1}{(2m)!}\int_{0}^{t}\int_0 ^{t}\cdots\int_0 ^{t}\mathrm d t_{2m}\cdots \mathrm d t_1 \mathrm e^{\boldsymbol{L}_{\text s}t} \\ 
         &\sum_{\alpha_{\sigma(1)},\cdots,\alpha_{\sigma(2m)}=1}^N
       \prod_{i=1}^{m} C_{\alpha_{\sigma(2i-1)},\alpha_{\sigma(2i)}}(t_{\sigma(2i-1)}, t_{\sigma(2i)})\mathbf{T}^{\pm} \left(\boldsymbol S_{\alpha_{\sigma(1)}}(t_{\sigma(1)})\cdots \boldsymbol S_{\alpha_{{\sigma(2m)}}}(t_{{\sigma(2m)}})\right)\hat\rho_{\text{S}}(0).
      \end{aligned}
        \end{aligned} 
        \label{eq:rho_S_resum}
        \end{equation}  
    By change of variables, we have 
    \begin{equation}
        \begin{aligned}   
            \hat\rho_{\text S}(t) = \sum_{m=0}^{\infty} & \sum_{\sigma\in \Pi_{2m}}\frac{1}{(2m)!}\int_{0}^{t}\int_0 ^{t}\cdots\int_0 ^{t}\mathrm d t_{2m}\cdots \mathrm d t_1 \mathrm e^{\boldsymbol{L}_{\text s}t} \\ 
         &\sum_{\alpha_{1},\cdots,\alpha_{2m}=1}^N 
       \prod_{i=1}^{m} C_{\alpha_{2i-1},\alpha_{2i}}(t_{2i-1}, t_{2i})\mathbf{T}^{\pm} \left(\boldsymbol S_{\alpha_{1}}(t_{1})\cdots \boldsymbol S_{\alpha_{2m}}(t_{{2m}})\right)\hat\rho_{\text{S}}(0).
      \end{aligned} 
    \end{equation}
Since $|\Pi_{2m}| = (2m-1)!!$, we obtain \cref{eq:rhoS_Dyson_short}.
\end{proof}

\begin{rem}
    A formal but convenient expression for \cref{thm:rhoS_Dyson_short}, which is known in physics literature (for example,  \cite{FeynmanVernon1963, BreuerMaPetruccione2004}), is:
    \begin{equation}
    \begin{aligned}
        \hat\rho_{\text S}(t) &=\mathrm e^{\boldsymbol L_{\text s}t} \mathbf T^{\pm}\left(
            \mathrm{exp}\left( \int_0^t\int_0^{t_1} \sum_{\alpha_1,\alpha_2=1}^N C_{\alpha_1,\alpha_2}(t_1,t_2)\boldsymbol S_{\alpha_1}(t_1)\boldsymbol S_{\alpha_2}(t_2)\mathrm d t_2\mathrm d t_1\right)\right)\hat\rho_{\text S}(0) ,\end{aligned}\label{eq:rhoS_Dyson_short_formal}   
    \end{equation}
    where the exponential is understood as the formal Taylor's expansion, and the time-ordering operation $\mathbf T^{\pm}$ acting on this exponential is understood to be acting on each term of the summation. In other words, we have:
    \begin{equation}
        \begin{aligned}
      &  \mathrm{exp}\left( \int_0^t\int_0^{t_1}\sum_{\alpha_1,\alpha_2=1}^N C_{\alpha_1,\alpha_2}(t_1,t_2)\boldsymbol S_{\alpha_1}(t_1)\boldsymbol S_{\alpha_2}(t_2)\mathrm d t_2\mathrm d t_1\right)   
      \\  : = &\sum_{n=0}^\infty \frac{1}{n!}\left( \int_0^t\int_0^{t_1}\sum_{\alpha_1,\alpha_2=1}^N C_{\alpha_1,\alpha_2}(t_1,t_2)\boldsymbol S_{\alpha_1}(t_1)\boldsymbol S_{\alpha_2}(t_2)\mathrm d t_2\mathrm d t_1\right)^n\\
       =& \sum_{n=0}^\infty \frac{1}{ n!}\int \cdots \int_{\Omega_n(t)}\sum_{\alpha_1,\cdots,\alpha_{2n}=1}^N \left(\prod_{i=1}^n C_{\alpha_{2i-1},\alpha_{2i}}(t_{2i-1},t_{2i})\right)\prod_{j=1}^{2n}\boldsymbol S_{\alpha_j}(t_j)\mathrm d t_1\cdots \mathrm d t_{2n}.
        \end{aligned}
    \end{equation}
    where $\Omega_n(t)$ is the following region:
    $$
    \Omega_n(t) = \{(t_1,\cdots,t_{2n})| 0<t_{2i}<t_{2i-1}<t, \text{ for }i=1,\cdots,n \}.
    $$
    Then we have,
    \begin{equation}
        \begin{aligned}
            &\mathrm e^{\boldsymbol L_{\text s}t}\mathbf T^{\pm}\left(\mathrm{exp}\left( \int_0^t\int_0^{t_1} \sum_{\alpha_1,\alpha_2=1}^N C_{\alpha_1,\alpha_2}(t_1,t_2)\boldsymbol S_{\alpha_1}(t_1)\boldsymbol S_{\alpha_2}(t_2)\mathrm d t_2\mathrm d t_1\right)\right) 
          \\  := & \mathrm e^{\boldsymbol L_{\text s}t}\sum_{n=0}^\infty \frac{1}{  n!}\int\cdots \int_{\Omega_n(t)}\sum_{\alpha_1,\cdots,\alpha_{2n}=1}^N \left(\prod_{i=1}^n C_{\alpha_{2i-1},\alpha_{2i}}(t_{2i-1},t_{2i})\right)\mathbf T^{\pm} \left(\prod_{j=1}^{2n}\boldsymbol S_{\alpha_j}(t_j)\right)\mathrm d t_1\cdots \mathrm d t_{2n}.
        \end{aligned}
    \label{eq:T_exp_defn}
    \end{equation}
    We emphasize that the left-hand side of the above equation should only be understood as the shorthand notation for the right-hand side, which is mathematically well-defined.
   Because of \cref{eq:permutations}, 
   the right-hand side of \cref{eq:T_exp_defn} is equal to the right-hand side of \cref{eq:rhoS_Dyson_short}.
    Similarly, another shorthand notation  for \cref{eq:rhoS_Dyson_short} is
   the following:
   \begin{equation}
    \hat\rho_{\text S}(t)= \mathrm e^{\boldsymbol L_{\text s}t} \mathbf T^{\pm}\left(
            \mathrm{exp}\left( \frac 1 2\int_0^t\int_0^{t} \sum_{\alpha_1,\alpha_2=1}^N C_{\alpha_1,\alpha_2}(t_1,t_2)\boldsymbol S_{\alpha_1}(t_1)\boldsymbol S_{\alpha_2}(t_2)\mathrm d t_2\mathrm d t_1\right)\right)\hat\rho_{\text S}(0), 
   \end{equation}
   where $\mathrm e^{\boldsymbol L_{\text s}t}\mathbf T^{\pm}\left(
            \mathrm{exp}\left( \frac 1 2\int_0^t\int_0^{t} \sum_{\alpha_1,\alpha_2=1}^N C_{\alpha_1,\alpha_2}(t_1,t_2)\boldsymbol S_{\alpha_1}(t_1)\boldsymbol S_{\alpha_2}(t_2)\mathrm d t_2\mathrm d t_1\right)\right)$ is understood as the shorthand notation for the following:
            $$
            \mathrm e^{\boldsymbol L_{\text s}t}\sum_{n=0}^\infty \frac{1}{ 2^n n!}\int_0^t\cdots \int_0^t\sum_{\alpha_1,\cdots,\alpha_{2n}=1}^N \left(\prod_{i=1}^n C_{\alpha_{2i-1},\alpha_{2i}}(t_{2i-1},t_{2i})\right)\mathbf T^{\pm} \left(\prod_{j=1}^{2n}\boldsymbol S_{\alpha_j}(t_j)\right)\mathrm d t_1\cdots \mathrm d t_{2n}.
            $$
   \label{rmk:formal_exponential_formula}
\end{rem} 

\subsection{Gr\"{o}nwall-type error bound}
\label{sec:gronwall}
From \cref{thm:rhoS_Dyson_short}, we observe that the reduced system dynamics $\hat\rho_{\text S}(t)$ is fully determined by the system Liouvillian $\boldsymbol{L}_{\text s}$ and two-point BCFs $C_{\alpha_1,\alpha_2}(t_1,t_2)$. We assume that  $(\mathrm e^{\boldsymbol{L}_{\text s}t})_{t\geq 0}$ is a contractive semigroup, which is a natural condition since $\mathrm e^{\boldsymbol{L}_{\text s}t}$ describes the (sub)system dynamics governed by the von-Neumann or Lindblad equations (see \cref{sec:applications}). 
In what follows, we prove the Gr\"onwall-type error bound for $\hat\rho_{\text S}(t)$ under perturbations of $C_{\alpha_1,\alpha_2}(t_1,t_2)$.

In the case when $C_{\alpha_1,\alpha_2}(t_1,t_2)$ depends only on the time difference $(t_1-t_2)$ (see \cref{rmk:single_variable_Corr}), we introduce the notation $\mathcal C_{\alpha_1,\alpha_2}(t_1-t_2)$ such that $C_{\alpha_1,\alpha_2}(t_1,t_2) = \mathcal C_{\alpha_1,\alpha_2}(t_1-t_2)$. To differ perturbations in the two-time BCFs $C_{\alpha_1,\alpha_2}(t_1,t_2)$ and the one-time function $\mathcal C_{\alpha_1,\alpha_2}(t_1-t_2)$, we denote the norm of the former by $\delta$ for $L^\infty$ and $\delta_1$ for $L^1$, and the latter by $\epsilon$ and $\epsilon_1$. The precise definitions of $\delta, \delta_1, \epsilon, \epsilon_1$
are given in the following theorem.

\begin{thm}[Gr\"{o}nwall-type error bound for reduced system density operators]
For the Liouville dynamics \cref{eq:L_decomp}, where the environment and the system-environment coupling satisfies  Isserlis-Wick's condition \cref{defn:Wick_condition},  consider two open quantum systems: one characterized by the system Liouvillian $\boldsymbol{L}_{\text s}$ and  two-point BCFs $C_{\alpha_1,\alpha_2}(t_1,t_2)$ ($\alpha_1,\alpha_2=1,\cdots,N$) and the other by the same $\boldsymbol{L}_{\text s}$ and perturbed BCFs $C'_{\alpha_1,\alpha_2}(t_1,t_2)$. We assume that $(\mathrm e^{\boldsymbol{L}_{\text s}t})_{t\geq 0}$ forms a contractive semigroup.
Let $\hat \rho_S(t)$ and $\hat \rho_S'(t)$ be the corresponding reduced system density operators given by \cref{eq:rhoS_Dyson_short}.
Then, given maximal evolution time $T$, the difference  $\hat \rho_S(t)$ and $\hat \rho_S'(t)$  can be bounded by for all $t\in[0,T]$,
\begin{equation}
        \|\hat\rho_{\text S}(t) - \hat\rho_{\text S}'(t)\|_{\text{tr}} \leq \frac{\delta t^2}{2}\mathrm e^{M t^2/2},
    \end{equation}
where $M$ and $\delta$ are given by:
    \begin{equation}
        M =  \max\left\{\|C\|_{N, \infty },\|C'\|_{N, \infty }\right\},\quad \delta =  \left\| C - C'\right\|_{N, \infty }.
        \label{eq:def_eps}
    \end{equation}
    Here for a $N\times N$ matrix-valued function $C$,
    we define the norm $\|C\|_{N,\infty} = \sum_{\alpha,\alpha'=1}^N\|C_{\alpha,\alpha'}\|_{L^{\infty}([0,T]^2)}$.
    Similarly, we also have an error estimates in terms of the $L^1$ norm of the error in the BCF:
     \begin{equation}
        \|\hat\rho_{\text S}(t) - \hat\rho_{\text S}'(t)\|_{\text{tr}} \leq \frac{\delta_1 t}{2}\mathrm e^{M_1 t/2}.
    \end{equation}
    where $M_1$ and $\delta_1$ are defined as
      \begin{equation}
        M_1 =  \sup_{t_2\in [0,T]}\max\left\{\|C(\cdot,t_2)\|_{N, 1 },\|C'(\cdot,t_2)\|_{N, 1}\right\},\quad \delta =  \left\| C(\cdot,t_2) - C'(\cdot,t_2)\right\|_{N,1 },
         \label{eq:def_eps1}
    \end{equation}
    with $\|C(\cdot,t_2)\|_{N,1}$ defined as $\|C(\cdot,t_2)\|_{N,1} =\sum_{\alpha,\alpha'=1}^N\|C_{\alpha,\alpha'}(\cdot,t_2)\|_{L^{1}([0,T])} $.
   
    \label{thm:Gronwall_error_bound}
    \end{thm}
\begin{proof}
    Following from \cref{thm:rhoS_Dyson_short}, we have
    \begin{equation} 
        \begin{aligned}
           & \| \hat\rho_{\text S}(t) - \hat\rho_{\text S}'(t)\|_{\text{tr}} \leq 
        \sum_{n=0}^{\infty}\sum_{\alpha_1, \cdots, \alpha_{2 n}=1}^N \frac{1}{2^n n!}\int_{0}^{t}\int_0 ^{t}\cdots\int_0 ^{t}\mathrm d t_{2n}\cdots \mathrm d t_1  \\
          &   \left|\prod_{i=1}^n C_{\alpha_{2 i-1}, \alpha_{2 i}}\left(t_{2 i-1}, t_{2 i}\right) - \prod_{i=1}^nC'_{\alpha_{2 i-1}, \alpha_{2 i}}\left(t_{2 i-1}, t_{2 i}\right)\right| \|\mathrm e^{\boldsymbol{L}_{\text s}t}\mathbf{T}^{\pm}\left(\boldsymbol S_{\alpha_1}(t_1)\cdots \boldsymbol S_{\alpha_{2n}}(t_{2n})\right)\hat\rho_{\text S}(0)\|_{\text{tr}}.
        \end{aligned}
    \end{equation}
Let $\sigma$ be the permutation of $1,\cdots,2n$ such that $t_{\sigma(1)}\geq \cdots\geq t_{\sigma(2n)}$. Recall that $\mathrm e^{\boldsymbol{L}_{\text s}t}$ is a contraction map and $\|\boldsymbol{S}_\alpha\|_{\text{tr}}=1$, thus
$$
\begin{aligned}
    &\| \mathrm e^{\boldsymbol{L}_{\text s}t}\mathbf{T}^{\pm}\left(\boldsymbol S_{\alpha_1}(t_1)\cdots \boldsymbol S_{\alpha_{2n}}(t_{2n})\right)\|_{\text{tr}} = \|\mathrm e^{\boldsymbol{L}_{\text s}(t-t_{\sigma(1)})} \boldsymbol{S}_{\alpha_{\sigma(1)}}\mathrm e^{\boldsymbol{L}_{\text s}(t_{\sigma(1)}-t_{\sigma(2)})}\cdots\boldsymbol{S}_{\alpha_{\sigma(2n)}}\mathrm e^{\boldsymbol{L}_{\text s}t_{\sigma(2n)}}\|_{\text{tr}}\\
&\leq \|\mathrm e^{\boldsymbol{L}_{\text s}(t-t_{\sigma(1)})}\|_{\text{tr}}\|\boldsymbol{S}_{\alpha_{\sigma(1)}}\|_{\text{tr}}\|\mathrm e^{\boldsymbol{L}_{\text s}(t_{\sigma(1)}-t_{\sigma(2)})}\|_{\text{tr}}\cdots\|\boldsymbol{S}_{\alpha_{\sigma(2n)}}\|_{\text{tr}}\|\mathrm e^{\boldsymbol{L}_{\text s}t_{\sigma(2n)}}\|_{\text{tr}}\leq 1.
\end{aligned}
$$
Since $\|\hat\rho_{\text S}(0)\|_{\text{tr}}=1$, we have
$$
\begin{aligned}
\| \hat\rho_{\text S}(t) - \hat\rho_{\text S}'(t)\|_{\text{tr}} \leq
\sum_{n=0}^{\infty} \sum_{\alpha_1, \cdots, \alpha_{2 n}=1}^N \frac{1}{2^n n!}\int_{0}^{t}&\int_0 ^{t}\cdots\int_0 ^{t}\mathrm d t_{2n}\cdots \mathrm d t_1  \\&
\left|\prod_{i=1}^n C_{\alpha_{2 i-1}, \alpha_{2 i}}\left(t_{2 i-1}, t_{2 i}\right) - \prod_{i=1}^nC'_{\alpha_{2 i-1}, \alpha_{2 i}}\left(t_{2 i-1}, t_{2 i}\right)\right|.
\end{aligned}
$$
Let $\Delta C_{\alpha,\alpha'} = C_{\alpha,\alpha'}' - C_{\alpha,\alpha'}$.
Note that 
$$
\begin{aligned}
    &\prod_{i=1}^n C_{\alpha_{2 i-1}, \alpha_{2 i}}\left(t_{2 i-1}, t_{2 i}\right) - \prod_{i=1}^nC'_{\alpha_{2 i-1}, \alpha_{2 i}}\left(t_{2 i-1}, t_{2 i}\right)
    \\=&-\sum_{i=1}^n \left(\prod_{j=1}^{i-1}C_{\alpha_{2j-1},\alpha_{2j}}(t_{2j-1},t_{2j})\right)  \Delta C_{\alpha_{2i-1},\alpha_{2i}}(t_{2i-1},t_{2i})\left(\prod_{j=i+1}^{n}C'_{\alpha_{2j-1},\alpha_{2j}}(t_{2j-1},t_{2j})\right)
\end{aligned}
$$
Taking absolute value, and summing over $\alpha_1,\cdots,\alpha_{2N}$, with 
$\left|\sum_{\alpha_{2i-1},\alpha_{2i}=1}^N C_{\alpha_{2i-1},\alpha_{2i}}(t_{2i-1},t_{2i}) \right|  \leq M$  and  $\left|\sum_{\alpha_{2i-1},\alpha_{2i}=1}^N\Delta C_{\alpha_{2i-1},\alpha_{2i}} \right| \leq \delta$,
  we obtain $$\sum_{\alpha_1,\cdots,\alpha_{2n}=1}^N \left|\prod_{i=1}^n C_{\alpha_{2 i-1}, \alpha_{2 i}}\left(t_{2 i-1}, t_{2 i}\right) - \prod_{i=1}^nC'_{\alpha_{2 i-1}, \alpha_{2 i}}\left(t_{2 i-1}, t_{2 i}\right)\right|\leq n M^{n-1}\delta.$$
Then we have: 
$$\|
\hat\rho_{\text S}(t) - \hat\rho_{\text S}'(t)\|_{\text{tr}}\leq \sum_{n=1}^{\infty} \frac{1}{2^n n!}t^{2n}nM^{n-1}\delta =\frac{\delta t^2}{2} \sum_{n=1}^{\infty} \frac{t^{2n-2}M^{n-1}}{2^{n-1}(n-1)!}= 
\frac{\delta t^2}{2}\mathrm e^{M t^2/2}.
$$ 
If instead using the $L_1$ norm of the BCF, we have 
$$
\begin{aligned}
\int_0^t\cdots\int_0^t\sum_{\alpha_1,\cdots,\alpha_{2n}=1}^N\Bigg|\prod_{i=1}^n C_{\alpha_{2 i-1}, \alpha_{2 i}}(t_{2 i-1}, t_{2 i})& - \prod_{i=1}^nC'_{\alpha_{2 i-1}, \alpha_{2 i}}(t_{2 i-1}, t_{2 i})\Bigg|\mathrm dt_1\cdots\mathrm dt_{2n}\\
&\leq n M_1^{n-1}\delta_1\int_0^t\cdots\int_0^t\mathrm dt_2\cdots\mathrm dt_{2n}=t^{n}nM_1^{n-1}\delta_1,
\end{aligned}
$$ and therefore,
$$
\| \hat\rho_{\text S}(t) - \hat\rho_{\text S}'(t)\|_{\text{tr}}\leq \sum_{n=1}^{\infty} \frac{1}{2^n n!}t^{n}nM_1^{n-1}\delta_1 = \frac{\delta_1 t}{2}\sum_{n=1}^{\infty} \frac{t^{n-1}M_1^{n-1}}{2^{n-1}(n-1)!} = \frac{\delta_1 t}{2}\mathrm e^{M_1 t/2}.
$$
\end{proof}
\begin{rem}
    Recall that for $t_1\geq t_2$, $C_{\alpha_1,\alpha_2}(t_1,t_2)$ is defined as 
    $$
        C_{\alpha_1,\alpha_2}(t_1,t_2) = \operatorname{tr}\left({\mathrm e^{\boldsymbol L_{\text e}(t-t_1)}\boldsymbol E_{\alpha_1}\mathrm e^{\boldsymbol L_{\text e}(t_1-t_2)}\boldsymbol E_{\alpha_2}\mathrm e^{\boldsymbol L_{\text e}t_2}\hat\rho_{\text E}(0)}\right), \quad t_1\geq t_2.
   $$
    If $\hat\rho_{\text E}(0)$ is a stationary state of $\mathrm e^{\boldsymbol L_{\text e}t}$, (i.e., $\boldsymbol L_{\text e}\hat\rho_{\text E}(0) = 0$), then $\mathrm e^{\boldsymbol L_{\text e}t }\hat\rho_{\text E}(0)=\hat\rho_{\text E}(0)$ (for $t\geq 0$). With the fact that $\mathrm e^{\boldsymbol L_{\text e} t}$ is a trace-preserving map, we have 
     \begin{equation}
        C_{\alpha_1,\alpha_2}(t_1,t_2) = \operatorname{tr}\left({\boldsymbol E_{\alpha_1}\mathrm e^{\boldsymbol L_{\text e}(t_1-t_2)}\boldsymbol E_{\alpha_2}\hat\rho_{\text E}(0)}\right), \quad  t_1\geq t_2.
      \label{eq:two_point_BCF}
    \end{equation}
    In other words, $C_{\alpha_1,\alpha_2}(t_1,t_2)$ relies only on the time difference $(t_1-t_2)$. In such cases, let $\mathcal C_{\alpha_1,\alpha_2}(t_1-t_2) = C_{\alpha_1,\alpha_2}(t_1,t_2)$, and let us define $\epsilon$, $\mathcal M$, $\epsilon_1$, $\mathcal M_1$ as: 
    \begin{equation}
    \begin{aligned}
    \mathcal M = \max\left\{\sum_{\alpha,\alpha'=1}^N \|\mathcal C_{\alpha,\alpha'}\|_{L^{\infty}([0,T])}, \sum_{\alpha,\alpha'=1}^N \|\mathcal C'_{\alpha,\alpha'}\|_{L^{\infty}([0,T])}\right\},\quad \epsilon 
        =   \sum_{\alpha,\alpha'=1}^N \|\mathcal C_{\alpha,\alpha'}-\mathcal C'_{\alpha,\alpha'}\|_{L^{\infty}([0,T])} 
\\
       \mathcal M_1 = \max\left\{\sum_{\alpha,\alpha'=1}^N \|\mathcal C_{\alpha,\alpha'}\|_{L^{1}([0,T])}, \sum_{\alpha,\alpha'=1}^N \|\mathcal C'_{\alpha,\alpha'}\|_{L^{1}([0,T])}\right\},
    \quad
        \epsilon_1
        =   \sum_{\alpha,\alpha'=1}^N \|\mathcal C_{\alpha,\alpha'}-\mathcal C'_{\alpha,\alpha'}\|_{L^{1}([0,T])}    .
            \end{aligned}
            \label{eq:epsilon_new}
    \end{equation}
    $\mathcal C_{\alpha_1,\alpha_2}$ and $\mathcal C'_{\alpha_1,\alpha_2}$ are  single-variable functions, and with comparison to $\delta, M, \delta_1, M_1$ defined in \cref{thm:Gronwall_error_bound}, we have
    $$
     \delta = \epsilon,\quad 
  M =   \mathcal M, \quad
    \delta_1\leq 2\epsilon_1,
    \quad M_1\leq 2 \mathcal M.
    $$
    As a result, the following error bound holds:
    \begin{equation}
   \left\|\hat{\rho}_{\mathrm{S}}(t)-\hat{\rho}_{\mathrm{S}}^{\prime}(t)\right\|_{\mathrm{tr}} \leq \epsilon_1 {t}\mathrm{e}^{\mathcal M_1 t},\quad \left\|\hat{\rho}_{\mathrm{S}}(t)-\hat{\rho}_{\mathrm{S}}^{\prime}(t)\right\|_{\mathrm{tr}} \leq \epsilon \frac{t^2}{2} \mathrm{e}^{\mathcal M t^2 / 2}.
    \end{equation}
    \label{rmk:single_variable_Corr}
\end{rem}
\begin{rem}
    As an application of \cref{lem:rho_to_O},  an error bound similar to \cref{rmk:single_variable_Corr} holds for any bounded system observable $\hat O_{\text S}$. Therefore \cref{lem:Gronwall_informal} is the direct corollary of \cref{thm:Gronwall_error_bound} using \cref{lem:rho_to_O}.
    \label{rem:Gronwall_cor}
\end{rem}
\begin{rem}
    Note that it could be seen from the definition of $\epsilon$, $\delta$, $\mathcal M$, $ M$ that they may scale quadratically with the system size $N$. This dependence on $N$ is natural and expected. However, the primary focus of this work is on the exponential improvement in the $t$-dependence of the error bound, as discussed in \cref{sec:intro}, which is the more dominant factor.
\end{rem}
\subsection{Improved error bound}
\label{sec:main}
Now let us discuss the improved error bound. Under the same setup as above, we will assume $(\mathrm e^{\boldsymbol{L}_{\text s}t})_{t\geq 0}$ is a contractive semigroup. Furthermore, let $\boldsymbol{L}$, $\boldsymbol{L}'$ denote the total system-environment Liouvillian corresponds to $\hat\rho_{\text S}(t)$ and $\hat\rho_{\text S}'(t)$, respectively.
We  also assume that $(\mathrm e^{\boldsymbol{L}t})_{t\geq 0}$ forms a contractive semigroup. Note that we do not impose the same assumption on   the perturbed semigroup $(\mathrm e^{\boldsymbol{L}' t})_{t\geq 0}$, allowing our analysis applicable to non-contractive frameworks such as quasi-Lindblad formalism (see \cref{sec:quasi-Lind}).
\begin{thm}[Main theorem]
With the same setup as in \cref{thm:Gronwall_error_bound}, and $(\mathrm e^{\boldsymbol{L}t})_{t\geq 0}$ is a contractive semigroup, 
we have the following improved error bound
    \begin{equation}
        \|\hat\rho_{\text S}(t) - \hat\rho_{\text S}'(t)\|_{\text{tr}} \leq \mathrm e^{\delta t^2/2}-1,
    \end{equation}
    where $\delta$ is the difference between $C$ and $C'$ in the $L^{\infty}$ sense, as defined in \cref{eq:def_eps}, while in the $L^1$ sense, we have 
    \begin{equation}
        \|\hat\rho_{\text S}(t) - \hat\rho_{\text S}'(t)\|_{\text{tr}} \leq \mathrm e^{\delta_1 t/2}-1,
    \end{equation}
    where $\delta_1$ is also defined the same as in \cref{eq:def_eps1}.
    \label{thm:main_error_bound}
\end{thm}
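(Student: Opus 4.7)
The plan is to exploit an approximate commutativity structure: under the time-ordering $\mathbf{T}^{\pm}$, the quadratic ``actions''
$$\boldsymbol A=\tfrac{1}{2}\int_0^t\!\int_0^t\sum_{\alpha_1,\alpha_2=1}^{N} C_{\alpha_1,\alpha_2}(t_1,t_2)\,\boldsymbol S_{\alpha_1}(t_1)\boldsymbol S_{\alpha_2}(t_2)\,\mathrm d t_1\mathrm d t_2$$
and its primed analogue $\boldsymbol A'$ effectively commute, because swapping two pairs of $\boldsymbol S$'s always produces an even sign even in the fermionic case. This makes the formal identity $\mathbf T^{\pm}\exp(\boldsymbol A)=\mathbf T^{\pm}[\exp(\boldsymbol A')\exp(\boldsymbol A-\boldsymbol A')]$ plausible, leading to
$$\hat\rho_{\text S}(t)-\hat\rho_{\text S}'(t)=\mathrm e^{\boldsymbol L_{\text s}t}\,\mathbf T^{\pm}\!\Bigl[\exp(\boldsymbol A')\bigl(\exp(\delta\boldsymbol A)-1\bigr)\Bigr]\hat\rho_{\text S}(0),\qquad \delta\boldsymbol A=\boldsymbol A-\boldsymbol A'.$$
Expanding $\exp(\delta\boldsymbol A)-1=\sum_{k\ge1}(\delta\boldsymbol A)^k/k!$ then produces one factor of $\delta C$ per ``insertion'', which is exactly what is needed to obtain $\varepsilon^k$ rather than $M^k$ in the final bound.

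To make this rigorous, I would start from the resummed Dyson series \cref{eq:rhoS_Dyson_short}, substitute $C=C'+\delta C$, and binomially expand each product $\prod_{i=1}^{m}C_{\alpha_{2i-1},\alpha_{2i}}(t_{2i-1},t_{2i})$. Grouping terms by the number $k\ge 1$ of pairs carrying $\delta C$ and using the symmetry of the integrand under relabeling of pairs to absorb the $\binom{m}{k}$, one obtains
$$\hat\rho_{\text S}(t)-\hat\rho_{\text S}'(t)=\sum_{k\ge1}\frac{1}{2^k k!}\int_{[0,t]^{2k}}\sum_{\alpha_1,\ldots,\alpha_{2k}=1}^{N}\prod_{i=1}^{k}\delta C_{\alpha_{2i-1},\alpha_{2i}}(s_{2i-1},s_{2i})\,\boldsymbol Y_k(s;\alpha)\,\mathrm d s_1\cdots\mathrm d s_{2k},$$
where $\boldsymbol Y_k(s;\alpha)$ is the $C'$-version of \cref{eq:rhoS_Dyson_short} with $2k$ additional system superoperators $\boldsymbol S_{\alpha_l}(s_l)$ inserted inside the $\mathbf T^{\pm}$ in the sense of \cref{rem:time_ordering}.

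The decisive step is the uniform bound $\|\boldsymbol Y_k(s;\alpha)\|_{\text{tr}}\le 1$. I would establish it by recognizing $\boldsymbol Y_k$ as the partial trace of a physical ``interrupted'' joint evolution: starting from $\hat\rho(0)=\hat\rho_{\text E}(0)\otimes\hat\rho_{\text S}(0)$, evolve under $\boldsymbol L'=\boldsymbol L_0+\boldsymbol L_{\text{SE}}'$ and at the $2k$ ordered times $s_{(1)}>\cdots>s_{(2k)}$ apply $\boldsymbol 1_{\text e}\otimes\boldsymbol S_{\gamma_{(l)}}$. Unrolling each subinterval $\mathrm e^{\boldsymbol L'\Delta}$ in the interaction picture as in \cref{eq:rho_perturbation_series,eq:rho_perturbation_series_SE}, applying $\operatorname{tr}_{\text E}$, and invoking Wick's condition on the remaining $E'$ correlators reproduces exactly the series defining $\boldsymbol Y_k$, with the fermionic signs $(\pm)^\sigma$ coming from $\mathbf T^{\pm}$ matching those generated by the physical time-ordering of the $\boldsymbol L_{\text{SE}}'$ expansion. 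Since $\mathrm e^{\boldsymbol L'\Delta}$ and $\boldsymbol 1_{\text e}\otimes\boldsymbol S_{\gamma_{(l)}}$ each have trace norm $\le 1$ and $\operatorname{tr}_{\text E}$ is contractive, the bound follows.

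Combining everything with the per-pair estimate $\sum_{\alpha_1,\alpha_2}|\delta C_{\alpha_1,\alpha_2}(s,s')|\le N\|\delta C(s,s')\|_2\le\varepsilon$ yields
$$\|\hat\rho_{\text S}(t)-\hat\rho_{\text S}'(t)\|_{\text{tr}}\le\sum_{k\ge1}\frac{(\varepsilon t^2)^k}{2^k k!}=\mathrm e^{\varepsilon t^2/2}-1,$$
and the $L^1$ counterpart $\mathrm e^{\varepsilon_1 t/2}-1$ follows identically using $\int_{[0,t]^2}\sum_{\alpha_1,\alpha_2}|\delta C(s,s')|\,\mathrm d s\mathrm d s'\le\varepsilon_1 t$ for each pair. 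The main obstacle is the identification step above: one must verify carefully that the combinatorial Wick resummation on the physical side reproduces $\boldsymbol Y_k$ including all $(\pm)^\sigma$ signs, and that it continues to work when the $2k$ external $\boldsymbol S_{\alpha_l}(s_l)$ are interleaved with the internal $\boldsymbol S_{\beta_r}(u_r)$ under the time-ordering.
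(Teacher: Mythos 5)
Your route is the same as the paper's: the formal commuting under $\mathbf T^{\pm}$, the binomial regrouping of $\delta C$ with the $\binom{m}{k}$ absorbed by pair-relabeling symmetry (this is exactly \cref{lem:correlator_formal}), the identification of the remaining time-ordered object with a partial trace of an ``interrupted'' joint evolution (this is \cref{lem:correlator_return_to_extended_system}), and the final contraction estimate. The only immaterial difference is that you factor out the $C'$ background while the paper factors out $C$; under the assumptions of \cref{thm:Gronwall_error_bound} both full dynamics are contractive, so either choice works.

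There is, however, one genuine gap, precisely at the step you flag as the ``main obstacle'': in the fermionic case the insertion operators cannot be the naive extensions $\boldsymbol 1_{\text e}\otimes\boldsymbol S_{\gamma}$. With that choice the identification of $\boldsymbol Y_k$ with the interrupted evolution fails, because commuting $\boldsymbol 1_{\text e}\otimes\boldsymbol S_{\gamma}$ past the expanded $\boldsymbol L_{\text{SE}}$ vertices produces no signs on the environment factor, whereas the target expression carries the $\mathbf T^{-}$ signs $(\pm)^{\sigma}$ relative to the interleaved internal vertices (cf.\ \cref{rem:time_ordering}); the two sides then differ by uncontrolled signs whenever an odd system superoperator sits between two coupling vertices. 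The paper's resolution is to dress the insertion with the environment parity superoperator, $\widetilde{\boldsymbol S}_{\alpha}=\boldsymbol P_{\text e}\otimes\boldsymbol S_{\alpha}$ with $\boldsymbol P_{\text e}:\hat\rho\mapsto \hat P_{\text e}\hat\rho\hat P_{\text e}$ (see \cref{eq:parity_operator} and \cref{defn:fermionic_superoperators}): $\boldsymbol P_{\text e}$ anticommutes with every environment superoperator $\boldsymbol E_{\beta}$, commutes with $\boldsymbol L_{\text e}$, satisfies $(\boldsymbol P_{\text e})^{2n}=\boldsymbol 1_{\text e}$ and $\|\boldsymbol P_{\text e}\|_{\text{tr}}=1$, so the signs it generates when reordered exactly reproduce $(\pm)^{\sigma}$, the environment correlator is left unchanged, and the trace-norm bound $\prod_i\|\widetilde{\boldsymbol S}_{\alpha_i}\|_{\text{tr}}\le 1$ survives. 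This is the content of \cref{eq:the_thing_need_to_be_proved} and \cref{sec:proof_correlator_return_to_extended_system_fermionic}; without this parity dressing (or an equivalent device) your argument proves the bosonic case but not the fermionic one.
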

\begin{rem}Similar to \cref{rmk:single_variable_Corr}, in the case that $C_{\alpha_1,\alpha_2}(t_1,t_2)$ depends only on the time difference $(t_1-t_2)$, we have the following simplified error bound:
\begin{equation}
    \left\|\hat{\rho}_{\mathrm{S}}(t)-\hat{\rho}_{\mathrm{S}}^{\prime}(t)\right\|_{\mathrm{tr}} \leq \mathrm{e}^{\epsilon t^2 / 2}-1,\quad 
    \left\|\hat{\rho}_{\mathrm{S}}(t)-\hat{\rho}_{\mathrm{S}}^{\prime}(t)\right\|_{\mathrm{tr}} \leq \mathrm{e}^{\epsilon_1 t }-1,
\end{equation}
where $\epsilon, \epsilon_1$ are defined as in \cref{eq:epsilon_new}.
\end{rem}
\begin{rem}
    Similar to \cref{rem:Gronwall_cor}, using \cref{lem:rho_to_O}, the error bound for system observables (see \cref{thm:main}) follow naturally as the corollary of \cref{thm:main_error_bound}.
\end{rem}

Prior to giving a rigorous proof for \cref{thm:main_error_bound}, let us first introduce a formal yet intuitive argument.
\begin{proof}[Formal Proof of \cref{thm:main_error_bound}]
Let us use the formal exponential formula in \cref{rmk:formal_exponential_formula}.
Formally, all operators (anti-)commute under the time-ordering operation $\mathbf T^{\pm}$, 
\begin{equation}
    \begin{aligned}
     &   \mathbf T^{\pm} \left(\mathrm{e}^{\frac 1 2\int_0^t\int_0^t \sum_{\alpha_1,\alpha_2=1}^N C'_{\alpha_1,\alpha_2}(t_1,t_2)\boldsymbol S_{\alpha_1}(t_1)\boldsymbol S_{\alpha_2}(t_2)\mathrm d t_1\mathrm d t_2}\right) 
     \\   =     
     &\mathbf T^{\pm} \left(\mathrm{e}^{\frac 1 2\int_0^t\int_0^t \sum_{\alpha_1,\alpha_2=1}^N C_{\alpha_1,\alpha_2}(t_1,t_2)\boldsymbol S_{\alpha_1}(t_1)\boldsymbol S_{\alpha_2}(t_2)\mathrm d t_1\mathrm d t_2}\mathrm{e}^{\frac 1 2\int_0^t\int_0^t \sum_{\alpha_1,\alpha_2=1}^N \Delta C_{\alpha_1,\alpha_2}(t_1,t_2)\boldsymbol S_{\alpha_1}(t_1)\boldsymbol S_{\alpha_2}(t_2)\mathrm d t_1\mathrm d t_2}\right). 
    \end{aligned}
    \label{eq:formal_exp_decomp}
\end{equation}
Note that \cref{eq:formal_exp_decomp} is a formal expression for now.
Then formally, we have
\begin{equation}
    \begin{aligned}
        \hat\rho_{\text S}'(t) - \hat\rho_{\text S}(t)
        = &\mathrm e^{\boldsymbol{L}_{\text s}t}\left(\mathbf T^{\pm} 
        \left(\mathrm e^{ \frac 1 2\int_0^t\int_0^t \sum_{\alpha_1,\alpha_2=1}^N C_{\alpha_1,\alpha_2}(t_1,t_2) \boldsymbol{S}_{\alpha_1}(t_1)\boldsymbol{S}_{\alpha_2}(t_2)\mathrm d t_1\mathrm d t_2}\right) \right.\\
       & \cdot \left.
        \left(\mathrm e^{ \frac 1 2\int_0^t\int_0^t \sum_{\alpha_1,\alpha_2=1}^N \Delta C_{\alpha_1,\alpha_2}(t_1,t_2) \boldsymbol{S}_{\alpha_1}(t_1)\boldsymbol{S}_{\alpha_2}(t_2)\mathrm d t_1\mathrm d t_2}-1\right)\right)\hat\rho_{\text S}(0).    
    \end{aligned}
    \label{eq:exp_diff_formal}
   \end{equation}
 A formal expansion on the second exponential in \cref{eq:exp_diff_formal} gives 
   \begin{equation}
    \begin{aligned}
      & \hat\rho_{\text S}'(t) - \hat\rho_{\text S}(t)
    = 
    \sum_{n=1}^{\infty}\frac{1}{2^nn!}\sum_{\alpha_1,\cdots,\alpha_{2n}=1}^N\int_0^t\cdots\int_0^t\mathrm dt_1\cdots \mathrm dt_{2n}
        \prod_{i=1}^n\Delta C_{\alpha_{2i-1},\alpha_{2i}}(t_{2i-1},t_{2i})
       \\&\cdot \mathrm e^{\boldsymbol{L}_{\text s}t}\mathbf T^{\pm} \left( \left(\prod_{j=1}^{2n}\boldsymbol{S}_{\alpha_j}(t_j)\right)\mathrm e^{\frac 1 2\int_0^t \int_0^t \sum_{\beta_1,\beta_2=1}^N C_{\beta_1,\beta_2}(s_1,s_2)\boldsymbol{S}_{\beta_1}(t_1)\boldsymbol{S}_{\beta_2}(t_2)\mathrm d s_1\mathrm d s_2}
    \right)   \hat\rho_{\text S}(0).
    \end{aligned}
    \label{eq:diff_formal}
   \end{equation} 
   The derivation of \cref{eq:diff_formal} here is formal and will be justified rigorously in \cref{lem:correlator_formal}.
  We will also prove the following equality in \cref{lem:correlator_return_to_extended_system}:
  \begin{equation}
    \begin{aligned}
        &\mathrm e^{\boldsymbol{L}_{\text s}t}\mathbf T^{\pm} \left( \left(\prod_{j=1}^{2n}\boldsymbol{S}_{\alpha_j}(t_j)\right)\mathrm e^{\frac 1 2\int_0^t \int_0^t \sum_{\beta_1,\beta_2=1}^N C_{\beta_1,\beta_2}(s_1,s_2)\boldsymbol{S}_{\beta_1}(s_1)\boldsymbol{S}_{\beta_2}(s_2)\mathrm d s_1\mathrm d s_2}
    \right)   \hat\rho_{\text S}(0)\\
    = & (\pm)^{\sigma}\operatorname{tr}_{\text E}\left(\mathrm e^{\boldsymbol L(t-t_{\sigma(1)})}\widetilde{\boldsymbol{S}}_{\alpha_{\sigma(1)}}
    \prod_{i=2}^{2n}\left(\mathrm e^{ \boldsymbol L(t_{\sigma(i-1)}-t_{\sigma(i)})}\widetilde{\boldsymbol{S}}_{\alpha_{\sigma(i)}}
    \right)
    \mathrm e^{\boldsymbol{L}t_{\sigma(2n)}}
    \hat\rho(0) \right),
    \end{aligned}
\label{eq:correlator_return_to_extended_system}
   \end{equation}
where $\sigma\in S_{2n}$ is the permutation of $1,\cdots,2n$ such that $t_{\sigma(1)}\geq \cdots\geq t_{\sigma(2n)}$, 
and $\widetilde{\boldsymbol S}_{\alpha}$ is the extension of $\boldsymbol{S}_{\alpha}$ to the entire extended system. For bosonic environments, we define 
$\widetilde{\boldsymbol S}_{\alpha} =\boldsymbol 1_{\text e}\otimes \boldsymbol S_{\alpha}$.
For fermionic systems, we refer to \cref{sec:proof_correlator_return_to_extended_system_fermionic} for the definition of $\widetilde{\boldsymbol S}_{\alpha}$.
Since the partial trace does not increase the trace norm, the trace norm of LHS of \cref{eq:correlator_return_to_extended_system} is bounded by the  trace norm  for the following:
 $$
 \left(\mathrm e^{\boldsymbol L(t-t_{\sigma(1)})}\widetilde{\boldsymbol{S}}_{\alpha_{\sigma(1)}}
    \prod_{i=2}^{2n}\left(\mathrm e^{ \boldsymbol L(t_{\sigma(i-1)}-t_{\sigma(i)})}\widetilde{\boldsymbol{S}}_{\alpha_{\sigma(i)}}
    \right)
    \mathrm e^{\boldsymbol{L}t_{\sigma(2n)}}
 \right)\hat\rho(0).
 $$
 Furthermore, since  $\mathrm e^{\boldsymbol L t}$ are contraction mappings for $t\geq 0$, i.e., $\|\mathrm e^{\boldsymbol Lt}\|_{\text{tr}}\leq 1$, the trace norm of the above is further bounded by 
 \begin{equation}
 \prod_{i=1}^{2n}\| \widetilde{\boldsymbol{S}}_{\alpha_{\sigma(i)}}\|_{\text{tr}} \|\hat\rho(0)\|_{\text{tr}}.
 \label{eq:finalbound}
 \end{equation} In the bosonic case, $\|\widetilde{\boldsymbol{S}}_{\alpha_i}\|_{\text tr} = \|\boldsymbol{S}_{\alpha_i}\|_{\text tr} = 1$, and then
 \cref{eq:finalbound} is bounded by 1. In the fermionic case, the same error bound holds (see \cref{sec:proof_correlator_return_to_extended_system_fermionic}).
Now, we have 
\begin{equation}
    \|\hat\rho_{\text S}'(t) - \hat\rho_{\text S}(t)\|_{\text{tr}} \leq\int_0^t\cdots\int_0^t\mathrm dt_1\cdots \mathrm dt_{2n} \sum_{n=1}^{\infty}\frac{1}{2^nn!}\prod_{i=1}^n\sum_{\alpha_{2i-1},\alpha_{2i}=1}^N|\Delta C_{\alpha_{2i-1},\alpha_{2i}}(t_{2i-1},t_{2i})|.
\end{equation}
Hence, similar to the proof of \cref{thm:Gronwall_error_bound}, in the $L^\infty$ sense, we have $\|\hat\rho_{\text S}'(t) - \hat\rho_{\text S}(t)\|_{\text{tr}} \leq \sum_{n=1}^\infty \frac{\delta^nt^{2n}}{2^nn!}=\mathrm e^{\delta t^2/2} -1 $, and in the $L^1$ sense, we have 
    $\|\hat\rho_{\text S}'(t) - \hat\rho_{\text S}(t)\|_{\text{tr}} \leq \sum_{n=1}^\infty \frac{\delta_1^n t^n}{2^nn!}=\mathrm e^{\delta_1 t/2} -1 $.
\end{proof}
\begin{rem}
    We remark that the equality \cref{eq:correlator_return_to_extended_system} (proven in \cref{lem:correlator_return_to_extended_system}) has been formally derived previously in physics literature in coherent state path integral formalism (for example, see \cite[Eq. 5]{MascherpaSmirneHuelgaetal2017}). However, this type of coherent state path integral formalism, aside from the challenges of achieving rigorous justification, has been shown to sometimes yield incorrect results \cite{Kochetov2019}. In contrast, the derivations presented in this paper, based on the superoperator formalism, are fully valid.
\end{rem}
\subsection{Proof of Theorem \ref{thm:main_error_bound}}
\label{sec:proof_main}
To rigorously prove \cref{thm:main_error_bound}, we need to mathematically justify \cref{eq:diff_formal} and the equality \cref{eq:correlator_return_to_extended_system}.
Let us first prove the former:

\begin{lem}
    \cref{eq:diff_formal} holds, where the expression
    \begin{equation}
        \mathbf T^{\pm}\left(\prod_{i=1}^{2n}\boldsymbol S_{\alpha_i}(t_i) \mathrm e^{\frac 1 2\int_0^t\int_0^t \sum_{\beta_1,\beta_2=1}^N C_{\beta_1,\beta_2}(s_1,s_2)\boldsymbol S_{\beta_1}(s_1)\boldsymbol S_{\beta_2}(s_2)\mathrm d s_1\mathrm d s_2}\right)
    \label{eq:correlator_formal}
    \end{equation}
    is understood as shorthand for the following series
    \begin{equation}
        \sum_{k=0}^\infty\frac{1}{2^{k}k!} \sum_{\beta_1,\cdots,\beta_{2k}=1}^N \prod_{j=1}^k C_{\beta_{2 j-1}, \beta_{2 j}}\left(s_{2 j-1}, s_{2 j}\right)
     \mathbf T^{\pm}\left(\prod_{i=1}^{2 n}\boldsymbol{S}_{\alpha_i}\left(t_i\right)\prod_{j=1}^{2 k} \boldsymbol{S}_{\beta_j}\left(s_j\right)\right) ,
     \label{eq:correlator_rigorous}
    \end{equation}
    where the time-ordering acts jointly on the full list of times $\{t_1,\ldots,t_{2n},s_1,\ldots,s_{2k}\}$ as in \cref{rem:time_ordering}.
\label{lem:correlator_formal}
\end{lem}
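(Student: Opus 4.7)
The plan is to prove \cref{eq:diff_formal} by expanding the Dyson series \cref{eq:rhoS_Dyson_short} for $\hat\rho_{\text S}'(t)$, splitting $C'=C+\delta C$ pair by pair, and reorganizing the resulting double sum into a product form. The lemma's content is twofold: (i) to define the shorthand \cref{eq:correlator_formal} via formal Taylor expansion of the exponential followed by linear extension of $\mathbf T^{\pm}$ to each term (as in \cref{rem:time_ordering}), and (ii) to show that, with this interpretation, \cref{eq:diff_formal} is a rigorous equality of absolutely convergent series in trace norm.

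First, I would apply \cref{eq:rhoS_Dyson_short} to $C'$ and expand each factor as $C'_{\alpha_{2i-1},\alpha_{2i}}(t_{2i-1},t_{2i})=C_{\alpha_{2i-1},\alpha_{2i}}(t_{2i-1},t_{2i})+\delta C_{\alpha_{2i-1},\alpha_{2i}}(t_{2i-1},t_{2i})$, producing a sum over subsets $I\subseteq\{1,\dots,m\}$ indicating which pairs carry $\delta C$. The full integrand is invariant under any permutation of the $m$ ordered pairs $(t_{2i-1},t_{2i},\alpha_{2i-1},\alpha_{2i})$: the products of $C$ and $\delta C$ are symmetric, and by \cref{rem:time_ordering} the time-ordered product $\mathbf T^{\pm}(\cdot)$ depends only on the multiset of operator-time-index triples and on the signature of the sorting permutation, which is independent of the input order. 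All $\binom{m}{n}$ subsets of size $n$ therefore contribute equally. Relabeling the $\delta C$ pairs as $(t_1,\dots,t_{2n},\alpha_1,\dots,\alpha_{2n})$ and the $C$ pairs as $(s_1,\dots,s_{2k},\beta_1,\dots,\beta_{2k})$, substituting $m=n+k$, and using $\binom{n+k}{n}/(2^{n+k}(n+k)!)=1/(2^{n+k}n!k!)$, the series for $\hat\rho_{\text S}'(t)$ reorganizes into a double sum over $n,k\ge 0$.

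Next, at fixed $n$ and fixed outer data $(\alpha_i,t_i)_{i\le 2n}$, the inner $k$-sum is exactly the right-hand side of \cref{eq:correlator_rigorous}, which by the lemma's defining interpretation equals the shorthand \cref{eq:correlator_formal}. The $n=0$ slice of the double sum is precisely the Dyson series for $\hat\rho_{\text S}(t)$ obtained from \cref{eq:rhoS_Dyson_short}, so subtracting $\hat\rho_{\text S}(t)$ leaves exactly the $n\ge 1$ part, which is \cref{eq:diff_formal}.

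The main obstacle is justifying the interchange of summations used to detach the $n$-sum (over $\delta C$ pairs) from the $k$-sum (over $C$ pairs). This requires absolute trace-norm convergence of the double series. Here the key observation, mirroring the estimates in the proof of \cref{thm:Gronwall_error_bound}, is that $\mathrm e^{\boldsymbol L_{\text s}t}\mathbf T^{\pm}\bigl(\prod_i\boldsymbol S_{\alpha_i}(t_i)\prod_j\boldsymbol S_{\beta_j}(s_j)\bigr)$, once time-ordered, telescopes into a composition of contractive semigroups $\mathrm e^{\boldsymbol L_{\text s}\cdot}$ separated by unit-trace-norm superoperators $\boldsymbol S_\gamma$, so its trace norm is at most $1$. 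The fixed-$(n,k)$ summand is thus bounded by $(\varepsilon t^2)^n(Mt^2)^k/(2^{n+k}n!k!)$ (and analogously by $(\varepsilon_1 t)^n(M_1 t)^k/(2^{n+k}n!k!)$ in the $L^1$ setting), whose double sum is finite. Fubini's theorem then legitimizes the reorganization and concludes the proof.
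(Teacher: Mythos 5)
Your proposal is correct and follows essentially the same route as the paper's proof: expand $C'=C+\delta C$ inside the resummed Dyson series \cref{eq:rhoS_Dyson_short}, use the symmetry of the integrand under relabeling of pairs to collapse the subset sum into a binomial factor, reorganize via $\binom{n+k}{n}/(2^{n+k}(n+k)!)=1/(2^n n!\,2^k k!)$ into a double series, and justify the interchange of summations by absolute convergence. The only difference is that you spell out the trace-norm bound (contractivity of $\mathrm e^{\boldsymbol L_{\text s}t}$ and $\|\boldsymbol S_\alpha\|_{\text{tr}}=1$) explicitly where the paper simply invokes dominated convergence, which is a harmless elaboration rather than a different argument.
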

\begin{proof}
    Since $C_{\alpha_1,\alpha_2}'(t_1,t_2) = C_{\alpha_1,\alpha_2}(t_1,t_2) + \Delta C_{\alpha_1,\alpha_2}(t_1,t_2)$, according to \cref{eq:rhoS_Dyson_short}, we have
 
    \begin{align*}
        \hat\rho_{\text S}'(t) &= \sum_{n=0}^{\infty} \frac{1}{2^n n!} \int_0^t \cdots \int_0^t \sum_{\alpha_1, \cdots, \alpha_{2 n}=1}^N\left(\prod_{i=1}^n (C_{\alpha_{2 i-1}, \alpha_{2 i}}+\Delta C_{\alpha_{2 i-1}, \alpha_{2 i}})\left(t_{2 i-1}, t_{2 i}\right)\right) 
        \\&\cdot\mathrm e^{\boldsymbol L_{\text s}t}\mathbf T^{\pm}\left(\prod_{j=1}^{2 n} \boldsymbol{S}_{\alpha_j}\left(t_j\right)\right) \mathrm{d} t_1 \cdots \mathrm{d} t_{2 n} \\=& \sum_{n=0}^{\infty} \frac{1}{2^n n!} \int_0^t \cdots \int_0^t \sum_{\alpha_1, \cdots, \alpha_{2 n}=1}^N\sum_{I \subset \{1,\cdots,n\}}\left(\prod_{i\in I}  C_{\alpha_{2 i-1}, \alpha_{2 i}} \left(t_{2 i-1}, t_{2 i}\right)\prod_{i'\in  I'}  \Delta C_{\alpha_{2 i'-1}, \alpha_{2 i'}} \left(t_{2 i'-1}, t_{2 i'}\right)\right) 
        \\&\cdot\mathrm e^{\boldsymbol L_{\text s}t}\mathbf T^{\pm}\left(\prod_{j=1}^{2 n} \boldsymbol{S}_{\alpha_j}\left(t_j\right)\right) \mathrm{d} t_1 \cdots \mathrm{d} t_{2 n} \quad (I' =\{1,\cdots,n\}\backslash I) 
    \\ = & \sum_{n=0}^{\infty} \frac{1}{2^n n!} \int_0^t \cdots \int_0^t \sum_{\alpha_1, \cdots, \alpha_{2 n}=1}^N\sum_{k=0}^n\left(\begin{array}{c} n \\ k \end{array}\right)
    \left(\prod_{i=1}^k C_{\alpha_{2 i-1}, \alpha_{2 i}}\left(t_{2 i-1}, t_{2 i}\right)\right) \\
    &\cdot\left(\prod_{i=k+1}^n \Delta C_{\alpha_{2 i-1}, \alpha_{2 i}}\left(t_{2 i-1}, t_{2 i}\right)\right)\mathrm e^{\boldsymbol L_{\text s}t} \mathbf T^{\pm}\left(\prod_{j=1}^{2 n} \boldsymbol{S}_{\alpha_j}\left(t_j\right)\right) \mathrm{d} t_1 \cdots \mathrm{d} t_{2 n}\\
    =& 
    \sum_{m=0}^\infty \frac{1}{2^m m!} \int_0^t\mathrm dt_1\cdots\int_0^t\mathrm dt_{2m}\sum_{\alpha_1,\cdots,\alpha_{2m}=1}^N\left(\prod_{i=1}^m \Delta C_{\alpha_{2 i-1}, \alpha_{2 i}}\left(t_{2 i-1}, t_{2 i}\right)\right) 
    \sum_{k=0}^\infty\frac{1}{2^{k}k!} 
    \\ &\cdot \int_0^t \mathrm ds_1\cdots \int_0^t \mathrm ds_{2j}\sum_{\beta_1,\cdots,\beta_{2k}=1}^N \left(\prod_{j=1}^k C_{\beta_{2 j-1}, \beta_{2 j}}\left(s_{2 j-1}, s_{2 j}\right)\right)
    \mathrm e^{\boldsymbol L_{\text s}t}\mathbf T^{\pm}\left(\prod_{i=1}^{2 m}\boldsymbol{S}_{\alpha_j}\left(t_j\right)\prod_{j=1}^{2 k} \boldsymbol{S}_{\beta_j}\left(s_j\right)\right) .
\end{align*}

Here, in the third equality, we have used the following fact: for $t_1,\cdots, t_{2n}$ that are mutually distinct, $ \mathrm e^{\boldsymbol L_{\text s}t}\mathbf T^{\pm}\left(\prod_{i=1}^{2 n}\boldsymbol{S}_{\alpha_j}(t_j)\right)$ is invariant after exchanging $ \boldsymbol{S}_{\alpha_{2i-1}}(t_{2_i-1})\boldsymbol{S}_{\alpha_{2i }}(t_{2_i })$ with $ \boldsymbol{S}_{\alpha_{2i'-1}}(t_{2_i'-1})\boldsymbol{S}_{\alpha_{2i' }}(t_{2_i' })$, for any $i,i'\in \{1,\cdots,2n\}$. In the last equality, we exchange the order of summation, which is justified by the dominated convergence theorem since all series summations are absolutely convergent.
Thus, $\hat\rho_{\text{S}}'(t) - \hat\rho_{\text S}(t)$ could be obtained by replacing $\sum_{m=0}^\infty$ with $\sum_{m=1}^\infty$ in the above equation,
i.e., \cref{eq:diff_formal} holds.
\end{proof}

Now let us prove the equality \cref{eq:correlator_return_to_extended_system}, which would enable us to take advantage of the contractive properties of $\mathrm e^{\boldsymbol{L}t}$.

\begin{lem} Under the bounds in \cref{eq:estimate}, the identity in \cref{eq:correlator_return_to_extended_system} holds, with \cref{eq:correlator_formal} interpreted as the shorthand of \cref{eq:correlator_rigorous} (see also \cref{lem:correlator_formal}).
\label{lem:correlator_return_to_extended_system}
\end{lem}
\begin{proof}[Proof of \cref{lem:correlator_return_to_extended_system}]
    Let us  prove both sides of \cref{eq:correlator_return_to_extended_system} are equal to the following:
    \begin{equation}
   \begin{aligned}
    &(\pm)^{\sigma_t}\operatorname{tr}_{\text E}\left(\mathrm e^{\boldsymbol{L}_0t}\sum_{m=0}^\infty\frac 1 {m!}\int_0^t\cdots\int_0^t\mathrm ds_1\cdots \mathrm ds_m\mathbf T^{\pm} 
    \left(
       \prod_{i=1}^{2n} \widetilde{\boldsymbol{S}}_{\alpha_i}(t_i)
       \prod_{j=1}^m \boldsymbol{L}_{\text{SE}}(s_j)
    \right)\hat\rho(0)\right).
   \end{aligned}
\label{eq:return_to_extended_system_intermediate}
    \end{equation}
    where $\sigma_t$ is the permutation of $1,\cdots,2n$ such that $t_{\sigma_t(1)}\geq \cdots\geq t_{\sigma_t(2n)}$.
   {Note that similar to the proof of \cref{lem:dyson_conv}, the infinite series in \cref{eq:return_to_extended_system_intermediate} is absolutely convergent. }
    For the left-hand side, we have
        \begin{align}
            &LHS\notag\\ =& \sum_{k=0}^{\infty} \frac{\mathrm e^{\boldsymbol{L}_{\text s}t}}{2^k k!}\int_0^t\mathrm ds_1\cdots \int_0^t\mathrm d s_{2k} \sum_{\beta_1, \cdots, \beta_{2 k}=1}^N \prod_{j=1}^k C_{\beta_{2 j-1}, \beta_{2 j}}\left(s_{2 j-1}, s_{2 j}\right) \mathbf{T}^{\pm}\left(\prod_{i=1}^{2 n} \boldsymbol{S}_{\alpha_i}\left(t_i\right) \prod_{j=1}^{2 k} \boldsymbol{S}_{\beta_j}\left(s_j\right)\right)\hat\rho_{\text S}(0)\notag\\
             = & \sum_{k=0}^{\infty}\sum_{\sigma\in\Pi_{2k}}\frac{\mathrm e^{\boldsymbol{L}_{\text s}t}}{(2k)!}
            \int_0^t\mathrm ds_1\cdots \int_0^t\mathrm d s_{2k}\sum_{\beta_1, \cdots, \beta_{2 k}=1}^N\notag\\
            & \qquad \prod_{j=1}^k C_{\beta_{\sigma(2 j-1)}, \beta_{\sigma(2 j)}}\left(s_{\sigma(2 j-1)}, s_{\sigma(2 j)}\right) \mathrm e^{\boldsymbol{L}_{\text s}t}\mathbf{T}^{\pm}\left(\prod_{i=1}^{2 n} \boldsymbol{S}_{\alpha_i}\left(t_i\right) \prod_{j=1}^{2 k} \boldsymbol{S}_{\beta_{\sigma(j)}}\left(s_{\sigma({j})}\right)\right)\hat\rho_{\text S}(0) \notag\\
            = & \sum_{k=0}^{\infty}\sum_{\sigma\in\Pi_{2k}}\frac{\mathrm e^{\boldsymbol{L}_{\text s}t}}{(2k)!}
            \int_0^t\mathrm ds_1\cdots \int_0^t\mathrm d s_{2k}\sum_{\beta_1, \cdots, \beta_{2 k}=1}^N (\pm)^\sigma\notag\\
            & \qquad \prod_{j=1}^k C_{\beta_{\sigma(2 j-1)}, \beta_{\sigma(2 j)}}\left(s_{\sigma(2 j-1)}, s_{\sigma(2 j)}\right) \mathrm e^{\boldsymbol{L}_{\text s}t}\mathbf{T}^{\pm}\left(\prod_{i=1}^{2 n} \boldsymbol{S}_{\alpha_i}\left(t_i\right) \prod_{j=1}^{2 k} \boldsymbol{S}_{\beta_{j}}\left(s_{j}\right)\right)\hat\rho_{\text S}(0)  
            \notag\\
            =& \sum_{k=0}^{\infty}\frac{\mathrm e^{\boldsymbol{L}_{\text s}t}}{(2k)!}
            \int_0^t\mathrm ds_1\cdots \int_0^t\mathrm d s_{2k}\sum_{\beta_1, \cdots, \beta_{2 k}=1}^N  C^{(2k)}_{\beta_1,\cdots,\beta_{2k}}(s_1,\cdots,s_{2k})
            \mathbf{T}^{\pm}\left(\prod_{i=1}^{2 n} \boldsymbol{S}_{\alpha_i}\left(t_i\right) \prod_{j=1}^{2 k} \boldsymbol{S}_{\beta_{j}}\left(s_{j}\right)\right)\hat\rho_{\text S}(0)
           \notag \\
            =& \sum_{m=0}^\infty\frac{\mathrm e^{\boldsymbol{L}_{\text s}t}}{m!}\int_0^t \mathrm ds_1\cdots\int_0^t\mathrm ds_m
            \sum_{\beta_1, \cdots, \beta_{m}=1}^N  C^{(m)}_{\beta_1,\cdots,\beta_{m}}(s_1,\cdots,s_{m})
            \mathbf T^{\pm}\left(\prod_{i=1}^{2n} {\boldsymbol{S}}_{\alpha_i}(t_i)
            \prod_{j=1}^{m} \boldsymbol{S}_{\beta_j}(s_j)\right)\hat\rho_{\text S}(0).\notag
    \end{align}
    Here, we have used the Isserlis-Wick's conditions (see \cref{defn:Wick_condition}) for $m$-point BCFs $C^{(m)}(s_1,\cdots, s_m)
    = \operatorname{tr}\left(
    \mathbf{T}^{\pm}\left(\prod_{j=1}^m \boldsymbol{E}_{\beta_j}(s_j)\right)\hat\rho_{\text E}(0)    
    \right)$. Then, we only need to prove 
    \begin{equation}
        \begin{aligned}
           & \sum_{m=0}^\infty\frac{\mathrm e^{\boldsymbol{L}_{\text s}t}}{m!}\int_0^t \mathrm ds_1\cdots\int_0^t\mathrm ds_m
            \sum_{\beta_1, \cdots, \beta_{m}=1}^N  C^{(m)}_{\beta_1,\cdots,\beta_{m}}(s_1,\cdots,s_{m})
            \mathbf T^{\pm}\left(\prod_{i=1}^{2n} {\boldsymbol{S}}_{\alpha_i}(t_i)
            \prod_{j=1}^{m} \boldsymbol{S}_{\beta_j}(s_j)\right)\hat\rho_{\text S}(0).\\
         = &(\pm)^{\sigma_t}\operatorname{tr}_{\mathrm{E}}\left(\mathrm{e}^{\boldsymbol{L}_0 t} \sum_{m=0}^{\infty} \frac{1}{m!} \int_0^t \cdots \int_0^t \mathrm{~d} s_1 \cdots \mathrm{d} s_m \boldsymbol{\mathcal T}\left(\prod_{i=1}^{2 n} \widetilde{\boldsymbol{S}}_{\alpha_i}(t_i) \prod_{j=1}^m \boldsymbol{L}_{\mathrm{SE}}(s_j)\right) \hat{\rho}(0)\right).
        \end{aligned}
        \label{eq:the_thing_need_to_be_proved}
    \end{equation}

    For bosonic cases, we further have
    $$
    \mathbf{T}^{+}\left(\prod_{i=1}^{2 n} \mathbf{1}_{\mathrm{e}}\left(t_i\right) \prod_{j=1}^m \boldsymbol{E}_{\beta_j}\left(s_j\right)\right)\hat\rho_{\text E}(0) = \mathbf{T}^{+}\left(\prod_{j=1}^m \boldsymbol{E}_{\beta_j}\left(s_j\right)\right)\hat\rho_{\text E}(0).
    $$
Here,   $\boldsymbol 1_{\text e}(t) = \boldsymbol 1_{\text e}$ and the time $t$ serves as the time argument while performing time-ordering.
    The above equality holds since ${\mathbf T}^+$  is the time-ordering operation without adding any additional sign. We have
            \begin{equation}
                \begin{aligned}     
                    &\text{LHS of \cref{eq:the_thing_need_to_be_proved}}\\
             =& \sum_{m=0}^{\infty}\frac{\mathrm e^{\boldsymbol{L}_{\text s}t}}{m!}
            \int_0^t \mathrm ds_1\cdots \int_0^t \mathrm d s_{m}\\&\qquad\qquad            \operatorname{tr}\left(\mathbf T^+ \left(\prod_{i=1}^{2n}\boldsymbol 1_{\text e}(t_i) \prod_{j=1}^{m}\boldsymbol E_{\beta_j}(s_j)\right)\hat\rho_{\text E}(0)\right)
           \mathbf T^+\left( \prod_{i=1}^{2n}\boldsymbol S_{\alpha_i}(t_i)\prod_{j=1}^{m}\boldsymbol S_{\beta_j}(s_j)\right)\hat\rho_{\text S}(0)
            \\ = & \text{RHS of \cref{eq:the_thing_need_to_be_proved}}.
        \end{aligned}
    \end{equation}
    The above derivation is similar to reversing the proof of \cref{thm:rhoS_Dyson_short}, and we also have used that 
    $|\Pi_{2k}| = (2k-1)!!$, $\mathrm e^{L_{\text e}t}$ is a trace-preserving map, and the Isserlis-Wick's condition (see \cref{defn:Wick_condition}). During this process, we have also exchanged the partial trace and integral, which is justified by the dominated convergence theorem, similar as in the proof of \cref{lem:reduced_dyson}.
    In the fermionic case, one needs to pay extra attention to the additional sign arising from the $\mathbb Z_2$-graded tensor product structure (see \cref{sec:Z2}).
    We leave the proof of \cref{eq:the_thing_need_to_be_proved} in the fermionic case to \cref{sec:proof_correlator_return_to_extended_system_fermionic}.

    For RHS of \cref{eq:correlator_return_to_extended_system}, without loss of generality, let us assume that $t_{2n}\geq t_{2n-1}\geq \cdots\geq t_{1}$.
    Comparing \cref{eq:return_to_extended_system_intermediate} and the RHS of \cref{eq:correlator_return_to_extended_system}, we only need to prove that
    \begin{equation}
        \begin{aligned}
            &\mathrm e^{\boldsymbol{L}_{0}t}\sum_{m=0}^\infty\frac 1 {m!}\int_0^t\cdots\int_0^t\mathrm ds_1\cdots \mathrm ds_m\boldsymbol{\mathcal T}
            \left(
               \prod_{i=1}^{2n} \widetilde{\boldsymbol{S}}_{\alpha_i}(t_i)
               \prod_{j=1}^m \boldsymbol{L}_{\text{SE}}(s_j)
            \right)\hat\rho(0)
            \\=& \left(\mathrm e^{\boldsymbol L(t-t_{2n})}\widetilde{\boldsymbol{S}}_{\alpha_{2n}}
    \prod_{i=2}^{2n}\left(\mathrm e^{ \boldsymbol L(t_{i}-t_{i-1})}\widetilde{\boldsymbol{S}}_{\alpha_{i-1}}
    \right)
    \mathrm e^{\boldsymbol{L}t_{1}}\right) \hat\rho(0).        \end{aligned}
\label{eq:correlator_return_to_extended_RHS}
    \end{equation}
Since LHS of \cref{eq:correlator_return_to_extended_RHS} is absolutely convergent, we have
\allowdisplaybreaks
\begin{align}
&\text{LHS of \cref{eq:correlator_return_to_extended_RHS}} \nonumber
\\
= & \mathrm e^{\boldsymbol L_0t}
\left(\sum_{m_{2n}=0}^\infty \frac 1 {m_{2n}!}\int_{t_{2n}}^t \mathrm{~d} s_1^{(2n)} \cdots \int_{t_{2n}}^t \mathrm{~d} s_{m_{2n}}^{(2n)} \boldsymbol{\mathcal T}\left(\prod_{i=1}^{m_{2n}} \boldsymbol{L}_{\mathrm{SE}}\left(s_i^{(2n)}\right)\right)\right)
\widetilde{\boldsymbol{S}}_{\alpha_{2n}}(t_{2n}) \nonumber\\
&\cdot \left(\sum_{m_{2n-1}=0}^\infty \frac 1 {m_{2n-1}!}\int_{t_{2n-1}}^{t_{2n}}  \mathrm d s_1^{(2n-1)} \cdots \int_{t_{2n-1}}^{t_{2n}}  \mathrm d s_{m_{2n-1}}^{({2n-1})} \boldsymbol{\mathcal T}\left(\prod_{i=1}^{m_{2n-1}} \boldsymbol{L}_{\mathrm{SE}}\left(s_i^{({2n-1})}\right)\right)\right) \widetilde{\boldsymbol{S}}_{\alpha_{2n-1}}(t_{2n-1}) \cdots \nonumber \\ 
&\cdot \left(\sum_{m_{1}=0}^\infty \frac 1 {m_{1}!}\int_{t_{1}}^{t_{2}}  \mathrm d s_1^{(1)} \cdots \int_{t_{1}}^{t_{2}}\mathrm d s_{m_{1}}^{(1)} \boldsymbol{\mathcal T}\left(\prod_{i=1}^{m_{1}} \boldsymbol{L}_{\mathrm{SE}}\left(s_i^{(1)}\right)\right)\right)
\widetilde{\boldsymbol{S}}_{\alpha_{1}}(t_{1}) \nonumber\\
&\cdot \left(\sum_{m_{0}=0}^\infty \frac 1 {m_{0}!}\int_{0}^{t_{2n}}  \mathrm d s_1^{(0)} \cdots \int_{0}^{t_{1}} \mathrm d s_{m_{0}}^{(0)} \boldsymbol{\mathcal T}\left(\prod_{i=1}^{m_{0}} \boldsymbol{L}_{\mathrm{SE}}\left(s_i^{(0)}\right)\right)\right)\hat\rho(0) .\label{eq:intermediate_long}
\end{align}

 Here we have used the following combinatorial equality: for $m = m_1+\cdots+m_{2n+1}$, 
$$
\frac{1}{m!}\left(
    \begin{array}{c}
        m \\ m_1
    \end{array}
\right)\left(
    \begin{array}{c}
        m-m_1 \\ m_2
    \end{array}\right)\cdots \left(
        \begin{array}{c}
            m-m_1-\cdots-m_{2n} \\ m_{2n+1}
        \end{array}
    \right) = \frac{1}{m_1!}\cdots \frac{1}{m_{2n+1}!}.
$$
Proceeding further, using the definition of the time-ordering operator, we see that \cref{eq:intermediate_long} is equivalent to the following equality, for all $k\geq 0$, and all $\alpha_i = 1,\cdots,N$:
\begin{align}
    &    \mathrm e^{\boldsymbol{L}(t-t_k)}\widetilde{\boldsymbol{S}}_{\alpha_k} \mathrm e^{\boldsymbol{L}(t_k-t_{k-1})}\widetilde{\boldsymbol{S}}_{\alpha_{k-1}} \cdots \mathrm e^{\boldsymbol{L}(t_2-t_{1})}\widetilde{\boldsymbol{S}}_{\alpha_{1}} \mathrm e^{\boldsymbol{L}t_1}\hat\rho(0)\nonumber \\
    =&\sum_{n_0,n_1,\cdots, n_k=0}^\infty
    \int_{t\geq s_{n_k}^{(k)}\geq \cdots\geq s_1^{(k)}\geq t_k}\mathrm ds_1^{(k)}\cdots\mathrm ds_{n_k}^{(k)}  \int_{t_k\geq s_{n_{k-1}}^{(k-1)}\geq \cdots\geq s_1^{(k-1)}\geq t_{k-1}}\mathrm ds_1^{(k-1)}\cdots\mathrm ds_{n_{k-1}}^{(k-1)} \nonumber\\&\cdots  \int_{t_1\geq s_{n_0}^{(0)}\geq \cdots\geq s_1^{(0)}\geq t_0}\mathrm ds_1^{(0)}\cdots\mathrm ds_{n_0}^{(0)}   \left(\mathrm e^{\boldsymbol L_0 (t-s_{n_k}^{(k)})}\boldsymbol L_{\text{SE}}\mathrm e^{\boldsymbol L_0 (s_{n_k}^{(k)}-s_{n_k-1}^{(k)})}\cdots \boldsymbol L_{\text{SE}}\mathrm e^{\boldsymbol L_0 (s_1^{(k)}-t_k)}\right) \widetilde{\boldsymbol{S}}_{\alpha_k}\nonumber \\
    &\quad \quad \cdot\left(\mathrm e^{\boldsymbol L_0 (t_k-s_{n_{k-1}}^{(k-1)})}\boldsymbol L_{\text{SE}}\mathrm e^{\boldsymbol L_0 (s_{n_{k-1}}^{(k-1)}-s_{n_{k-1}-1}^{(k-1)})}\cdots \boldsymbol L_{\text{SE}}\mathrm e^{\boldsymbol L_0 (s_1^{(k-1)}-t_{k-1})}\right) \widetilde{\boldsymbol{S}}_{\alpha_{k-1}}\cdots\nonumber \\
     &\quad \quad \cdot\left(\mathrm e^{\boldsymbol L_0 (t_2-s_{n_{1}}^{(1)})}\boldsymbol L_{\text{SE}}\mathrm e^{\boldsymbol L_0 (s_{n_{1}}^{(1)}-s_{n_{1}-1}^{(1)})}\cdots \boldsymbol L_{\text{SE}}\mathrm e^{\boldsymbol L_0 (s_1^{(1)}-t_{1})}\right) \widetilde{\boldsymbol{S}}_{\alpha_{1}}\nonumber \\
     &\quad \quad \cdot\left(\mathrm e^{\boldsymbol L_0 (t_1-s_{n_{0}}^{(0)})}\boldsymbol L_{\text{SE}}\mathrm e^{\boldsymbol L_0 (s_{n_{0}}^{(0)}-s_{n_{0}-0}^{(0)})}\cdots \boldsymbol L_{\text{SE}}\mathrm e^{\boldsymbol L_0 (s_0^{(0)})}\right)\hat\rho(0).
     \label{eq:expansion_longest}
\end{align}
Formally, this is to justify the replacement of $\mathrm e^{\boldsymbol{L}(t_i-t_{i-1})}$ with the following for all $i$:
$$\sum_{n_i=0}^{\infty}\int_{t_{i+1}\geq s_{n_{i}}^{(i)}\geq \cdots\geq s_1^{(i)}\geq t_{i}}\mathrm ds_1^{(i)}\cdots\mathrm ds_{n_{i}}^{(i)}\left(\mathrm e^{\boldsymbol L_0 (t_i-s_{n_{i-1}}^{(i-1)})}\boldsymbol L_{\text{SE}}\mathrm e^{\boldsymbol L_0 (s_{n_{i-1}}^{(i-1)}-s_{n_{i-1}-1}^{(i-1)})}\cdots \boldsymbol L_{\text{SE}}\mathrm e^{\boldsymbol L_0 (s_1^{(i-1)}-t_{i-1})}\right) .
$$
Recall that given $\hat\rho\in B_1(\mathcal H_{\text E}\otimes \mathcal H_{\text S})$, the following equality holds:
$$        \mathrm e^{\boldsymbol L(t-s)}  \hat \rho = \sum_{n=0}^\infty\int_{t\geq s_n\geq \cdots\geq s_1\geq s}\mathrm ds_1\cdots\mathrm ds_n
\mathrm e^{\boldsymbol L_0 (t-s_n)}\boldsymbol L_{\text{SE}}\mathrm e^{\boldsymbol L_0 (s_n-s_{n-1})}\cdots \boldsymbol L_{\text{SE}}\mathrm e^{\boldsymbol L_0 (s_1-s)}\hat \rho,$$
under the condition that the RHS is absolutely convergent in trace norm. Therefore the proof of \cref{eq:expansion_longest} rests on verifying the RHS is absolutely convergent under trace norm. This property is guaranteed by the estimate \cref{eq:estimate}. We refer to the following \cref{lem:bound_longest} for the details of this calculation.

\end{proof}
\begin{lem}
    Under the bounds in \cref{eq:estimate},   \cref{eq:expansion_longest}  holds for all $k\geq 0$, and all $\alpha_i = 1,\cdots,N$; moreover, the infinite series on the RHS of \cref{eq:expansion_longest} is absolutely convergent under trace norm.
\label{lem:bound_longest}    
\end{lem}
\begin{proof}[Proof of \cref{lem:bound_longest}]
We will prove this by induction. For $k=0$, \cref{eq:expansion_longest} reduces to \cref{eq:rho_perturbation_series}, for which we have established the absolute convergence using the estimate \cref{eq:estimate}. Assume that we have already shown that the lemma holds for $k = k_0-1$. In this case, it suffices to show that the RHS of \cref{eq:expansion_longest} is absolutely convergent when $k=k_0$. Let us derive the upper bound of the trace norm for the following: 
$$\begin{aligned}
   \hat\rho_{\alpha_1,\cdots,\alpha_k}&(t,t_{1:k},s_{1:n_0}^{(0)},\cdots,s_{1:n_k}^{(k)})=  \left(\mathrm e^{\boldsymbol L_0 (t-s_{n_k}^{(k)})}\boldsymbol L_{\text{SE}}\mathrm e^{\boldsymbol L_0 (s_{n_k}^{(k)}-s_{n_k-1}^{(k)})}\cdots \boldsymbol L_{\text{SE}}\mathrm e^{\boldsymbol L_0 (s_1^{(k)}-t_k)}\right)\\& \cdot \prod_{i=1}^k \left(\widetilde{\bm S}_{\alpha_i}  \mathrm e^{\boldsymbol L_0 (t_i-s_{n_{i-1}}^{(i-1)})}\boldsymbol L_{\text{SE}}\mathrm e^{\boldsymbol L_0 (s_{n_{i-1}}^{(i-1)}-s_{n_{i-1}-1}^{(i-1)})}\cdots \boldsymbol L_{\text{SE}}\mathrm e^{\boldsymbol L_0 (s_1^{(i-1)}-t_{i-1})}\right) \hat\rho(0).
\end{aligned}$$
In the fermionic case, since both $\mathcal H_{\text S}$ and $\mathcal H_{\text E}$ is finite-dimensional, $\boldsymbol{L}_{\text{SE}}$ are bounded (say by a constant $\mathfrak c$), and recall that $\|\widetilde{\boldsymbol{S}}_\alpha\|=1$, thus $$\|\hat\rho_{\alpha_1,\cdots,\alpha_k}(t,t_{1:k},s_{1:n_0}^{(0)},\cdots,s_{1:n_k}^{(k)})\|_{\text{tr}}\leq \mathfrak c^{n_0+\cdots+n_k}. $$
Then we have
$$
\|\text{ RHS of \cref{eq:expansion_longest}} \|_{\text{tr}}\leq \sum_{n_0,\cdots,n_k=0}^\infty \mathfrak c^{n_0+\cdots+n_k}\frac{(t-t_k)^{n_k}}{n_k!}\cdots \frac{(t_2-t_1)^{n_1}}{n_1!}\frac{t_1^{n_0}}{n_0!}=\prod_{i=0}^k\mathrm e^{\mathfrak c(t_{i+1}-t_{i})}<+\infty.
$$
Now let us focus on the bosonic case, in which $\widetilde{\boldsymbol{S}}_{\alpha} =\boldsymbol{1}_{\text{E}} \otimes\boldsymbol{S}_{\alpha}$. Using $\boldsymbol{L}_{\text{SE}} = \sum_{\beta=1}^N E_{\beta}\otimes S_{\beta}$,  we have:
$$
\hat\rho_{\alpha_1,\cdots,\alpha_k}(t,t_{1:k},s_{1:n_0}^{(0)},\cdots,s_{1:n_k}^{(k)})=\sum_{\beta_{1}^{(0)},\cdots \beta_{n_0}^{(0)}=1}^N\cdots \sum_{\beta_{1}^{(k)},\cdots \beta_{n_k}^{(k)}=1}^N \hat {\mathcal E}_{\beta_{1}^{(0)},\cdots \beta_{n_0}^{(0)},\cdots,\beta_{1}^{(k)},\cdots \beta_{n_k}^{(k)}}\otimes \hat {\mathcal S}_{\beta_{1}^{(0)},\cdots \beta_{n_0}^{(0)},\cdots,\beta_{1}^{(k)},\cdots \beta_{n_k}^{(k)}},
$$
$$
\begin{aligned}
\hat{\mathcal E}_{\beta_{1}^{(0)},\cdots \beta_{n_0}^{(0)},\cdots,\beta_{1}^{(k)},\cdots \beta_{n_k}^{(k)}} =  &\left(\mathrm e^{\boldsymbol L_{\text{e}} (t-s_{n_k}^{(k)})} \boldsymbol{E}_{\beta_{n_k}^{(k)}}\mathrm e^{\boldsymbol L_{\text{e}} (s_{n_k}^{(k)}-s_{n_k-1}^{(k)})}\cdots \boldsymbol{E}_{\beta_{1}^{(k)}}\mathrm e^{\boldsymbol L_{\text e} (s_1^{(k)}-t_k)}\right)\\& \cdot \prod_{i=1}^k \left(   \mathrm e^{\boldsymbol L_{\text{e}} (t_i-s_{n_{i-1}}^{(i-1)})}\boldsymbol{E}_{\beta_{n_{i-1}}^{(i-1)}}\mathrm e^{\boldsymbol L_{\text{e}} (s_{n_{i-1}}^{(i-1)}-s_{n_{i-1}-1}^{(i-1)})}\cdots \boldsymbol{E}_{\beta_{1}^{(i-1)}}\mathrm e^{\boldsymbol L_{\text e} (s_1^{(i-1)}-t_{i-1})}\right) \hat\rho(0).
\end{aligned}
$$
$$
\begin{aligned}
    \mathcal{\hat S}_{\beta_{1}^{(0)},\cdots \beta_{n_0}^{(0)},\cdots,\beta_{1}^{(k)},\cdots \beta_{n_k}^{(k)}} =  &\left(\mathrm e^{\boldsymbol L_{\text{s}} (t-s_{n_k}^{(k)})} \boldsymbol{S}_{\beta_{n_k}^{(k)}}\mathrm e^{\boldsymbol L_{\text{s}} (s_{n_k}^{(k)}-s_{n_k-1}^{(k)})}\cdots \boldsymbol{S}_{\beta_{1}^{(k)}}\mathrm e^{\boldsymbol L_{\text s} (s_1^{(k)}-t_k)}\right)\\& \cdot \prod_{i=1}^k \left(   \mathrm e^{\boldsymbol L_{\text{s}} (t_i-s_{n_{i-1}}^{(i-1)})}\boldsymbol{S}_{\beta_{n_{i-1}}^{(i-1)}}\mathrm e^{\boldsymbol L_{\text{s}} (s_{n_{i-1}}^{(i-1)}-s_{n_{i-1}-1}^{(i-1)})}\cdots \boldsymbol{S}_{\beta_{1}^{(i-1)}}\mathrm e^{\boldsymbol L_{\text s} (s_1^{(i-1)}-t_{i-1})}\right) \hat\rho(0).
\end{aligned}
$$
Since $\|\boldsymbol{S}_{\beta}\|=1$, we have $\|\mathcal{\hat S}_{\beta_{1}^{(0)},\cdots \beta_{n_0}^{(0)},\cdots,\beta_{1}^{(k)},\cdots \beta_{n_k}^{(k)}}\|_{\text{tr}}\leq 1$. Using the estimate \cref{eq:estimate}, we have the following bound:
$$
\|\hat{\mathcal E}_{\beta_{1}^{(0)},\cdots \beta_{n_0}^{(0)},\cdots,\beta_{1}^{(k)},\cdots \beta_{n_k}^{(k)}}\|_{\text{tr}}\leq \sqrt{(n_0+\cdots+n_k)!}(\mathfrak c(t))^{n_0+\cdots+n_k}.
$$
Therefore we could bound the total infinite series as follows:
$$
\begin{aligned}
    \|\text{ RHS of \cref{eq:expansion_longest}} \|_{\text{tr}}&\leq \sum_{n_0,\cdots,n_k=0}^\infty \sqrt{(n_0+\cdots+ n_k)!}\mathfrak c(t)^{n_0+\cdots+n_k}\frac{(t-t_k)^{n_k}}{n_k!}\cdots \frac{(t_2-t_1)^{n_1}}{n_1!}\frac{t_1^{n_0}}{n_0!} \\
    & = \sum_{m=0}^\infty \sqrt{m!} (\mathfrak c(t))^m \frac{((t-t_k)+\cdots +(t_2-t_1)+t_1)^m}{m!}=\sum_{m=0^\infty} \frac{(t\mathfrak c(t))^m}{\sqrt{m!}}<+\infty.
\end{aligned}
$$ 
\end{proof}

\section{Applications to spin-boson and fermionic models described by unitary and Lindblad dynamics}
\label{sec:applications}

In this section, we develop the error estimates for spin-boson and fermionic impurity models, under unitary dynamics and Lindblad dynamics, respectively. We will also discuss quasi-Lindblad pseuodomode theory in the form of \cite{ParkHuangZhuetal2024} in  \cref{sec:quasi-Lind}. The differences in terms of the structure of the Liouvillians are shown in \cref{table:comparison}. 
\begin{table}[ht]
\centering
\begin{tabular}{|c|c|c|}
\hline
               & Environment Liouvillian $\boldsymbol L_{\text e}$ & System-environment Liouvillian $\boldsymbol L_{\text{SE}}$ \\ \hline
Unitary        & $\boldsymbol L_{\text e} = \boldsymbol L_{\text e}^{\text U} = -\mathrm i[\hat H_{\text e},\cdot] $                               & $\boldsymbol L_{\text{SE}} = \boldsymbol L_{\text{SE}}^{\text U} = -\mathrm i[\hat H_{\text{SE}},\cdot] $                                                 \\ \hline
Lindblad       &      $\boldsymbol L_{\text e} = \boldsymbol L_{\text e}^{\text U} +\boldsymbol D_{\text e} $                                             &             $\boldsymbol L_{\text{SE}} = \boldsymbol L_{\text{SE}}^{\text U} = -\mathrm i[\hat H_{\text{SE}},\cdot] $                                                \\ \hline
Quasi-Lindblad &        $\boldsymbol L_{\text e} = \boldsymbol L_{\text e}^{\text U} +\boldsymbol D_{\text e} $                                                      &          $\boldsymbol L_{\text{SE}} = \boldsymbol L_{\text{SE}}^{\text U} +\boldsymbol{D}_{\text{SE}} $                                                   \\ \hline
\end{tabular}
\caption{Comparison of Liouvillian forms in unitary, Lindblad, and quasi-Lindblad dynamics. $\boldsymbol{L}_{\text e}$ is the environment Liouvillian where $\boldsymbol{L}_{\text e}^{\text U}$ and $\boldsymbol D_{\text e}$ are the environment Liouvillians from the environment Hamiltonian and environment dissipator, respectively, as defined in \cref{eq:spin_boson_env} for spin-boson models and in \cref{eq:Le_U_fermionic} and \cref{eq:D_e_fermion} for fermionic impurity models. $\boldsymbol{L}_{\text{SE}}$ is the system-environment coupling Liouvillian where $\boldsymbol L_{\text{SE}}^{\text U}$ and $\boldsymbol{D}_{\text{SE}}$ are the system-environment coupling Liouvillian from the system-environment Hamiltonian and system-environment dissipator, respectively, defined in \cref{lem:Liouvillian_unitary_spin_boson}, \cref{eq:system_env_dissipation_spin_boson} for spin-boson models and \cref{eq:L_SE_U_fermion} and \cref{eq:D_SE_fermion} for fermionic impurity models.}
\label{table:comparison}
\end{table}

This section is organized as follows. The spin-boson model and the fermionic impurity model are discussed in \cref{sec:spin_boson} and \cref{sec:fermion}, respectively. In each case, we first discuss the bosonic/fermionic environment in \cref{sec:spin_boson_env} and \cref{sec:fermion_env}, in which we give superoperator formalisms for the unitary and non-unitary environment.
Then, we describe the system-environment interaction in \cref{sec:spin_boson_env_sys} and \cref{sec:fermion_env_sys}.
Finally, we discuss the reduced system dynamics $\hat\rho_{\text S}(t)$ and the error bound in \cref{sec:spin_boson_error} and \cref{sec:fermion_error},
as a direct application of \cref{thm:rhoS_Dyson_short} and \cref{thm:main_error_bound}.

\subsection{Spin-boson model}
\label{sec:spin_boson}

\subsubsection{Bosonic environment}
\label{sec:spin_boson_env}

Let $N_{\text e}$ be the number of environment modes.
For bosonic environments, let $\hat b_k,\hat b_k^\dagger$ be the bosonic annihilation and creation operators for the $k$-th environment modes ($k =1,\cdots, N_{\text e}$), which satisfy the canonical commutation relation (CCR) $[\hat b_k,\hat b_{k'}^\dagger] = \delta_{kk'}\hat 1_{\text e}$. $\hat 1_{\text e} \in B(\mathcal H_{\text E})$ means the identity operator.
Let us introduce the following environment superoperators, which are defined by multiplication of $\hat b_k,\hat b_k^\dagger $ to an operator in $B_1(\mathcal H_{\text E})$ from the left and from the right:
\begin{defn}[Bosonic environment superoperators]
    For bosonic environment operators $\hat b_k,\hat b_k^\dagger$ ($k=1,\cdots, {N_{\text e}}$), 
    let us define the following environment superoperators, denoted as 
    $\boldsymbol b_k$, $\boldsymbol b_k^\dagger$, $\widetilde{\boldsymbol{b}}_k$, $\widetilde{\boldsymbol{b}}_k^\dagger$:
    \begin{equation}
    \boldsymbol b_k: (\cdot)\rightarrow \hat b_k(\cdot),\quad \boldsymbol b_k^\dagger: (\cdot)\rightarrow \hat b_k^\dagger(\cdot),\quad \widetilde{\boldsymbol{b}}_k: (\cdot)\rightarrow (\cdot)\hat b_k^{\dagger},\quad \widetilde{\boldsymbol{b}}_k^\dagger: (\cdot)\rightarrow (\cdot)\hat b_k.
        \label{eq:bosonic_env_superoperators}
    \end{equation}
    $\widetilde{\boldsymbol O}$ indicates the operation that multiplies operator $\hat O^\dag$ from the right. 
\label{defn:bosonic_superoperators}
\end{defn}
\begin{rem}
    Note that in  \cref{defn:bosonic_superoperators}, $\boldsymbol b_k^\dagger$ is not the conjugate of $\boldsymbol b_k$ as an operator on $B_1(\mathcal H_{\text E})$. Moreover, we define $\widetilde{\boldsymbol b}_k^{\dagger}$ to be multiplying $\hat b_k$ (not $\hat b_k^\dagger$) from the right so that the usual commutation relation holds:
$$
\left(\widetilde{\boldsymbol b}_k \widetilde{\boldsymbol{b}}_l^{\dagger} - \widetilde{\boldsymbol{b}}_l^{\dagger}\widetilde{\boldsymbol b}_k\right)\hat\rho =\hat\rho\hat b_l\hat b_k^\dagger  - \hat\rho \hat b_k^\dagger\hat b_l = \hat\rho\delta_{kl}.
$$
\end{rem}
Let us summarize the CCR in the following lemma:
\begin{lem}[Canonical commutation relations (CCR) for bosonic environment superoperators]
    The superoperators $\boldsymbol b_k,\boldsymbol b_k^\dagger,\widetilde{\boldsymbol{b}}_k,\widetilde{\boldsymbol{b}}_k^\dagger$ ($k,k' = 1,\cdots, {N_{\text e}}$) satisfy the following canonical commutation relations:
\begin{equation}
    \begin{aligned}
    \left[\boldsymbol{b}_k, \boldsymbol{b}_{k'}^{\dagger}\right] &=  \delta_{kk'}\boldsymbol 1_{\text e},
    \quad \left[\widetilde{\boldsymbol{b}}_k, \widetilde{\boldsymbol{b}}_{k'}^{\dagger}\right] =  \delta_{kk'}\boldsymbol 1_{\text e},\quad 
    \\
    \left[\boldsymbol{b}_k, {\boldsymbol{b}}_{k'}\right]  = 
    \left[\boldsymbol{b}_k^{\dagger},{\boldsymbol{b}}_{k'}^{\dagger}\right]  = 
    \left[\widetilde{\boldsymbol{b}}_k^{\dagger},\widetilde{\boldsymbol{b}}_{k'}^{\dagger}\right] & = 
    \left[\widetilde{\boldsymbol{b}}_k,\widetilde{\boldsymbol{b}}_{k'}\right]  = 
    \left[\boldsymbol{b}_k, \widetilde{\boldsymbol{b}}_{k'}\right]  = 
    \left[\boldsymbol{b}_k, \widetilde{\boldsymbol{b}}_{k'}^{\dagger}\right]  = 
    \left[\boldsymbol{b}_k^{\dagger}, \widetilde{\boldsymbol{b}}_{k'}^{\dagger}\right]  = 0.
    \end{aligned}
    \label{eq:canonical}
\end{equation}
\label{lem:CCR}
\end{lem}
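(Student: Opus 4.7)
The plan is to verify each identity in \cref{eq:canonical} by direct computation, reducing every superoperator commutator to a statement about ordinary operators acting on an arbitrary $\hat\rho\in L(\mathcal H_{\text E})$. Since each of $\boldsymbol b_k, \boldsymbol b_k^\dagger, \widetilde{\boldsymbol b}_k, \widetilde{\boldsymbol b}_k^\dagger$ is defined in \cref{defn:bosonic_superoperators} as either left multiplication or right multiplication by some $\hat b$ or $\hat b^\dagger$, the computations only use the associativity of operator multiplication in $L(\mathcal H_{\text E})$ together with the underlying CCR $[\hat b_k,\hat b_{k'}^\dagger]=\delta_{kk'}\hat 1_{\text e}$ and $[\hat b_k,\hat b_{k'}]=[\hat b_k^\dagger,\hat b_{k'}^\dagger]=0$.

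First I would handle the ``same-side'' commutators. For the left-multiplication pair, I compute
\begin{equation*}
[\boldsymbol b_k,\boldsymbol b_{k'}^\dagger]\hat\rho
= \hat b_k\hat b_{k'}^\dagger\hat\rho - \hat b_{k'}^\dagger \hat b_k\hat\rho
= [\hat b_k,\hat b_{k'}^\dagger]\hat\rho
= \delta_{kk'}\hat\rho,
\end{equation*}
which gives $[\boldsymbol b_k,\boldsymbol b_{k'}^\dagger]=\delta_{kk'}\boldsymbol 1_{\text e}$; the same argument with $[\hat b_k,\hat b_{k'}]=0$ and $[\hat b_k^\dagger,\hat b_{k'}^\dagger]=0$ gives the vanishing of $[\boldsymbol b_k,\boldsymbol b_{k'}]$ and $[\boldsymbol b_k^\dagger,\boldsymbol b_{k'}^\dagger]$. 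For the right-multiplication pair, I must be careful to track the reversed order that arises from applying two right-multiplications successively. For instance
\begin{equation*}
\widetilde{\boldsymbol b}_k\widetilde{\boldsymbol b}_{k'}^\dagger\hat\rho
= \widetilde{\boldsymbol b}_k(\hat\rho\hat b_{k'})
= \hat\rho\hat b_{k'}\hat b_k^\dagger,
\qquad
\widetilde{\boldsymbol b}_{k'}^\dagger\widetilde{\boldsymbol b}_k\hat\rho
= \hat\rho\hat b_k^\dagger\hat b_{k'},
\end{equation*}
so their difference is $\hat\rho[\hat b_{k'},\hat b_k^\dagger]=\delta_{kk'}\hat\rho$, as already recorded in the remark preceding the lemma. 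The analogous calculations for $[\widetilde{\boldsymbol b}_k,\widetilde{\boldsymbol b}_{k'}]$ and $[\widetilde{\boldsymbol b}_k^\dagger,\widetilde{\boldsymbol b}_{k'}^\dagger]$ reduce to $\hat\rho$ times a commutator of annihilation/creation operators in reversed order, which is zero.

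Next I handle the ``mixed-side'' commutators, which account for all remaining identities in \cref{eq:canonical}. The key observation is that a left multiplication $\hat A(\cdot)$ and a right multiplication $(\cdot)\hat B$ always commute on $L(\mathcal H_{\text E})$, since both compositions yield $\hat A\hat\rho\hat B$. Applied to our four families, this gives
\begin{equation*}
\boldsymbol b_k\widetilde{\boldsymbol b}_{k'}\hat\rho
= \hat b_k\hat\rho\hat b_{k'}^\dagger
= \widetilde{\boldsymbol b}_{k'}\boldsymbol b_k\hat\rho,
\end{equation*}
and the same identity with $\boldsymbol b_k$, $\boldsymbol b_k^\dagger$ on the left and $\widetilde{\boldsymbol b}_{k'}$, $\widetilde{\boldsymbol b}_{k'}^\dagger$ on the right, establishing that all four mixed commutators vanish. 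Since each computation holds for an arbitrary $\hat\rho$, the stated operator identities follow.

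This proof is essentially a bookkeeping exercise, so I do not expect a genuine obstacle; the only subtle point is the sign/order convention in the definition of $\widetilde{\boldsymbol b}_k^\dagger$ (right multiplication by $\hat b_k$ rather than $\hat b_k^\dagger$), which is precisely what is needed to make the second CCR $[\widetilde{\boldsymbol b}_k,\widetilde{\boldsymbol b}_{k'}^\dagger]=\delta_{kk'}\boldsymbol 1_{\text e}$ hold with the conventional sign rather than its negative. Keeping careful track of this convention while reversing operator orders in the right-multiplication computations is the one step where a sign error could easily slip in.
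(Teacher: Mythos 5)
Your proof is correct and follows the same route the paper takes implicitly: the lemma is left as a direct verification, with the only nontrivial case, $[\widetilde{\boldsymbol b}_k,\widetilde{\boldsymbol b}_{k'}^\dagger]=\delta_{kk'}\boldsymbol 1_{\text e}$, checked exactly as in the remark preceding the lemma (the order reversal from composing right multiplications is compensated by the convention that $\widetilde{\boldsymbol b}_k^\dagger$ right-multiplies by $\hat b_k$). Your handling of the same-side and mixed-side cases matches the intended argument, so nothing further is needed.
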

In this paper, we will use the following two bosonic Liouvillians: a Liouvillian $\boldsymbol L_{\text e}^{\text U}$ from environment Hamiltonian and a Liouvillian $\boldsymbol{D}_{\text e}$ from the environment Lindblad dissipator:
$$
\boldsymbol L_{\text e}^{\text U}\hat\rho = -\mathrm i [\hat H_{\text e}, \hat\rho],\quad 
\boldsymbol{D}_{\text e} \hat\rho = \sum_{k, l=1}^{N_e} \Gamma_{k l}\left(2 \hat{b}_l \hat{\rho}\hat b_k^{\dagger}-\hat{b}_k^{\dagger} \hat{b}_l \hat{\rho}-\hat{\rho} \hat{b}_k^{\dagger} \hat{b}_l\right),
$$
where $\hat H_{\text e}=\sum_{k, l=1}^{N_{\text e}} H_{k l}\hat b_k^\dagger\hat b_l$ is the environment Hamiltonian. $H$ is a $N_{\mathrm{e}}  \times N_{\mathrm{e}}$ Hermitian matrix and $\Gamma$ is a $N_{\mathrm{e}}  \times N_{\mathrm{e}}$ positive semi-definite matrix. In the general case, $H$
and $\Gamma$ are not diagonal matrices \cite{lednev2024lindblad}.
In the superoperator representation, $\boldsymbol L_{\text e}^{\text U}$ and $\boldsymbol{D}_{\text e}$ are expressed as
\begin{equation}
    \boldsymbol{L}_{\text e}^{\text U} = \sum_{k, l=1}^{N_{\text e}}-\mathrm{i} H_{k l} \boldsymbol{b}_k^{\dagger} \boldsymbol{b}_l+\mathrm{i} H_{k l} \widetilde{\boldsymbol{b}}_l^{\dagger} \widetilde{\boldsymbol{b}}_k
,\quad
\boldsymbol{D}_{\mathrm{e}}=\sum_{k,l=1}^{N_{\mathrm{e}}} \Gamma_{k l}\left(2 \widetilde{\boldsymbol{b}}_k \boldsymbol{b}_l-\boldsymbol{b}_k^{\dagger} \boldsymbol{b}_l-\widetilde{\boldsymbol{b}}_l^{\dagger} \widetilde{\boldsymbol{b}}_k\right).
\label{eq:spin_boson_env}
\end{equation}

The goal of the pseudomode theory is to simulate the reduced density  dynamics
by replacing the unitary environment with the non-unitary pseudomode environment. The Liouvillian form for the unitary environment is given by $\boldsymbol{L}_{\text e} = \boldsymbol{L}_{\text e}^{\text U}$. For the unitary dynamics, we consider the initial environment density operator $\hat\rho_{\text{E}}(0)$ given by a Gibbs state:
\begin{equation}
    \hat\rho_{\text{E}}(0) =\frac{1}{Z_{\text e}} \mathrm e^{-\beta\hat H_{\text e}},\quad Z_{\text e} = \operatorname{tr}(\mathrm e^{-\beta\hat H_{\text e}}), 
    \label{eq:rhoE_init}
\end{equation}
for some inverse temperature $\beta>0$. For the pseudomode environment, the Liouvillian form is given by $\boldsymbol{L}_{\text e} = \boldsymbol{L}_{\text e}^{\text U} + \boldsymbol{D}_{\text e}$ and we follow the initial density operator setting in \cite{TamascelliSmirneHuelga2018, Mascherpa2020, ParkHuangZhuetal2024} to consider also the initial vacuum environment:
\begin{equation}
    \hat\rho_{\text{E}}(0) = |0\rangle_{\text E}\langle 0|_{\text{E}}, 
    \label{eq:rhoE_init_empty}
\end{equation}
where $|0\rangle_{\text E}$ is the environment vacuum state. 
Adapting the following analysis to different initial pseudomode density operators (e.g., in \cite{Lotem2020, Brenes2020, Zwolak2020}) is straightforward. Note that both \cref{eq:rhoE_init} and \cref{eq:rhoE_init_empty} satisfy $\boldsymbol{L}_{\text e}^{\text U}\hat\rho_{\text{E}}(0)=0$, and \cref{eq:rhoE_init_empty} satisfies that $\boldsymbol{D}_{\text e} \hat\rho_{\text{E}}(0)=0$. As a result,  as discussed in \cref{rmk:single_variable_Corr}, the two-point BCFs $C_{\alpha\alpha'}(t,t')$ rely only on $(t-t')$ for both unitary and pseudomode cases.

\subsubsection{System-environment coupling}
\label{sec:spin_boson_env_sys}
In the spin-boson model, the system is made up of $n$ spins. The system Hilbert space is $\mathcal H_{\text S} = \otimes_{i=1}^n \mathbb C^2$.
Let $\hat\sigma_x^{(j)},\hat\sigma_y^{(j)},\hat\sigma_z^{(j)}\in B(\mathcal H_{\text S})$  be Pauli matrices acting on the $j$-th spin, i.e.,
    $$
    \begin{aligned}
        \hat \sigma_{a}^{(j)} &= (\otimes_{i=1}^{j-1}\hat \sigma_0) \otimes \hat \sigma_a \otimes (\otimes_{i=j+1}^n\hat \sigma_0),\quad a\in\{x,y,z\}. \\
        \hat\sigma_0 = \begin{pmatrix}
            1 & 0\\
            0 & 1
        \end{pmatrix},&\quad \hat\sigma_x = \begin{pmatrix}
            0 & 1\\
            1 & 0
        \end{pmatrix},\quad \hat\sigma_y = \begin{pmatrix}
            0 & -\mathrm i\\
            \mathrm i & 0
        \end{pmatrix},\quad \hat\sigma_z = \begin{pmatrix}
            1 & 0\\
            0 & -1
        \end{pmatrix}.
    \end{aligned}
    $$
Similar to \cref{defn:bosonic_superoperators},  for $\hat\sigma_0^{(j)}$ and $\hat\sigma_{x,y,z}^{(j)}$, let us define system superoperators $\boldsymbol\sigma_{a}^{(j)}$, $\widetilde{\boldsymbol\sigma}_{a}^{(j)} $ for $a = 0,x,y,z$: 
\begin{equation}
\boldsymbol\sigma_{a}^{(j)}:(\cdot)\rightarrow \hat\sigma_{a}^{(j)}(\cdot),\quad \widetilde{\boldsymbol\sigma}_{a}^{(j)}:(\cdot)\rightarrow (\cdot)\hat\sigma_{a}^{(j)}.
\end{equation}
For $j=1,\cdots,n$, let us define $ \boldsymbol S_{2j-1} = \boldsymbol\sigma_z^{(j)}$, $\boldsymbol S_{2j} = \widetilde{\boldsymbol\sigma}_z^{(j)}$.
We consider a system-environment coupling Liouvillian $\boldsymbol L_{\text{SE}}^{\text U} $ given by $\boldsymbol L_{\text{SE}}^{\text U} = -\mathrm i [\hat H_{\text{SE}}, \cdot]$, where $\hat H_{\text{SE}}$ is the system-environment Hamiltonian $\hat H_{\text{SE}} = \sum_{j=1}^n \sum_{k=1}^{N_{\mathrm{e}}} \left(g_{jk}\hat b_k+g_{jk}^*\hat b_k^{\dagger}\right)\otimes \hat\sigma_z^{(j)}$. $g$ is a $n\times N_{\mathrm{e}}$ complex-valued matrix and is the coupling coefficient matrix for the system-environment coupling. 
We decompose the coupling Liouvillian $\boldsymbol L_{\text{SE}}^{\text{U}}$ in terms of system and environment superoperator:
 \begin{equation}
        \begin{aligned}
        \boldsymbol L_{\text{SE}}^{\text{U}} &= \sum_{\alpha=1}^{2n}\boldsymbol E_{\alpha}^{\text U}\otimes \boldsymbol S_{\alpha},
    \quad   \boldsymbol E^{\text U}_{2j-1} = -\mathrm i\sum_{k=1}^{N_{\text e}} (g_{jk}\boldsymbol b_k+g_{jk}^*\boldsymbol b_k^\dagger),\quad \boldsymbol E^{\text U}_{2j} = \mathrm i\sum_{k=1}^{N_{\text e}} (g_{jk}^*\widetilde{\boldsymbol b}_k+g_{jk}\widetilde{\boldsymbol b}_k^\dagger).
    \end{aligned}     
    \label{lem:Liouvillian_unitary_spin_boson}
\end{equation}
To establish the growth estimate described in \cref{eq:estimate}, we require the following condition on matrices $H$ and $g$, which is a natural physical condition \cite{Davies1981symmetry} known as {\emph{infrared regularity}} :
\begin{assumption}[Discrete infrared regularity condition]
    Assume that all rows of matrix $g$ are non-zero. 
    For the  matrix $H$ in environment Hamiltonian $\hat H_{\text e} = \sum_{k,l=1}^{N_{\text e}} H_{kl}\hat b_k^\dagger \hat b_l$ and system-environment coupling $g$ defined in \cref{sec:spin_boson_env_sys}, there exists a constant $c_{\text J}>0$ such that
    \begin{equation}
        c_{\text J} H\succeq g^\dagger g,
        \label{eq:infrared_regular}
    \end{equation}
    Here we use the notation $A\succeq B$ to mean that $A-B$ is a positive semi-definite matrix. 
    \label{assump:infrared_regular}
\end{assumption}

\begin{example}
   Let us use the example of a discretized Ohmic bath coupled to a single spin to illustrate the infrared regularity condition. For simplicity, let us consider a uniform bath discretization, though similar results hold for discretization based on other quadrature rules. Let $\omega_k = \frac{k}{N_{\text{e}}}\omega_{\text M}$ for $k = 1,\cdots, N_{\text{e}}$. Here $\omega_{\text M}>0$ is the maximum cutoff frequency. Then $g_k$ is given by $g_k = \sqrt{\frac{\omega_{\text M}}{N_{\text e}}}\cdot\sqrt{\omega_k \mathrm e^{-\omega_k/\omega_{\text c}}}$, where $\omega_{\text c}>0$ is the characteristic frequency. This is because $|g_k|^2$ comes from discretizing the Ohmic spectral density $J(\omega) = \omega \mathrm e^{-\omega/\omega_{\text c}}$ on the interval $[0,\omega_{\text M}]$, and the factor $\frac{\omega_{\text M}}{N_{\text e}}$ is the corresponding quadrature weight.
   
   In this case, the environment Hamiltonian matrix $H$ is a diagonal matrix with $H_{kk} = \omega_k$, and $g^\dagger g$ is also a diagonal matrix with $(g^\dagger g)_{kk} = |g_k|^2 = \frac{\omega_{\max}}{N_{\text e}}\cdot \omega_k \mathrm e^{-\omega_k/\omega_{\text c}}$. Therefore, we have
   \begin{equation}
    c_{\text J} H \succeq g^\dagger g,\quad c_{\text J} = \frac{\omega_{\max}}{N_{\text e}}.
   \end{equation}
   Note that the $\frac{1}{N_{\text e}}$ scaling of $c_{\text J}$ is a direct consequence of the quadrature weight from discretization, and does not rely on the Ohmic nature of the bath.
   \label{example:infrared}
\end{example}

\subsubsection{Growth estimate of environment correlators}
\label{sec:estimate}
Let us recall the environment superoperator correlators defined in \cref{eq:env_correlator}:
$$
\hat{\mathcal E}_{\alpha_1,\cdots,\alpha_n}(t,t_1,\cdots,t_n)= \mathrm e^{\boldsymbol{L}_{\text e}(t-t_1)}\boldsymbol{E}_{\alpha_1}\mathrm e^{\boldsymbol{L}_{\text e}(t_1-t_2)}\cdots \boldsymbol{E}_{\alpha_n}\mathrm e^{\boldsymbol{L}_{\text e}t_n}\hat\rho_{\text E}(0) .
$$
We now establish the growth estimate described in \cref{eq:estimate}.  Let us start with the following useful inequality for a single bosonic mode:
\begin{proposition}
    Let $\hat E$ be a monomial in $\hat b$, $\hat b^\dagger$ of total degree $n\in \mathbb N$. Let $|m\rangle$ be the $m$-particle state, i.e., $\hat b^\dagger \hat b |m\rangle = m|m\rangle$. Then for all $m\geq 0$, we have
    \begin{equation}
        \|\hat E |m\rangle \|^2 \leq \frac{(m+n)!}{m!}.
    \end{equation}
    \label{prop:single}
\end{proposition}
\begin{proof}
We will use induction on $n$. For $n=1$, $\|\hat b|m\rangle\rangle\|^2 = m$ and $\|\hat b^\dagger |m\rangle\|^2 = m+1$, so the proposition holds. Now suppose the proposition holds for degree $n-1$. Let $\hat E = \hat B\hat E'$, where $\hat B$ is either $\hat b$ or $\hat b^\dagger$ and $\hat E'$ is of degree $(n-1)$.  Let $|\psi\rangle = \hat E'|m\rangle$. By the induction hypothesis, we have $\|\psi\|^2 \leq (m+n-1)!/m!$. 

Let $m'$ be the particle number of $|\psi\rangle$. Since $\hat E'$ is of degree $(n-1)$, thus $m'\leq m+n-1$. If $\hat B = \hat b$, then $\|\hat E|\psi\rangle\|^2 = m' \|\psi\|^2$. If $\hat B = \hat b^\dagger$, then $\|\hat E|\psi\rangle\|^2 = (m'+1)\|\psi\|^2$. In both cases, we have $\|\hat E|\psi\rangle\|^2 \leq (m'+1)\|\psi\|^2 \leq (m+n) \|\psi\|^2 \leq (m+n)(m+n-1)!/m! = (m+n)!/m!$. This completes the induction.
\end{proof}
For $\omega>0$,   the thermal state is given by $\hat\rho^{\text{th} }_{\beta,\omega} = (1-\mathrm e^{-\beta \omega})\sum_{m=0}^\infty \mathrm e^{-m\beta \omega} |m\rangle\langle m|$. Here $\beta \in (0,+\infty]$. For $\beta=+\infty$, $\hat\rho^{\text{th} }(\omega) = |0\rangle\langle 0|$ is the vacuum state. Let us also define the thermal factor $c_{\beta}(\omega) = (1-\mathrm e^{-\beta \omega})^{-1/2}$. For $\beta = +\infty$, we have $c_{\beta}(\omega) = 1$. Using \cref{prop:single}, we could establish the following  estimate:
\begin{corollary}
    Let $\hat E$ be a monomial in $\hat b$, $\hat b^\dagger$ of total degree $n\in \mathbb N$.  Then we have
    \begin{equation}
        \operatorname{tr}(\hat E^\dagger \hat E \hat\rho^{\text{th}}_{\beta,\omega}) \leq n! c_{\beta}(\omega)^{2n}.
    \end{equation}
    \label{cor:thermal}
\end{corollary}
\begin{proof}
    Using \cref{prop:single}, we have
    $$
    \begin{aligned}
        \operatorname{tr}(\hat E^\dagger \hat E \hat\rho^{\text{th}}_{\beta,\omega}) & = (1-\mathrm e^{-\beta \omega})\sum_{m=0}^\infty \mathrm e^{-m\beta \omega} \| \hat E |m\rangle\|^2 \leq (1-\mathrm e^{-\beta \omega})\sum_{m=0}^\infty \mathrm e^{-m\beta \omega} \frac{(m+n)!}{m!} 
    \end{aligned}
    $$
    Using the binomial identity $\sum_{m=0} ^\infty \binom{m+n}{n} x^m = (1-x)^{-(n+1)}$ for $|x|<1$, the above sum is equal to $n!  (1-\mathrm e^{-\beta \omega})^{-n} = n! c_{\beta}(\omega)^{2n}$.
\end{proof}
Now let us establish the following inequality for superoperators acting on the thermal state:
\begin{corollary}
    Let $\boldsymbol B_{1},\cdots, \boldsymbol B_{n} \in \{\boldsymbol{b} , \boldsymbol{b}^\dagger, \widetilde{\boldsymbol{b}}, \widetilde{\boldsymbol{b}}^\dagger\}$ for some $n\in\mathbb N$, where $\boldsymbol b$, $\boldsymbol b^\dagger$, $\widetilde{\boldsymbol{b}}$, $\widetilde{\boldsymbol{b}}^\dagger$ are defined in \cref{eq:bosonic_env_superoperators} meaning multiplications of $\hat b$, $\hat b^\dagger$ from the left and from the right. Then we have
    \begin{equation}
        \|\boldsymbol B_{1}\cdots \boldsymbol B_{n} (\hat\rho^{\text{th}}(\omega))\|_{\text{tr}} \leq \sqrt{n!} c_{\beta}(\omega)^{n}.
    \end{equation}
    \label{cor:thermal_growth}
\end{corollary}
\begin{proof}
    The resulting operator $\boldsymbol B_{1}\cdots \boldsymbol B_{n} (\hat\rho^{\text{th}}(\omega))$ could be written as $  (\hat B_{\mathrm L} (\hat\rho^{\text{th}}(\omega)) \hat B_{\mathrm R})$, where $\hat B_{\mathrm L}$ and $\hat B_{\mathrm R}$ are monomials in $\hat b$ and $\hat b^\dagger$ corresponding to the left and right multiplications. Let $n_{\mathrm L}$ and $n_{\mathrm R}$ be their degrees, respectively. Then $ n_{\mathrm L} + n_{\mathrm R} = n$.  

    Let $Y = \hat B_{\mathrm L} (\hat\rho^{\text{th}}(\omega))^{1/2}$ and $Z = (\hat\rho^{\text{th}}(\omega))^{1/2} \hat B_{\mathrm R}$. We calculate the Hilbert-Schmidt norms of $Y$ and $Z$. We have $\|Y\|_{\text{HS}}^2 = \operatorname{tr}((\hat\rho^{\text{th}}(\omega))^{1/2} \hat B_{\mathrm L}^{\dagger} \hat B_{\mathrm L} (\hat \rho^{\text{th}}(\omega))^{1/2})  = \operatorname{tr}(\hat B_{\mathrm L}^{\dagger} \hat B_{\mathrm L} \hat\rho^{\text{th}}(\omega))$. By \cref{cor:thermal}, this is bounded by $n_{\mathrm L}! c_{\beta}(\omega)^{2 n_{\mathrm L}}$. Similarly, we have $\|Z\|_{\text{HS}}^2 \leq n_{\mathrm R}! c_{\beta}(\omega)^{2 n_{\mathrm R}}$. 

    By Cauchy-Schwarz inequality $\|YZ\|_{\text{tr}} \leq \|Y\|_{\text{HS}} \|Z\|_{\text{HS}}$, we have
    $$
    \begin{aligned}
        \|\boldsymbol B_{1}\cdots \boldsymbol B_{n} (\hat\rho^{\text{th}}(\omega))\|_{\text{tr}} & = \|YZ\|_{\text{tr}} \leq \|Y\|_{\text{HS}} \|Z\|_{\text{HS}} \leq \sqrt{n_{\mathrm L}! n_{\mathrm R}!} c_{\beta}(\omega)^{n_{\mathrm L}+ n_{\mathrm R}} \leq \sqrt{n!} c_{\beta}(\omega)^{n}.
    \end{aligned}
    $$
\end{proof}

Without loss of generality, let us assume that the environment Hamiltonian matrix $H$ is diagonal, i.e., $H_{kl} = \omega_k \delta_{kl}$. Otherwise, one can always diagonalize $H$ with $\tilde H = U^\dagger H U$ and changing the coupling matrix accordingly as $\tilde g = g U$. 
We also note that for a dissipative bosonic environment, we can assume that $H$ is diagonal; however, we cannot assume that both $H$ and $\Gamma$ are diagonal simultaneously. With the diagonal assumption of $H$, the initial environment density operator \cref{eq:rhoE_init} could be written as
\begin{equation}
    \hat\rho_{\text E}(0) = \bigotimes_{k=1}^{N_{\text e}} \left(1-\mathrm e^{-\beta \omega_k}\right) \sum_{m=0}^\infty \mathrm e^{-m\beta \omega_k} |m\rangle_k \langle m|_k,
    \label{eq:rhoE_init_diagonal}
\end{equation}
 Note that from the infrared regularity condition \cref{eq:infrared_regular}, we have $c_{\text J}\omega_k \geq (g^\dagger g)_{kk}  = (\sum_{j=1}^{N_{\text s} } |g_{jk}|^2)$. Thus all $\omega_k$'s are strictly positive and $c_{\beta}(\omega_k)$ is well-defined. 
 Then we have the following multi-mode growth estimate:
 \begin{corollary}
    Let $\boldsymbol B_{1},\cdots, \boldsymbol B_{n} \in \{\boldsymbol{b}_k, \boldsymbol{b}_k^\dagger, \widetilde{\boldsymbol{b}}_k, \widetilde{\boldsymbol{b}}_k^\dagger\}_{k=1}^{N_{\text e}}$ for some $n\in\mathbb N$. With the initial environment density operator given by \cref{eq:rhoE_init_diagonal}, we have
    \begin{equation}
        \|\boldsymbol B_{1}\cdots \boldsymbol B_{n}  \hat\rho_{\text E}(0) \|_{\text{tr}} \leq \prod_{k=1}^n \sqrt{n_k!} c_{\beta}(\omega_k)^{n_k},
    \end{equation}
    where $n_k$ is the total number of superoperators in $\{\boldsymbol B_{1},\cdots, \boldsymbol B_{n}\}$ that belong to $\{\boldsymbol{b}_k, \boldsymbol{b}_k^\dagger, \widetilde{\boldsymbol{b}}_k, \widetilde{\boldsymbol{b}}_k^\dagger\}$.
    \label{cor:multi_mode_growth}
\end{corollary}
\begin{proof}
For $\boldsymbol B_{1},\ldots,\boldsymbol B_{n}\in \{\boldsymbol b_k,\boldsymbol b_k^\dagger,\widetilde{\boldsymbol b}_k,\widetilde{\boldsymbol b}_k^\dagger\}_{k=1}^{N_{\text e}}$, we can rewrite $\boldsymbol B_{ 1}\cdots \boldsymbol B_{ n} (\hat\rho_{\text E}(0))$ as $\bigotimes_{k=1}^{N_{\text e}} (\boldsymbol{B}^{(k)} (\hat\rho^{k,\text{th}}_{\beta,\omega}))$, where $\boldsymbol B^{(k)}$ is a monomial in $\boldsymbol b_k$, $\boldsymbol b_k^\dagger$, $\widetilde{\boldsymbol b}_k$, $\widetilde{\boldsymbol b}_k^\dagger$ of degree $n_k$ with $\sum_{k=1}^{N_{\text e}} n_k = n$, and $\hat\rho^{k,\text{th}}_{\beta,\omega}$ is the thermal state for the $k$-th mode. Then using the single-mode estimate in \cref{cor:thermal_growth}, we have 
$\|\boldsymbol B^{(k)} (\hat\rho^{k,\text{th}}_{\beta,\omega})\|_{\text{tr}} \leq \sqrt{n_k!} c_{\beta}(\omega_k)^{n_k}$. Multiplying these estimates for all $k=1,\cdots, N_{\text e}$ gives the desired result.
\end{proof}

We will use that $\boldsymbol E_{\alpha}$'s are linear combinations of the bosonic environment superoperators with bounded coefficients. Let 
$\boldsymbol b_{k,1} = \boldsymbol b_k$, $\boldsymbol b_{k,2} = \boldsymbol b_k^\dagger$, $\boldsymbol b_{k,3} = \widetilde{\boldsymbol b}_k$, $\boldsymbol b_{k,4} = \widetilde{\boldsymbol b}_k^\dagger$, then $\boldsymbol E_{\alpha} $ is of the following form
\begin{equation}
    \boldsymbol E_{\alpha} = \sum_{k=1}^{N_{\text e}} \sum_{s=1}^4 p_{\alpha, (k,s)} \boldsymbol b_{k,s} , \quad \alpha = 1,\cdots, N,\quad  |p_{\alpha, (k,s)}| < +\infty.
    \label{eq:E_alpha_form}
\end{equation}
To capture the system-environment coupling strength for bath mode $k$, let us define $C_{k}^{\mathrm{int}}$ ($k=1,\cdots, N_{\text e}$)   as
\begin{equation}
    C_{k}^{\mathrm{int}} = \left(\sum_{\alpha=1}^N\sum_{s=1}^4 |p_{\alpha, (k,s)}|^2\right)^{1/2}.
    \label{eq:coupling_strength}
\end{equation}
For the spin-boson model described in \cref{sec:spin_boson_env_sys}, we have $(C^{\text{int}}_k)^2 = 2\sum_{j=1}^n |g_{jk}|^2 = 2 (g^\dagger g)_{kk}$. Using the infrared regularity condition $c_{\text J} H \succeq g^\dagger g$ (see \cref{eq:infrared_regular}) and the diagonal assumption $H_{kk'} = \omega_k \delta_{kk'}$, we have 
\begin{equation}
    C^{\text{int}}_k \leq \sqrt{2 c_{\text J} \omega_k}.\quad \text{for } k=1,\cdots, N_{\text e}. \quad \text{(from infrared regularity condition)}
    \label{eq:infrared_here}
\end{equation}
 
\begin{lem}[Growth estimate for $\hat{\mathcal E}_{\alpha_1,\cdots,\alpha_n}(t,t_1,\cdots,t_n)$]
Consider the  unitary or pseudomode spin-boson dynamics in \cref{sec:spin_boson} with initial state \cref{eq:rhoE_init_diagonal} and the infrared regularity condition in \cref{assump:infrared_regular}.
Then there exists a constant $\mathfrak c$ (for unitary environment, depending on $N_{\text e}$, $c_{\text J}$, $\beta$, $\omega_{\text M}$, and for dissipative environment, further depending on $t$) such that for any $n\in\mathbb N$ and any choice of $\alpha_1, \cdots, \alpha_n\in\{1,\cdots, N\}$, we have the following estimate for $\hat{\mathcal E}_{\alpha_1,\cdots,\alpha_n}(t,t_1,\cdots,t_n) = \mathrm e^{\boldsymbol{L}_{\text e}(t-t_1)}\boldsymbol E_{\alpha_1}\mathrm e^{\boldsymbol{L}_{\text e}(t_1-t_2)}\cdots \boldsymbol E_{\alpha_n} \mathrm e^{\boldsymbol{L}_{\text e}t_n}\hat\rho_{\text E}(0)$: 
\begin{equation}
    \|\hat{\mathcal E}_{\alpha_1,\cdots,\alpha_n}(t,t_1,\cdots,t_n)\|_{\text{tr}} \leq \sqrt{n!}\mathfrak c^n .
    \label{eq:estimate_vacuum}
\end{equation}
    \label{lem:boson_growth_estimate}
\end{lem}
We will show that for unitary environment, the constant $\mathfrak c$ could be chosen as follows for all $t\geq 0$:
\begin{equation}
    \mathfrak c  = 2\sqrt 2 N_{\text e} c_{\text J} \cdot\max\{\sqrt{\omega_{\text M}}, \beta^{-1/2}\}, 
    \label{eq:boson_growth_constant}
\end{equation}
 where $N_{\text e}$ is the number of environment modes, $c_{\text J}$ is the constant in the infrared regularity condition \cref{eq:infrared_regular}, $\omega_{\text M} = \max_{k=1}^{N_{\text e}} \omega_k$ is the maximum environment frequency. For dissipative environment, there is a further exponential time dependence in $\mathfrak c$. 

We also remark that in \cref{eq:boson_growth_constant}, there is a  linear dependence on $N_{\text e}$. However, as shown in \cref{example:infrared}, $c_{\text J}$ scales as $1/N_{\text e}$ when the environment is discretized from a continuous spectral density. Thus the overall constant $\mathfrak c$ is independent of $N_{\text e}$ in such cases.

Let us first prove the growth estimate for $t=t_1=\cdots=t_n=0$. We will address the time dependence later. 
\begin{proof}[Proof of \cref{lem:boson_growth_estimate} for $t=t_1=\cdots=t_n=0$.]Our goal is to show that $\|\boldsymbol E_{\alpha_1}\cdots \boldsymbol E_{\alpha_n} \hat\rho_{\text E}(0)\|_{\text{tr}} \leq \sqrt{n!}\mathfrak c^n$, for all $n\in\mathbb N$ and any choice of $\alpha_1,\cdots,\alpha_n$. Since $\boldsymbol E_{\alpha}$'s are linear combinations of $\boldsymbol b_k$, $\boldsymbol b_k^\dagger$, $\widetilde{\boldsymbol b}_k$, $\widetilde{\boldsymbol b}_k^\dagger$, namely $\boldsymbol{E}_{\alpha} = \sum_{k=1}^{N_{\text e}} \sum_{s=1}^4 c^s_{\alpha,k} \boldsymbol b_{k,s} $, we have
$$
\begin{aligned}
    \|
\boldsymbol E_{\alpha_1}\cdots \boldsymbol E_{\alpha_n}( \hat\rho_{\text E}(0))
\|_{\text{tr}}&\leq
  \sum_{k_1,\cdots,k_n=1}^{N_{\text e}} \sum_{s_1,\cdots,s_n=1}^4 |p_{\alpha_1,(k_1,s_1)}| \cdots |p_{\alpha_n,(k_n,s_n)}| \|\boldsymbol b_{k_1,s_1} \cdots \boldsymbol b_{k_n,s_n} (\hat\rho_{\text E}(0))\|_{\text{tr}}\\
  &\leq 2^n \sum_{k_1,\cdots,k_n=1}^{N_{\text e}} C_{k_1}^{\mathrm{int}}\cdots C_{k_n}^{\mathrm{int}} \max_{  s_1,\cdots,s_n} \|\boldsymbol b_{k_1,s_1} \cdots \boldsymbol b_{k_n,s_n} (\hat\rho_{\text E}(0))\|_{\text{tr}}
  \end{aligned}
$$
Here we have used that $\sum_{s=1}^4|p_{\alpha,(k,s)}| \leq 4^{\frac 1 2}(\sum_{s'=1}^4  |p_{\alpha,(k,s')}|^2)^{1/2} \leq 2(\sum_{s'=1}^4\sum_{\alpha'=1}^N |p_{\alpha',(k,s')}|^2)^{1/2} = 2 C_k^{\mathrm{int}}$; where the first inequality holds by Cauchy-Schwarz. Now let us bound each term in the above sum. For any fixed $k_1,\cdots,k_n$, let $n_k$ be the number of times that index $k$ appears in the list $k_1,\cdots,k_n$. Then $\sum_{k=1}^{N_{\text e}} n_k = n$. Using the multi-mode estimate in \cref{cor:multi_mode_growth}, we have
$$
\begin{aligned}
C_{k_1}^{\mathrm{int}}\cdots C_{k_n}^{\mathrm{int}} \max_{  s_1,\cdots,s_n} \|\boldsymbol b_{k_1,s_1} \cdots \boldsymbol b_{k_n,s_n} (\hat\rho_{\text E}(0))\|_{\text{tr}} & \leq   \prod_{k=1}^{N_{\text e}} (C_k^{\mathrm{int}})^{n_k} \sqrt{n_k!} c_{\beta}(\omega_k)^{n_k}
\end{aligned}
$$
Using that $C_k^{\mathrm{int}} \leq \sqrt{2 c_{\text J} \omega_k}$ from the infrared regularity condition,  we have
$$
\begin{aligned}
 \prod_{k=1}^{N_{\text e}} (C_k)^{n_k} \sqrt{n_k!} c_{\beta}(\omega_k)^{n_k} \leq \prod_{k=1}^{N_{\text e}} \sqrt{n_k!} (\sqrt{2 c_{\text J}\omega_k} c_{\beta}(\omega_k))^{n_k} .
\end{aligned}
$$
Using $\sqrt{\omega_k} c_{\beta}(\omega_k) \leq \max\{\sqrt{\omega_{\text M}}, \beta^{-1/2}\} =:C_{\beta, \omega_{\text M}}$, where $\omega_{\text M} = \max_k \omega_k$, we have 
$$\prod_{k=1}^{N_{\text e}} \sqrt{n_k!} (\sqrt{2 c_{\text J} \omega_k}c_{\beta}(\omega_k))^{n_k} \leq \prod_{k=1}^{N_{\text e}} \sqrt{n_k!} (\sqrt{2c_{\text J}} C_{\beta, \omega_M})^{n_k}\leq \sqrt{n!} (\sqrt{2c_{\text J}} C_{\beta, \omega_M})^{n}.$$
Combining all the above estimates, we have 
$$
\begin{aligned}
    \|
\boldsymbol E_{\alpha_1}\cdots \boldsymbol E_{\alpha_n}( \hat\rho_{\text E}(0))
\|_{\text{tr}}&\leq \sqrt{n!}\left( 2\sqrt 2N_{\text e} c_{\text J} C_{\beta, \omega_M} \right)^n.
\end{aligned}
$$
\end{proof}
Next we prove the estimate for general $t\geq t_1\geq t_2\cdots\geq t_n\geq 0$. To address the time dependence in \cref{lem:boson_growth_estimate}, it suffices to calculate the partial derivatives with respect to  $t_1$, $\cdots$, $t_n$ respectively. This involves the commutators between $\boldsymbol{E}_{\alpha_i}$ and $\boldsymbol{L}_{\text e}$, which we derived in Lemma \ref{lem:commutator_L} and Lemma \ref{lem:commutator_L_D} in \cref{sec:env_correlation}.

\begin{proof}[Proof of time dependence in \cref{lem:boson_growth_estimate}]
Note that the partial derivative of $\mathcal E_{\alpha_1,\cdots,\alpha_n}(t,t_1,\cdots, t_n) $ with respect to $t_i$ (for $i=1,\cdots,n$), is a linear combination in the following form:
$$
\sum_{\beta_i} c_{\beta_i}\mathcal E_{\alpha_1,\cdots,\beta_i,\cdots, \alpha_n}(t,t_1,\cdots, t_n)
$$
This is because if taking derivative with respect to $t_i$, we have  $\mathrm e^{\boldsymbol{L}_{\text e}(t-t_1)} \boldsymbol{E}_{\alpha_1} e^{\boldsymbol{L}_{\text e}(t_1-t_2)} \cdots \boldsymbol{E}_{\alpha_n} e^{\boldsymbol{L}_{\text e} t_n }\hat\rho_{\text E}(0)$ with $\boldsymbol{E}_{\alpha_i}$ replaced by $[\boldsymbol{E}_{\alpha_i}, \boldsymbol{L}_{\text e}]$. This is because $\boldsymbol{E}_{\alpha}$ is linear in  $\{\boldsymbol{b}_k, \boldsymbol{b}_k^\dagger, \widetilde{\boldsymbol{b}}_k,\widetilde{\boldsymbol{b}}_k^\dagger\}$,  $\boldsymbol{L}_{\text e}$ is quadratic, and $\{\boldsymbol{b}_k, \boldsymbol{b}_k^\dagger, \widetilde{\boldsymbol{b}}_k,\widetilde{\boldsymbol{b}}_k^\dagger\}$ satisfies bosonic commutation relations, the commutator $[\boldsymbol{E}_{\alpha_i}, \boldsymbol{L}_{\text e}]$ is also the linear combination of $\{\boldsymbol{b}_k, \boldsymbol{b}_k^\dagger, \widetilde{\boldsymbol{b}}_k,\widetilde{\boldsymbol{b}}_k^\dagger\}$ (see \cref{eq:commutator_Le_H}, \cref{eq:commutator_Le_D}).
Then using \cref{eq:commutator_Le_H}, \cref{eq:commutator_Le_D}, the dynamics in $t_i$ is controlled by a system of ODEs with the following matrix: 
$$
K = p K_0 p^{+},\quad K_0=\begin{pmatrix}
    -\mathrm i H - \Gamma & 0 & 0 & 0\\
    0 & \mathrm i H^{\text T} - \Gamma^{\text T} & 0 & 0\\
    0 & -2\Gamma^{\text T} & \mathrm i H^{\text T}  + \Gamma^{\text T} & 0\\
    -2\Gamma & 0 & 0 & -\mathrm i H + \Gamma
\end{pmatrix}
$$
Here $K_0$ is of size $4N_{\text e}\times 4N_{\text e}$, $K = p K_0 p^{+}$ is of size $N\times N$, and $p$ is the $N\times 4 N_{\text e}$ coefficient matrix satisfying $\boldsymbol E_{\alpha} = \sum_{\alpha} p_{\alpha, (k,s)} \boldsymbol b_{k,s}$, and $p^{+}$ is its Moore-Penrose pseudo-inverse. Note that nonzero eigenvalues of $K = p K_0 p^{+}$ coincides with those of $K_0$. In the unitary case, $\Gamma = 0$, thus all eigenvalues of $K$ are purely imaginary. Then $\|\mathcal E_{\alpha_1,\cdots,\alpha_n}(t,t_1,\cdots,t_n)\|_{\text{tr}}$ is constant in all $t_i$'s, and the estimate in \cref{eq:estimate_vacuum} holds for all $t\geq t_1\geq \cdots \geq t_n\geq 0$ with the same constant $\mathfrak c$. 

For the dissipative case,  we have 
$$
\begin{aligned}
    \|\mathcal E_{\alpha_1,\cdots,\alpha_n}(t,t_1,\cdots,t_n)\|_{\text{tr}}\leq & \|\mathrm e^{K t_n}\| \cdot\|\mathcal E_{\alpha_1,\cdots,\alpha_n}(t,t_1,\cdots,t_{n-1}, 0)\|_{\text{tr}} \\ \leq &\|\mathrm e^{K t_{n-1}}\| \|\mathrm e^{K t_n}\|  \|\mathcal E_{\alpha_1,\cdots,\alpha_n}(t,t_1,\cdots,t_{n-2}, 0,0)\|_{\text{tr}} \\ \leq & \cdots \leq \|\mathcal E_{\alpha_1,\cdots,\alpha_n}(t,0,\cdots,0)\|_{\text{tr}}\mathrm e^{\gamma_{\max} (t_1+\cdots+t_n)} \\=& \left\|\mathcal E_{\alpha_1,\cdots,\alpha_n}\right\|_{\text{tr}}\prod_{i=1}^n \|\mathrm e^{K t_i}\|
    \leq \left\|\mathcal E_{\alpha_1,\cdots,\alpha_n}\right\|_{\text{tr}}\left(\max_{s\in[0,t]}\| \mathrm e^{K s}\|\right)^n . 
\end{aligned}
$$ 
Here $\|\cdot \|$ denotes the matrix operator norm. In the last line we have used that $\mathrm e^{\boldsymbol{L}_{\text e} t} $ is trace-preserving. Thus it suffices to multiply $\mathfrak c$ by $\max_{s\in[0,t]}\| \mathrm e^{K s}\|$. 

We remark that in the case that $K$ is diagonalizable,   $\| \mathrm e^{K t}\|$ is bounded by $\mathrm e^{\gamma_{\max} t}$, where $\gamma_{\max}$ is the maximum eigenvalue of $\Gamma$. 
\end{proof}

\subsubsection{Dynamics and error bound}
\label{sec:spin_boson_error}
We have introduced the environment Liouvillian $\boldsymbol L_{\text e}$ and system-environment coupling $\boldsymbol L_{\text{SE}}$.
Now, let us introduce the system Liouvillian.
The system Liouvillian $\boldsymbol L_{\text s}^{\text U}$ from the system Hamiltonian $\hat H_{\text s}$ is given by 
$
\boldsymbol L_{\text s}^{\text U} = -\mathrm i[\hat H_{\text s},\hat\rho].
$
The system Hamiltonian $\hat H_{\text s}$ could be taken as any Hermitian operator acting on $ \mathcal H_{\text s} $. 
One typical physical example would be  $$\hat H_{\text s} = \sum_{j=1}^n
 -\frac{\Delta_j}{2} \hat\sigma_x^{(j)}+\frac{h_j}{2} \hat\sigma_z^{(j)},$$ where
$\Delta_j, h_j \in \mathbb{R}$ are a tunneling constant and an energy gap between the two states of the $j$-th spin. One could also consider non-unitary system dynamics with the system Liouvillian $
\boldsymbol L_{\text s} = \boldsymbol L_{\text s}^{\text U} + \boldsymbol D_{\text s}
$, where the system dissipator $\boldsymbol D_{\text s}$ is defined as $\boldsymbol D_{\text s}\hat\rho = \sum_{q=1}^{n_q}\hat l_q\hat\rho \hat l_q^\dagger-\frac{1}{2} \{\hat l_q^\dagger\hat l_q,\hat\rho \}$ for some $n_q$ and $\hat l_q\in B(\mathcal H_{\text s})$.

As a corollary of \cref{thm:rhoS_Dyson_short} (see also \cref{eq:rhoS_Dyson_short_formal}), the dynamics of unitary and Lindblad spin-boson model are given by:

\begin{cor}[$\hat\rho_{\text{S}}(t)$ for spin-boson model]
    We have the following results for $\hat\rho_{\text{S}}(t)$:
     \begin{equation}
        \begin{aligned}
            \hat\rho_{\text S}(t) &=\mathrm e^{\boldsymbol L_{\text s}t} \mathbf T^{+}\left( 
            \mathrm{exp} \left( \int_0^t \mathrm dt_1 \int_0^{t_1 } \mathrm dt_2   \boldsymbol{F} (t_1,t_2)  \right)
            \right) \hat\rho_{\text S}(0), \\
            \boldsymbol{F} (t_1,t_2) &=\sum_{i,j=1}^{n} \left(-c_{ij}(t_1-t_2)\boldsymbol \sigma_z^{(i)}(t_1)\boldsymbol \sigma_z^{(j)}(t_2)+c_{ij}(t_1-t_2)\widetilde{\boldsymbol \sigma}_z^{(i)}(t_1)\boldsymbol \sigma_z^{(j)}(t_2)\right.\\
            &\qquad  \qquad
            \left.  -(c_{ij}(t_1-t_2))^*  {\boldsymbol \sigma}_z^{(i)}(t_1)\widetilde{\boldsymbol \sigma}_z^{(j)}(t_2)+(c_{ij}(t_1-t_2))^* \widetilde{\boldsymbol \sigma}_z^{(i)}(t_1)\widetilde{\boldsymbol \sigma}_z^{(j)}(t_2)\right) .
        \end{aligned}
        \label{eq:rhoS_spin_boson}
    \end{equation}
Here, \cref{eq:rhoS_spin_boson} is understood as a shorthand notation for an infinite series summation, as defined in \cref{eq:T_exp_defn}.
$c_{ij}(t-t' )$($i,j = 1,\cdots,n$) refers to the conventional bath correlation function (BCF) of the spin-boson model~\cite[Sec. 3.2.4]{breuer2007open}. For $t\geq t'$, the coefficients $c_{ij}(t-t' )$ ($i,j = 1,\cdots,n$) are given as follows.
\begin{enumerate}
    \item For the unitary dynamics ($\boldsymbol L_{\text e} = \boldsymbol L_{\text e}^{\text U}$ (see \cref{eq:spin_boson_env}), $\boldsymbol L_{\text{SE}} = \boldsymbol L_{\text{SE}}^{\text U}$ (see \cref{lem:Liouvillian_unitary_spin_boson}), with the initial environment density operator as the Gibbs state \cref{eq:rhoE_init}), we have
    \begin{equation}
        c(t-t') = gG^>(t-t')g^\dagger + (g G^<(t-t')g^\dagger)^{*}, \quad t\geq t'.
        \label{eq:c_t_unitary_spin_boson}
    \end{equation}
    Note that here, $*$ means complex conjugate (not Hermitian conjugate), and we introduce the  greater and lesser Green's functions $G^>(t-t')$ and $G^<(t-t')$:
\begin{equation}
    G^>(t-t') = \mathrm e^{-\mathrm i H(t-t')}\left(I - \mathrm e^{-\beta H}\right)^{-1},\quad  G^<(t-t') = \mathrm e^{-\mathrm i H(t-t')}\left(\mathrm e^{\beta H} - I \right)^{-1}.\label{eq:Greens_unitary_spin_boson}
\end{equation}
\item For the Lindblad dynamics ($\boldsymbol L_{\text e} = \boldsymbol L_{\text e}^{\text U}+\boldsymbol{D}_{\text e}$ (see \cref{eq:spin_boson_env}), $\boldsymbol L_{\text{SE}} = \boldsymbol L_{\text{SE}}^{\text U}$ (see \cref{lem:Liouvillian_unitary_spin_boson}), with the initial vacuum environment \cref{eq:rhoE_init_empty}), we have 
 \begin{equation}
        c(t-t') = g \mathrm{e}^{(-\mathrm{i} H-\Gamma) (t-t')} g^{\dagger}, \quad t\geq t'.
        \label{eq:c_t_Lindblad_spin_boson}
    \end{equation}
\end{enumerate}
Here, $g$ is an $n\times N_{\text e}$ coefficient matrix in $\boldsymbol L_{\text{SE}}^{\text U}$,  $\boldsymbol D_{\text{SE}}$, and $H,\Gamma$ are $N_{\text e}\times N_{\text e}$  Hermitian matrices in \cref{eq:spin_boson_env}.  
\label{cor:spin_boson_dynamics}
\end{cor}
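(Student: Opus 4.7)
The plan is to obtain this corollary as a direct application of \cref{thm:rhoS_Dyson_short} (in the formal-exponential form of \cref{rmk:formal_exponential_formula}) after (a) verifying that Wick's condition \cref{defn:Wick_condition} holds for both environment setups and (b) explicitly computing the four types of two-point BCFs associated with the decomposition of $\boldsymbol L_{\text{SE}}^{\text U}$ given in \cref{lem:Liouvillian_unitary_spin_boson}. With the labels $\boldsymbol S_{2j-1}=\boldsymbol\sigma_z^{(j)}$ and $\boldsymbol S_{2j}=\widetilde{\boldsymbol\sigma}_z^{(j)}$, the sum $\sum_{\alpha_1,\alpha_2=1}^{2n}C_{\alpha_1,\alpha_2}(t_1,t_2)\boldsymbol S_{\alpha_1}(t_1)\boldsymbol S_{\alpha_2}(t_2)$ splits into the four blocks $(2i-1,2j-1)$, $(2i-1,2j)$, $(2i,2j-1)$, $(2i,2j)$, which should match the four summands defining $\boldsymbol F(t_1,t_2)$ in \cref{eq:rhoS_spin_boson}. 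Thus the target reduces to the identifications $C_{2i-1,2j-1}=-c_{ij}$, $C_{2i,2j-1}=c_{ij}$, $C_{2i-1,2j}=-c_{ij}^\ast$, $C_{2i,2j}=c_{ij}^\ast$ for $t_1\geq t_2$.

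Next I would check Wick's condition. Since both $\boldsymbol L_{\text e}^{\text U}$ and $\boldsymbol D_{\text e}$ in \cref{eq:spin_boson_env} are bilinear in $\{\boldsymbol b_k,\boldsymbol b_k^\dagger,\widetilde{\boldsymbol b}_k,\widetilde{\boldsymbol b}_k^\dagger\}$, the Heisenberg-like evolved superoperators $\boldsymbol E_{\alpha}(t)=\mathrm e^{-\boldsymbol L_{\text e}t}\boldsymbol E_{\alpha}\mathrm e^{\boldsymbol L_{\text e}t}$ remain linear combinations of these four fundamental bosonic superoperators. For the unitary case, the initial state $\hat\rho_{\text E}(0)$ is a Gaussian (Gibbs) state in the canonical sense, so the standard bosonic Wick's theorem applies to the trace against $\hat\rho_{\text E}(0)$. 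For the Lindblad case with vacuum initial state, the four CCR families in \cref{lem:CCR} combined with $\boldsymbol b_k\hat\rho_{\text E}(0)=\widetilde{\boldsymbol b}_k^\dagger\hat\rho_{\text E}(0)=0$ and $\operatorname{tr}\boldsymbol b_k^\dagger=\operatorname{tr}\widetilde{\boldsymbol b}_k=0$ let one expand by superoperator normal ordering and recover the pairwise-contraction structure. Either way, odd-point correlators vanish, and even-point correlators factor into sums over pairings, so Wick's condition holds.

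With this in place, the third step is to compute the two-point BCFs explicitly. Plugging in $\boldsymbol E^{\text U}_{2j-1}$ and $\boldsymbol E^{\text U}_{2j}$ from \cref{lem:Liouvillian_unitary_spin_boson} and using the CCR \cref{eq:canonical} plus cyclicity of the trace, the four blocks collapse to linear combinations of the elementary two-point functions
\[
\langle \hat b_k(t_1)\hat b_l^\dagger(t_2)\rangle\quad\text{and}\quad \langle \hat b_k^\dagger(t_1)\hat b_l(t_2)\rangle,
\]
where $\langle\cdot\rangle=\operatorname{tr}(\cdot\,\hat\rho_{\text E}(0))$. For the unitary dynamics, time evolution acts as $\hat b_k(t)=(\mathrm e^{-\mathrm i Ht})_{kl}\hat b_l$, and for a Gibbs state the two matrix expectations are $(\mathrm e^{-\mathrm i H(t_1-t_2)}(I-\mathrm e^{-\beta H})^{-1})_{kl}$ and the adjoint version $(\mathrm e^{-\mathrm i H(t_1-t_2)}(\mathrm e^{\beta H}-I)^{-1})_{kl}$, which assemble into the Green's functions \cref{eq:Greens_unitary_spin_boson} and produce precisely \cref{eq:c_t_unitary_spin_boson}. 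For the Lindblad dynamics, the vacuum kills $\boldsymbol b_k$ and $\widetilde{\boldsymbol b}_k^\dagger$ on the right, so only one contraction survives; one then needs to evaluate $\mathrm e^{\boldsymbol L_{\text e}(t_1-t_2)}(\hat b_k^\dagger\hat\rho_{\text E}(0))$, which by a short duality (adjoint-semigroup) calculation gives $(\mathrm e^{(-\mathrm i H-\Gamma)(t_1-t_2)})_{k'k}\hat b_{k'}^\dagger\hat\rho_{\text E}(0)$, yielding \cref{eq:c_t_Lindblad_spin_boson}.

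The main technical obstacle is bookkeeping in the third step: the four superoperator families share the CCR but are not Hermitian conjugates of each other, so it is easy to mis-sign a term. In particular, the minus signs in front of $c_{ij}$ and $-c_{ij}^\ast$ and the complex-conjugation in the two blocks involving $\widetilde{\boldsymbol\sigma}_z$ come from the explicit factors $-\mathrm i$ versus $+\mathrm i$ in $\boldsymbol E_{2j-1}^{\text U}$ and $\boldsymbol E_{2j}^{\text U}$ together with the fact that $\widetilde{\boldsymbol b}_k^\dagger$ multiplies $\hat b_k$ (not $\hat b_k^\dagger$) from the right. I would handle this by writing out one block in full, say $C_{2i-1,2j-1}$, and deriving the other three by parallel calculation, then reading off the coefficients and matching them against the stated $\boldsymbol F(t_1,t_2)$. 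The absolute convergence of the resulting series is inherited from \cref{rmk:formal_exponential_formula}, so no further analytic work is needed.
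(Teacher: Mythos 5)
Your overall route is the paper's route: verify Wick's condition (the paper, like you, gets it from the fact that $\boldsymbol L_{\text e}^{\text U}$ and $\boldsymbol D_{\text e}$ are quadratic in superoperators obeying the CCR of \cref{lem:CCR}), apply \cref{thm:rhoS_Dyson_short} in the form of \cref{rmk:formal_exponential_formula}, and reduce the corollary to the values of the two-point superoperator BCFs, which is exactly the content of \cref{lem:corr_spin_boson}. The only methodological difference is how those BCFs are computed: you act directly on the initial state (Heisenberg evolution of $\hat b_k$ for the Gibbs case, and the explicit identity $\mathrm e^{\boldsymbol L_{\text e}\tau}\bigl(\hat b_k^\dagger\hat\rho_{\text E}(0)\bigr)=\sum_{k'}\bigl(\mathrm e^{(-\mathrm i H-\Gamma)\tau}\bigr)_{k'k}\hat b_{k'}^\dagger\hat\rho_{\text E}(0)$ for the vacuum case, which is correct), whereas the paper derives ODEs for $\mathrm C_{\boldsymbol O,\boldsymbol O'}(t)=\operatorname{tr}\bigl(\boldsymbol O\,\mathrm e^{\boldsymbol L_{\text e}t}\boldsymbol O'\hat\rho_{\text E}(0)\bigr)$ from the commutators of \cref{lem:commutator_L} and solves them (\cref{lem:C_OO_unitary_spin_boson} and its Lindblad analogue). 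Both yield the same correlation functions, so this is a legitimate, slightly more hands-on variant of the same argument.

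Two bookkeeping points in your sketch are wrong as written, and they are precisely the coefficients the corollary asserts. First, your target identifications $C_{2i-1,2j}=-c_{ij}^{\,*}$ and $C_{2i,2j}=+c_{ij}^{\,*}$ contradict the paper's \cref{lem:corr_spin_boson}, which states $C_{2i-1,2j}=+c_{ij}^{\,*}$ and $C_{2i,2j}=-c_{ij}^{\,*}$; the lemma's signs are what the computation actually produces, e.g.\ using the paper's appendix values $\mathrm C_{\boldsymbol b,\widetilde{\boldsymbol b}}(\tau)=G^<(\tau)$ and $\mathrm C_{\boldsymbol b^\dagger,\widetilde{\boldsymbol b}^\dagger}(\tau)=(G^>(\tau))^{*}$ one gets $C_{2i-1,2j}(\tau)=\bigl(gG^<(\tau)g^\dagger+(gG^>(\tau)g^\dagger)^{*}\bigr)_{ij}=(c_{ij}(\tau))^{*}$, and only these signs reassemble into the standard influence-functional combination $-\bigl(\boldsymbol\sigma_z^{(i)}(t_1)-\widetilde{\boldsymbol\sigma}_z^{(i)}(t_1)\bigr)\bigl(c_{ij}\boldsymbol\sigma_z^{(j)}(t_2)-c_{ij}^{*}\widetilde{\boldsymbol\sigma}_z^{(j)}(t_2)\bigr)$. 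So if you carry out the block-by-block calculation correctly you will land on the lemma's signs, and you need to reconcile that with the signs you read off the displayed $\boldsymbol F$ rather than adopt them as the target. Second, your Lindblad vacuum rules are swapped: since $\widetilde{\boldsymbol b}_k$ right-multiplies by $\hat b_k^\dagger$, it is $\boldsymbol b_k$ and $\widetilde{\boldsymbol b}_k$ (not $\widetilde{\boldsymbol b}_k^\dagger$) that annihilate $|0\rangle_{\text E}\langle0|_{\text E}$ from the right, and neither $\operatorname{tr}(\boldsymbol b_k^\dagger\,\cdot\,)$ nor $\operatorname{tr}(\widetilde{\boldsymbol b}_k\,\cdot\,)$ vanishes; the trace only identifies $\boldsymbol b_k^\dagger$ with $\widetilde{\boldsymbol b}_k$ and $\boldsymbol b_k$ with $\widetilde{\boldsymbol b}_k^\dagger$. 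Applied as stated, your rule would discard the surviving contraction $\widetilde{\boldsymbol b}_k^\dagger\hat\rho_{\text E}(0)=|0\rangle_{\text E}\langle0|_{\text E}\hat b_k$ in the blocks with second index $2j$, making $C_{2i-1,2j}$ and $C_{2i,2j}$ come out zero instead of $\pm c_{ij}^{*}$. Both slips are local and fixable, but since the corollary is nothing but these coefficients, the "parallel calculation" for the remaining blocks must be written out with the corrected rules.
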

Since $\boldsymbol{b}_k, \boldsymbol{b}_k^\dagger, \widetilde{\boldsymbol{b}}_k, \widetilde{\boldsymbol{b}}_k^\dagger$ satisfy the CCR, and both 
$\boldsymbol L_{\text e}^{\text{U}}$ and $\boldsymbol{D}_{\text e}$ are quadratic in $\boldsymbol{b}_k, \boldsymbol{b}_k^\dagger, \widetilde{\boldsymbol{b}}_k, \widetilde{\boldsymbol{b}}_k^\dagger$, the Isserlis-Wick's condition holds.
To apply the result of \cref{thm:rhoS_Dyson_short} to get  \cref{cor:spin_boson_dynamics}, we only need to verify that the superoperator BCF $C_{\alpha\alpha'}(t-t')$ ($\alpha,\alpha'=1,\cdots,2n$) takes the following form:

\begin{lem}[BCFs for spin-boson model]
For $t\geq t'$, the two-point superoperator correlation function $\mathcal C_{\alpha\alpha'}(t-t')$ ($\alpha,\alpha'=1,\cdots,2n$)  is given by 
    \begin{equation}
        \begin{aligned}
            \mathcal C_{2i-1,2j-1}(t-t') &= -c_{ij}(t-t'),
        \quad  && \mathcal C_{2i,2j}(t-t')=-(c_{ij}(t-t'))^*, \\
        \mathcal C_{2i-1,2j}(t-t') &=c_{ij}(t-t')^*, \quad &&\mathcal C_{2i,2j-1}(t-t') =c_{ij}(t-t') ,
        \end{aligned}
        \label{eq:two_point_corr_unitary_spin_boson}
    \end{equation}
    where $c(t-t')$ is a $N_{\text e}\times N_{\text e}$ matrix-valued function, given by \cref{eq:c_t_unitary_spin_boson}, \cref{eq:c_t_Lindblad_spin_boson} 
    in the unitary and Lindblad dynamics, respectively.
    \label{lem:corr_spin_boson}
\end{lem}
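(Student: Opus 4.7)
The plan is to compute all four BCF entries directly from \cref{eq:two_point_BCF} in each of the two dynamics, using that $\hat\rho_{\text E}(0)$ is a stationary state of $\mathrm e^{\boldsymbol L_{\text e}t}$: for the Gibbs state \cref{eq:rhoE_init} this holds by construction, and for the vacuum $|0\rangle\langle 0|$ a direct check using $\hat b_k|0\rangle=0$ and $\langle 0|\hat b_k^\dagger=0$ yields both $\boldsymbol L_{\text e}^{\text U}\hat\rho_{\text E}(0)=0$ and $\boldsymbol D_{\text e}\hat\rho_{\text E}(0)=0$. The step useful in both cases is to rewrite $\boldsymbol E_{2j-1}^{\text U}\hat\rho=-\mathrm i\hat B_j\hat\rho$ and $\boldsymbol E_{2j}^{\text U}\hat\rho=\mathrm i\hat\rho\hat B_j$ with $\hat B_j:=\sum_k(g_{jk}\hat b_k+g_{jk}^*\hat b_k^\dagger)$; this identification turns each of the four BCFs into the trace of $\hat B_i$ against the time-evolved operator $\mathrm e^{\boldsymbol L_{\text e}(t-t')}(\hat B_j\hat\rho_{\text E}(0))$ or $\mathrm e^{\boldsymbol L_{\text e}(t-t')}(\hat\rho_{\text E}(0)\hat B_j)$, with an overall sign $\pm 1$ arising from the two factors of $\pm\mathrm i$. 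The four distinct sign-and-conjugation patterns in the statement then come out automatically by combining these signs with cyclic reordering of the trace.

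For the unitary dynamics I would use $\mathrm e^{\boldsymbol L_{\text e}^{\text U}\tau}\hat X=\mathrm e^{-\mathrm i\hat H_{\text e}\tau}\hat X\mathrm e^{\mathrm i\hat H_{\text e}\tau}$ together with the fact that the Gibbs state commutes with $\mathrm e^{\pm\mathrm i\hat H_{\text e}\tau}$; each BCF then reduces (up to the fixed sign) to a thermal correlator of the form $\operatorname{tr}(\hat B_i\,\mathrm e^{-\mathrm i\hat H_{\text e}\tau}\hat B_j\mathrm e^{\mathrm i\hat H_{\text e}\tau}\hat\rho_{\text E}(0))$ or its Hermitian-conjugate rearrangement. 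Expanding via $\mathrm e^{\mathrm i\hat H_{\text e}\tau}\hat b_k\mathrm e^{-\mathrm i\hat H_{\text e}\tau}=\sum_q(\mathrm e^{-\mathrm iH\tau})_{kq}\hat b_q$ and substituting the Gaussian two-point contractions $\langle\hat b_k^\dagger\hat b_l\rangle=((\mathrm e^{\beta H}-I)^{-1})_{lk}$ and $\langle\hat b_k\hat b_l^\dagger\rangle=((I-\mathrm e^{-\beta H})^{-1})_{kl}$, the two surviving patterns match the $(gG^>(\tau)g^\dagger)_{ij}$ and $(gG^<(\tau)g^\dagger)^*_{ij}$ blocks in \cref{eq:c_t_unitary_spin_boson}. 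Hermiticity of $\hat B_i$, $\hat B_j$ and $\hat\rho_{\text E}(0)$ then distributes $c_{ij}$ versus $c_{ij}^*$ among the four entries.

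For the Lindblad dynamics, rather than use an interaction-picture identity I would directly evolve the one-excitation operators $\hat b_k^\dagger|0\rangle\langle 0|$ and $|0\rangle\langle 0|\hat b_k$ under $\boldsymbol L_{\text e}=\boldsymbol L_{\text e}^{\text U}+\boldsymbol D_{\text e}$. The key simplification is that $\hat b_l|0\rangle=0$ and $\langle 0|\hat b_p^\dagger=0$ cancel most of the dissipator's jump and anticommutator terms, leaving $\boldsymbol L_{\text e}(\hat b_k^\dagger|0\rangle\langle 0|)=-\sum_p(\mathrm iH+\Gamma)_{pk}\hat b_p^\dagger|0\rangle\langle 0|$, so the single-excitation sector is invariant and $\mathrm e^{\boldsymbol L_{\text e}\tau}(\hat b_k^\dagger|0\rangle\langle 0|)=\sum_p(\mathrm e^{-(\mathrm iH+\Gamma)\tau})_{pk}\hat b_p^\dagger|0\rangle\langle 0|$. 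A parallel derivation handles $|0\rangle\langle 0|\hat b_k$ with an effective generator that is the Hermitian conjugate of $-(\mathrm iH+\Gamma)$. Tracing against $\hat B_i$ using only the vacuum contraction $\langle 0|\hat b_q\hat b_p^\dagger|0\rangle=\delta_{qp}$ then produces \cref{eq:c_t_Lindblad_spin_boson} together with the claimed sign and conjugation pattern for all four entries.

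The main technical obstacle is the matrix-algebra identity $(gXg^\dagger)^*_{ij}=(gX^\dagger g^\dagger)_{ji}$, which is needed to convert the $\mathrm e^{\mathrm iH\tau}$ factor appearing naturally in the thermal calculation into the $G^<$ term carrying the opposite sign of $\mathrm iH$ after elementwise complex conjugation; the analogous consistency check in the Lindblad case is that the two effective generators $-(\mathrm iH+\Gamma)$ and $\mathrm iH-\Gamma$ are Hermitian conjugates. Hermiticity of $H$, $\Gamma$, $(I-\mathrm e^{-\beta H})^{-1}$ and $(\mathrm e^{\beta H}-I)^{-1}$ is the essential input at this step.
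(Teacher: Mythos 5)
Your proposal is correct, and it reaches \cref{eq:two_point_corr_unitary_spin_boson} by a genuinely different route from the paper. The paper works entirely at the superoperator level: it first derives the commutators of $\boldsymbol b_k,\boldsymbol b_k^\dagger,\widetilde{\boldsymbol b}_k,\widetilde{\boldsymbol b}_k^\dagger$ with $\boldsymbol L_{\text e}$ (\cref{lem:commutator_L}), then sets up and solves linear ODEs for the full table of building-block correlators $\mathrm C_{\boldsymbol O,\boldsymbol O'}(t)=\langle\boldsymbol O\,\mathrm e^{\boldsymbol L_{\text e}t}\boldsymbol O'\rangle_{\text e}$ (\cref{lem:C_OO_unitary_spin_boson} and its Lindblad analogue inside the proof), and finally assembles the four entries $C_{\alpha\alpha'}$ by bilinearity from \cref{lem:Liouvillian_unitary_spin_boson}. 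You instead exploit the observation that $\boldsymbol E^{\text U}_{2j-1}$ and $\boldsymbol E^{\text U}_{2j}$ are just $\mp\mathrm i$ times left/right multiplication by the Hermitian operator $\hat B_j$, which converts each superoperator BCF into an ordinary operator trace; in the unitary case this reduces to the standard Heisenberg-picture thermal correlator $\operatorname{tr}(\hat B_i(\tau)\hat B_j\hat\rho_{\text E}(0))$ evaluated by Gaussian contractions (exactly the identification the paper records as \cref{rem:ct_op_spin_boson} but does not use as its proof route), and in the Lindblad case your single-excitation-sector computation is correct: one checks as you say that $\boldsymbol L_{\text e}(\hat b_k^\dagger|0\rangle\langle 0|)=-\sum_p(\mathrm iH+\Gamma)_{pk}\hat b_p^\dagger|0\rangle\langle 0|$ and $\boldsymbol L_{\text e}(|0\rangle\langle 0|\hat b_k)=\sum_l(\mathrm iH-\Gamma)_{kl}|0\rangle\langle 0|\hat b_l$, so the vacuum contractions produce \cref{eq:c_t_Lindblad_spin_boson} with the stated signs and conjugations, the only bookkeeping being the identity $\bigl((gXg^\dagger)_{ij}\bigr)^*=(gX^\dagger g^\dagger)_{ji}$ with $H,\Gamma$ Hermitian, which you flag. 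What each approach buys: yours is more elementary and case-by-case transparent, staying close to the physical definition of the BCF; the paper's ODE table for all sixteen $\mathrm C_{\boldsymbol O,\boldsymbol O'}(t)$ is heavier but is then reused verbatim for the quasi-Lindblad BCFs (\cref{lem:BCF_quasi_lind}), where the coupling superoperators $\boldsymbol E^{\text D}_{\alpha}$ mix left and right multiplications and your clean ``$\hat B_j$ from the left or right'' splitting no longer applies without extending the contraction bookkeeping to essentially the same table.
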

\begin{rem}
\label{rem:ct_op_spin_boson}
The definitions of $G^>(t)$ and $G^<(t)$  in \cref{eq:Greens_unitary_spin_boson} align with the standard convention for greater and lesser Green's functions in the physics literature, differing only by a constant factor of $-\mathrm i$ (for example, see \cite[Section 6.1.3]{StefanucciVanLeeuwen2013}), namely,
\begin{equation}
    G_{kl}^>(t-t') = \operatorname{tr}\left(
   \hat b_k(t)\hat b_l^\dagger(t')\hat\rho_{\text E}(0)
    \right),\quad  G_{kl}^<(t-t') = \operatorname{tr}\left(
  \hat b_l^\dagger(t') \hat b_k(t)\hat\rho_{\text E}(0)
    \right).
    \label{eq:G<G>_op}
\end{equation}
Here we have used the notation that for operator $\hat O$ acting on $\mathcal H_{\text e}$, $\hat O(t)= \mathrm e^{-\mathrm i \hat H_{\text e}t}\hat O\mathrm e^{\mathrm i \hat H_{\text e}t}$.
As a result, $c_{ij}(t,t')$ coincides with the bath correlation functions of the environment operators, namely,
\begin{equation}
    c_{ij}(t-t') = \text{tr}(\hat E_i(t)\hat E_j(t')\hat\rho_{\text E}(0)),\quad \hat E_i = \sum_{k=1}^{N_{\text e}}g_{ik}\hat b_k + g_{ik}^*\hat b_k^\dagger.
    \label{eq:ct_op}
\end{equation}
Though in \cref{cor:spin_boson_dynamics} and \cref{lem:corr_spin_boson} we have only used $c(t-t')$ for $t\geq t'$,  using \cref{eq:ct_op,eq:G<G>_op}, one can see that in unitary systems \cref{eq:c_t_unitary_spin_boson,eq:Greens_unitary_spin_boson} still holds for $t<t'$. In particular, $c(t-t')$ satisfies the following Hermitian property:
$$
c(t-t') = (c(t'-t))^\dagger.
$$
\end{rem}
\begin{rem}
    In the case of diagonal $\hat H_{\text e}$, i.e., $H_{kl} = \omega_k\delta_{kl}$ ($k,l=1,\cdots N_{\text e}$), and real coupling coefficient $g$, the BCF $c_{ij}(t-t')$ reduces to the widely-used form in literature (for example, see \cite[Eqs. (3.388) and (3.389)]{breuer2007open}):
    \begin{equation}
    \begin{aligned}
          c_{ij}(t-t') &= 
       \sum_{k=1}^{N_{\text e}}   g_{ik}g_{jk}  \left(\coth\left(\frac{\beta\omega_k}{2}\right)\cos(\omega_k (t-t')) - \mathrm i\sin(\omega_k (t-t'))\right).
    \end{aligned}
    \end{equation}
    \label{rem:diagonal_H_e}
    Our results are applicable to the more general case, known as \emph{coupled modes} \cite{Mascherpa2020, lednev2024lindblad, HuangParkChanetal2025}, where $H$ is a non-diagonal matrix.
\end{rem}
We will prove \cref{lem:corr_spin_boson} in \cref{sec:env_correlation}. 
Furthermore, we have already shown that the estimate \cref{eq:estimate} holds for this example in \cref{sec:estimate}.
Then, as an application of \cref{thm:main_error_bound} and \cref{thm:main}, 
we have the following error bound  for spin-boson models:
\begin{cor}[Error bound for spin-boson  models]
Let $\hat\rho_{\text S}(t)$, $\hat\rho_{\text S}'(t)$ be the reduced system density operators of a unitary or Lindblad spin-boson model. Let $c_{ij}(t)$ and $c_{ij}'(t)$ be the 
correlation function corresponding to $\hat\rho_{\text S}(t)$, $\hat\rho_{\text S}'(t)$, as defined by \cref{eq:c_t_unitary_spin_boson} or \cref{eq:c_t_Lindblad_spin_boson}. Then 
 the same error bounds as in \cref{thm:main_error_bound} for reduced system densities and \cref{thm:main} for any bounded system observables $\hat O_{\text S}$ hold, i.e.,
\begin{equation}
\begin{aligned}
    \|\hat\rho_{\text S}'(t) - \hat\rho_{\text S}(t)\|_{\text{tr}}\leq \mathrm e^{\epsilon t^2/2}-1,\quad 
    |O_{\text S}(t) - O_{\text S}'(t)|\leq (\mathrm e^{\epsilon t^2/2}-1)\|\hat O_{\text S}\|, \quad \forall t\in [0,T],  \\
    \|\hat\rho_{\text S}'(t) - \hat\rho_{\text S}(t)\|_{\text{tr}}\leq \mathrm e^{\epsilon_1 t}-1,\quad 
    |O_{\text S}(t) - O_{\text S}'(t)|\leq (\mathrm e^{\epsilon_1 t}-1)\|\hat O_{\text S}\|, \quad \forall t\in [0,T], 
\end{aligned}
\label{eq:error_bound_application}
\end{equation}
where $\epsilon$ and $\epsilon_1$ are given by:
$$\epsilon = 4n  \|c  - c '\|_{L^\infty([0,T])},\quad \epsilon_1 = 4n   \|c  - c '\|_{L^1([0,T])}.$$
\label{cor:error_spin_boson}
\end{cor}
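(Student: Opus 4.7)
The plan is to obtain this corollary as a direct application of \cref{thm:main_error_bound} (together with \cref{lem:rho_to_O} for the observable estimate), specialized to the spin-boson setup already developed earlier in \cref{sec:spin_boson}. All the nontrivial analysis has been done in the main theorem, so my task reduces to bookkeeping: verifying Wick's condition, identifying the effective coupling count $N$, and translating the abstract superoperator BCF estimate into the desired estimate on the physical BCF $c$.

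First, I will note that in this setup there are $N=2n$ system-environment coupling superoperators $\boldsymbol{S}_\alpha$ ($\alpha=1,\dots,2n$), and that Wick's condition in \cref{defn:Wick_condition} for the superoperator BCFs $C^{(n)}$ holds because both $\boldsymbol{L}_{\text e}^{\text U}$ and $\boldsymbol{D}_{\text e}$ in \cref{eq:spin_boson_env} are quadratic in the bosonic superoperators $\boldsymbol{b}_k,\boldsymbol{b}_k^\dagger,\widetilde{\boldsymbol{b}}_k,\widetilde{\boldsymbol{b}}_k^\dagger$ obeying the CCR of \cref{lem:CCR}, while the initial environment state (Gibbs state in the unitary case, vacuum in the Lindblad case) is Gaussian in the same algebra. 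The two-point superoperator BCF is then given explicitly by \cref{lem:corr_spin_boson}, and since the physical BCF $c$ depends only on the time difference, \cref{rmk:single_variable_Corr} applies and we may work with the single-variable quantities $\epsilon,\epsilon_1$.

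Next, I will estimate $\|C-C'\|$ pointwise in terms of $\|c-c'\|$. Using \cref{eq:two_point_corr_unitary_spin_boson} and permuting indices so that the odd-labeled coupling operators come first, the difference $\delta C(\tau)=C(\tau)-C'(\tau)$ can be written, with $\delta c=c-c'$, as the block matrix
$$
\delta C(\tau)=\begin{pmatrix}-\delta c(\tau)&\delta c(\tau)^{*}\\ \delta c(\tau)&-\delta c(\tau)^{*}\end{pmatrix}.
$$
Splitting this into its block-diagonal and block-off-diagonal parts and using $\|\delta c^{*}\|_2=\|\delta c\|_2$, the triangle inequality yields $\|\delta C(\tau)\|_2\le 2\|\delta c(\tau)\|_2$. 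Consequently the constants of \cref{rmk:single_variable_Corr} satisfy
$$
\epsilon=N\,\|\delta C\|_{L^{\infty}([0,T])}\le 2N\,\|c-c'\|_{L^\infty([0,T])}=4n\,\|c-c'\|_{L^\infty([0,T])},
$$
and analogously $\epsilon_1\le 4n\,\|c-c'\|_{L^1([0,T])}$.

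Finally, substituting these estimates into the single-variable form of \cref{thm:main_error_bound} given in \cref{rmk:single_variable_Corr} gives $\|\hat\rho_{\text S}(t)-\hat\rho_{\text S}'(t)\|_{\text{tr}}\le \mathrm e^{\epsilon t^2/2}-1$ and the corresponding $L^1$ bound; \cref{lem:rho_to_O} then converts these into the claimed observable estimates, completing the proof. I do not foresee any genuine obstacle beyond the matrix-norm bookkeeping above; in particular no new analysis of the Dyson series or of Wick contractions is needed, since \cref{cor:spin_boson_dynamics} and \cref{lem:corr_spin_boson} have already packaged the spin-boson dynamics in exactly the form required to invoke \cref{thm:main_error_bound}.
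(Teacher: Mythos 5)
Your proposal is correct and follows essentially the same route the paper intends: the corollary is obtained as a direct application of \cref{thm:main_error_bound} (in the single-variable form of \cref{rmk:single_variable_Corr}) together with \cref{lem:rho_to_O}, using the Wick's condition for the quadratic bosonic Liouvillians and the BCF identification of \cref{lem:corr_spin_boson}. Your block-matrix estimate $\|\delta C(\tau)\|_2\le 2\|\delta c(\tau)\|_2$, which with $N=2n$ yields the factor $4n$, is exactly the bookkeeping the paper leaves implicit, and it is valid since the index permutation is unitary and $\|\delta c^{*}\|_2=\|\delta c\|_2$.
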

We can directly apply \cref{cor:error_spin_boson} to quantify the error bound of approximations in the pseudomode theory for the spin-boson model by considering the dynamics of $\hat{\rho}_S(t)$ as the original unitary dynamics and the dynamics of $\hat{\rho}'_S(t)$ as the pseudomode dynamics from the Lindblad or quasi-Lindblad dynamics.

\subsection{Fermionic impurity model}
\label{sec:fermion}

\subsubsection{Fermionic environment}
\label{sec:fermion_env}
The fermionic environment Hilbert space is given by
\begin{equation}
    \mathcal H_{\text E} = \operatorname{span}\left(\left\{(\hat c_1^\dagger)^{j_1}\cdots(\hat c_n^\dagger)^{j_{N_{\text e}}}|0\rangle_{\text E}, j_1,\cdots,j_{N_{\text e}}=0,1\right\}\right).
\label{eq:Hilbertspace_env_fermion}
\end{equation}
$N_{\text e}$ is the number of fermionic modes, $|0\rangle_{\text E}$ is the environment vacuum state, and
$\hat c_k,\hat c_k^\dagger$ are the fermionic annihilation and creation operators for the $k$-th environment modes ($k =1,\cdots, N_{\text e}$), which satisfy the canonical anti-commutation relations (CAR) $\{\hat c_k,\hat c_{k'}^\dagger\} = \delta_{kk'}\hat 1_{\text e}$. We might want to define the following superoperators, just as in the bosonic case 
\cref{eq:bosonic_env_superoperators}:
$$
\boldsymbol c_k: (\cdot)\rightarrow \hat c_k(\cdot),\quad \boldsymbol c_k^\dagger: (\cdot)\rightarrow \hat c_k^\dagger(\cdot),\quad \widetilde{\boldsymbol c}_k: (\cdot)\rightarrow (\cdot)\hat c_k^\dagger,\quad \widetilde{\boldsymbol c}_k^\dagger: (\cdot)\rightarrow (\cdot)\hat c_k.
$$
However, in this definition, though $\boldsymbol c_k$ and $\boldsymbol c_k^\dagger$  satisfies that $\{\boldsymbol c_k,\boldsymbol c_{k'}^\dagger\} = \delta_{kk'}\boldsymbol 1_{\text e}$,
$\boldsymbol c_k$ and $\widetilde{\boldsymbol c}_{k'}$ actually commute rather than anticommute.
Thus, this definition does not satisfy the CAR, which is necessary for  the Isserlis-Wick's conditions (see \cref{defn:Wick_condition}) to hold. 
To overcome this, we introduce the parity operator $\hat P_{\text e}\in B(\mathcal H_{\text E})$ defined as follows:
\begin{equation}
    \hat P_{\text e} = (-1)^{\hat N_{\text e}},\quad \hat N_{\text e} = \sum_{k=1}^N \hat c_k^\dagger\hat c_k.
    \label{eq:parity_operator}
\end{equation}
A useful property is that $\hat P_{\text e}$ and $\hat c_k,\hat c_k^{\dagger}$ anticommute, i.e., $\hat P_{\text e}\hat c_k = -\hat c_k \hat P_{\text e}$, $\hat P_{\text e}\hat c_k^{\dagger} = -\hat c_k^{\dagger}\hat P_{\text e}$.
With the parity operator, the fermionic environment superoperators are defined as follows: 
\begin{defn}[Fermionic environment superoperators]
    For environment fermionic creation and annihilation operators $\hat c_k$, $\hat c_k^\dagger$, let us define the following superoperators, denoted as 
    $\boldsymbol c_k$, $\boldsymbol c_k^\dagger$, $\widetilde{\boldsymbol{c}}_k$, $\widetilde{\boldsymbol{c}}_k^\dagger$ ($k=1,\cdots,{N_{\text e}}$):
\begin{equation}
    \begin{aligned}
    \boldsymbol{c}_k : (\cdot)\rightarrow \hat P_{\text{e}}\hat c_k(\cdot) ,\quad 
    \boldsymbol{c}_k^{\dagger}: (\cdot)\rightarrow - \hat P_{\text{e}} \hat c_k^\dagger(\cdot), \quad
    \widetilde{\boldsymbol{c}}_k: (\cdot)\rightarrow \hat P_{\text{e}}(\cdot) \hat c_k^{\dagger},\quad
    \widetilde{\boldsymbol{c}}_k^{\dagger}: (\cdot)\rightarrow \hat P_{\text{e}}(\cdot) \hat c_k.
    \end{aligned}
    \label{eq:fermionic_env_superoperators}
\end{equation}
\label{defn:fermionic_superoperators}
\end{defn}
Such definitions satisfy the CAR, summarized as follows,
\begin{lem}[Canonical anti-commutation relations for fermionic superoperators]
    The bath superoperators $\boldsymbol c_k,\boldsymbol c_k^\dagger,\widetilde{\boldsymbol{c}}_k,\widetilde{\boldsymbol{c}}_k^\dagger$ ($k=1,\cdots,{N_{\text e}}$) satisfy the following canonical anti-commutation relations:
    \begin{equation}
        \begin{aligned}
        \left\{\boldsymbol{c}_k, \boldsymbol{c}_{k'}^{\dagger}\right\} &=  \delta_{kk'}\boldsymbol 1_{\text e},
        \quad \left\{\widetilde{\boldsymbol{c}}_k, \widetilde{\boldsymbol{c}}_{k'}^{\dagger}\right\} = \delta_{kk'}\boldsymbol 1_{\text e},\quad
        \\
        \left\{\boldsymbol{c}_k, {\boldsymbol{c}}_{k'}\right\}  = 
        \left\{\boldsymbol{c}_k^{\dagger},{\boldsymbol{c}}_{k'}^{\dagger}\right\}  =
        \left\{\widetilde{\boldsymbol{c}}_k^{\dagger},\widetilde{\boldsymbol{c}}_{k'}^{\dagger}\right\}  &=
        \left\{\widetilde{\boldsymbol{c}}_k,\widetilde{\boldsymbol{c}}_{k'}\right\}  =
        \left\{\boldsymbol{c}_k, \widetilde{\boldsymbol{c}}_{k'}\right\}  =
        \left\{\boldsymbol{c}_k, \widetilde{\boldsymbol{c}}_{k'}^{\dagger}\right\}  =
        \left\{\boldsymbol{c}_k^{\dagger}, \widetilde{\boldsymbol{c}}_{k'}^{\dagger}\right\}  = 0.
        \end{aligned}
        \label{eq:canonical_fermionic}
    \end{equation}
    \label{lem:CAR}
\end{lem}
The environment Liouvillian $\boldsymbol L_{\text e}^{\text U}$ is given by
$\boldsymbol{L}_{\text e}^{\text U} = -\mathrm i [\hat H_{\text e}, \cdot]$, $\hat H_{\text e} = \sum_{k,l=1}^{N_{\text{e}}}H_{kl}\hat c_k^\dagger\hat c_l$,
 where $H$ is a $N_{\text e}\times N_{\text e}$ Hermitian matrix. 
 Since $\boldsymbol c_k^\dagger\boldsymbol c_l\hat\rho = -\hat P_{\text e}\hat c_k^\dagger\hat P_{\text e}\hat c_l\hat\rho = \hat c_k^\dagger\hat P_{\text e}\hat P_{\text e}\hat c_l\hat \rho = \hat c_k^\dagger\hat c_l\hat\rho$ and 
$\widetilde{\boldsymbol{c}}_l^\dagger\widetilde{\boldsymbol{c}}_k\hat\rho =  \widetilde{\boldsymbol{c}}_l^\dagger(\hat P_{\text e}\hat\rho\hat c_k^\dagger) = \hat P_{\text e}\hat P_{\text e}\hat\rho\hat c_k^\dagger\hat c_l = \hat\rho\hat c_k^\dagger\hat c_l$, $\boldsymbol{L}_{\text e}^{\text U}$ could also be written in the following superoperator form:
\begin{equation}
     \boldsymbol L_{\text e}^{\text U} = \sum_{k,l=1}^{N_{\text e}} -\mathrm iH_{kl}\boldsymbol c_k^\dagger\boldsymbol c_l + \mathrm iH_{kl}\widetilde{\boldsymbol c}_l^\dagger\widetilde{\boldsymbol c}_k,
     \label{eq:Le_U_fermionic}
\end{equation}

For the non-unitary dynamics, we focus on the specific setup used in \cite{Chen_2019, ParkHuangZhuetal2024}. Therein, the environment is divided into two separate parts: one with an initial vacuum environment and the other with an initial fully occupied environment, i.e.,
\begin{equation}
    \hat\rho_{\text E}(0) = (|\underbrace{0\cdots 0}_{N_{\text 1}}\rangle\langle \underbrace{0\cdots 0}_{N_{\text 1}}|)\otimes(|\underbrace{1\cdots 1}_{ N_{\text e}-N_1}\rangle\langle \underbrace{1\cdots 1}_{  N_{\text e}-N_1}|),
    \label{eq:rhoE_init_emp_fil}
\end{equation}
and there are no Hamiltonian and dissipative terms between the two different parts.
As a result, the Hamiltonian $\hat H$ is block-diagonal, i.e., $H = \operatorname{diag}\{H_1,H_2\}$, where $H_1,H_2$ are Hermitian matrices of size $N_1\times N_1$ and $(N_{\text e}-N_1)\times (N_{\text e}-N_1)$.
The environment dissipator $\boldsymbol{D}_{\text e}$ is set such that the initial density operator becomes stationary, $\boldsymbol{D}_{\mathrm{e}} \hat\rho_{\text E}(0)=0$, thus according to \cref{rmk:single_variable_Corr}, the two-point correlation functions $C_{\alpha,\alpha'}(t,t')$ depend only on $(t-t')$. In particular, the dissipator has the following operator form, $\boldsymbol{D}_{\text e} = \boldsymbol{D}_{\text e}^-+\boldsymbol{D}_{\text e}^+$,
\begin{equation}
        \boldsymbol D_{\text e}^-\hat\rho = \sum_{k,k'=1}^{N_1} 2\Gamma^-_{kk'}
        \left(  \hat c_{k'}\hat\rho \hat c_k^\dagger - \frac{1}{2} 
        \{ \hat c_k^\dagger \hat c_{k'},\hat\rho\}\right),\quad
        \boldsymbol D_{\text e}^+\hat\rho = \sum_{l,l'=N_1+1}^{N_{\text e}} 2\Gamma^+_{ll'}
        \left(  \hat c_l^\dagger\hat\rho \hat c_{l'} - \frac{1}{2}
        \{ \hat c_{l'} \hat c_l^\dagger,\hat\rho\}\right),
    \end{equation}
where $\Gamma^-$ and $\Gamma^+$ are positive semi-definite matrices with sizes $N_1\times N_1$ and $(N_{\text e}-N_1)\times (N_{\text e}-N_1)$, respectively. In the superoperator form, using that $
\widetilde{\boldsymbol{c}}_k \boldsymbol{c}_{k'}\hat\rho = 
\hat P_{\text e}(\hat P_{\text e}\hat c_{k'}\hat\rho) \hat c_k^\dagger = \hat c_{k'} \hat\rho \hat c_k^\dagger$ and $ \boldsymbol{c}_l^{\dagger}
\widetilde{\boldsymbol{c}}_{l'}^{\dagger}\hat\rho = -\hat P_{\text e}\hat c_l^\dagger(\hat P_{\text e}\hat\rho \hat c_{l'}) = \hat c_l^\dagger \hat\rho \hat c_{l'}
$, we can see that $\boldsymbol{D}_{\text e}^-$ and $\boldsymbol{D}_{\text e}^+$ take the following superoperator form:
\begin{equation}
    \boldsymbol D_{\text e}^- = \sum_{k,k'=1}^{N_1}2 \Gamma_{kk'}^- \widetilde{\boldsymbol{c}}_k \boldsymbol{c}_{k'} -\Gamma_{kk'}^- \boldsymbol{c}_k^{\dagger} \boldsymbol{c}_{k'}-\Gamma_{kk'}^- \widetilde{\boldsymbol{c}}_{k'}^{\dagger} \widetilde{\boldsymbol{c}}_k,\text{ }  \boldsymbol{D}_{\text e}^+ =
    \sum_{l,l' =N_1+1}^{N_{\text e}}  2 \Gamma_{ll'}^+ \boldsymbol{c}_l^{\dagger}\widetilde{\boldsymbol{c}}_{l'}^{\dagger} -\Gamma_{ll'}^+  \boldsymbol{c}_{l'}\boldsymbol{c}_l^{\dagger}  -\Gamma_{ll'}^+  \widetilde{\boldsymbol{c}}_l 
    \widetilde{\boldsymbol{c}}_{l'}^{\dagger} .
    \label{eq:D_e_fermion} 
\end{equation}
\subsubsection{System-environment coupling and the fermionic superselection rule}
\label{sec:fermion_env_sys}
The system Hilbert space is given by
\begin{equation}
    \mathcal H_{\text S} = \operatorname{span}\left(\left\{(\hat a_1^\dagger)^{j_1}\cdots(\hat a_n^\dagger)^{j_{n}}|0\rangle_{\text S}, j_1,\cdots,j_n=0,1\right\}\right),
    \label{eq:Hilbertspace_sys_fermion}
\end{equation}
where $|0\rangle_{\text S}$ is a system vacuum state. $\hat a_i$ and $\hat a_i^\dagger$ ($i=1,\cdots,n$) are system fermionic annihilation and creation operators of $n$ fermionic modes that also satisfy the CAR: $\{\hat a_i,\hat a_{i'}^\dagger\} = \delta_{ii'}\hat{1}_{\text s}$, where $\hat 1_{\text s}\in B(\mathcal H_{\text s})$ is the system identity operator. 
For $\hat a_j$, $\hat a_j^\dagger\in B(\mathcal H_{\text S})$ ($j=1,\cdots,n$), let us define the following superoperators, denoted as 
$\boldsymbol a_j$, $\boldsymbol a_j^\dagger$, $\widetilde{\boldsymbol{a}}_j$, $\widetilde{\boldsymbol{a}}_j^\dagger$: 
\begin{equation}
    \boldsymbol{a}_j: (\cdot)\rightarrow\hat a_j(\cdot),\quad
    \boldsymbol{a}_j^{\dagger}: (\cdot)\rightarrow\hat a_j^{\dagger}(\cdot),\quad
    \widetilde{\boldsymbol{a}}_j: (\cdot)\rightarrow -\hat P_{\text s}(\cdot)\hat a_j^{\dagger}\hat P_{\text s},\quad
    \widetilde{\boldsymbol{a}}_j^{\dagger}: (\cdot) \rightarrow \hat P_{\text s}(\cdot)\hat a_j\hat P_{\text s}.
    \label{eq:env_system_supero}
\end{equation}
We introduce the system parity operator $\hat P_{\text s} = (-1)^{\hat N_{\text s}}$, where $\hat N_{\text s} = \sum_{i=1}^n \hat a_i^\dagger\hat a_i$. Similar to $\hat P_{\text e}$, which anticommutes with $\hat c_k$ and $\hat c_k^\dagger$, $\hat P_{\text s}$ also anticommutes with $\hat a_i$ and $\hat a_i^\dagger$. Later, we will also use the parity operator of the entire extended system $\hat P = (-1)^{\hat N} = (-1)^{\hat I_{\text e}\otimes\hat N_{\text s}+\hat N_{\text e}\otimes \hat I_{\text s}} = \hat P_{\text e}\otimes\hat P_{\text s}$.
The system superoperators in \cref{eq:env_system_supero} satisfy the similar anti-commutation rules as in \cref{lem:CAR}.

For $i=1,\cdots,n$ 
and $k=1,\cdots,N_{\text e}$, let us define $\hat a_i|_{\mathcal H},\hat a_i^{\dagger}|_{\mathcal H}$ and $\hat c_k|_{\mathcal H},\hat c_k^{\dagger}|_{\mathcal H}$, which are extensions of system operators $\hat a_i,\hat a_i^{\dagger}\in B(\mathcal H_{\text S})$ and environment operators $\hat c_k,\hat c_k^{\dagger}\in\mathcal H_{\text E}$ onto the entire Hilbert space $\mathcal H$. 
In fermionic systems, such an extension could not be trivially defined as $\hat a_i^{\dagger}|_{\mathcal H} = \hat I_{\text e}\otimes \hat a_i^{\dagger}$ and $\hat c_k^{\dagger}|_{\mathcal H} = \hat c_k^{\dagger}\otimes \hat I_{\text s}$, since in this case $\hat a_i|_{\mathcal H}$ and $\hat c_k|_{\mathcal H}$ commute while they should anticommute.
As shown in the \cref{sec:Z2}, the definition of
 $\hat a_i|_{\mathcal H},\hat a_i^\dagger|_{\mathcal H}$ and $\hat c_j|_{\mathcal H},\hat c_j^\dagger|_{\mathcal H}$ should follow from the
$\mathbb Z_2$-graded  structure of the Fock space (for example, see \cite[Chap 3]{Berezin2013}):
\begin{equation}
    \begin{aligned}
        \hat a_i^{\dagger}|_{\mathcal H} = \hat P_{\text e}\otimes\hat a_i^{\dagger},\quad 
        \hat a_i|_{\mathcal H} = \hat P_{\text e}\otimes\hat a_i,\quad \hat c_k^{\dagger}|_{\mathcal H} = \hat c_k^{\dagger}\otimes \hat I_{\text s},
        \quad \hat c_k|_{\mathcal H} = \hat c_k\otimes \hat I_{\text s}\label{eq:fermionic_extension}.
    \end{aligned}
\end{equation}
The anti-commutation relationship can directly be shown from this definition:
\begin{equation}\begin{aligned}\hat{a}^\dag_i |_{\mathcal{H}} \hat{c}_k |_{\mathcal{H}} &= (\hat P_{\text e}\otimes \hat a_i^{\dagger})(\hat c_k \otimes \hat I_{\text s}) = (\hat P_{\text e}\hat c_k) \otimes (\hat a_i^{\dagger}\hat I_{\text s})  \\ &= - (\hat c_k\hat P_{\text e}) \otimes (\hat I_{\text s} \hat a_i^{\dagger}) = - (\hat c_k \otimes \hat I_{\text s}) (\hat P_{\text e}\otimes \hat a_i^{\dagger}) = - \hat c_k|_{\mathcal H}\hat a_i^{\dagger}|_{\mathcal H}.\end{aligned}\end{equation}
Based on the extended operators, the system-environment Liouvillian $\boldsymbol L_{\text{SE}}^{\text U}$ is given by
\begin{equation}
         \boldsymbol L_{\text{SE}}^{\text U}\hat\rho =  -\mathrm i[\hat H_{\text{SE}},\hat\rho],\quad   \hat H_{\text{SE}} = 
            \sum_{i=1}^n\sum_{k=1}^{N_{\text e}} 
           ( \nu_{ik}\hat a_i^\dagger|_{\mathcal H} \hat c_k|_{\mathcal H} + \nu_{ik}^*\hat c_k^\dagger|_{\mathcal H}\hat a_i|_{\mathcal H}).
           \label{eq:L_SE_U_fermion_O}
\end{equation}
$\hat H_{\text{SE}}$ is a Hermitian operator known as the system-environment Hamiltonian, $\nu$ is a $n\times N_{\text e}$ coefficient matrix that describes the system-environment coupling. Since this is a fermionic system, due to the  fermionic superselection rule \cite{wick1997intrinsic} \footnote{Though according to the superselection rule we only need to consider $\hat\rho\in  \mathcal P(\mathcal H_{\text E}\otimes \mathcal H_{\text S})$. However, mathematically, one could treat $\rho^{(0)}\notin  \mathcal P(\mathcal H_{\text E}\otimes \mathcal H_{\text S})$ by keeping track of the action of the superoperators in both even and odd parity sectors. We refer to \cite{cirio2022canonical} for details.}, we have the following additional constraint for the intial density operator $\hat\rho(0)$:
\begin{equation}
\begin{aligned}
      \text{Fermionic superselection rule: }& \hat\rho(0)\in \mathcal P(\mathcal H_{\text E}\otimes \mathcal H_{\text S}),\\& \qquad 
\mathcal P(\mathcal H_{\text E}\otimes \mathcal H_{\text S}) = \{\hat \rho\in B_1(\mathcal H_{\text E}\otimes \mathcal H_{\text S}) | [\hat \rho,\hat P]=0\}.
\end{aligned}
\label{rem:SE_decomposition}
\end{equation}
In other words, $\hat\rho(0)$ commutes with the parity operator $\hat P$. We refer to  $\mathcal P(\mathcal H_{\text E}\otimes \mathcal H_{\text S})$ as the physical subspace of $B_1(\mathcal H_{\text E}\otimes \mathcal H_{\text S})$. 
Since $\mathcal P(\mathcal H_{\text E}\otimes \mathcal H_{\text S})$ is an invariant subspace for every term in $\boldsymbol{L}_{\text E}$, $\boldsymbol{L}_{\text{SE}}^{\text U}$, and later in $\boldsymbol{ L}_{\text S}$, it suffices to consider their restrictions to the subspace $\mathcal P(\mathcal H_{\text E}\otimes \mathcal H_{\text S})\subset B_1(\mathcal H_{\text E}\otimes \mathcal H_{\text S})$. This reduction significantly simplifies the subsequent calculations.
We will show in \cref{lem:L_SE_D_SE_Expression} that the Liouvillian $\boldsymbol{L}_{\text{SE}}^{\text U}$ in \cref{eq:L_SE_U_fermion_O} could  be written in the following superoperator form:
    \begin{align}        \boldsymbol L_{\text{SE}}^{\text U} &= 
        \sum_{\alpha=1}^{4n}\boldsymbol E_{\alpha}^{\text U}\otimes \boldsymbol S_{\alpha},\label{eq:L_SE_U_fermion}\\
        \boldsymbol S_{4j-3} &= \boldsymbol a_j^\dagger, \quad \boldsymbol S_{4j-2} = \boldsymbol a_j, \quad \boldsymbol S_{4j-1} = \widetilde{\boldsymbol a}_j^\dagger, \quad \boldsymbol S_{4j} = \widetilde{\boldsymbol a}_j\\
        \boldsymbol E_{4j-3}^{\text U} &= -\mathrm i\sum_{k=1}^{N_{\text e}} \nu_{jk} \boldsymbol c_k,\quad 
        \boldsymbol E_{4j-2}^{\text U} =-\mathrm i \sum_{k=1}^{N_{\text e}} \nu_{jk}^* \boldsymbol c_k^\dagger, \quad
        \boldsymbol E_{4j-1}^{\text U} =\mathrm i \sum_{k=1}^{N_{\text e}} \nu_{jk}^* \widetilde{\boldsymbol c}_k,
        \quad
        \boldsymbol E_{4j}^{\text U} = \mathrm i\sum_{k=1}^{N_{\text e}} \nu_{jk} \widetilde{\boldsymbol c}_k^\dagger, \label{eq:E_alpha_fermionic}    \end{align}
for $j=1,\cdots,n$. 
Note that both sides of the equality $\boldsymbol L_{\text{SE}}^{\text U} =     \sum_{i=1}^{4n}\boldsymbol E^{\text{U}}_{i}\otimes \boldsymbol S_{i}$ is to be understood in the sense that both sides act on the subspace $\mathcal P(\mathcal H_{\text E}\otimes \mathcal H_{\text S})$.
For future use, let us define $\nu^-$, $\nu^+$  as sublocks of $\nu$: $\nu = [\nu^-,\nu^+]$, where $\nu^-$ are of size $n\times N_1$, and $\nu^+$ are of size $n\times (N_{\text e}-N_1)$.

\subsubsection{Dynamics and error bound}
\label{sec:fermion_error}
For fermionic impurity models, let us consider the following system Hamiltonian:
\begin{equation}
        \begin{aligned}
            \hat H_{\text s} &=  \sum_{ii'=1}^n h_{ii'}\hat a_i^\dagger\hat a_{i'} + \sum_{ii'j'j=1}^n V_{ii'j'j}  \hat a_i^\dagger\hat a_{i'}^\dagger\hat a_j\hat a_{j'}.
        \end{aligned}
        \label{eq:fermionic_Hamiltonian}
    \end{equation}
Here
$h$ is a $n\times n$ Hermitian matrix, $V$ is a $n\times n\times n\times n$ tensor for electron-electron interaction. The system Liouvillian $\boldsymbol{L}_{\text s} $ could be taken as the Liouvillian $\boldsymbol{L}_{\text s}^{\text U}= -\mathrm i[\hat H_{\text s},\cdot]$. One could also take non-unitary system dynamics, i.e., $\boldsymbol{L}_{\text s} =\boldsymbol{L}_{\text s}^{\text U}+\boldsymbol{D}_{\text s}$, where $\boldsymbol D_{\text s}\hat\rho = \sum_{q=1}^{n_q}\hat l_q\hat\rho \hat l_q^\dagger-\frac{1}{2} \{\hat l_q^\dagger\hat l_q,\hat\rho \}$ for some  $n_q$ and $\hat l_q\in B(\mathcal H_{\text s})$.

As a corollary of \cref{thm:rhoS_Dyson_short} (see also \cref{eq:rhoS_Dyson_short_formal}), the reduced system dynamics of the fermionic impurity model (for unitary, Lindblad and quasi-Lindblad dynamics) are given by:
\begin{cor}[$\hat\rho_{\text{S}}(t)$ for fermionic impurity model]
    We have the following results for $\hat\rho_{\text{S}}(t)$:
     \begin{equation}
        \begin{aligned}
            \hat\rho_{\text S}(t) &=\mathrm e^{\boldsymbol L_{\text s}t} \mathbf T^{-}\left( 
            \mathrm{exp} \left( \int_0^t \mathrm dt_1 \int_0^{t_1 } \mathrm dt_2   \boldsymbol{F} (t_1,t_2)  \right)
            \right) \hat\rho_{\text S}(0), \\
            \boldsymbol{F} (t_1,t_2) &=\sum_{i,j=1}^{n}-\Delta^<_{ij}(t_1-t_2) \left(\boldsymbol a_i^\dagger(t_1) +\widetilde{\boldsymbol a}_i (t_1)\right)\boldsymbol a_j(t_2) - \Delta^>_{ij}(t_1-t_2) \left(\boldsymbol a_i^\dagger(t_1) +\widetilde{\boldsymbol a}_i (t_1)\right)
    \widetilde{\boldsymbol a}_j^\dagger(t_2) \\
    &\qquad   +(\Delta_{ij}^<(t_1-t_2))^*\left(\boldsymbol a_i(t_1) - \widetilde{\boldsymbol a}_i^\dagger(t_1)\right)\widetilde{\boldsymbol a}_j(t_2)+(\Delta_{ij}^>(t_1-t_2))^*\left(  \widetilde{\boldsymbol a}_i^\dagger(t_1)-\boldsymbol a_i(t_1)\right)\boldsymbol a_j^\dagger(t_2).
        \end{aligned}
        \label{eq:rhoS_fermion}
    \end{equation}
\cref{eq:rhoS_fermion} is understood as a shorthand notation for an infinite series summation, as defined in \cref{eq:T_exp_defn}.
We refer to $\Delta^>_{ij}(t-t' )$, $\Delta^<_{ij}(t-t' )$ ($i,j = 1,\cdots,n$) as the greater and lesser hybridization function. For $t\geq t'$, $\Delta^>_{ij}(t-t' )$ and $\Delta^<_{ij}(t-t')$ ($i,j = 1,\cdots,n$) is defined as:
\begin{enumerate}
    \item For the unitary dynamics ($\boldsymbol L_{\text e} = \boldsymbol L_{\text e}^{\text U}$ (see \cref{{eq:Le_U_fermionic}}), $\boldsymbol L_{\text{SE}} = \boldsymbol L_{\text{SE}}^{\text U}$ (see  \cref{eq:L_SE_U_fermion}),  with the initial environment density operator as the Gibbs state \cref{eq:rhoE_init}), we have
     \begin{equation}
        \Delta^>(t-t') = \nu G^>(t-t') \nu^\dagger,\quad \Delta^<(t-t') = \nu G^<(t-t') \nu^\dagger.\label{eq:c_t_fermion_unitary}
    \end{equation}
    The lesser and greater Green's functions $G^<(t)$ and $G^>(t)$ are defined as:
\begin{equation}
    G^<(t-t') = \left(\mathrm{e}^{\beta H}+I\right)^{-1} \mathrm{e}^{-\mathrm{i} H (t-t')},\quad G^>(t-t') = \left(I+\mathrm{e}^{-\beta H}\right)^{-1} \mathrm{e}^{-\mathrm{i} H (t-t')}.
    \label{eq:Greens_unitary_fermionic}
\end{equation}
\item For the Lindblad dynamics ($\boldsymbol L_{\text e} = \boldsymbol L_{\text e}^{\text U}+\boldsymbol{D}_{\text e}$ (see \cref{eq:Le_U_fermionic,eq:D_e_fermion}), $\boldsymbol L_{\text{SE}} = \boldsymbol L_{\text{SE}}^{\text U}$ (see \cref{eq:L_SE_U_fermion}),  with the initial environment density operator in \cref{eq:rhoE_init_emp_fil}), we have 

    \begin{equation}
        \Delta^>(t-t') = \nu^- \mathrm e^{(-\mathrm iH^--\Gamma^-)(t-t')}(\nu^-)^\dagger,\quad\Delta^<(t-t') = \nu^+\mathrm e^{(-\mathrm iH^+-\Gamma^+)(t-t')}(\nu^+)^\dagger.
        \label{eq:c_t_fermion_Lindblad}
    \end{equation}
\end{enumerate}
\label{cor:dynamics_fermion}
\end{cor}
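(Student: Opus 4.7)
The plan is to apply \cref{thm:rhoS_Dyson_short} (in the formal exponential form \cref{eq:rhoS_Dyson_short_formal}) to the fermionic setup. With the $4n$ system superoperators $\boldsymbol{S}_\alpha$ and environment superoperators $\boldsymbol{E}_\alpha^{\text U}$ from \cref{eq:L_SE_U_fermion}--\cref{eq:E_alpha_fermionic}, the only work is (i) verifying Wick's condition \cref{defn:Wick_condition} for the fermionic environment (both unitary and Lindblad), and (ii) computing the $16$ two-point BCFs $C_{\alpha\alpha'}(\tau)$ and matching the $8$ nonzero entries against the coefficients of $\boldsymbol F(t_1,t_2)$.

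For (i), both $\boldsymbol{L}_{\text e}^{\text U}$ (\cref{eq:Le_U_fermionic}) and $\boldsymbol{D}_{\text e}$ (\cref{eq:D_e_fermion}) are quadratic in the environment superoperators, and those superoperators obey the CAR of \cref{lem:CAR}. The initial state $\hat{\rho}_{\text E}(0)$ is Gaussian in both settings (a Gibbs state of the quadratic $\hat H_{\text e}$ in the unitary case, and the parity-eigenstate Fock projector \cref{eq:rhoE_init_emp_fil} in the Lindblad case). By the standard quantum Wick theorem, odd $n$-point correlators vanish while even ones factor into the signed sum of pair contractions demanded by \cref{defn:Wick_condition}. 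This exactly parallels how Wick's condition is invoked in the proof of \cref{cor:spin_boson_dynamics}, with CCR replaced by CAR and the $+$ sign replaced by the $-$ sign.

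The substantive step is (ii). Each $\boldsymbol{E}_\alpha^{\text U}$ is a linear combination of a single species in $\{\boldsymbol c, \boldsymbol c^\dagger, \widetilde{\boldsymbol c}, \widetilde{\boldsymbol c}^\dagger\}$, so the $16$ pairings split into $8$ that flip the particle-number parity of $\hat{\rho}_{\text E}(0)$ (such as $\boldsymbol c \cdot \boldsymbol c$ or $\widetilde{\boldsymbol c}^\dagger \cdot \widetilde{\boldsymbol c}^\dagger$), whose traces vanish, and $8$ that preserve it. For each surviving pair, the parity operators $\hat P_{\text e}$ baked into \cref{defn:fermionic_superoperators} can be commuted past $\mathrm e^{\boldsymbol L_{\text e}\tau}$ (which preserves parity), past the remaining $\hat c$-operators using $\{\hat P_{\text e},\hat c_k\}=0$, and absorbed using $\hat P_{\text e}\hat{\rho}_{\text E}(0)=\hat{\rho}_{\text E}(0)$. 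What remains is a bare operator correlator $\mathrm{tr}(\hat c_k(\tau)\hat c_l^\dagger\,\hat{\rho}_{\text E}(0))$ or $\mathrm{tr}(\hat c_k^\dagger(\tau)\hat c_l\,\hat{\rho}_{\text E}(0))$ with a definite sign, analogous to the bosonic reduction in \cref{rem:ct_op_spin_boson}. Contracting with $\nu$ gives $\Delta^>$ and $\Delta^<$, and after tracking the eight overall signs one recovers the four grouped terms in $\boldsymbol F(t_1,t_2)$.

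The concrete forms \cref{eq:c_t_fermion_unitary} and \cref{eq:c_t_fermion_Lindblad} then follow from evaluating these bare operator correlators. In the unitary case, Heisenberg evolution under $\hat H_{\text e}=\sum H_{kl}\hat c_k^\dagger\hat c_l$ gives $\hat c(\tau) = \mathrm e^{-\mathrm i H\tau}\hat c$, and averaging against $\mathrm e^{-\beta\hat H_{\text e}}/Z_{\text e}$ yields the Fermi--Dirac factors $(I+\mathrm e^{-\beta H})^{-1}$ and $(\mathrm e^{\beta H}+I)^{-1}$ required in \cref{eq:Greens_unitary_fermionic}. In the Lindblad case the block-diagonal structure $H=\operatorname{diag}(H^-,H^+)$, $\Gamma=\operatorname{diag}(\Gamma^-,\Gamma^+)$ and the stationarity $\boldsymbol D_{\text e}\hat{\rho}_{\text E}(0)=0$ decouple the empty and filled sectors; on the empty block the effective one-body generator acting on $\hat c_k$ is $-\mathrm i H^- -\Gamma^-$, giving the exponential in $\Delta^>$, and symmetrically $-\mathrm i H^+ -\Gamma^+$ on the filled block gives $\Delta^<$. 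The hard part throughout is the parity-sign bookkeeping: ensuring that the minus signs in \cref{eq:fermionic_env_superoperators} and \cref{eq:E_alpha_fermionic} combine with the fermionic Wick sign to produce exactly the signed groupings ``$\boldsymbol a_i^\dagger(t_1)+\widetilde{\boldsymbol a}_i(t_1)$'' and ``$\widetilde{\boldsymbol a}_i^\dagger(t_1)-\boldsymbol a_i(t_1)$'' in $\boldsymbol F$, rather than some other combination of the four $\boldsymbol S_\alpha$'s.
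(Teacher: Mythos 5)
Your overall plan --- invoke \cref{thm:rhoS_Dyson_short}, verify Wick's condition \cref{defn:Wick_condition} from the CAR of \cref{lem:CAR} together with the fact that $\boldsymbol L_{\text e}^{\text U}$ and $\boldsymbol D_{\text e}$ are quadratic in the environment superoperators, and then compute the eight nonzero two-point superoperator BCFs and match signs --- is exactly the paper's route (the BCF computation is the content of \cref{lem:corr_fermionic}, proved in \cref{sec:env_correlation}). The gap is in how you propose to compute those BCFs in the Lindblad case. Your reduction rests on commuting the parity factors $\hat P_{\text e}$ of \cref{eq:fermionic_env_superoperators} past $\mathrm e^{\boldsymbol L_{\text e}\tau}$ ``because the evolution preserves parity''. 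What commutes with the dissipative evolution is conjugation by $\hat P_{\text e}$ (the parity superoperator), not one-sided multiplication: for a sandwich term of \cref{eq:D_e_fermion} one has $\hat c_{k'}(\hat P_{\text e}\hat A)\hat c_k^\dagger = -\hat P_{\text e}\hat c_{k'}\hat A\hat c_k^\dagger$, while the Hamiltonian and anticommutator parts commute with left multiplication by $\hat P_{\text e}$; hence one-sided multiplication by $\hat P_{\text e}$ neither commutes nor anticommutes with $\boldsymbol L_{\text e}^{\text U}+\boldsymbol D_{\text e}$, and the step fails whenever $\boldsymbol D_{\text e}\neq 0$. Likewise, ``absorbed using $\hat P_{\text e}\hat\rho_{\text E}(0)=\hat\rho_{\text E}(0)$'' is not an identity: the Gibbs state \cref{eq:rhoE_init} merely commutes with $\hat P_{\text e}$, and the state \cref{eq:rhoE_init_emp_fil} is a parity eigenstate only up to the sign $(-1)^{N_{\text e}-N_1}$; what is actually used is that the two parity factors meet at equal time and square to the identity.

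Consequently, your assertion that ``the effective one-body generator acting on $\hat c_k$ is $-\mathrm i H^- -\Gamma^-$'' is precisely the nontrivial statement that still needs proof; it does not follow from the parity manipulation you describe. The paper establishes it without ever stripping the parity operators: it computes the commutators of the superoperators with $\boldsymbol L_{\text e}$ (\cref{lem:commutator_L} and its dissipative analogue), uses stationarity of $\hat\rho_{\text E}(0)$ to obtain closed linear ODEs for the matrix-valued correlators $\mathrm C_{\boldsymbol O,\boldsymbol O'}(t)$, and solves them with the block (empty/filled) initial data, the parity bookkeeping entering only through the equal-time initial values as in \cref{lem:C_OO_unitary_fermionic}. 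Your unitary-case reduction to bare operator correlators is fine (conjugation by $\mathrm e^{-\mathrm i\hat H_{\text e}t}$ does commute with one-sided multiplication by $\hat P_{\text e}$ since $[\hat P_{\text e},\hat H_{\text e}]=0$), and the Fermi--Dirac averages then give \cref{eq:Greens_unitary_fermionic}; but to obtain \cref{eq:c_t_fermion_Lindblad} you need either the commutator/ODE argument above or an explicit regression-type derivation for this specific quadratic dissipator --- citing an effective generator leaves the key step unproven.
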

Since $\boldsymbol{c}_k, \boldsymbol{c}_k^\dagger, \widetilde{\boldsymbol{c}}_k, \widetilde{\boldsymbol{c}}_k^\dagger$ satisfy the CAR (see \cref{lem:CAR}), and both 
$\boldsymbol L_{\text e}^{\text{U}}$ and $\boldsymbol{D}_{\text e}$ are quadratic in $\boldsymbol{c}_k, \boldsymbol{c}_k^\dagger, \widetilde{\boldsymbol{c }}_k, \widetilde{\boldsymbol{c}}_k^\dagger$, the Isserlis-Wick's condition holds.
Thus, according to \cref{thm:rhoS_Dyson_short}, \cref{cor:spin_boson_dynamics} holds as long as the following lemma for the two-point BCFs holds:
\begin{lem}[BCFs for fermionic impurity model]
For $t \geq t^{\prime}$, the two-point correlation function $\mathcal C_{\alpha \alpha^{\prime}}\left(t-t^{\prime}\right)\left(\alpha, \alpha^{\prime}=1, \cdots, 4 n\right)$ is given by the following:
\begin{equation}
    \begin{aligned}
        & \mathcal C_{4i-3,4j-2}(t-t') = \mathcal C_{4i,4j-2}(t-t') = -\Delta_{ij}^>(t-t'),\\ &
         \mathcal C_{4i-3,4j-1}(t-t') = \mathcal C_{4i,4j-1}(t-t') = -\Delta_{ij}^<(t-t'),\\
       &  \mathcal C_{4i-2,4j}(t-t') = -\mathcal C_{4i-1,4j}(t-t') = (\Delta_{ij}^>(t-t'))^*,\\&
         \mathcal C_{4i-1,4j-3}(t-t') = -\mathcal C_{4i-2,4j-3}(t-t') = (\Delta_{ij}^<(t-t'))^*,
       \end{aligned}
       \label{eq:two_point_corr_unitary_fermionic}
\end{equation}
    where $\Delta^<(t)$ and $\Delta^>(t)$ are the lesser and greater hybridization functions, given by \cref{eq:c_t_fermion_unitary}, \cref{eq:c_t_fermion_Lindblad}
    in \cref{cor:dynamics_fermion}, for the unitary and Lindblad dynamics, respectively.
    \label{lem:corr_fermionic}
\end{lem}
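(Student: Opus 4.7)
My plan is to prove \cref{lem:corr_fermionic} by direct computation of each of the sixteen BCFs in \cref{eq:two_point_corr_unitary_fermionic} starting from the definition \cref{eq:BCF_npoint}. Because both \cref{eq:rhoE_init} and \cref{eq:rhoE_init_emp_fil} satisfy $\boldsymbol{L}_{\text e}\hat\rho_{\text E}(0)=0$, \cref{eq:two_point_BCF} reduces the task to evaluating $C_{\alpha\alpha'}(t-t') = \operatorname{tr}(\boldsymbol{E}_\alpha^{\text U}\mathrm e^{\boldsymbol{L}_{\text e}(t-t')}\boldsymbol{E}_{\alpha'}^{\text U}\hat\rho_{\text E}(0))$ for $t\ge t'$. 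Substituting \cref{eq:E_alpha_fermionic} and the superoperator definitions \cref{eq:fermionic_env_superoperators} writes each BCF as a sum over $k,l$ of coefficients $\nu_{ik}$ or $\nu_{ik}^*$ times a small family of ``building-block'' environment traces of the form $\operatorname{tr}(\boldsymbol{X}_k\,\mathrm e^{\boldsymbol{L}_{\text e}\tau}\boldsymbol{Y}_l\hat\rho_{\text E}(0))$ with $\boldsymbol X,\boldsymbol Y\in\{\boldsymbol c,\boldsymbol c^\dagger,\widetilde{\boldsymbol c},\widetilde{\boldsymbol c}^{\dagger}\}$. Only the combinations in which $\boldsymbol{X}_k$ and $\boldsymbol{Y}_l$ have opposite fermion-number flavour survive, since $\hat\rho_{\text E}(0)$ has even parity and $\mathrm e^{\boldsymbol{L}_{\text e}\tau}$ commutes with the parity superoperator, accounting for the zero entries implicit in \cref{eq:two_point_corr_unitary_fermionic}.

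For the unitary case I will push the parity operators $\hat P_{\text e}$ present in \cref{eq:fermionic_env_superoperators} to a single location using $\{\hat P_{\text e},\hat c_k\}=\{\hat P_{\text e},\hat c_k^\dagger\}=0$, $\hat P_{\text e}^2=\hat I_{\text e}$, and $[\hat P_{\text e},\hat H_{\text e}]=0$; after applying the cyclicity of the trace and the stationarity of $\hat\rho_{\text E}(0)$ under conjugation by $\mathrm e^{-\mathrm i\hat H_{\text e}\tau}$, each building block reduces to a standard operator correlator $\operatorname{tr}(\hat c_k(\tau)\hat c_l^\dagger\hat\rho_{\text E}(0))$ or its conjugate, where $\hat c_k(\tau)=\sum_n(\mathrm e^{-\mathrm i H\tau})_{kn}\hat c_n$. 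Evaluating these with the Gaussian identities $\langle \hat c_m\hat c_l^\dagger\rangle_\beta=((I+\mathrm e^{-\beta H})^{-1})_{ml}$ and $\langle \hat c_l^\dagger\hat c_m\rangle_\beta=((I+\mathrm e^{\beta H})^{-1})_{ml}$ identifies the resulting matrices with $G^>(t-t')$ and $G^<(t-t')$ in \cref{eq:Greens_unitary_fermionic}, and the overall factors $\nu_{ik}$, $\nu_{jl}^*$ assemble to $\nu G^{\gtrless}\nu^\dagger=\Delta^{\gtrless}$.

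For the Lindblad case the building blocks are computed not by Wick's theorem but by propagating under the dissipative dynamics. Since $\hat\rho_{\text E}(0)$ factorises into an empty block (modes $1,\dots,N_1$) and a filled block (modes $N_1+1,\dots,N_{\text e}$) and $\boldsymbol{L}_{\text e}$ is block-diagonal across them and annihilates $\hat\rho_{\text E}(0)$, I will verify by a short calculation using $[\hat c_k^\dagger,\hat H_{\text e}]$ and the explicit action of $\boldsymbol{D}_{\text e}^-$ that the subspace $\operatorname{span}\{\hat c_k^\dagger\hat\rho_{\text E}(0):k\le N_1\}$ is invariant under $\boldsymbol{L}_{\text e}$ and that the generator acts there as the matrix $-\mathrm i H^--\Gamma^-$; symmetrically, $\operatorname{span}\{\hat c_k\hat\rho_{\text E}(0):k>N_1\}$ is invariant with generator $-\mathrm i H^+-\Gamma^+$. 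Exponentiating and inserting into the building-block traces yields exactly \cref{eq:c_t_fermion_Lindblad}. The main obstacle in both cases is the bookkeeping of signs arising from the $\mathbb Z_2$-graded structure: every commutation of $\hat P_{\text e}$ through a fermionic operator, and every matrix transposition introduced by the cyclic trace identity, contributes a sign that must be tracked precisely to reproduce the exact pattern in \cref{eq:two_point_corr_unitary_fermionic}, especially the minus signs relating $C_{4i-1,4j}$ to $C_{4i-2,4j}$ and $C_{4i-1,4j-3}$ to $C_{4i-2,4j-3}$; symmetries of the form $C_{4i-3,4j-2}=C_{4i,4j-2}$ however follow automatically once one observes that applying $\boldsymbol c_k$ and $\widetilde{\boldsymbol c}_k^{\dagger}$ to an even-parity state give equal traces up to a sign compensated exactly by the prefactor conventions in \cref{eq:E_alpha_fermionic}.
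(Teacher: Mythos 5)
Your overall strategy is sound and, after the common reduction to the superoperator building blocks $\operatorname{tr}(\boldsymbol X_k\,\mathrm e^{\boldsymbol L_{\text e}\tau}\boldsymbol Y_l\,\hat\rho_{\text E}(0))$, it genuinely differs from the paper's: the paper treats the unitary and Lindblad cases uniformly by deriving closed linear ODEs for the matrices $\mathrm C_{\boldsymbol O,\boldsymbol O'}(t)$ from the superoperator CAR and the commutators $[\boldsymbol O,\boldsymbol L_{\text e}]$ (\cref{lem:commutator_L}), fixing the initial data from the state and solving; you instead evaluate the unitary blocks by pushing $\hat P_{\text e}$ through, reducing to Heisenberg-picture correlators and Gaussian thermal identities, and evaluate the Lindblad blocks by exponentiating the generator on an invariant finite-dimensional operator subspace. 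Both evaluations are legitimate, and your unitary route is arguably more transparent physically, while the paper's ODE route automates the $\mathbb Z_2$ sign bookkeeping that you defer.

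Two concrete points need repair, however. First, your vanishing argument is too weak: parity of $\hat\rho_{\text E}(0)$ plus $[\mathrm e^{\boldsymbol L_{\text e}\tau},\boldsymbol P_{\text e}]=0$ does not kill same-flavour blocks such as $\operatorname{tr}(\boldsymbol c_k\,\mathrm e^{\boldsymbol L_{\text e}\tau}\boldsymbol c_l\,\hat\rho_{\text E}(0))$, since after two odd superoperators the operator inside the trace is parity-even; their vanishing follows from the $U(1)$ charge structure (number conservation of $\boldsymbol L_{\text e}$ and of the initial state), or, as in the paper, from the fact that the linear ODE system closes on these blocks with zero initial data. Second, in the Lindblad case the two invariant subspaces you name, $\operatorname{span}\{\hat c_k^\dagger\hat\rho_{\text E}(0):k\le N_1\}$ and $\operatorname{span}\{\hat c_k\hat\rho_{\text E}(0):k>N_1\}$, only cover the building blocks whose inner superoperator acts by left multiplication ($\boldsymbol c^\dagger_l$, $\boldsymbol c_l$). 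Half of the listed correlators, namely $C_{4i-3,4j-1}$, $C_{4i,4j-1}$, $C_{4i-2,4j}$ and $C_{4i-1,4j}$, start with $\widetilde{\boldsymbol c}_l$ or $\widetilde{\boldsymbol c}_l^\dagger$ (right multiplication by $\hat c_l^\dagger$ or $\hat c_l$, cf.\ \cref{eq:fermionic_env_superoperators,eq:E_alpha_fermionic}), and $\hat P_{\text e}\hat\rho_{\text E}(0)\hat c_l^\dagger$ is not in the span of the left-multiplied families. You must either add the two adjoint families $\operatorname{span}\{\hat\rho_{\text E}(0)\hat c_l^\dagger: l>N_1\}$ and $\operatorname{span}\{\hat\rho_{\text E}(0)\hat c_l: l\le N_1\}$ (whose generator matrices are the complex conjugates of yours), or invoke Hermiticity of the Lindblad generator, $\boldsymbol L_{\text e}(\hat A^\dagger)=(\boldsymbol L_{\text e}\hat A)^\dagger$, to transfer the invariance; with that addition, and the sign bookkeeping you already flag, your argument reproduces \cref{eq:two_point_corr_unitary_fermionic}.
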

\begin{rem}
    Similar to \cref{rem:ct_op_spin_boson} and \cref{rem:diagonal_H_e},
    the greater and lesser Green's functions in \cref{eq:Greens_unitary_fermionic} follow the convention in condensed matter physics, differing only by a constant factor of $\mp\mathrm i$, where $G^>(t)$ and $G^<(t)$ are defined as in \cref{eq:G<G>_op}, with $\hat b_k,\hat b_k^\dagger$ replaced by $\hat c_k,\hat c_k^\dagger$. As a result, for unitary dynamics, $\Delta^>$ and $\Delta^<$ in \cref{eq:c_t_fermion_unitary} coincides with the following operator BCFs:
    $$
    \Delta_{ij}^>(t-t') = \text{tr}(\hat E_i(t)\hat E_j^\dagger(t')\hat\rho_{\text E}(0)),\quad \Delta_{ij}^<(t-t') = \text{tr}(\hat E_j^\dagger(t')\hat E_i(t)\hat\rho_{\text E}(0)),\quad \hat E_i = \sum_{k=1}^{N_{\text e}}\nu_{ik}\hat c_k.
    $$
    Thus, also similar to the spin-boson case, though in \cref{cor:dynamics_fermion}  and \cref{lem:corr_fermionic} we have only used $\Delta^{>,<}(t-t')$ for $t>t'$, using the above definition, $\Delta^{>,<}(t-t')$ could be extended to $t<t'$. In other words, for unitary dynamics, \cref{lem:corr_fermionic} still holds for $t<t'$, and the Hermiticity property also holds, i.e., 
    $$
    \Delta^>(t-t') = (\Delta^>(t'-t))^\dagger, \quad \Delta^<(t-t') = (\Delta^<(t'-t))^\dagger.
    $$
    If $\hat H_{\text e}$ is diagonal, i,e, $H_{kl} = \omega_k\delta_{kl}$, then $\Delta^{\lessgtr}(t)$ reduces to the following form, which is commonly used in non-equilibrium dynamical mean-field theory \cite{AokiTsujiEcksteinetal2014}:
    \begin{equation}
        \Delta_{ij}^{>}(t) = \sum_{k=1}^{N_{\text e}} \nu_{ik}\nu_{jk}^*\frac{\mathrm e^{-\mathrm i\omega_k t}}{1+\mathrm e^{-\beta\omega_k}},\quad
        \Delta_{ij}^{<}(t) = \sum_{k=1}^{N_{\text e}} \nu_{ik}\nu_{jk}^*\frac{\mathrm e^{-\mathrm i\omega_k t}}{1+\mathrm e^{\beta\omega_k}}.
    \end{equation}
\end{rem}
We will prove \cref{lem:corr_fermionic} in \cref{sec:env_correlation}. Since here the fermionic $\mathcal H_{\text E}$ is finite-dimensional, \cref{eq:estimate} naturally holds, then as an application of \cref{thm:main_error_bound} and \cref{thm:main}, we have the following error bound  for fermionic impurity models:
\begin{cor}[Error bounds for  fermionic impurity models]
Let $\hat\rho_{\text S}(t)$, $\hat\rho_{\text S}'(t)$ be the reduced system density operators of a unitary or Lindblad fermionic impurity model. Let $\Delta^{\lessgtr}_{ij}(t)$ and ${\Delta^{\lessgtr}_{ij}}'(t)$ be the 
lesser and greater hybridization functions corresponding to $\hat\rho_{\text S}(t)$, $\hat\rho_{\text S}'(t)$, as defined by \cref{eq:c_t_fermion_unitary} or \cref{eq:c_t_fermion_Lindblad}. Then, 
 the same error bounds as in \cref{thm:main_error_bound} for reduced system densities and \cref{thm:main} for any bounded system observables $\hat O_{\text S}$ hold, i.e., \cref{eq:error_bound_application} holds, in which
 \begin{equation}
     \begin{aligned}
        \epsilon& = 4 n \left( \|\Delta^> - {\Delta^{>}}'\|_{L^\infty([0,T])} + \|\Delta^< - {\Delta^{<}}'\|_{L^\infty([0,T])}\right), \\ 
        \epsilon_1 & =
        4 n\left( \|\Delta^> - {\Delta^{>}}'\|_{L^1([0,T])} + \|\Delta^< - {\Delta^{<}}'\|_{L^1([0,T])}\right).
     \end{aligned}
    \end{equation}
    \label{cor:error_fermion}
\end{cor}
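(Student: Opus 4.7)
The plan is to reduce this corollary to a direct application of \cref{thm:main_error_bound} combined with \cref{lem:rho_to_O}. The overall scheme mirrors the proof of \cref{cor:error_spin_boson}: verify that the fermionic impurity model fits the abstract framework of \cref{sec:preliminaries}, identify the number of coupling channels $N$, substitute the two-point BCF structure furnished by \cref{lem:corr_fermionic}, and finally translate the bound on $C$ into a bound on the hybridization functions $\Delta^>,\Delta^<$.

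First I would check that the hypotheses of \cref{thm:main_error_bound} are satisfied in both the unitary and Lindblad setups. By construction, the environment Liouvillians $\boldsymbol{L}_{\mathrm{e}}^{\mathrm{U}}$ in \cref{eq:Le_U_fermionic} and $\boldsymbol{D}_{\mathrm{e}}$ in \cref{eq:D_e_fermion} are quadratic in the superoperators $\boldsymbol{c}_k,\boldsymbol{c}_k^\dagger,\widetilde{\boldsymbol{c}}_k,\widetilde{\boldsymbol{c}}_k^\dagger$, which satisfy the CAR by \cref{lem:CAR}; together with the Gaussian choice of initial environment state (the Gibbs state \cref{eq:rhoE_init} or the factorized pure state \cref{eq:rhoE_init_emp_fil}), this is a standard setting in which Wick's condition \cref{defn:Wick_condition} holds. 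The system-environment coupling admits the decomposition \cref{eq:L_SE_U_fermion} with $N=4n$ channels, and the system superoperators $\boldsymbol{S}_\alpha\in\{\boldsymbol{a}_j,\boldsymbol{a}_j^\dagger,\widetilde{\boldsymbol{a}}_j,\widetilde{\boldsymbol{a}}_j^\dagger\}$ satisfy $\|\boldsymbol{S}_\alpha\|_{\mathrm{tr}}=1$. Consequently \cref{thm:main_error_bound} yields
\[
\|\hat\rho_{\mathrm S}(t)-\hat\rho_{\mathrm S}'(t)\|_{\mathrm{tr}}\leq \mathrm{e}^{\epsilon t^2/2}-1,\qquad \|\hat\rho_{\mathrm S}(t)-\hat\rho_{\mathrm S}'(t)\|_{\mathrm{tr}}\leq \mathrm{e}^{\epsilon_1 t}-1,
\]
with $\epsilon = 4n\,\|C-C'\|_{L^\infty([0,T])}$ and $\epsilon_1 = 4n\,\|C-C'\|_{L^1([0,T])}$ (using the single-variable form from \cref{rmk:single_variable_Corr}, which applies since both initial environment states are stationary for $\boldsymbol{L}_{\mathrm{e}}$).

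The remaining step, which is also the only nontrivial one, is to bound the spectral norm of the $4n\times 4n$ matrix-valued function $C-C'$ in terms of the $n\times n$ hybridization functions $\Delta^>-(\Delta^>)'$ and $\Delta^<-(\Delta^<)'$. Using the explicit entrywise description \cref{eq:two_point_corr_unitary_fermionic} from \cref{lem:corr_fermionic}, I would reorder the indices $\alpha=1,\dots,4n$ so that $C(t)$ acquires a block structure whose blocks are $\pm\Delta^>$, $\pm\Delta^<$, $\pm(\Delta^>)^*$, $\pm(\Delta^<)^*$ (with zeros elsewhere in each block-row). A direct inspection shows that each block-row contains a $\Delta^>$-type block and a $\Delta^<$-type block, and a standard triangle inequality for the operator 2-norm then gives $\|C(t)\|_2\leq \|\Delta^>(t)\|_2+\|\Delta^<(t)\|_2$, and similarly for $C-C'$. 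Taking $L^\infty$ or $L^1$ norms in time yields
\[
\|C-C'\|_{L^\infty([0,T])}\leq \|\Delta^>-(\Delta^>)'\|_{L^\infty([0,T])}+\|\Delta^<-(\Delta^<)'\|_{L^\infty([0,T])},
\]
and the analogous $L^1$ bound. Substituting these into the above $\epsilon,\epsilon_1$ gives exactly the expressions stated in the corollary.

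Finally, to obtain the observable estimates I would invoke \cref{lem:rho_to_O}, which immediately converts the trace-norm bound on $\hat\rho_{\mathrm S}(t)-\hat\rho_{\mathrm S}'(t)$ into the bound $|O_{\mathrm S}(t)-O_{\mathrm S}'(t)|\leq \|\hat O_{\mathrm S}\|\,\|\hat\rho_{\mathrm S}(t)-\hat\rho_{\mathrm S}'(t)\|_{\mathrm{tr}}$, producing both inequalities in \cref{eq:error_bound_application}. The main bookkeeping obstacle, as flagged above, is the careful matrix-norm estimate; beyond that, the argument is a transcription of the spin-boson proof with the block pattern \cref{eq:two_point_corr_unitary_fermionic} in place of \cref{eq:two_point_corr_unitary_spin_boson} and with $N=4n$ in place of $N=2n$.
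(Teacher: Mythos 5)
Your overall route---verify Wick's condition from the CAR and the quadraticity of $\boldsymbol L_{\text e}$, identify $N=4n$ coupling channels with $\|\boldsymbol S_\alpha\|_{\text{tr}}=1$, invoke \cref{thm:main_error_bound} in the single-variable form of \cref{rmk:single_variable_Corr}, substitute the BCF structure of \cref{lem:corr_fermionic}, and finish with \cref{lem:rho_to_O}---is exactly the derivation the paper intends, and those parts are fine. The gap is in the step you yourself flag as the only nontrivial one: the inequality $\|C(t)\|_2\le\|\Delta^>(t)\|_2+\|\Delta^<(t)\|_2$ is false. By \cref{eq:two_point_corr_unitary_fermionic} the block rows of type $4i-3$ and $4i$ of $C$ are identical (both carry $-\Delta^>$ in block column $4j-2$ and $-\Delta^<$ in block column $4j-1$), and the rows of type $4i-2$ and $4i-1$ are negatives of each other, so in your split $C=A+B$ the $\Delta^>$-part satisfies $\|A\|_2=\sqrt{2}\,\|\Delta^>\|_2$ rather than $\|\Delta^>\|_2$, because the same block is stacked twice in a single block column. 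Concretely, for $n=1$, $\Delta^<=0$, $\Delta^>=1$ one finds $\|C\|_2=\sqrt{2}>1$. The triangle inequality therefore only yields $\|C-C'\|_2\le\sqrt{2}\left(\|\Delta^>-{\Delta^>}'\|_2+\|\Delta^<-{\Delta^<}'\|_2\right)$, hence $\epsilon\le 4\sqrt{2}\,n(\cdots)$, which does not reproduce the constant $4n$ claimed in the corollary.

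To recover the stated constant you should not detour through the spectral norm of the full $4n\times 4n$ matrix. The quantity that actually enters the last step of the proof of \cref{thm:main_error_bound} is the entrywise sum $\sum_{\alpha,\alpha'=1}^{4n}\left|\delta C_{\alpha\alpha'}(t)\right|$, and by \cref{eq:two_point_corr_unitary_fermionic} every nonzero entry of $\delta C$ is, up to a sign or complex conjugation, an entry of $\delta\Delta^>:=\Delta^>-{\Delta^>}'$ or $\delta\Delta^<:=\Delta^<-{\Delta^<}'$, each such entry occurring exactly four times; hence $\sum_{\alpha,\alpha'}\left|\delta C_{\alpha\alpha'}(t)\right|=4\sum_{i,j}\left(\left|\delta\Delta^>_{ij}(t)\right|+\left|\delta\Delta^<_{ij}(t)\right|\right)$, which is then bounded by $4n\left(\|\delta\Delta^>(t)\|_2+\|\delta\Delta^<(t)\|_2\right)$ via the same entrywise-sum-to-matrix-norm step the paper uses in the proof of \cref{thm:Gronwall_error_bound}; the $L^1$ case is identical. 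This counting is also exactly how the factor $4$ (rather than $2$) arises in \cref{cor:error_spin_boson}, where $N=2n$ but each $c_{ij}$ and its conjugate appear twice in $\delta C$. With this replacement of your norm-conversion step, the rest of your argument goes through.
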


\section{Applications to quasi-Lindblad dynamics}
\label{sec:quasi-Lind}

In this section, we discuss the applications of \cref{thm:rhoS_Dyson_short} and \cref{thm:main_error_bound} to quasi-Lindblad dynamics.
For concreteness, we will focus on the quasi-Lindblad pseudomode introduced in \cite{ParkHuangZhuetal2024}, which preserves the Hermiticity of density operators without the CP condition.  Our analysis is also relevant in other pseudomode formulations involving non-Hermitian dynamics \cite{Lambert2019, Pleasance2020, Cirio2023, menczel2024nonhermitian}, and also for the hierarchical equations of motion (HEOM) method interpreted as quasi-Lindblad dynamics \cite{xu2023universal, ivander2024unifiedframeworkopenquantum}. 

All quasi-Lindblad dynamics to date may suffer from subtle instability issues \cite{Witt_2017,Dunn2019instabilities, Yan2020heomstability, LiYanShi2022, Krug2023,ParkHuangZhuetal2024}. In particular, the dynamics can be asymptotically stable but non-contractive~\cite{ParkHuangZhuetal2024}. Therefore, the differences between the two quasi-Lindblad dynamics could not be compared directly.
Nonetheless, by examining the proof of \cref{thm:main_error_bound} carefully, we find that it is sufficient for one of the dynamics to be contractive. In other words, the error analysis is still applicable when comparing a unitary or Lindblad dynamics, governed by $\boldsymbol{L}$, and a quasi-Lindblad dynamics, governed by $\boldsymbol{L}'$. On the other hand, when comparing two quasi-Lindblad dynamics, we need to introduce an intermediary contractive dynamics denoted by $\tilde{\boldsymbol{L}}$, and then compare the distance between the dynamics generated by  $\boldsymbol{L},\tilde{\boldsymbol{L}}$, and that by $\boldsymbol{L}',\tilde{\boldsymbol{L}}$, respectively.

\subsection{System-environment coupling}
Apart from the system-environment coupling Liouvillian $\boldsymbol{L}_{\text{SE}}^{\text U}$ (\cref{lem:Liouvillian_unitary_spin_boson} and \cref{eq:L_SE_U_fermion}), the quasi-Lindblad pseudomode theory \cite{ParkHuangZhuetal2024} introduces an additional system-environment coupling Liouvillian term, denoted as $\boldsymbol{D}_{\text{SE}}$, 
which takes a Lindblad-like dissipator form. In the spin-boson model, $\boldsymbol{D}_{\text{SE}}$ is given by
\begin{equation}
    \boldsymbol D_{\text{SE}}\hat\rho = \sum_{j=1}^n \hat{M}_j \hat{\rho} \hat \sigma_z^{(j)}|_{\mathcal H} + \hat \sigma_z^{(j)}|_{\mathcal H} \hat{\rho} \hat{M}^\dag_j - \frac{1}{2} \{\hat{\sigma_z}^{(j)}|_{\mathcal H} \hat{M}_j + \hat{M}^\dag_j \hat{\sigma_z}^{(j)}|_{\mathcal H} , \hat{\rho} \}, \quad \hat{M}_j = \sum_k 2 M_{jk} \hat b_k|_{\mathcal H},
    \label{eq:system_env_dissipation_spin_boson0}
\end{equation}
where $M$ is a $n\times N_{\mathrm{e}}$ complex coefficient matrix,  $\hat\sigma_z^{(j)}|_{\mathcal H}$ and $\hat b_k^{\dagger}|_{\mathcal H}$ are the extensions of $\hat \sigma_z^{(j)}$ and $\hat b_k^{\dagger}$ to $\mathcal H$, defined as
$$
    \hat\sigma_z^{(j)}|_{\mathcal H} =\hat 1_{\text e} \otimes \hat\sigma_z^{(j)},\quad \hat b_k^{\dagger}|_{\mathcal H} = \hat b_k^{\dagger}\otimes \hat 1_{\text s}.
$$
 This new term $\boldsymbol{D}_{\text{SE}}$  breaks the completely positive (CP) property. The superoperator form of $\boldsymbol{D}_{\text{SE}}$ is given by
\begin{equation}
    \begin{aligned}
\boldsymbol{D}_{\text{SE}} &= \sum_{\alpha=1}^{2n}
\boldsymbol E_{\alpha}^{\text D}\otimes \boldsymbol S_{\alpha},
\\
    \boldsymbol{E}_{2 j-1}^{\text{D}}&= \sum_{k=1}^{N_{\mathrm{e}}} \left(2 M_{jk}^*\widetilde{\boldsymbol{b}}_k-M_{jk}\boldsymbol{b}_k-M_{jk}^*\boldsymbol{b}_k^{\dagger}\right), \text{ } \boldsymbol{E}_{2j}^{\text{D}}=\sum_{k=1}^{N_{\mathrm{e}}} \left(2 M_{jk}\boldsymbol{b}_k-M_{jk}^*\widetilde{\boldsymbol{b}}_k-M_{jk}\widetilde{\boldsymbol{b}}_k^{\dagger}\right).\text{ } j=1,\cdots,n.
\end{aligned}
\label{eq:system_env_dissipation_spin_boson}
\end{equation}

In the fermionic case, the system-environment Liouvillian $\boldsymbol{D}_{\text{SE}}$ includes the coupling of the system with both parts of the environment, i.e., $\boldsymbol{D}_{\text{SE}}=\boldsymbol{D}_{\text{SE}}^-+\boldsymbol{D}_{\text{SE}}^+$, where $\boldsymbol{D}_{\text{SE}}^{\mp}$ has the following Lindblad-like structure:
   \begin{equation}
        \begin{aligned}
            \boldsymbol D_{\text{SE}}^-\hat\rho &= \sum_{i=1}^n\sum_{k=1}^{N_1}   M_{ik}^*  (2\hat a_i|_{\mathcal H}\hat\rho\hat c_k^\dagger|_{\mathcal H} - \{ \hat c_k^\dagger|_{\mathcal H}\hat a_i|_{\mathcal H},\hat\rho\})
            +M_{ik}  (2 \hat c_k |_{\mathcal H}\hat\rho \hat a_i^\dagger|_{\mathcal H} - \{ \hat a_i^\dagger|_{\mathcal H}\hat c_k|_{\mathcal H},\hat\rho\}),\\
            \boldsymbol D_{\text{SE}}^+\hat\rho &= \sum_{i=1}^n\sum_{l=N_1+1}^{N_e}  M_{il}  (2\hat a_i^\dagger|_{\mathcal H}\hat\rho\hat c_l|_{\mathcal H} - \{ \hat c_l|_{\mathcal H}\hat a_i^\dagger|_{\mathcal H},\hat\rho\}) + 
            M_{il}^*(2 \hat c_l^\dagger |_{\mathcal H}\hat\rho \hat a_i|_{\mathcal H} - \{ \hat a_i|_{\mathcal H}\hat c_l^\dagger|_{\mathcal H},\hat\rho\}),
        \end{aligned}
        \label{eq:D_SE_fermion_2}
    \end{equation}
    where $M$ is a $n\times N_e$ coefficient matrix.  Like in \cref{eq:L_SE_U_fermion}, $\boldsymbol{D}_{\text{SE}}^\mp$ could also be written in superoperator forms,
   \begin{equation}
           \boldsymbol{D}_{\text{SE}}^{\mp}  = \sum_{\alpha=1}^{4n} \boldsymbol{E}_{\alpha}^{\text{D}\mp}\otimes \boldsymbol{S}_{\alpha},
             \label{eq:D_SE_fermion}
   \end{equation}
  where both sides of \cref{eq:D_SE_fermion} is also understood to be superoperators acting on $\boldsymbol{L}^{(0)}(\mathcal H)$  (see \cref{rem:SE_decomposition}).   We refer to \cref{eq:E_alpha_fermionic_non_unitary} for expressions of environment superoperators $\boldsymbol{E}_{\alpha}^{\text{D}\mp}$ ($\alpha=1,\cdots,4n$), and \cref{lem:L_SE_D_SE_Expression} for proving that the operator and superoperator forms of $\boldsymbol{D}_{\text{SE}}^\mp$ are equivalent. For future convenience, let us define $M^-$, $M^+$ as sublocks of $M$:   $M = [M^-,M^+]$, where $M^-$ are of size $n\times N_1$, and $M^+$ are of size $n\times (N_{\text e}-N_1)$.

\subsection{Reduced system dynamics and error bound}

For the reduced system density operator of quasi-Lindblad dynamics, we have the following result for reduced system density operator, also as the corollary of \cref{thm:rhoS_Dyson_short}:
\begin{cor}[reduced system density operator of quasi-Lindblad dynamics] We have the following result for quasi-Lindblad dynamics:
\begin{itemize}
    \item In the spin-boson case, where $\boldsymbol L_{\text e} = \boldsymbol L_{\text e}^{\text U}+\boldsymbol{D}_{\text e}$ (see \cref{eq:spin_boson_env}), $\boldsymbol L_{\text{SE}} = \boldsymbol L_{\text{SE}}^{\text U}+\boldsymbol{D}_{\text{SE}}$ (see \cref{lem:Liouvillian_unitary_spin_boson}, \cref{eq:system_env_dissipation_spin_boson}), and with the initial vacuum environment \cref{eq:rhoE_init_empty}, we have that \cref{eq:rhoS_spin_boson} holds, where $c(t-t')$ is given by
\begin{equation}
   c(t-t') =  (g-\mathrm{i} M) \mathrm{e}^{(-\mathrm{i} H-\Gamma) (t-t')}\left(g^{\dagger}-\mathrm{i} M^{\dagger}\right), \quad t\geq t'.\label{eq:c_t_quasi_Lindblad_spin_boson}
\end{equation}
\item In the fermionic impurity case, where $\boldsymbol L_{\text e} = \boldsymbol L_{\text e}^{\text U}+\boldsymbol{D}_{\text e}$ (see \cref{eq:Le_U_fermionic,eq:D_e_fermion}), $\boldsymbol L_{\text{SE}} = \boldsymbol L_{\text{SE}}^{\text U}+\boldsymbol{D}_{\text{SE}}$ (see \cref{eq:L_SE_U_fermion,eq:D_SE_fermion}),with the initial environment density given by \cref{eq:rhoE_init_emp_fil}), we have that \cref{eq:rhoS_fermion} holds, where $\Delta^>(t)$ and $\Delta^<(t)$ are given by 
    \begin{equation}
        \begin{aligned}
            \Delta^>(t) &= (\nu^- - \mathrm iM^-)  \mathrm e^{(-\mathrm iH^--\Gamma^-)t}((\nu^-)^\dagger-\mathrm i (M^-)^\dagger),\\
        \Delta^<(t) &= (\nu^+ - \mathrm iM^+)\mathrm e^{(-\mathrm iH^+-\Gamma^+)t}((\nu^+)^\dagger - \mathrm i(M^+)^\dagger).
        \end{aligned}
        \label{eq:c_t_fermion_quasi_lind}
    \end{equation}
\end{itemize}
\label{cor:quasi_lind_dynamics}
\end{cor}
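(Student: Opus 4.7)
The plan is to derive \cref{cor:quasi_lind_dynamics} as a direct application of \cref{thm:rhoS_Dyson_short}, in exact parallel with \cref{cor:spin_boson_dynamics} and \cref{cor:dynamics_fermion}. First, I will observe that the system superoperators $\boldsymbol{S}_\alpha$ appearing in both $\boldsymbol{L}_{\text{SE}}^{\text U}$ (from \cref{lem:Liouvillian_unitary_spin_boson} and \cref{eq:L_SE_U_fermion}) and $\boldsymbol{D}_{\text{SE}}$ (from \cref{eq:system_env_dissipation_spin_boson} and \cref{eq:D_SE_fermion}) are identical, so the decomposition \cref{eq:L_SE_decomp} continues to hold with $\boldsymbol{E}_\alpha = \boldsymbol{E}_\alpha^{\text U} + \boldsymbol{E}_\alpha^{\text D}$ and the same $\boldsymbol{S}_\alpha$'s as in the Lindblad case. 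Consequently only two things need to be verified: (i) Wick's condition \cref{defn:Wick_condition} holds for the combined operators $\boldsymbol{E}_\alpha$; and (ii) the resulting two-point BCFs take the forms of \cref{eq:two_point_corr_unitary_spin_boson} and \cref{eq:two_point_corr_unitary_fermionic} with $c$, $\Delta^>$, $\Delta^<$ replaced by \cref{eq:c_t_quasi_Lindblad_spin_boson} and \cref{eq:c_t_fermion_quasi_lind}. Once these are in hand, substituting into \cref{thm:rhoS_Dyson_short} and repackaging via the formal-exponential notation of \cref{rmk:formal_exponential_formula} immediately yields \cref{eq:rhoS_spin_boson} and \cref{eq:rhoS_fermion}.

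For (i), the argument adapts the corresponding step in the Lindblad case. The environment Liouvillian $\boldsymbol{L}_{\text e} = \boldsymbol{L}_{\text e}^{\text U} + \boldsymbol{D}_{\text e}$ is unchanged and remains quadratic in $\boldsymbol{b}_k, \boldsymbol{b}_k^\dagger, \widetilde{\boldsymbol{b}}_k, \widetilde{\boldsymbol{b}}_k^\dagger$ (respectively $\boldsymbol{c}_k, \boldsymbol{c}_k^\dagger, \widetilde{\boldsymbol{c}}_k, \widetilde{\boldsymbol{c}}_k^\dagger$). The new environment operators $\boldsymbol{E}_\alpha^{\text D}$ are still \emph{linear} combinations of these creation/annihilation superoperators, so the combined $\boldsymbol{E}_\alpha$ are linear as well. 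Since $\hat\rho_{\text E}(0)$ is Gaussian (the vacuum \cref{eq:rhoE_init_empty} in the bosonic case, and the product state \cref{eq:rhoE_init_emp_fil} in the fermionic case), the superoperator Wick-type argument used in \cref{sec:env_correlation} for the pure Lindblad case applies verbatim.

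For (ii), I will expand $C_{\alpha\alpha'}(t-t') = \operatorname{tr}(\boldsymbol{E}_\alpha \mathrm{e}^{\boldsymbol{L}_{\text e}(t-t')}\boldsymbol{E}_{\alpha'}\hat\rho_{\text E}(0))$ bilinearly into the four pieces arising from $(\boldsymbol{E}_\alpha^{\text{U/D}})\otimes(\boldsymbol{E}_{\alpha'}^{\text{U/D}})$. The $\text U\text U$ piece reproduces the Lindblad two-point BCF already derived in \cref{lem:corr_spin_boson} (resp. \cref{lem:corr_fermionic}). The three remaining pieces can be evaluated by the same propagator technique used there: since $\boldsymbol{L}_{\text e}$ is unchanged, the matrix exponential $\mathrm{e}^{(-\mathrm{i}H-\Gamma)(t-t')}$ (or $\mathrm{e}^{(-\mathrm{i}H^\mp-\Gamma^\mp)(t-t')}$ in the fermionic case) continues to govern all four contributions. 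Using the annihilation identities $\boldsymbol{b}_k\hat\rho_{\text E}(0)=0$, $\widetilde{\boldsymbol{b}}_k^\dagger\hat\rho_{\text E}(0)=0$ in the bosonic case (and their block-wise analogues in the fermionic case), the various $-\mathrm{i}M$ corrections assemble into a factor of $g-\mathrm{i}M$ on the ``left'' of the propagator and a factor of $g^\dagger-\mathrm{i}M^\dagger$ on the ``right'' (and similarly $\nu^\mp - \mathrm{i}M^\mp$ and $(\nu^\mp)^\dagger - \mathrm{i}(M^\mp)^\dagger$ in the fermionic case).

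The main technical obstacle is precisely this asymmetric collection of $-\mathrm{i}M$ corrections into the stated form. Note that $(g-\mathrm{i}M)\,\mathrm{e}^{(-\mathrm{i}H-\Gamma)(t-t')}\,(g^\dagger - \mathrm{i}M^\dagger)$ is not a Hermitian sandwich, since $(g-\mathrm{i}M)^\dagger = g^\dagger + \mathrm{i}M^\dagger \neq g^\dagger - \mathrm{i}M^\dagger$, so the bookkeeping must carefully distinguish between the left-action superoperators $\boldsymbol{b}_k, \boldsymbol{b}_k^\dagger$ and the right-action ones $\widetilde{\boldsymbol{b}}_k, \widetilde{\boldsymbol{b}}_k^\dagger$ inside \cref{eq:system_env_dissipation_spin_boson}. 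In the fermionic case one must in addition track the parity signs forced by the $\mathbb{Z}_2$-graded tensor product structure of \cref{sec:Z2} together with the definitions \cref{eq:fermionic_env_superoperators} and \cref{eq:fermionic_extension}. Once (i) and (ii) are established, \cref{eq:rhoS_spin_boson} and \cref{eq:rhoS_fermion} follow by direct substitution into \cref{thm:rhoS_Dyson_short}, exactly as in the earlier corollaries.
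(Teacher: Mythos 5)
Your proposal is correct and follows essentially the same route as the paper: reduce the corollary to verifying Wick's condition (quadratic $\boldsymbol L_{\text e}$, linear $\boldsymbol E_\alpha=\boldsymbol E_\alpha^{\text U}+\boldsymbol E_\alpha^{\text D}$, Gaussian initial state) and to computing the two-point BCFs, which the paper does in \cref{lem:BCF_quasi_lind} by reusing the Lindblad-case elementary correlators $\mathrm C_{\boldsymbol O,\boldsymbol O'}(t)$ (unchanged since $\boldsymbol L_{\text e}$ is the same) and expanding bilinearly to collect the $-\mathrm{i}M$ corrections into $(g-\mathrm{i}M)\,\mathrm e^{(-\mathrm{i}H-\Gamma)t}\,(g^\dagger-\mathrm{i}M^\dagger)$ and its fermionic analogues. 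Your attention to the non-Hermitian sandwich and the fermionic parity bookkeeping matches the checks the paper carries out in \cref{sec:env_correlation} and \cref{lem:L_SE_D_SE_Expression}.
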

Similar to \cref{cor:spin_boson_dynamics} and \cref{cor:dynamics_fermion}, according to \cref{eq:rhoS_Dyson}, \cref{cor:quasi_lind_dynamics} holds as long as the following lemma for superoperator BCFs hold:
\begin{lem}[BCFs for quasi-Lindblad dynamics]
For BCFs of quasi-Lindblad dynamics, we have
\begin{itemize}
    \item For spin-boson model, \cref{eq:two_point_corr_unitary_spin_boson} holds, where $c(t-t')$ is given by \cref{eq:c_t_quasi_Lindblad_spin_boson};
    \item For fermionic impurity model, \cref{eq:two_point_corr_unitary_fermionic} holds, where $\Delta^<$, $\Delta^>$ are given by \cref{eq:c_t_fermion_quasi_lind}.
\end{itemize}
\label{lem:BCF_quasi_lind}
\end{lem}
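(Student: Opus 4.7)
The plan is to compute the two-point BCF
$$C_{\alpha,\alpha'}(t-t') = \operatorname{tr}\!\bigl(\boldsymbol{E}_\alpha\, \mathrm{e}^{\boldsymbol{L}_{\text e}(t-t')}\,\boldsymbol{E}_{\alpha'}\,\hat\rho_{\text E}(0)\bigr)$$
directly, essentially extending the calculations behind \cref{lem:corr_spin_boson} and \cref{lem:corr_fermionic}. The only structural novelty relative to the Lindblad pseudomode setting is that $\boldsymbol{E}_\alpha$ now carries a dissipative contribution, $\boldsymbol{E}_\alpha = \boldsymbol{E}_\alpha^{\text U} + \boldsymbol{E}_\alpha^{\text D}$, via \cref{eq:system_env_dissipation_spin_boson} (bosonic) or \cref{eq:D_SE_fermion} (fermionic); both $\boldsymbol{L}_{\text e} = \boldsymbol{L}_{\text e}^{\text U} + \boldsymbol{D}_{\text e}$ and $\hat\rho_{\text E}(0)$ are inherited unchanged.

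First, I would expand bilinearly, $C_{\alpha,\alpha'} = \sum_{X,Y\in\{\text U,\text D\}} C^{XY}_{\alpha,\alpha'}$ with $C^{XY}_{\alpha,\alpha'} := \operatorname{tr}\bigl(\boldsymbol{E}_\alpha^X\, \mathrm{e}^{\boldsymbol{L}_{\text e}(t-t')}\, \boldsymbol{E}_{\alpha'}^Y\, \hat\rho_{\text E}(0)\bigr)$. The $(\text U,\text U)$ piece is exactly the content of \cref{lem:corr_spin_boson}/\cref{lem:corr_fermionic}; the remaining three cross-pieces share the same single-particle propagator, which I would derive by applying $\boldsymbol{L}_{\text e}$ directly to a single-particle density matrix. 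Using the CCR (\cref{lem:CCR}) and the explicit form of $\boldsymbol{D}_{\text e}$ in \cref{eq:spin_boson_env}, a short computation gives
$$\boldsymbol{L}_{\text e}\!\bigl(\hat{b}_n^\dagger |0\rangle\langle 0|\bigr) = -\sum_i (\mathrm{i}H + \Gamma)_{in}\, \hat{b}_i^\dagger |0\rangle\langle 0|,$$
from which $\mathrm{e}^{\boldsymbol{L}_{\text e}t}\bigl(\hat{b}_n^\dagger|0\rangle\langle 0|\bigr) = \sum_i [\mathrm{e}^{(-\mathrm{i}H-\Gamma)t}]_{in}\, \hat{b}_i^\dagger|0\rangle\langle 0|$, reproducing the matrix exponential already appearing in \cref{eq:c_t_Lindblad_spin_boson}.

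Next, I would invoke the vacuum annihilation property $\boldsymbol{b}_k\hat\rho_{\text E}(0) = \widetilde{\boldsymbol{b}}_k\hat\rho_{\text E}(0) = 0$ (with the analogous empty/fully-filled relations in the fermionic case) to reduce $\boldsymbol{E}_{\alpha'}^Y\hat\rho_{\text E}(0)$ to a linear combination of single-particle states of the form $\hat{b}_k^\dagger|0\rangle\langle 0|$. Reading off the coefficients from \cref{lem:Liouvillian_unitary_spin_boson} and \cref{eq:system_env_dissipation_spin_boson} in the bosonic case (and from \cref{eq:L_SE_U_fermion,eq:D_SE_fermion} in the fermionic case), propagating via the matrix above, and contracting against the surviving $\boldsymbol{b}_l$-part of $\boldsymbol{E}_\alpha^X$ using $\boldsymbol{b}_l\hat{b}_m^\dagger|0\rangle\langle 0| = \delta_{lm}|0\rangle\langle 0|$ inside the trace, the four cross-terms combine into the matrix product $(g-\mathrm{i}M)\mathrm{e}^{(-\mathrm{i}H-\Gamma)(t-t')}(g^\dagger-\mathrm{i}M^\dagger)$, yielding \cref{eq:c_t_quasi_Lindblad_spin_boson}. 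The fermionic case proceeds identically with $(\nu^\mp, M^\mp, H^\mp, \Gamma^\mp)$ replacing $(g, M, H, \Gamma)$ in each $\pm$-sector, producing the $\Delta^>$ and $\Delta^<$ formulas of \cref{eq:c_t_fermion_quasi_lind}.

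The principal obstacle is the bookkeeping: there are four sign-bearing cross-terms per correlation entry, and in the fermionic case the definitions \cref{eq:fermionic_env_superoperators} and \cref{eq:D_SE_fermion_2} carry parity operators $\hat{P}_{\text e}$ that inject additional minus signs which must be tracked consistently through the propagator and the final coefficient match. No fundamentally new technique beyond the CCR/CAR algebra and Gaussian vacuum contractions already used in the proofs of \cref{lem:corr_spin_boson,lem:corr_fermionic} is required; once the propagator identity and vacuum contractions above are in place, the verification reduces to systematic coefficient matching.
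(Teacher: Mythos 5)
Your proposal is correct in substance and reaches the same coefficient-matching endgame as the paper, but it gets the elementary two-point data by a somewhat different route. The paper's proof of this lemma is essentially a two-line reduction: since $\boldsymbol{L}_{\text e}$ and $\hat\rho_{\text E}(0)$ are identical to the Lindblad pseudomode case, the superoperator correlators $\mathrm C_{\boldsymbol O,\boldsymbol O'}(t)$ already computed there (via the commutator/ODE technique, \cref{eq:C_OO_Lindblad_spin_boson} and \cref{eq:C_OO_Fermion_lind_1,eq:C_OO_Fermion_lind_2}) carry over verbatim, and one only re-expands $C_{\alpha\alpha'}$ with the shifted coefficients $\boldsymbol E_\alpha=\boldsymbol E_\alpha^{\text U}+\boldsymbol E_\alpha^{\text D}$, which collects into $(g-\mathrm i M)\mathrm e^{(-\mathrm i H-\Gamma)t}(g^\dagger-\mathrm i M^\dagger)$ and its fermionic analogues. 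You instead recompute the elementary propagation directly, by acting with $\boldsymbol L_{\text e}$ on single-particle density matrices $\hat b_n^\dagger|0\rangle\langle 0|$ and exponentiating; that identity is correct and your vacuum-annihilation reduction $\boldsymbol b_k\hat\rho_{\text E}(0)=\widetilde{\boldsymbol b}_k\hat\rho_{\text E}(0)=0$ is also right, so the UU/UD/DU/DD bilinear split plus contraction does reproduce \cref{eq:c_t_quasi_Lindblad_spin_boson} (I checked the $C_{2i-1,2j-1}$ entry explicitly and it matches). Two small points you should make explicit to close the argument: (i) the operators $\widetilde{\boldsymbol b}_k^\dagger$ (and, for fermions, the filled-sector superoperators) generate bra-side states $|0\rangle\langle 0|\hat b_k$ and hole-type states, which are not of the form $\hat b_k^\dagger|0\rangle\langle 0|$; you need the Hermitian-conjugate propagator identity (or the $\mathcal G^<$-type propagation in the filled fermionic block with $H^+,\Gamma^+$) alongside the one you wrote, which follows immediately from Hermiticity-preservation of the Lindblad evolution; (ii) in the fermionic case the parity factors in \cref{eq:fermionic_env_superoperators} and \cref{eq:E_alpha_fermionic_non_unitary} must be tracked, as you note. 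In exchange for this extra bookkeeping, your route is self-contained and elementary for the vacuum/filled initial states; the paper's route buys modularity — nothing new has to be propagated at all for the quasi-Lindblad case, and the same correlator framework also handles the Gibbs-state unitary setting.
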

We will prove \cref{lem:BCF_quasi_lind}
in \cref{sec:env_correlation}.
As we have emphasized above,
since the quasi-Lindblad dynamics break the contraction property, one can not compare two quasi-Lindblad dynamics directly. However, by introducing an intermediate contractive dynamics, we have the following result:
\begin{cor}[Error bound for quasi-Lindblad dynamics]
Let $\hat\rho_{\text S}(t), \hat\rho_{\text S}'(t)$ be the reduced system density operator of two quasi-Lindblad dynamics. In the case of spin-boson model, let $c_{ij}(t)$ and $c_{ij}'(t)$ be the 
correlation function corresponding to $\hat\rho_{\text S}(t)$, $\hat\rho_{\text S}'(t)$, as defined by \cref{eq:c_t_quasi_Lindblad_spin_boson}, and let $\widetilde c(t)$
be the BCFs of an arbitrary contractive (unitary or Lindblad) dynamics (\cref{eq:c_t_unitary_spin_boson} or \cref{eq:c_t_Lindblad_spin_boson}). Then \cref{eq:error_bound_application} holds,
where $\epsilon, \epsilon_1$ is defined as:
$$\epsilon = 4n  (\|c  - \widetilde c\|_{L^\infty([0,T])} + \| c '-\widetilde c\|_{L^\infty([0,T])}),\quad \epsilon_1 = 4n   (\|c  - \widetilde c \|_{L^1([0,T])}+\|c'  - \widetilde c '\|_{L^1([0,T])}).$$
Simiarly, in the fermionic impurity case,  let$\Delta^{\lessgtr}_{ij}(t)$ and ${\Delta^{\lessgtr}_{ij}}'(t)$ be the 
lesser and greater hybridization functions corresponding to $\hat\rho_{\text S}(t)$, $\hat\rho_{\text S}'(t)$, as defined by 
\cref{eq:c_t_fermion_quasi_lind}, and let $\widetilde\Delta^{\lessgtr}_{ij}(t)$ be the hybridization functions corresponding to any unitary or Lindblad dynamics (\cref{eq:c_t_fermion_unitary} or \cref{eq:c_t_fermion_Lindblad}). Then \cref{eq:error_bound_application} holds, where  $\epsilon, \epsilon_1$ is defined as:
$$
 \begin{aligned}
        \epsilon& = 4 n \left( \|\Delta^> - {\widetilde \Delta^{>}}\|_{L^\infty([0,T])} + \|\Delta^< - {\widetilde \Delta^{<}}\|_{L^\infty([0,T])}+\|{\Delta^{>}}' - {\widetilde \Delta^{>}}\|_{L^\infty([0,T])} + \|{\Delta^{<}}' - {\widetilde \Delta^{<}}\|_{L^\infty([0,T])}\right), \\ 
        \epsilon_1 & =
        4 n\left( \|\Delta^> - {\widetilde \Delta^{>}} \|_{L^1([0,T])} + \|\Delta^< - {\widetilde \Delta^{<}} \|_{L^1([0,T])}+\|{\Delta^>}' - {\widetilde \Delta^{>}} \|_{L^1([0,T])} + \|{\Delta^<}'- {\widetilde \Delta^{<}} \|_{L^1([0,T])}\right).
     \end{aligned}
$$
\end{cor}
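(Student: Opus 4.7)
The plan is to use the triangle inequality for reduced density operators, combined with the observation emphasized in the preceding discussion that the improved error bound in \cref{thm:main_error_bound} requires contractivity of only one of the two dynamics being compared. This asymmetry permits us to pass through an auxiliary contractive (unitary or Lindblad) dynamics even when neither of the two dynamics of interest is itself contractive.

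Concretely, let $\widetilde{\hat\rho}_{\text{S}}(t)$ denote the reduced system density operator produced by the intermediary contractive dynamics whose BCF is $\widetilde c$ in the spin-boson case, respectively $\widetilde\Delta^{\lessgtr}$ in the fermionic case. By the triangle inequality,
\begin{equation*}
\|\hat\rho_{\text{S}}(t) - \hat\rho_{\text{S}}'(t)\|_{\text{tr}} \leq \|\hat\rho_{\text{S}}(t) - \widetilde{\hat\rho}_{\text{S}}(t)\|_{\text{tr}} + \|\widetilde{\hat\rho}_{\text{S}}(t) - \hat\rho_{\text{S}}'(t)\|_{\text{tr}}.
\end{equation*}
Each term on the right-hand side compares a quasi-Lindblad dynamics with the contractive intermediary, so the applicable form of \cref{thm:main_error_bound}, via \cref{cor:error_spin_boson} (respectively \cref{cor:error_fermion}), yields
\begin{equation*}
\|\hat\rho_{\text{S}}(t) - \widetilde{\hat\rho}_{\text{S}}(t)\|_{\text{tr}} \leq \mathrm{e}^{\epsilon_1^{(a)} t} - 1, \qquad \|\widetilde{\hat\rho}_{\text{S}}(t) - \hat\rho_{\text{S}}'(t)\|_{\text{tr}} \leq \mathrm{e}^{\epsilon_1^{(b)} t} - 1,
\end{equation*}
with $\epsilon_1^{(a)} = 4n\|c - \widetilde c\|_{L^1([0,T])}$ and $\epsilon_1^{(b)} = 4n\|c' - \widetilde c\|_{L^1([0,T])}$ (and the analogous $L^\infty$ and fermionic statements). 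Summing and applying the elementary inequality $(\mathrm{e}^a - 1) + (\mathrm{e}^b - 1) \leq \mathrm{e}^{a+b} - 1$ for $a,b \geq 0$ — which follows from $\mathrm{e}^{a+b} - \mathrm{e}^a - \mathrm{e}^b + 1 = (\mathrm{e}^a - 1)(\mathrm{e}^b - 1) \geq 0$ — delivers precisely the claimed bound with $\epsilon_1 = \epsilon_1^{(a)} + \epsilon_1^{(b)}$. The bound on system observables then follows from \cref{lem:rho_to_O} exactly as in \cref{rem:Gronwall_cor}.

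The main obstacle is justifying the asymmetry claim for \cref{thm:main_error_bound}. This requires a careful inspection of the proof in \cref{sec:proof_main}: in \cref{eq:diff_formal} the perturbation $\delta C$ appears only as a scalar multiplicative factor, and in the rewriting \cref{eq:correlator_return_to_extended_system}--\cref{eq:finalbound} the propagators $\mathrm{e}^{\boldsymbol L \cdot}$ that must be bounded in trace norm are those generated by the base Liouvillian associated with $C$, not with $C'$. Consequently, $\|\mathrm{e}^{\boldsymbol L t}\|_{\text{tr}} \leq 1$ is needed only for the base dynamics, and the proof carries through verbatim when the perturbed dynamics is non-contractive. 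Identifying the contractive intermediary as the base in each of the two pairwise applications above then closes the argument.
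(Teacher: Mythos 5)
Your proposal is correct and follows essentially the same route as the paper: introduce the contractive intermediary $\widetilde\rho_{\text S}(t)$, apply the improved bound (valid when only one of the two dynamics is contractive, with the contractive one playing the role of the base BCF $C$) to each pair, and combine via the triangle inequality together with $(\mathrm e^{a}-1)+(\mathrm e^{b}-1)\leq \mathrm e^{a+b}-1$, which is exactly the series comparison the paper uses. Your additional inspection of the proof of \cref{thm:main_error_bound} to justify the one-sided contractivity requirement matches the remark the paper makes at the start of \cref{sec:quasi-Lind}, so no gap remains.
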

\begin{proof}
    Let us prove the case of spin-boson models while the case of fermionic impurity follows similarly. 
    Let $\widetilde \rho_{\text S}(t)$ be the reduced system density operator corresponding to $\widetilde c(t)$.
    Since \cref{thm:main} could be used for the contractive dynamics and the quasi-Lindblad dynamics, we have:
    $$
    \|\hat\rho_{\text S}(t) - \widetilde{\rho}_{\text S}(t)\|_{\text{tr}}\leq \mathrm e^{\widetilde{\epsilon}t^2/2}-1,\quad
    \|\hat\rho_{\text S}'(t) - \widetilde{\rho}_{\text S}(t)\|_{\text{tr}}\leq \mathrm e^{\widetilde{\epsilon}'t^2/2}-1,\quad 
    $$
    where  $\widetilde\epsilon = 4n  (\|c  - \widetilde c\|_{L^\infty([0,T])} $ and $\widetilde\epsilon'=\| c '-\widetilde c\|_{L^\infty([0,T])}$.
    Thus, 
    $$
    \begin{aligned}
    &\|\hat\rho_{\text S}(t) - \hat\rho_{\text S}'(t)\|_{\text{tr}}\leq 
    \|\hat\rho_{\text S}(t) - \widetilde{\rho}_{\text S}(t)\|_{\text{tr}}+
    \|\hat\rho_{\text S}'(t) - \widetilde{\rho}_{\text S}(t)\|_{\text{tr}}
    \\\leq & \mathrm e^{\widetilde{\epsilon}t^2/2}-1 + \mathrm e^{\widetilde{\epsilon}'t^2/2}-1 = \sum_{n=1}^{\infty}(\widetilde{\epsilon}t^2/2)^n + (\widetilde{\epsilon}'t^2/2)^n/n!\leq \sum_{n=1}^{\infty}((\widetilde{\epsilon}+\widetilde{\epsilon}')t^2/2)^n = \mathrm e^{{\epsilon}t^2/2}-1.
    \end{aligned}
       $$
       The error bound in the $L^1$ sense holds in a similar way.
\end{proof}
\section*{Conflict of interest}
The authors declare that they have no conflict of interest.
\bibliographystyle{alpha}
\bibliography{ref}
\appendix

\section{\texorpdfstring{$\mathbb Z_2$}{Z2}-graded tensor product for fermionic systems}
\label{sec:Z2}
In fermionic cases, 
the Hilbert space $\mathcal H_{\text S}$ and $\mathcal H_{\text E}$ could be easily identified as $\mathbb Z_2$-graded vector space, i.e.,
$$
\mathcal H_{\text S} = \mathcal H_{\text S}^{(0)} \oplus \mathcal H_{\text S}^{(1)} ,\quad \mathcal H_{\text E} = \mathcal H_{\text E}^{(0)} \oplus \mathcal H_{\text E}^{(1)},
$$
where $\mathcal H_{\text S}^{(0)}$, $\mathcal H_{\text S}^{(1)}$ are defined as
$$
\mathcal H_{\text S}^{(0)} = \{\psi_{\text s}\in \mathcal H_{\text S} | \hat P_{\text s}|\psi_{\text s}\rangle = |\psi_{\text s}\rangle\},\quad \mathcal H_{\text S}^{(1)} = \{\psi_{\text s}\in \mathcal H_{\text S} | \hat P_{\text s}|\psi\rangle_{\text s} = -|\psi\rangle_{\text s}\},
$$
and $\mathcal H_{\text E}^{(0)}$, $\mathcal H_{\text E}^{(1)}$ are similarly defined. Intuitively, $\mathcal H_{\text S,\text E}^{(0)}$ is the subspace of wavefunctions that has even particle numbers, while $\mathcal H_{\text S,\text E}^{(1)}$ is the subspace of wavefunctions that has odd particle numbers.
An operator $\hat O_{\text s}$ acting on $\mathcal H_{\text S}$ is said to be of even parity, if $\hat O_{\text s}$ maps $\mathcal H_{\text S}^{(j)}$ to $\mathcal H_{\text S}^{(j)}$, for $j=0,1$, and is said to be of odd parity if  $\hat O_{\text s}$ maps $\mathcal H_{\text S}^{(j)}$ to $\mathcal H_{\text S}^{((j')}$ for $j=0,1$, where $j' = (j+1)\mod 2$. If $\hat O_{\text s}$ is neither even nor odd,
it is said to be of indefinite parity otherwise.
Similarly one can define parity of operators acting on $\mathcal H_{\text E}$.
The fermionic creation and annihilation operators, $\hat c_i,\hat c_i^\dagger\in B(\mathcal H_{\text E})$, $\hat a_j,\hat a_j^\dagger\in B(\mathcal H_{\text S})$ are of odd parity, while the number operators are of even parity.
Note that in this work, all operators that have been used are of definite parity.

The standard $\mathbb Z_2$-graded tensor product, which preserves the CAR,  is defined as follows \cite{Berezin2013}:
\begin{defn}[$\mathbb Z_2$-graded tensor product]
For any $\hat O_{\text e}$ acting on $\mathcal H_{\text E})$ and $\hat O_{\text s}$ acting on $\mathcal H_{\text S}$ of definitive parity, the $\mathbb Z_2$-graded tensor product of $\hat O_{\text e}$ and $\hat O_{\text s}$, denoted as $\hat O_{\text e}\otimes_g \hat O_{\text s}$, is an operator acting on $\mathcal H$ ($\mathcal H = \mathcal H_{\text E}\otimes \mathcal H_{\text S}$), defined as follows:
\begin{equation}
\left(\hat{O}_{\mathrm{e}} \otimes_g \hat{O}_{\mathrm{s}}\right)\left( \psi_{\mathrm{e}} \otimes \psi_{\mathrm{s}} \right)=(-1)^{\sigma\left(\hat{O}_{\mathrm{s}}\right) \sigma'\left( \psi_{\mathrm{e}} \right)}\left(\left(\hat{O}_{\mathrm{e}} \psi_{\mathrm{e}} \right) \otimes\left(\hat{O}_{\mathrm{s}} \psi_{\mathrm{s}} \right)\right).
\end{equation}
Here $\sigma(\hat O_{\text s})=0$ if $\hat O_{\text s}$ is even and $\sigma(\hat O_{\text s})=1$ if $\hat O_{\text s}$ is odd, and $\sigma'(\psi_{\text e})$ are defined similarly.
\label{defn:Z2}
\end{defn}

Thus, the generalization of environment and system operators is defined using $\mathbb Z_2$-graded tensor product with identity operator:
$$
\hat O_{\text e}|_{\mathcal H}=\hat O_{\text e}\otimes_g \hat 1_{\text s},\quad 
\hat O_{\text s}|_{\mathcal H}=\hat 1_{\text e}\otimes_g \hat O_{\text s}.
$$
Using \cref{defn:Z2}, we have $\hat O_{\text e}|_{\mathcal H} = \hat O_{\text e}\otimes \hat 1_{\text s}$ for both even and odd $\hat O_{\text e}$, while $\hat O_{\text s}|_{\mathcal H} =\hat P_{\text e}\otimes \hat O_{\text s}$ if $\hat O_{\text s}$ is odd, and $\hat O_{\text s}|_{\mathcal H} =\hat 1_{\text e}\otimes \hat O_{\text s}$ if $\hat O_{\text s}$ is even.
From this, the fermionic extension in the main text \cref{eq:fermionic_extension} naturally follows.

\section{Proof of Eq. \ref{eq:the_thing_need_to_be_proved} in the fermionic case}
\label{sec:proof_correlator_return_to_extended_system_fermionic}
In this appendix, we prove \cref{eq:the_thing_need_to_be_proved} in the fermionic case, which we restate here for convenience:
\begin{equation}
    \begin{aligned}
      &  \operatorname{tr}_{\mathrm{E}}\left(\mathrm{e}^{\boldsymbol{L}_0 t} \sum_{m=0}^{\infty} \frac{1}{m!} \int_0^t \cdots \int_0^t \mathrm{~d} s_1 \cdots \mathrm{d} s_m \boldsymbol{\mathcal T}\left(\prod_{i=1}^{2 n} \widetilde{\boldsymbol{S}}_{\alpha_i}\left(t_i\right) \prod_{j=1}^m \boldsymbol{L}_{\mathrm{SE}}\left(s_j\right)\right) \hat{\rho}(0)\right)\\
    &  \begin{aligned}
        =(-1)^{\sigma_t}&\sum_{m=0}^\infty\frac{\mathrm e^{\boldsymbol{L}_{\text s}t}}{m!}\int_0^t \mathrm ds_1\cdots\int_0^t\mathrm ds_m \\
        &\qquad \qquad
        \sum_{\beta_1, \cdots, \beta_{m}=1}^N  C^{(m)}_{\beta_1,\cdots,\beta_{m}}(s_1,\cdots,s_{m})
        \mathbf T^{\pm}\left(\prod_{i=1}^{2n} {\boldsymbol{S}}_{\alpha_i}(t_i)
        \prod_{j=1}^{m} \boldsymbol{S}_{\beta_j}(s_j)\right)\hat\rho_{\text S}(0),
        \label{eq:return_to_extended_system_fermionic}
        \end{aligned}
    \end{aligned}
\end{equation}
where $\sigma_t$ is the permutation of $1,\cdots, 2n$ such that 
$t_{\sigma_t(1)}\geq t_{\sigma_t(2)}\leq \cdots \geq t_{\sigma_t(2n)}$. In other words, we need to introduce a definition of $\widetilde{\boldsymbol S}_{\alpha}$ in the fermionic case, such that $\|\widetilde{\boldsymbol S}_{\alpha}\|_{\text{tr}}=1$ and \cref{eq:return_to_extended_system_fermionic} holds.
To achieve this, let us first introduce the environment parity superoperator $\boldsymbol P_{\text e}$, defined as follows:
\begin{equation}
    \boldsymbol{P}_{\text e} : \hat\rho \rightarrow \hat P_{\text e}\hat\rho\hat P_{\text e}.
\end{equation}
where $\hat P_{\text e} $ is the environment parity operators as defined in \cref{eq:parity_operator}.
Therefore, we have for any $\hat \rho \in B_1(\mathcal H_{\text E})$,
$$
\boldsymbol c_k \boldsymbol P_{\text e} \hat\rho = \boldsymbol c_k (\hat P_{\text e}\hat\rho\hat P_{\text e})
 = \hat P_{\text e}\hat c_k\hat P_{\text e}\hat\rho\hat P_{\text e} = -\hat c_k \hat \rho\hat P_{\text e} = - \hat P_{\text e}(\hat P_{\text e}\hat c_k\hat\rho) \hat P_{\text e} = -\boldsymbol P_{\text e}\boldsymbol c_k\hat\rho,
$$
$$
\widetilde{\boldsymbol c}_k \boldsymbol P_{\text e} \hat\rho = \widetilde{\boldsymbol c}_k (\hat P_{\text e}\hat\rho\hat P_{\text e})
 = \hat P_{\text e}\hat P_{\text e}\hat\rho\hat P_{\text e}\hat c_k^\dagger 
 = - \hat P_{\text e}\hat P_{\text e}\hat \rho \hat c_k^\dagger \hat P_{\text e} = -\boldsymbol P_{\text e}\widetilde{\boldsymbol c}_k\hat\rho,
$$
where $\boldsymbol c_k$ and $\widetilde{\boldsymbol c}_k$ are defined in \cref{eq:fermionic_env_superoperators}.
In other words, 
$\{ \boldsymbol c_k, \boldsymbol P_{\text e}\} = 0$ and $\{\widetilde{\boldsymbol c}_k, \boldsymbol P_{\text e}\} = 0$.
Similarly, 
$\{ \boldsymbol c_k^\dagger, \boldsymbol P_{\text e}\} = 0$ and $\{\widetilde{\boldsymbol c}_k^\dagger, \boldsymbol P_{\text e}\} = 0$.
Since $\boldsymbol L_{\text e}$ is quadratic in $\boldsymbol c_k, \boldsymbol c_k^\dagger, \widetilde{\boldsymbol c}_k, \widetilde{\boldsymbol c}_k^\dagger$, we have $\boldsymbol L_{\text e}\boldsymbol P_{\text e} = \boldsymbol P_{\text e}\boldsymbol L_{\text e}$.
Then, let us define the superoperators $\widetilde{\boldsymbol S}_{\alpha}$ as follows:
\begin{equation}
    \widetilde{\boldsymbol S}_{\alpha} = \boldsymbol P_{\text e}\otimes\boldsymbol S_{\alpha}.
\end{equation}
Since $\|\boldsymbol P_{\text e}\|_{\text{tr}}=1$, we have $\|\widetilde{\boldsymbol S}_{\alpha}\|_{\text{tr}}=1$.
Without loss of generality, let us assume that $t_1\geq t_2\geq \cdots \geq t_{2n}$.
For $s_1,\cdots,s_m\in[0,T]$, let $\sigma_s$
be the permutation of $1,\cdots, m$ such that $s_{\sigma_s(1)}\geq s_{\sigma_s(2)}\geq \cdots \geq s_{\sigma_s(m)}$.
For $t_1,\cdots, t_{2n} \in [0,t]$
and $s_1,\cdots, s_{m}\in [0,t]$, let $w_i = t_i$ ($i=1,\cdots,2n$) and $w_{2n+i} = s_{i}$ ($i=1,\cdots,m$). Let 
us define permutation $\widetilde\sigma\in S_{2n+m}$ such that $w_{\widetilde\sigma(1)}\geq w_{\widetilde\sigma(2)}\geq \cdots \geq w_{\widetilde\sigma(2n+m)}$.
Then, we have
\begin{equation}
    \begin{aligned}
        & \mathrm e^{\boldsymbol L_0t}\boldsymbol{\mathcal T}\left( \prod_{i=1}^{2n} \widetilde{\boldsymbol S}_{\alpha_i}\left(t_i\right) \prod_{j=1}^m \boldsymbol L_{\text{SE}}\left(s_j\right)\right) \hat\rho(0)
       \\ =&\mathrm e^{\boldsymbol L_0t}\sum_{\beta_1,\cdots\beta_m=1}^N(-1)^{\widetilde\sigma}\boldsymbol{\mathcal T}\left(\prod_{i=1}^{2n}\boldsymbol P_{\text e}(t_i)\prod_{j=1}^m \boldsymbol E_{\beta_j}\left(s_j\right)\right)\otimes(-1)^{\widetilde\sigma}\boldsymbol{\mathcal T}\left( \prod_{i=1}^{2n} \widetilde{\boldsymbol S}_{\alpha_i}\left(t_i\right) \prod_{j=1}^m \boldsymbol S_{\beta_j}\left(s_j\right)\right) \hat\rho(0)\\
       =& \mathrm e^{\boldsymbol L_0t}\sum_{\beta_1,\cdots\beta_m=1}^N\mathbf T^-\left(\prod_{i=1}^{2n}\boldsymbol P_{\text e}(t_i)\prod_{j=1}^m \boldsymbol E_{\beta_j}\left(s_j\right)\right)\otimes\mathbf T^-\left( \prod_{i=1}^{2n} \widetilde{\boldsymbol S}_{\alpha_i}\left(t_i\right) \prod_{j=1}^m \boldsymbol S_{\beta_j}\left(s_j\right)\right) \hat\rho(0)\\
         =& \sum_{\beta_1,\cdots\beta_m=1}^N\mathrm e^{\boldsymbol{L}_{\text e} t}\mathbf T^-\left(\prod_{i=1}^{2n}\boldsymbol P_{\text e}(t_i)\prod_{j=1}^m \boldsymbol E_{\beta_j}\left(s_j\right)\right)\hat\rho_{\text e}^{(0)}\otimes\mathrm e^{\boldsymbol{L}_{\text s} t}\mathbf T^-\left(
            \prod_{i=1}^{2n} \widetilde{\boldsymbol S}_{\alpha_i}\left(t_i\right) \prod_{j=1}^m \boldsymbol S_{\beta_j}\left(s_j\right)\right) \hat\rho_{\text s}^{(0)}.
    \end{aligned}
\end{equation}
Here $\boldsymbol P_{\text e}(t) = \mathrm e^{-\boldsymbol L_{\text e}t}
\boldsymbol P_{\text e}\mathrm e^{\boldsymbol L_{\text e}t} = \boldsymbol P_{\text e}$ since $\boldsymbol L_{\text e}$ and $\boldsymbol P_{\text e}$ commute, and $t_i$ serves as a time argument for performing the time-ordering operation.
Let us define $\sigma_s \in S_m$ such that $s_{\sigma_s(1)}\geq s_{\sigma_s(2)}\geq \cdots \geq s_{\sigma_s(m)}$.
Furthermore, let us define $\sigma_{st}$ as the permutation of $1,\cdots, 2n+m$ such that if letting $w_1 = t_1,\cdots,w_{2n} =t_{2n}, w_{2n+1} = s_1,\cdots,w_{2n+m} = s_m$,
then $w_{\sigma_{st}(1)}\geq w_{\sigma_{st}(2)}\geq \cdots \geq w_{\sigma_{st}(2n+m)}$.
Since  $t_1\geq \cdots t_{2n}$, we have 
$(-1)^{\widetilde\sigma} = (-1)^{\sigma_s}(-1)^{\sigma_{st}}$.
Let $n_0$ be the number of $i$ such that $t_i \in [s_{\sigma_s(1)},t]$, 
$n_m$ be the number of $i$ such that $t_i \in [0,s_{\sigma_s(m)})$, and $n_j$ be the number of $i$ such that $t_i\in [s_{\sigma_s(j)},s_{\sigma_s(j+1)})$ for $j=1,\cdots,m-1$.
Thus we have 
$$
\begin{aligned}
   & \mathrm e^{\boldsymbol{L}_{\text e} t}\mathbf T^-\left(
    \prod_{i=1}^{2n}\boldsymbol P_{\text e}(t_i)\prod_{j=1}^m \boldsymbol E_{\beta_j}\left(s_j\right)\right)\hat\rho_{\text e}^{(0)} \\= &\mathrm e^{\boldsymbol{L}_{\text e} t}
    (-1)^{\widetilde\sigma} (\boldsymbol P_{\text e})^{n_0}
    \boldsymbol E_{\beta_{\sigma_s(1)}}\left(s_{\sigma_s(1)}\right)(\boldsymbol P_{\text e})^{n_1}\cdots \boldsymbol E_{\beta_{\sigma_s(m)}}\left(s_{\sigma_s(m)}\right)(\boldsymbol P_{\text e})^{n_m}
    \hat\rho_{\text e}^{(0)} \\
    =& \mathrm e^{\boldsymbol{L}_{\text e} t}(-1)^{\sigma_s} (\boldsymbol P_{\text e})^{2n} \boldsymbol E_{\beta_{\sigma_s(1)}}\left(s_{\sigma_s(1)}\right)\cdots \boldsymbol E_{\beta_{\sigma_s(m)}}\left(s_{\sigma_s(m)}\right)\hat\rho_{\text e}^{(0)}
    \\=&\mathrm e^{\boldsymbol{L}_{\text e} t} (-1)^{\sigma_s}  \boldsymbol E_{\beta_{\sigma_s(1)}}\left(s_{\sigma_s(1)}\right)\cdots \boldsymbol E_{\beta_{\sigma_s(m)}}\left(s_{\sigma_s(m)}\right)\hat\rho_{\text e}^{(0)}
    =\mathrm e^{\boldsymbol{L}_{\text e} t}\mathbf T^- \left( 
        \prod_{j=1}^m \boldsymbol E_{\beta_{j}}\left(s_{j}\right)
    \right)\hat\rho_{\text e}^{(0)}
    .
\end{aligned}
$$
Here we have used the anti-commutation relation $\{\boldsymbol P_{\text e}, \boldsymbol E_{\beta}\} = 0$ for all $\beta = 1,\cdots, N$ and $(\boldsymbol P_{\text e})^2 = \boldsymbol 1_{\text e}$.
Then
we have 

$$
\operatorname{tr}\left(\mathrm{e}^{\boldsymbol{L}_{\mathrm{e}} t} \mathbf{T}^{-}\left(\prod_{i=1}^{2 n} \boldsymbol{P}_{\mathrm{e}}\left(t_i\right) \prod_{j=1}^m \boldsymbol{E}_{\beta_j}\left(s_j\right)\right) \hat{\rho}_{\mathrm{e}}^{(0)}\right)
 = \operatorname{tr}\left(\mathrm{e}^{\boldsymbol{L}_{\mathrm{e}} t}\mathbf{T}^{-}\left(\prod_{j=1}^m \boldsymbol{E}_{\beta_j}\left(s_j\right)\right) \hat{\rho}_{\mathrm{e}}^{(0)}\right)
  = C^{(m)}_{\beta_1,\cdots,\beta_{m}}(s_1,\cdots,s_{m}).
$$
Therefore, \cref{eq:return_to_extended_system_fermionic} holds. 
\section{Liouvillians and BCFs}
\label{sec:env_correlation}
The main goal of this section is to give a proof for \cref{lem:corr_spin_boson} and \cref{lem:corr_fermionic}, from which the reduced system density dynamics (see \cref{cor:spin_boson_dynamics},\cref{cor:dynamics_fermion}) and the error bound (see \cref{cor:error_spin_boson}, \cref{cor:error_fermion}) follow naturally.
We first introduce several useful lemmas. Then, in \cref{sec:spin_boson_proof}, we prove \cref{lem:corr_spin_boson} for unitary, Lindblad, and quasi-Lindblad cases. In \cref{sec:fermion_proof}, we first give the explicit expressions for $\boldsymbol{E}_{\alpha}^{\text D \mp}$  in \cref{eq:D_SE_fermion}, then we prove \cref{lem:corr_fermionic} also for each cases.
\subsection*{Some useful lemmas}
Let us introduce the shorthand notation $\langle \boldsymbol O\rangle_{\text e} :=
\operatorname{tr}(\boldsymbol O\hat\rho_{\text E}(0))$ for $\boldsymbol O$ acting on $B_1(\mathcal H_{\text e})$, and $\langle \hat O\rangle_{\text e}: = \operatorname{tr}(\hat O\hat\rho_{\text E}(0))$ for $\hat O$ acting on $\mathcal H_{\text e}$.
For any two environment  superoperators $\boldsymbol O_1, \boldsymbol O_2$, and $t>t_1>t_2>0$,  consider the following two-point correlation functions:
\begin{equation}
 \operatorname{tr}(\mathrm e^{\boldsymbol L_{\text e} (t-t_1)}\boldsymbol O_1\mathrm e^{\boldsymbol L_{\text e} (t_1-t_2)}\boldsymbol O_2\mathrm e^{\boldsymbol L_{\text e} t_2}\hat\rho_{\text E}(0)).
\end{equation}
Since $\hat\rho_{\text E}(0)$ is the stationary state of $\boldsymbol L_{\text e}$ and $\mathrm e^{\boldsymbol L_{\text e} t}$ is a trace-preserving map, then the above correlation function is equal to
$$
\langle \boldsymbol O_1\mathrm e^{\boldsymbol L_{\text e} (t_1-t_2)}\boldsymbol O_2\rangle_{\text e} = 
 \operatorname{tr}(\boldsymbol O_1\mathrm e^{\boldsymbol L_{\text e} (t_1-t_2)}\boldsymbol O_2\hat\rho_{\text E}(0)),
$$
which only relies on $(t_1-t_2)$. 
Thus let us take $(t_1,t_2)=(t,0)$. Then taking derivative with respect to $t$, we have
$$
\begin{aligned}
    \frac{\mathrm d}{\mathrm dt}\langle \boldsymbol O_1\mathrm e^{\boldsymbol L_{\text e} t}\boldsymbol O_2\rangle_{\text e}
= \operatorname{tr}(\boldsymbol O_1\mathrm e^{\boldsymbol L_{\text e} t}\boldsymbol L_{\text e}\boldsymbol O_2\hat\rho_{\text E}(0))
&= \operatorname{tr}(\boldsymbol O_1\mathrm e^{\boldsymbol L_{\text e} t}[\boldsymbol L_{\text e}, \boldsymbol O_2]\hat\rho_{\text E}(0))+
\operatorname{tr}( \boldsymbol O_1\mathrm e^{\boldsymbol L_{\text e} t}\boldsymbol O_2\boldsymbol L_{\text e}\hat\rho_{\text E}(0)) \\
&=\operatorname{tr}(\boldsymbol O_1\mathrm e^{\boldsymbol L_{\text e} t}[\boldsymbol L_{\text e}, \boldsymbol O_2]\hat\rho_{\text E}(0)),
\end{aligned}
$$
where we have used again that $\hat\rho_{\text E}(0)$ is the stationary state of $\boldsymbol L_{\text e}$.

In the unitary case, the following results are well-known and useful (for example, see \cite{NegeleOrland1998}):
\begin{lem}
    In the bosonic case, let $\hat H_{\text e} = \sum_{k,k'=1}^{N_{\text e}}H_{kk'}\hat b_k^\dagger\hat b_{k'}$. With environment density operator being \cref{eq:rhoE_init},
     we have
    $$
    \langle\hat b_l^\dagger\hat b_k\rangle_{\text e} = \left((\mathrm e^{\beta H}- I)^{-1}\right)_{kl},\quad 
    \langle\hat b_k\hat b_l^\dagger\rangle_{\text e} = ((I-\mathrm e^{-\beta H})^{-1})_{kl},
    $$
    Similarly, in the fermionic cases, by replacing $\hat b_k, \hat b_k^\dagger$ with fermionic operators $\hat c_k,\hat c_k^\dagger$, we have 
    $$
    \langle\hat c_l^\dagger\hat c_k\rangle_{\text e} = \left((\mathrm e^{\beta H}+ I)^{-1}\right)_{kl},\quad 
    \langle\hat c_k\hat c_l^\dagger\rangle_{\text e} = ((I+\mathrm e^{-\beta H})^{-1})_{kl}.
    $$
\end{lem}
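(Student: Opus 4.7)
The plan is to reduce the matrix-valued identities to single-mode Gibbs expectations by diagonalizing the quadratic Hamiltonian, computing the standard Bose--Einstein and Fermi--Dirac occupations for a single oscillator/level, and then rotating back using the functional calculus for Hermitian matrices.

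First, since $H$ is Hermitian I would write $H = UDU^{\dagger}$ with $D = \diag(\omega_1,\ldots,\omega_{N_{\text e}})$ and $U\in U(N_{\text e})$, and introduce rotated annihilation operators $\hat d_m = \sum_k (U^{\dagger})_{mk}\hat b_k$ in the bosonic case (respectively $\hat f_m = \sum_k (U^{\dagger})_{mk}\hat c_k$ in the fermionic case). A direct computation from the CCR/CAR shows that these are again canonical bosonic/fermionic modes and that $\hat H_{\text e} = \sum_m \omega_m\,\hat d_m^\dagger \hat d_m$ (resp.\ $\sum_m\omega_m\hat f_m^\dagger\hat f_m$) in the new basis, because $\sum_{kl}U^{*}_{km}H_{kl}U_{lm'} = D_{mm'}$.

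Next, the Gibbs state factorizes as a tensor product over modes in the diagonal basis, so $\langle \hat d_m^\dagger \hat d_{m'}\rangle_{\text e} = \delta_{mm'} n_m$, where the single-mode occupation is evaluated via
\begin{equation*}
n_m = \frac{\operatorname{tr}(\hat N e^{-\beta\omega_m \hat N})}{\operatorname{tr}( e^{-\beta\omega_m \hat N})}.
\end{equation*}
Summing the geometric series (resp.\ two-term sum) on the Fock space of a single mode yields $n_m = 1/(e^{\beta\omega_m}-1)$ for bosons and $n_m = 1/(e^{\beta\omega_m}+1)$ for fermions. The CCR/CAR then give $\langle \hat d_m \hat d_m^\dagger\rangle_{\text e} = 1\pm n_m$ with the upper sign bosonic and the lower sign fermionic, which simplifies algebraically to $1/(1-e^{-\beta\omega_m})$ and $1/(1+e^{-\beta\omega_m})$ respectively.

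Finally, I would rotate back: using $\hat b_l^\dagger = \sum_m U^{*}_{lm}\hat d_m^\dagger$ and $\hat b_k = \sum_m U_{km}\hat d_m$ together with the diagonal averages,
\begin{equation*}
\langle \hat b_l^\dagger \hat b_k\rangle_{\text e} = \sum_m U_{km}\,n_m\,U^{*}_{lm} = \bigl(U(e^{\beta D}-I)^{-1}U^{\dagger}\bigr)_{kl} = \bigl((e^{\beta H}-I)^{-1}\bigr)_{kl},
\end{equation*}
and the analogous identity for $\langle \hat b_k\hat b_l^\dagger\rangle_{\text e}$ follows either by repeating the calculation with $1+n_m$ or by applying the CCR and the operator identity $(e^{\beta H}-I)^{-1}+I = (I-e^{-\beta H})^{-1}$. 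The fermionic case is identical with the appropriate sign change. There is no real analytic obstacle; the only delicate points are bookkeeping the placement of $U$ versus $U^{\dagger}$ (which is what determines whether the answer ends up as $(\cdot)_{kl}$ versus $(\cdot)_{lk}$), and, in the bosonic case, tacitly assuming $H\succ 0$ so that the Gibbs state is normalizable and $(e^{\beta H}-I)^{-1}$ makes sense; the fermionic identities require no positivity of $H$.
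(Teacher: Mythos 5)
Your proposal is correct. The paper itself gives no proof of this lemma; it is stated as a well-known fact with a citation to Negele--Orland, and your argument — diagonalizing $H=UDU^\dagger$, passing to rotated canonical modes $\hat d_m$ (resp. $\hat f_m$), using the factorization of the Gibbs state to get the single-mode Bose--Einstein/Fermi--Dirac occupations, and rotating back via the functional calculus so that $\sum_m U_{km} f(\omega_m) U^*_{lm} = (f(H))_{kl}$ — is precisely the standard derivation being invoked, with the index placement correctly reproducing the $(\cdot)_{kl}$ convention of the statement. Your two side remarks are also apt: the second identity follows from the first by the CCR/CAR together with $(e^{\beta\omega}-1)^{-1}+1=(1-e^{-\beta\omega})^{-1}$ (resp. $1-(e^{\beta\omega}+1)^{-1}=(1+e^{-\beta\omega})^{-1}$), and the bosonic case does tacitly require $H\succ 0$ for the Gibbs state \cref{eq:rhoE_init} to be trace-class and for $(e^{\beta H}-I)^{-1}$ to exist, an assumption the paper leaves implicit.
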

For both  bosonic and fermionic $\hat H_{\text e}$, we 
 will also use the following commutator relations:
\begin{lem}
    In the unitary case, with $\boldsymbol L_{\text e}^{\text U}$ in \cref{eq:spin_boson_env} and \cref{eq:Le_U_fermionic}, we have:
    \begin{equation}
        [\widetilde{\mathbf b}_l, \boldsymbol L_{\text e}^{\text U}] = \sum_{k=1}^{N_{\text e}} \mathrm i H_{kl}\widetilde{\mathbf b}_{k},\quad [\widetilde{\mathbf b}_l^\dagger, \boldsymbol L_{\text e}^{\text U}] = \sum_{k=1}^{N_{\text e}}-\mathrm i H_{lk} \widetilde{\mathbf b}_k^\dagger.
        \label{eq:commutator_Le_H}
    \end{equation}
    where $\mathbf b_{k}, \mathbf b_{k}^\dagger, \widetilde{\mathbf b}_{k}, \widetilde{\mathbf b}_{k}^\dagger$ are understood to be $\boldsymbol b_k, \boldsymbol b_k^\dagger, \widetilde{\boldsymbol b}_k, \widetilde{\boldsymbol b}_k^\dagger$ in the bosonic case, and 
    $\boldsymbol c_k, \boldsymbol c_k^\dagger, \widetilde{\boldsymbol c}_k, \widetilde{\boldsymbol c}_k^\dagger$ in the fermionic case.
    We also have:
    $$[\mathbf b_l, \widetilde{\mathbf b}_k{\mathbf b}_{k'}] = 0 ,\quad  [\widetilde{\mathbf b}_l, \widetilde{\mathbf b}_{k}\mathbf b_{k'}] = 0,\quad [\mathbf b_l^\dagger, \widetilde{\mathbf b}_k\mathbf b_{k'}] = -\delta_{k'l}\widetilde{\mathbf b}_{k},\quad[\widetilde{\mathbf b}_l^\dagger, \widetilde{\mathbf b}_k\mathbf b_{k'}] = \mp\delta_{kl}\mathbf b_{k'}.$$
    $$
    [\mathbf b_l^\dagger, \widetilde{\mathbf b}_{k'}^\dagger{\mathbf b}_{k}^\dagger] = 0,\quad [\widetilde{\mathbf b}_l^\dagger, \widetilde{\mathbf b}_{k'}^\dagger{\mathbf b}_{k}^\dagger] = 0,\quad 
    [\mathbf b_l, \widetilde{\mathbf b}_{k'}^\dagger{\mathbf b}_{k}^\dagger] = \pm \delta_{kl}\widetilde{\mathbf b}_{k'}^\dagger,\quad [\widetilde{\mathbf b}_l, \widetilde{\mathbf b}_{k'}^\dagger{\mathbf b}_{k}^\dagger] = \delta_{k'l}\mathbf b_{k}^\dagger. 
    $$
    \label{lem:commutator_L}
\end{lem}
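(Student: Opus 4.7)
The plan is to verify all identities by direct application of the canonical (anti-)commutation relations from \cref{lem:CCR,lem:CAR}, combined with the (graded) Leibniz rule for commutators: $[A, BC] = [A,B]C + B[A,C]$ in the bosonic case, or $[A, BC] = \{A,B\}C - B\{A,C\}$ in the fermionic case (valid because the four superoperators $\boldsymbol c_k, \boldsymbol c_k^\dagger, \widetilde{\boldsymbol c}_k, \widetilde{\boldsymbol c}_k^\dagger$ all act as odd elements with respect to the relevant $\mathbb Z_2$-grading). No additional structural input is needed beyond the quadratic superoperator expressions \cref{eq:spin_boson_env,eq:Le_U_fermionic}.

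For the two commutators involving $\boldsymbol L_{\text e}^{\text U}$, I would first observe that $\boldsymbol L_{\text e}^{\text U}$ splits into two quadratic pieces: an untilded piece proportional to $\mathbf b_k^\dagger \mathbf b_l$ (respectively $\mathbf c_k^\dagger \mathbf c_l$), and a tilded piece proportional to $\widetilde{\mathbf b}_l^\dagger \widetilde{\mathbf b}_k$ (respectively $\widetilde{\mathbf c}_l^\dagger \widetilde{\mathbf c}_k$). Since \cref{lem:CCR,lem:CAR} assert that every mixed tilded/untilded (anti-)commutator vanishes, the untilded piece drops out of $[\widetilde{\mathbf b}_l, \boldsymbol L_{\text e}^{\text U}]$ and $[\widetilde{\mathbf b}_l^\dagger, \boldsymbol L_{\text e}^{\text U}]$. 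Applying the (graded) Leibniz rule to the surviving tilded piece then reduces each quadratic term to a single tilded operator via the $\delta$-function in $[\widetilde{\mathbf b}_l, \widetilde{\mathbf b}_{l'}^\dagger]$ (bosonic) or $\{\widetilde{\mathbf c}_l, \widetilde{\mathbf c}_{l'}^\dagger\}$ (fermionic), and performing the remaining sum over the free index yields the claimed $\sum_k \mathrm i H_{kl}\widetilde{\mathbf b}_k$ and $\sum_k -\mathrm i H_{lk}\widetilde{\mathbf b}_k^\dagger$.

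For the eight small mixed identities, the same strategy works termwise: expand $[A, BC]$ via (graded) Leibniz, and read off the contracted $\delta$-factor from \cref{lem:CCR,lem:CAR}. The main bookkeeping obstacle is the $\mp$ and $\pm$ signs in identities such as $[\widetilde{\mathbf b}_l^\dagger, \widetilde{\mathbf b}_k \mathbf b_{k'}] = \mp\delta_{kl}\mathbf b_{k'}$ and $[\mathbf b_l, \widetilde{\mathbf b}_{k'}^\dagger \mathbf b_k^\dagger] = \pm\delta_{kl}\widetilde{\mathbf b}_{k'}^\dagger$. These arise because the bosonic Leibniz rule produces, e.g., $[\widetilde{\boldsymbol b}_l^\dagger, \widetilde{\boldsymbol b}_k]\boldsymbol b_{k'} = -\delta_{kl}\boldsymbol b_{k'}$ (the minus coming from flipping $[\widetilde{\boldsymbol b}_k, \widetilde{\boldsymbol b}_l^\dagger] = \delta_{kl}$), whereas the fermionic graded Leibniz rule yields $\{\widetilde{\boldsymbol c}_l^\dagger, \widetilde{\boldsymbol c}_k\}\boldsymbol c_{k'} = +\delta_{kl}\boldsymbol c_{k'}$ directly from the CAR, with no sign flip. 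Tracking these sign discrepancies consistently across all four mixed identities and their Hermitian partners is purely mechanical but must be done carefully; once this is handled, the full lemma follows with no further input.
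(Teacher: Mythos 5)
Your proposal is correct and follows essentially the same route as the paper: the paper's proof is precisely the graded Leibniz expansion (written there with the unified $[\cdot,\cdot]_{\mp}$ notation), using the vanishing of all mixed tilded/untilded (anti-)commutators from \cref{lem:CCR,lem:CAR} to kill the untilded piece of $\boldsymbol L_{\text e}^{\text U}$ and reading off the surviving $\delta$-contractions, with the $\mp$/$\pm$ signs arising exactly as you describe.
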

\begin{proof}
    Let us introduce the notation $[\cdot,\cdot]_{\mp}$, where $[\cdot,\cdot]_{-} = [\cdot,\cdot]$ and $[\cdot,\cdot]_{+} = \{\cdot,\cdot\}$.
    Using the CCR relation \cref{lem:CCR} and the CAR relation \cref{lem:CAR}, we have $[{\mathbf b}_l, \widetilde{\mathbf b}_k^\dagger \widetilde{\mathbf b}_{k'}] = 0$, $[{\mathbf b}_l^\dagger, \widetilde{\mathbf b}_k^\dagger \widetilde{\mathbf b}_{k'}] = 0$, $[\widetilde{\mathbf b}_l, {\mathbf b}_k^\dagger {\mathbf b}_{k'}] = 0$, $[\widetilde{\mathbf b}_l^\dagger, {\mathbf b}_k^\dagger {\mathbf b}_{k'}] = 0$, $[{\mathbf b}_l, \widetilde{\mathbf b}_k{\mathbf b}_{k'}] = 0$, $[\widetilde{\mathbf b }_l, \widetilde{\mathbf b}_k{\mathbf b}_{k'}] = 0$, $[{\mathbf b}_l^\dagger, \widetilde{\mathbf b}_{k'}^\dagger {\mathbf b}_{k}^\dagger] = 0$, $[\widetilde{\mathbf b}_l^\dagger, \widetilde{\mathbf b}_{k'}^\dagger {\mathbf b}_{k}^\dagger] = 0$,
    and
    $$
    \begin{aligned}
       &[ {\mathbf b}_l, {\mathbf b}_k^\dagger {\mathbf b}_{k'}] = [ {\mathbf b}_l, {\mathbf b}_k^\dagger]_{\mp} {\mathbf b}_{k'}\pm{\mathbf b}_k^\dagger[ {\mathbf b}_l,  {\mathbf b}_{k'}]_{\mp} = \delta_{lk}{\mathbf b}_{k'},\quad 
    [\boldsymbol{\mathbf b}_l^\dagger, 
    \boldsymbol{\mathbf b}_k^\dagger \boldsymbol{{\mathbf b}}_{k'}] = [ \boldsymbol{\mathbf b}_l^\dagger, \boldsymbol{\mathbf b}_k^\dagger]_{\mp} \boldsymbol{\mathbf b}_{k'}\pm\boldsymbol{\mathbf b}_k^\dagger[ \boldsymbol{\mathbf b}_l^\dagger,  \boldsymbol{\mathbf b}_{k'}]_{\mp} = -\delta_{lk'}\boldsymbol{\mathbf b}_{k}^\dagger. \\ 
    &[\widetilde{\mathbf b}_l, \widetilde{\mathbf b}_k^\dagger \widetilde{\mathbf b}_{k'}] = [\widetilde{\mathbf b}_l, \widetilde{\mathbf b}_k^\dagger]_\mp \widetilde{\mathbf b}_{k'}\pm\widetilde{\mathbf b}_k^\dagger[\widetilde{\mathbf b}_l,  \widetilde{\mathbf b}_{k'}] = \delta_{lk}\widetilde{{\mathbf b}}_{k'},\quad 
    [\widetilde{\mathbf b}_l^\dagger, \widetilde{\mathbf b}_k^\dagger \widetilde{\mathbf b}_{k'}] = [\widetilde{\mathbf b}_l^\dagger, \widetilde{\mathbf b}_k^\dagger]_\mp \widetilde{\mathbf b}_{k'}\pm\widetilde{\mathbf b}_k^\dagger[\widetilde{\mathbf b}_l^\dagger,  \widetilde{\mathbf b}_{k'}] = -\delta_{lk'}\widetilde{\mathbf b}_{k}^\dagger.
    \end{aligned}
    $$
    With $\boldsymbol L_{\text e}^{\text U} = \sum_{k,k'=1}^{N_{\text e}}-\mathrm i H_{kk'}{\mathbf b}_k^\dagger{\mathbf b}_{k'}+\mathrm i H_{kk'}\widetilde{\mathbf b}_{k'}^\dagger\widetilde{\mathbf b}_k$, \cref{eq:commutator_Le_H} is proved.
    Furthermore, we have 
    $$
    [\mathbf b_l^\dagger, \widetilde{\mathbf b}_k \mathbf b_{k'}] =  \pm\widetilde{\mathbf b}_k\mathbf b_l^\dagger\mathbf b_{k'} -  \widetilde{\mathbf b}_k \mathbf b_{k'}\mathbf b_l^\dagger = -\widetilde{\mathbf b}_k[\mathbf b_{k'}, \mathbf b_l^\dagger]_{\mp} = -\widetilde{\mathbf b}_k\delta_{k'l},
    $$
    $$
    [\widetilde{\mathbf b}_l^\dagger, \widetilde{\mathbf b}_k \mathbf b_{k'}] = \widetilde{\mathbf b}_l^\dagger \widetilde{\mathbf b}_k \mathbf b_{k'} \mp \widetilde{\mathbf b}_k \widetilde{\mathbf b}_l^\dagger\mathbf b_{k'}=\mp [\widetilde{\mathbf b}_k, \widetilde{\mathbf b}_l^\dagger]_{\mp}\mathbf b_{k'} = \mp \delta_{kl} {\mathbf b}_{k'},
    $$
    $$
    [\mathbf b_l, \widetilde{\mathbf b}_{k'}^\dagger \mathbf b_{k}^\dagger] = \pm \widetilde{\mathbf b}_{k'}^\dagger\mathbf b_l \mathbf b_{k}^\dagger -  \widetilde{\mathbf b}_{k'}^\dagger \mathbf b_{k}^\dagger\mathbf b_l = \pm\widetilde{\mathbf b}_{k'}^\dagger[\mathbf b_l ,\mathbf b_{k}^\dagger ]_{\mp} =\pm\widetilde{\mathbf b}_{k'}^\dagger\delta_{lk}, 
    $$
    $$
    [\widetilde{\mathbf b}_l, \widetilde{\mathbf b}_{k'}^\dagger \mathbf b_{k}^\dagger] = \widetilde{\mathbf b}_l \widetilde{\mathbf b}_{k'}^\dagger \mathbf b_{k}^\dagger \mp \widetilde{\mathbf b}_{k'}^\dagger \widetilde{\mathbf b}_l\mathbf b_{k}^\dagger = [\widetilde{\mathbf b}_{l}, \widetilde{\mathbf b}_{k'}^\dagger]_{\mp}\mathbf b_{k}^\dagger = \delta_{k'l}\mathbf b_{k}^\dagger.
    $$
\end{proof}
\subsection{Spin-boson model}
\label{sec:spin_boson_proof}
We have established the commutator relations in the unitary case in Lemma \ref{lem:commutator_L}. Let us also state the commutator result for the Lindbladian case.

\begin{lem}
    With $\boldsymbol{L}_{\text e} = \boldsymbol L_{\text e}^{\text U} + \boldsymbol D_{\text e}$ defined in \cref{eq:spin_boson_env}, we have:
 \begin{equation}
        \begin{aligned}
      \relax  [\boldsymbol b_l, \boldsymbol L_{\text e}] &= \sum_{k=1}^{N_{\text e}}(-\mathrm i H_{lk}-\Gamma_{lk})\boldsymbol b_{k},\quad [\boldsymbol b_l^\dagger, \boldsymbol L_{\text e}] = \sum_{k=1}^{N_{\text e}} (\mathrm i H_{kl}+\Gamma_{kl})\boldsymbol  b_k^\dagger - 2\Gamma_{kl}\widetilde{\boldsymbol b}_k, \\
        [\widetilde{\boldsymbol b}_l, \boldsymbol L_{\text e}] &= \sum_{k=1}^{N_{\text e}} (\mathrm i H_{kl}-\Gamma_{kl})\widetilde{\boldsymbol b}_{k},\quad [\widetilde{\boldsymbol b}_l^\dagger, \boldsymbol L_{\text e}] = \sum_{k=1}^{N_{\text e}}(-\mathrm i H_{lk}+\Gamma_{lk} )\widetilde{\boldsymbol b}_k^\dagger -2\Gamma_{lk}{\boldsymbol b}_k.
        \end{aligned}\label{eq:commutator_Le_D}
    \end{equation}
    \label{lem:commutator_L_D}
\end{lem}
The proof of this lemma follows naturally from Lemma \ref{lem:commutator_L}.

Now let us prove \cref{lem:corr_spin_boson}. Let us first consider the unitary case. We have the following lemma:
\begin{lem}
    Let $\mathcal C_{\boldsymbol b, \boldsymbol b^\dagger}(t)$ be a matrix-valued function, with its $(k,l)$-th element being $\langle{\boldsymbol b_k\mathrm e^{\boldsymbol L_{\text e}t} \boldsymbol b_l^\dagger}\rangle_{\text e}$ for $t\geq 0$. Similarly, let us define $\mathcal C_{\boldsymbol O, \boldsymbol O'}(t)$ where $\boldsymbol O,\boldsymbol O' \in \boldsymbol b, \boldsymbol b^\dagger, \widetilde{\boldsymbol b},\widetilde{\boldsymbol b^\dagger}$. Then, in the unitary case (see \cref{eq:spin_boson_env,eq:Le_U_fermionic}), the nonzero components of $\mathcal C_{\boldsymbol O,\boldsymbol O'}(t)$ ($t\geq 0$) are:
    \begin{equation}
      \begin{aligned}
        \mathcal C_{\boldsymbol b,\boldsymbol b^\dagger}(t) &=\mathrm{C}_{\widetilde{\boldsymbol{b}}^{\dagger}, \boldsymbol{b}^{\dagger}}(t)= G^>(t),\quad 
        \mathcal C_{\boldsymbol b^\dagger, \boldsymbol b}(t) = C_{\widetilde{\boldsymbol b}, \boldsymbol{b}}(t)=(G^<(t))^{*},\\
        \mathcal C_{\widetilde{\boldsymbol b}^{\dagger}, \widetilde{\boldsymbol b}}(t) &= \mathcal C_{\boldsymbol b, \widetilde{\boldsymbol b}}(t) = G^<(t),\quad
        \mathcal C_{\boldsymbol b^\dagger, \widetilde{\boldsymbol b}^{\dagger}}(t) = \mathcal C_{\widetilde{\boldsymbol b}, \widetilde{\boldsymbol{b}}^\dagger}(t) = (G^>(t))^{*},
      \end{aligned}
    \end{equation}
    where we have defined in \cref{eq:Greens_unitary_spin_boson} that,
    \begin{equation}
        G^<(t) = \left(\mathrm{e}^{\beta H}-I\right)^{-1} \mathrm{e}^{-\mathrm{i} H t},\quad G^>(t) = \left(I-\mathrm{e}^{-\beta H}\right)^{-1} \mathrm{e}^{-\mathrm{i} H t}.
    \end{equation}
    \label{lem:C_OO_unitary_spin_boson}
    \end{lem}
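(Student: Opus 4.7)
The plan is to recognize each correlator $\mathrm C_{\boldsymbol O,\boldsymbol O'}(t)$ as the solution of a matrix-valued linear ODE derived from \cref{lem:commutator_L} and to identify its initial data via Gibbs-state expectations of ordinary bosonic products.

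Since $\hat\rho_{\text E}(0)=\mathrm e^{-\beta \hat H_{\text e}}/Z_{\text e}$ commutes with $\hat H_{\text e}$, we have $\boldsymbol L_{\text e}^{\text U}\hat\rho_{\text E}(0)=0$, so the computation at the beginning of this appendix gives
\begin{equation*}
\tfrac{\mathrm d}{\mathrm d t}\langle \boldsymbol O\,\mathrm e^{\boldsymbol L_{\text e}^{\text U}t}\boldsymbol O'\rangle_{\text e}
=\langle \boldsymbol O\,\mathrm e^{\boldsymbol L_{\text e}^{\text U}t}[\boldsymbol L_{\text e}^{\text U},\boldsymbol O']\rangle_{\text e}.
\end{equation*}
By \cref{lem:commutator_L}, each of the commutators $[\boldsymbol L_{\text e}^{\text U},\boldsymbol b_l]$, $[\boldsymbol L_{\text e}^{\text U},\boldsymbol b_l^\dagger]$, $[\boldsymbol L_{\text e}^{\text U},\widetilde{\boldsymbol b}_l]$, $[\boldsymbol L_{\text e}^{\text U},\widetilde{\boldsymbol b}_l^\dagger]$ is a linear combination of operators of the same (tilded or untilded, daggered or undaggered) type, with coefficients proportional to entries of $H$ or $H^T$. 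Thus the ODE has the form $\mathrm d_t\mathrm C(t)=\pm\mathrm i\,\mathrm C(t)\,H^{(T)}$, where the sign and whether $H$ or $H^T$ appears depend only on $\boldsymbol O'$, and the solution is $\mathrm C(t)=\mathrm C(0)\mathrm e^{\mp\mathrm i H^{(T)}t}$.

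For the initial data, any correlator with $\boldsymbol O\in\{\boldsymbol b_k,\boldsymbol b_k^\dagger\}$ is directly one of the Gibbs expectations $\langle\hat b_k\hat b_l^\dagger\rangle_{\text e}=((I-\mathrm e^{-\beta H})^{-1})_{kl}$, $\langle\hat b_k^\dagger\hat b_l\rangle_{\text e}=((\mathrm e^{\beta H}-I)^{-1})_{lk}$, $\langle\hat b_k\hat b_l\rangle_{\text e}=0$, or $\langle\hat b_k^\dagger\hat b_l^\dagger\rangle_{\text e}=0$ supplied by the first lemma of this section (the last two vanish by particle-number conservation in the Gibbs state). Correlators with $\boldsymbol O\in\{\widetilde{\boldsymbol b}_k,\widetilde{\boldsymbol b}_k^\dagger\}$ move a factor to the right of $\hat\rho_{\text E}(0)$; cyclicity of the trace together with stationarity reduces each such value to the same list. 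A direct check confirms that the two members of each pair $\mathrm C_{\boldsymbol b,\boldsymbol b^\dagger}=\mathrm C_{\widetilde{\boldsymbol b}^\dagger,\boldsymbol b^\dagger}$, etc., share identical initial values, while the eight unlisted correlators all have vanishing initial values and therefore vanish identically.

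Assembling the ODE solutions with their initial data and using Hermiticity of $H$ (so $H^T=\bar H$ and $\mathrm e^{\mathrm i H^T t}=(\mathrm e^{-\mathrm i H t})^*$) yields the four nonzero formulas $G^>(t)$, $(G^<(t))^*$, $G^<(t)$, and $(G^>(t))^*$ as claimed. The only genuine obstacle is bookkeeping: tracking which commutator produces $H$ versus $H^T$ and matching transposes against complex conjugates correctly across the sixteen cases (four choices of $\boldsymbol O$ times four of $\boldsymbol O'$). No conceptual difficulty arises once stationarity of $\hat\rho_{\text E}(0)$ has reduced the time derivative to the action of $[\boldsymbol L_{\text e}^{\text U},\boldsymbol O']$ alone.
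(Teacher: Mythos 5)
Your proposal is correct and takes essentially the same route as the paper's own proof: reduce $\tfrac{\mathrm d}{\mathrm dt}\mathrm C_{\boldsymbol O,\boldsymbol O'}$ to the commutator $[\boldsymbol L_{\text e},\boldsymbol O']$ via stationarity of the Gibbs state, use \cref{lem:commutator_L} to see that each correlator satisfies a closed matrix ODE $\dot{\mathrm C}=\pm\mathrm i\,\mathrm C H^{(\mathrm T)}$, fix the initial data from the Gibbs expectations and trace cyclicity (so the eight correlators with vanishing initial value vanish identically), and solve, using Hermiticity of $H$ to convert transposes into complex conjugates. This matches the paper's argument step for step, so no further comparison is needed.
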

\begin{proof}
    Since $(\mathcal C_{\boldsymbol b, \boldsymbol b}(0))_{kl} = \operatorname{tr}(\boldsymbol b_k\boldsymbol b_l\hat\rho_{\text E}(0))=\operatorname{tr}(\hat b_k\hat b_l\hat\rho_{\text E}(0))=0$, and using \cref{lem:commutator_L}, we have 
    $\frac{\mathrm d}{\mathrm dt}(\mathcal C_{\boldsymbol b, \boldsymbol b})_{kl}=\sum_{k'=1}^{N_{\text e}}\mathrm i H_{lk'}(\mathcal C_{\boldsymbol b, \boldsymbol b})_{kk'}$. Therefore, $C_{\boldsymbol b, \boldsymbol b}(t) = 0$ for all $t>0$.
    Similarly, we have $\mathcal C_{\boldsymbol b^\dagger, \boldsymbol b^\dagger}(t) = 0$, $\mathcal C_{\widetilde{\boldsymbol b}, \widetilde{\boldsymbol b}}(t) = 0$, $\mathcal C_{\widetilde{\boldsymbol b}^\dagger, \widetilde{\boldsymbol b}^\dagger}(t) = 0$, $\mathcal C_{\boldsymbol b, \widetilde{\boldsymbol b}^{\dagger}}(t) = 0$, $\mathcal C_{\boldsymbol b^\dagger, \widetilde{\boldsymbol b}}(t) = 0$, 
    $ \mathcal C_{\widetilde{\boldsymbol b}, \boldsymbol b^\dagger}(t) = 0$, and $\mathcal C_{\widetilde{\boldsymbol b}^\dagger, \boldsymbol b}(t) = 0$. We also have:
\begin{equation}
    \begin{aligned}
        (\mathcal C_{\boldsymbol b, \boldsymbol b^\dagger}(0))_{kl} &= \operatorname{tr}( \hat b_k\hat b_l^\dagger\hat\rho_{\text E}(0))= \langle\hat b_k\hat b_l^\dagger\rangle_{\text e}=\operatorname{tr}(\hat b_l^\dagger\hat\rho_{\text E}(0)\hat b_k)= (\mathcal C_{\widetilde{\boldsymbol b}^\dagger, \boldsymbol b^\dagger}(0))_{kl} ,\\ (\mathcal C_{\boldsymbol b^\dagger, \boldsymbol b}(0) )_{kl}&= \operatorname{tr}( \hat b_k^\dagger\hat b_l\hat\rho_{\text E}(0)) = \langle\hat b_k^\dagger\hat b_l\rangle_{\text e}= \operatorname{tr}(\hat b_l\hat\rho_{\text E}(0)\hat b_k^\dagger)=(\mathcal C_{\widetilde{\boldsymbol b}, \boldsymbol b}(0))_{kl},\\
    (\mathcal C_{\widetilde{\boldsymbol b}, \widetilde{\boldsymbol b}^\dagger}(0))_{kl} &= \operatorname{tr}(\hat\rho_{\text E}(0)\hat b_l \hat b_k^\dagger)= \langle\hat b_l\hat b_k^\dagger\rangle_{\text e}= \operatorname{tr}(\hat b_k^\dagger\hat\rho_{\text E}(0)\hat b_l) = (\mathcal C_{\boldsymbol b^\dagger, \widetilde{\boldsymbol b}^\dagger}(0))_{kl} ,\\
    (\mathcal C_{\boldsymbol b, \widetilde{\boldsymbol b}}(0))_{kl}&= 
    \operatorname{tr}(\hat b_k\hat\rho_{\text E}(0)\hat b_l^\dagger)= \langle\hat b_l^\dagger\hat b_k\rangle_{\text e}=\operatorname{tr}(\hat\rho_{\text E}(0)\hat b_l^\dagger \hat b_k)=(\mathcal C_{\widetilde{\boldsymbol b}^\dagger, \widetilde{\boldsymbol b}}(0))_{kl} .
    \end{aligned}
    \label{eq:spin_bosn_C_initial}
\end{equation}
    Or in matrix form,
    $$
    \begin{aligned}
        \mathcal C_{\boldsymbol{b},\boldsymbol{b}^\dagger} (0)=\mathcal C_{\widetilde{\boldsymbol{b}}^\dagger,\boldsymbol{b}^\dagger} (0) = (I-\mathrm e^{-\beta H})^{-1},\quad
    \mathcal C_{\boldsymbol b^\dagger, \boldsymbol b}(0) = \mathcal C_{\widetilde{\boldsymbol b},\boldsymbol b}(0) = \left(\mathrm{e}^{\beta H^{\mathrm T}}-I\right)^{-1},\\
    \mathcal C_{\widetilde{\boldsymbol b}^\dagger,\widetilde{\boldsymbol b}}(0)= \mathcal C_{\boldsymbol b,\widetilde{\boldsymbol b}}(0)= \left(\mathrm e^{\beta H} - I\right)^{-1},\quad \mathcal C_{{\boldsymbol b}^\dagger, \widetilde{\boldsymbol b}^\dagger}(0) = \mathcal C_{\widetilde{\boldsymbol b}, \widetilde{\boldsymbol b}^\dagger}(0) = \left(I - \mathrm e^{-\beta H^{\mathrm T}}\right)^{-1},
    \end{aligned}
    $$
    and 
    $$
        \frac{\mathrm d}{\mathrm dt}(\mathcal C_{\boldsymbol{b}, \boldsymbol b^\dagger})_{kl}= \operatorname{tr}\left(\boldsymbol{b}_k \mathrm{e}^{\boldsymbol{L}_{\mathrm{e}} t}\left[\boldsymbol{L}_{\mathrm{e}}, \boldsymbol{b}_l^\dagger\right] \hat{\rho}_{\mathrm{E}}(0)\right)
    = \sum_{k'=1}^{N_{\text e}}-\mathrm i H_{k'l}(\mathcal C_{\boldsymbol b, \boldsymbol b})_{kk'},
    $$
    $$
    \frac{\mathrm d}{\mathrm dt}(\mathcal C_{\boldsymbol b^\dagger, \boldsymbol b})_{kl} = \operatorname{tr}
    \left(\boldsymbol b_k^\dagger \mathrm e^{\boldsymbol L_{\text e} t}\left[\boldsymbol L_{\text e}, \boldsymbol b_l\right] \hat\rho_{\text E}(0)\right) = \sum_{k'=1}^{N_{\text e}}\mathrm i H_{lk'}(\mathcal C_{\boldsymbol b^\dagger, \boldsymbol b})_{kk'},
    $$
    $$
    \frac{\mathrm d}{\mathrm dt}(\mathcal C_{\widetilde{\boldsymbol b}^\dagger, \widetilde{\boldsymbol b}})_{kl} = \operatorname{tr}\left(
        \widetilde{\boldsymbol b}_k^\dagger \mathrm e^{\boldsymbol L_{\text e} t}\left[\boldsymbol L_{\text e}, \widetilde{\boldsymbol b}_l\right] \hat\rho_{\text E}(0)
    \right) = \sum_{k'=1}^{N_{\text e}}-\mathrm i H_{k'l}(\mathcal C_{\widetilde{\boldsymbol b}^\dagger, \widetilde{\boldsymbol b}})_{kk'},
    $$
    $$
    \frac{\mathrm d}{\mathrm dt}(\mathcal C_{\widetilde{\boldsymbol{b}}, \widetilde{\boldsymbol b}^\dagger})_{kl}= \operatorname{tr}\left( \widetilde{\boldsymbol b}_k \mathrm e^{\boldsymbol L_{\text e} t}\left[\boldsymbol L_{\text e}, \widetilde{\boldsymbol b}_l^\dagger\right] \hat{\rho}_{\text E}(0)\right) = 
    \sum_{k'=1}^{N_{\text e}}\mathrm i H_{lk'}(\mathcal C_{\widetilde{\boldsymbol b}, \widetilde{\boldsymbol b}^\dagger})_{kk'}.
    $$
    Or in matrix form 
    $$
    \begin{aligned}
        \frac{\mathrm d}{\mathrm dt}\mathcal C_{\boldsymbol{b},\boldsymbol{b}^\dagger} = -\mathrm i\mathcal C_{\boldsymbol{b},\boldsymbol{b}^\dagger}H,\quad 
        \frac{\mathrm d}{\mathrm dt}\mathcal C_{\boldsymbol{b}^\dagger,\boldsymbol{b}} = \mathrm i \mathcal C_{\boldsymbol{b}^\dagger,\boldsymbol{b}}H^{\mathrm T},\quad 
        \frac{\mathrm d}{\mathrm dt}\mathcal C_{\widetilde{\boldsymbol{b}}^\dagger,\widetilde{\boldsymbol{b}}} = -\mathrm i\mathcal C_{\widetilde{\boldsymbol{b}}^\dagger,\widetilde{\boldsymbol{b}}}H,\quad
        \frac{\mathrm d}{\mathrm dt}\mathcal C_{\widetilde{\boldsymbol{b}},\widetilde{\boldsymbol{b}}^\dagger} = \mathrm i\mathcal C_{\widetilde{\boldsymbol{b}},\widetilde{\boldsymbol{b}}^\dagger}H^{\mathrm T}.
    \end{aligned}
    $$
    And similarly we have
    $$
    \begin{aligned}
        \frac{\mathrm d}{\mathrm dt}\mathcal C_{\widetilde{\boldsymbol{b}}^\dagger,\boldsymbol{b}^\dagger} = -\mathrm i\mathcal C_{\widetilde{\boldsymbol{b}}^\dagger,\boldsymbol{b}^\dagger}H,\quad 
        \frac{\mathrm d}{\mathrm dt}\mathcal C_{\widetilde{\boldsymbol{b}},\boldsymbol{b}} = \mathrm i \mathcal C_{\widetilde{\boldsymbol{b}},\boldsymbol{b}}H^{\mathrm T},\quad 
        \frac{\mathrm d}{\mathrm dt}\mathcal C_{{\boldsymbol{b}},\widetilde{\boldsymbol{b}}} = -\mathrm i\mathcal C_{{\boldsymbol{b}},\widetilde{\boldsymbol{b}}}H,\quad
        \frac{\mathrm d}{\mathrm dt}\mathcal C_{{\boldsymbol{b}^\dagger},\widetilde{\boldsymbol{b}}^\dagger} = \mathrm i\mathcal C_{{\boldsymbol{b}^\dagger},\widetilde{\boldsymbol{b}}^\dagger}H^{\mathrm T}.
    \end{aligned}
    $$ 
    And thus, the lemma is proved.
\end{proof}
\begin{proof}[Proof of \cref{lem:corr_spin_boson}, in the unitary case] 
Using \cref{lem:Liouvillian_unitary_spin_boson},
    we have
$$
C_{2i-1,2j-1}(t)  = -
\sum_{k,l=1}^{N_{\text e}}g_{ik}g_{jl}^*
(\mathcal C_{\boldsymbol b, \boldsymbol b^\dagger}(t))_{kl} + g_{ik}^*g_{jl}(\mathcal C_{\boldsymbol{b}^\dagger, \boldsymbol b}(t))_{kl} = -(g G^>(t)g^\dagger + (g G^<(t)g^\dagger)^{*})_{ij},
$$
$$
C_{2i-1,2j}(t) =
\sum_{k,l=1}^{N_{\text e}}g_{ik}g_{jl}^*
(\mathcal C_{\boldsymbol b, \widetilde{\boldsymbol b}}(t))_{kl} + g_{ik}^*g_{jl}(\mathcal C_{\boldsymbol{b}^\dagger, \widetilde{\boldsymbol b}^\dagger}(t))_{kl} = (g G^<(t)g^\dagger + (g G^>(t)g^\dagger)^{*})_{ij},
$$
$$
C_{2i,2j-1}(t) = \sum_{k,l=1}^{N_{\text e}}g_{ik}g_{jl}^*
(\mathcal C_{\widetilde{\boldsymbol b}^\dagger, \boldsymbol b^\dagger}(t))_{kl} + g_{ik}^* g_{jl} (\mathcal C_{\widetilde{\boldsymbol b}, \boldsymbol b}(t))_{kl} 
= (g G^>(t)g^\dagger + (g G^<(t)g^\dagger)^{*})_{ij},
$$
$$
C_{2i,2j}(t) = -\sum_{k,l=1}^{N_{\text e}}g_{ik}g_{jl}^*
(\mathcal C_{\widetilde{\boldsymbol b}^\dagger, \widetilde{\boldsymbol b}}(t))_{kl} + g_{ik}^* g_{jl} (\mathcal C_{\widetilde{\boldsymbol b}, \widetilde{\boldsymbol b}^\dagger}(t))_{kl} = -(g G^<(t)g^\dagger + (g G^>(t)g^\dagger)^*)_{ij}.
$$
Let $c_{ij}(t) = -C_{2i-1,2j-1}(t) $, then $C_{2i,2j-1}(t) = c_{ij}(t)$, $C_{2i,2j}(t) = -(c_{ij}(t))^*$, and $C_{2i-1,2j}(t) = (c_{ij}(t))^*$.  
\end{proof}
 Next, we turn to the Lindblad case:
\begin{proof}[Proof of \cref{lem:corr_spin_boson}, in the Lindblad case]
    Let us define $\mathcal C_{\boldsymbol O,\boldsymbol O'}$ the same as in \cref{lem:C_OO_unitary_spin_boson} for $\boldsymbol O,\boldsymbol O' = \boldsymbol b, \boldsymbol b^\dagger, \widetilde{\boldsymbol b},\widetilde{\boldsymbol b}^\dagger$. Since $\hat{\rho}_{\mathrm{E}}^{(0)}=|0\rangle_{\mathrm{E}}\left\langle\left. 0\right|_{\mathrm{E}}\right.$, then 
    $$
    \langle b_k\hat b_l^{\dagger}\rangle_{\text e} = \delta_{kl},\quad 
    \langle\hat b_l^{\dagger} b_k\rangle_{\text e} = 0.
    $$
    Thus using \cref{eq:spin_bosn_C_initial}, we know that  the only nonzero ones among all $C_{\boldsymbol O, \boldsymbol O'}(0)$, ($\boldsymbol O,\boldsymbol O' = \boldsymbol b, \boldsymbol b^\dagger, \widetilde{\boldsymbol b},\widetilde{\boldsymbol b}^\dagger$) are:
    $$
    {(\mathcal C_{\boldsymbol{b},\boldsymbol{b}^\dagger}(0))}_{kl} = (\mathcal C_{\widetilde{\boldsymbol{b}}^\dagger,\boldsymbol{b}^\dagger}(0))_{kl} =  (\mathcal C_{\boldsymbol{b}^\dagger,\widetilde{\boldsymbol{b}}^\dagger}(0))_{kl} = (\mathcal C_{\widetilde{\boldsymbol{b}},\widetilde{\boldsymbol{b}}^\dagger}(0))_{kl} = \delta_{kl}. 
    $$
    Using the result for $[\boldsymbol b_l, \boldsymbol L_{\text e}]$ and $[\widetilde{\boldsymbol b}_l, \boldsymbol L_{\text e}]$ in the above lemma, similar to \cref{lem:C_OO_unitary_spin_boson}, we have:
    $$
    \frac{\mathrm d}{\mathrm dt}\mathcal C_{\boldsymbol O,\boldsymbol b } = C_{\boldsymbol O,\boldsymbol b }(\mathrm iH^{\text T}+\Gamma^{\text T}),\quad 
    \frac{\mathrm d}{\mathrm dt}\mathcal C_{\boldsymbol O,\widetilde{\boldsymbol b} } = C_{\boldsymbol O,\widetilde{\boldsymbol b} }(-\mathrm iH+\Gamma),\quad \text{for }\boldsymbol O = \boldsymbol b, \boldsymbol b^\dagger, \widetilde{\boldsymbol b},\widetilde{\boldsymbol b}^\dagger.
    $$
    Note that $\mathcal C_{\boldsymbol O,\boldsymbol b}(0) = \mathcal C_{\boldsymbol O,\widetilde{\boldsymbol b}}(0) = 0$, thus $\mathcal C_{\boldsymbol O,\boldsymbol b}(t) = \mathcal C_{\boldsymbol O,\widetilde{\boldsymbol b}}(t) = 0$ for all $t$ and $\boldsymbol O = \boldsymbol b, \boldsymbol b^\dagger, \widetilde{\boldsymbol b},\widetilde{\boldsymbol b}^\dagger$. 
    For the rest, we have 
    $$
    \frac{\mathrm d}{\mathrm dt}\mathcal C_{\boldsymbol O, \boldsymbol b^\dagger} = \mathcal C_{\boldsymbol O, \boldsymbol b^\dagger}(-\mathrm iH-\Gamma) + 2\mathcal C_{\boldsymbol O, \widetilde{\boldsymbol b}} \Gamma,\quad
    \frac{\mathrm d}{\mathrm dt}\mathcal C_{\boldsymbol O, \widetilde{\boldsymbol b}^\dagger} = \mathcal C_{\boldsymbol O, \widetilde{\boldsymbol b}^\dagger}(\mathrm iH^{\text T}-\Gamma^{\text T}) + 2\mathcal C_{\boldsymbol O, \boldsymbol b} \Gamma^{\text T}.
    $$
    Using $\mathcal C_{\boldsymbol O, \boldsymbol b}(t) = \mathcal C_{\boldsymbol O, \widetilde{\boldsymbol b}}(t) = 0$ and the initial conditions, we know that $\mathcal C_{\boldsymbol{b}^\dagger, \boldsymbol b^\dagger}(t) = \mathcal C_{\widetilde{\boldsymbol{b}}, {\boldsymbol{b}}^\dagger}(t) = 
    \mathcal C_{\boldsymbol{b}, \widetilde{\boldsymbol{b}}^\dagger}(t) = \mathcal C_{\widetilde{\boldsymbol{b}}^\dagger, \widetilde{\boldsymbol{b}}^\dagger}(t) = 0$ for all $t$, and 
   \begin{equation}
        \mathcal C_{\boldsymbol{b}, \boldsymbol{b}^\dagger}(t) = \mathcal C_{\widetilde{\boldsymbol{b}}^\dagger, \boldsymbol{b}^\dagger}(t) = \mathrm e^{(-\mathrm iH-\Gamma )t},\quad 
    \mathcal C_{\boldsymbol b^\dagger, \widetilde{\boldsymbol b}^\dagger}(t) = \mathcal C_{\widetilde{\boldsymbol b}, \widetilde{\boldsymbol b}^\dagger}(t) = \mathrm e^{(\mathrm iH^{\text T}-\Gamma^{\text T})t} = (\mathrm e^{(-\mathrm iH-\Gamma )t})^*.
    \label{eq:C_OO_Lindblad_spin_boson}
   \end{equation}
    Thus
    $$
\begin{aligned}
    C_{2i-1,2j-1}(t) = -g \mathcal C_{\boldsymbol{b}, \boldsymbol{b}^\dagger}(t)g^\dagger = -g \mathrm e^{(-\mathrm iH-\Gamma )t}g^\dagger,\quad C_{2i-1,2j}(t) = g^*\mathcal C_{ \boldsymbol{b}^\dagger,\widetilde{\boldsymbol{b}}^\dagger}(t) g^{\text T} = (g \mathrm e^{(- \mathrm iH -\Gamma)t}g^\dagger)^{*},\\
    C_{2i,2j-1}(t)=g\mathcal C_{\widetilde{\boldsymbol{b}}^\dagger,\boldsymbol{b}^\dagger}(t) g^{\dagger} = g \mathrm e^{(-\mathrm iH-\Gamma )t}g^\dagger,\quad 
    C_{2i,2j}(t) = -g^*\mathrm{C}_{\widetilde{\boldsymbol{b}}, \widetilde{\boldsymbol b}^{\dagger}}(t)g^{\text T} = -(g \mathrm e^{( -\mathrm iH -\Gamma)t}g^\dagger)^*.
\end{aligned}
    $$
    Let $c(t) =g \mathrm e^{(-\mathrm iH-\Gamma )t}g^\dagger$, then $C_{2i-1,2j-1}(t) = -c_{ij}(t)$, 
    $C_{2i,2j-1}(t) = c_{ij}(t)$, 
    $C_{2i-1,2j}(t) = (c_{ij}(t))^*$,
    $C_{2i,2j}(t) = -(c_{ij}(t))^*$. 
\end{proof}
Finally, let us prove this result for quasi-Lindblad dynamics:
\begin{proof}[Proof of \cref{lem:BCF_quasi_lind}, in the bosonic case]
    Recall that in the quasi-Lindblad case, $\boldsymbol E_\alpha = \boldsymbol E_\alpha^{\text U} + \boldsymbol E_\alpha^{\text D}$, where $\boldsymbol E_\alpha^{\text U}$ and $\boldsymbol E_\alpha^{\text D}$ are defined in \cref{lem:Liouvillian_unitary_spin_boson} 
 and \cref{eq:system_env_dissipation_spin_boson} respectively.
    Since $\boldsymbol{L}_{\text e}$ is the same in the Lindblad and quasi-Lindblad case, 
  the result of $\mathcal C_{\boldsymbol O,\boldsymbol O'}(t)$ (see \cref{eq:C_OO_Lindblad_spin_boson}) still holds, and we have 
$$
C_{2i-1,2j-1}(t) = ((-\mathrm ig - M) \mathcal C_{\boldsymbol b, \boldsymbol b^\dagger}(t)(-\mathrm i g^\dagger - M^\dagger))_{ij} =  - ((g-\mathrm iM) \mathrm e^{(-\mathrm iH-\Gamma)t}(g^{\dagger}-\mathrm iM^{\dagger}) )_{ij}
$$
$$
\begin{aligned}
    C_{2i-1,2j }(t) &= ((-\mathrm ig^*-M^*)\mathcal C_{\boldsymbol b^\dagger, \widetilde{\boldsymbol{b}}^\dagger}(t)(\mathrm ig^{\text T}-M^{\text T}) + 2M^* \mathcal C_{\widetilde{\boldsymbol{b}} , \widetilde{\boldsymbol{b}}^\dagger}(t)(\mathrm ig^{\text T}-M^{\text T}) )_{ij}\\
    &= ((-\mathrm ig^*+M^*)\mathrm{e}^{\left(\mathrm{i} H^{\mathrm{T}}-\Gamma^{\mathrm{T}}\right) t}(\mathrm ig^{\text T}-M^{\text T}) )_{ij} = (((g-\mathrm iM) \mathrm e^{(-\mathrm iH-\Gamma)t}(g^{\dagger}-\mathrm iM^{\dagger}) )_{ij})^*,
\end{aligned}
$$
$$
\begin{aligned}
    C_{2i,2j-1}(t) &=( (\mathrm ig - M)\mathcal C_{\widetilde{\boldsymbol{b}}^\dagger, \boldsymbol{b}^\dagger}(t)(-\mathrm ig^\dagger - M^\dagger)  + 2M \mathcal C_{\boldsymbol{b}, \widetilde{\boldsymbol{b}}}(t)(-\mathrm ig^\dagger - M^\dagger) )_{ij} \\
    &= \left((g-\mathrm{i} M) \mathrm{e}^{(-\mathrm{i} H-\Gamma) t}\left(g^{\dagger}-\mathrm{i} M^{\dagger}\right)\right)_{i j}
\end{aligned}
$$
$$
C_{2i,2j}(t) = ((\mathrm ig^* - M^*)\mathcal C_{\widetilde{\boldsymbol{b}}, \widetilde{\boldsymbol{b}}^\dagger}(t)(\mathrm ig^{\text T}-M^{\text T}) )_{ij} = -(((g-\mathrm iM) \mathrm e^{(-\mathrm iH-\Gamma)t}(g^{\dagger}-\mathrm iM^{\dagger}) )_{ij})^*.
$$
Let $c(t) =(g-\mathrm{i} M) \mathrm{e}^{(-\mathrm{i} H-\Gamma) t}\left(g^{\dagger}-\mathrm{i} M^{\dagger}\right)$, then $C_{2i-1,2j-1}(t) = -c_{ij}(t)$, 
$C_{2i,2j-1}(t) = c_{ij}(t)$, 
$C_{2i-1,2j}(t) = (c_{ij}(t))^*$,
$C_{2i,2j}(t) = -(c_{ij}(t))^*$. 
\end{proof}

\subsection{Fermionic impurity model}
\label{sec:fermion_proof}
Let us first give the explicit expressions for $\boldsymbol{E}_{\alpha}^{\text D \mp}$ in \cref{eq:D_SE_fermion}:
  \begin{equation}
     \begin{aligned}
         \boldsymbol{E}_{4j-3}^{\text{D}-} &=   \sum_{k=1}^{N_1} -  M_{jk} \boldsymbol{c}_k,\quad  \boldsymbol{E}_{4j-2}^{\text{D}-} =  \sum_{k=1}^{N_1} -M_{jk}^*\boldsymbol{c}_k^{\dagger}+ 2 M_{jk}^*\widetilde{\boldsymbol c}_k,\quad \boldsymbol{E}_{4j-1}^{\text{D}-} = \sum_{k=1}^{N_1}- M_{jk}^* \widetilde{\boldsymbol{c}}_k,\\
         \boldsymbol{E}_{4 j}^{\text{D}-}&= \sum_{k=1}^{N_1}-M_{jk} \widetilde{\boldsymbol{c}}_k^{\dagger}- 2M_{jk}\boldsymbol c_k,\quad 
         \boldsymbol{E}_{4 j-3}^{\text{D}+} = \sum_{l=N_1+1}^N M_{jl} \boldsymbol{c}_l + 2  M_{jl}\widetilde{\boldsymbol{c}}_{l}^\dagger,\quad \boldsymbol{E}_{4 j-2}^{\text{D}+} = \sum_{l=N_1+1}^{N_{\text e}} M_{jl}^* \boldsymbol c_{l}^\dagger,\\
         \boldsymbol{E}_{4 j-1}^{\text{D}+}&=\sum_{l=N_1+1}^{N_{\text e}}(M_{jl})^* \widetilde{\boldsymbol{c}}_l-2(M_{jl})^*\boldsymbol c_k^\dagger,\quad 
         \boldsymbol{E}_{4 j }^{\text{D}+} = \sum_{l=N_1+1}^{N_{\text e}}M_{jl}\widetilde{\boldsymbol{c}}_{l}^\dagger.
     \end{aligned}
     \label{eq:E_alpha_fermionic_non_unitary} 
      \end{equation}
Now let us prove that with $\boldsymbol{E}_{\alpha}^{\text U}$ in 
\cref{eq:E_alpha_fermionic} and $\boldsymbol{E}_{\alpha}^{\text D \mp}$ in \cref{eq:E_alpha_fermionic_non_unitary}, the operator form and the superoperator form of $\boldsymbol{L}_{\text{SE}}^{\text U}$ and $\boldsymbol{D}_{\text{SE}}^\mp$ are equivalent. The key is to take advantage of the fermionic superselection rule to simplify the formulation of $\boldsymbol{L}_{\text{SE}}$ in the superoperator formalism.
\begin{lem}
Due to the fermionic superselection rule (see \cref{rem:SE_decomposition}), $\boldsymbol{L}_{\text{SE}}^{\text U}$ in \cref{eq:L_SE_U_fermion} and \cref{eq:L_SE_U_fermion_O} are equivalent, and $\boldsymbol D_{\text{SE}}^{\mp}$ in \cref{eq:D_SE_fermion} and \cref{eq:D_SE_fermion_2} are equivalent.
    \label{lem:L_SE_D_SE_Expression}
\end{lem}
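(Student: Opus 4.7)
The plan is to verify, term by term, that the operator forms in \cref{eq:L_SE_U_fermion_O,eq:D_SE_fermion_2} and the superoperator forms in \cref{eq:L_SE_U_fermion,eq:D_SE_fermion} agree on every $\hat\rho\in L^{(0)}(\mathcal H)$. The essential inputs are that physical density operators satisfy $[\hat\rho,\hat P]=0$ (equivalently, $\hat P\hat\rho\hat P=\hat\rho$ with $\hat P=\hat P_{\text e}\otimes\hat P_{\text s}$), together with the identities $\hat P_{\text e}^{2}=\hat 1_{\text e}$, $\hat P_{\text s}^{2}=\hat 1_{\text s}$, and the anti-commutations $\{\hat P_{\text e},\hat c_k^{(\dagger)}\}=\{\hat P_{\text s},\hat a_i^{(\dagger)}\}=0$. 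All identifications below are understood as equalities of superoperators restricted to $L^{(0)}(\mathcal H)$, as per \cref{rem:SE_decomposition}.

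First, for $\boldsymbol{L}_{\text{SE}}^{\text U}$, I substitute the $\mathbb{Z}_2$-graded extensions \cref{eq:fermionic_extension} into $\hat H_{\text{SE}}$ to obtain products such as $\hat a_i^{\dagger}|_{\mathcal H}\hat c_k|_{\mathcal H}=(\hat P_{\text e}\hat c_k)\otimes\hat a_i^{\dagger}$. The commutator $-\mathrm{i}[\hat H_{\text{SE}},\hat\rho]$ then splits into four groups of terms (left/right action, $\nu$ vs.\ $\nu^{*}$). For each right-multiplication term I insert $\hat P\hat\rho\hat P=\hat\rho$ to sandwich $\hat\rho$ between parity operators, commute the parity operators through the fermionic factor to pick up the appropriate signs, and match the result against the superoperator conventions in \cref{defn:fermionic_superoperators,eq:env_system_supero}. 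For instance, $\hat\rho\bigl(\hat P_{\text e}\hat c_k\otimes\hat a_i^{\dagger}\bigr)$ becomes, after this manipulation, the action $(\widetilde{\boldsymbol c}_{k})(-\widetilde{\boldsymbol a}_i^{\dagger})\hat\rho$ up to an explicit sign which exactly cancels the $+\mathrm{i}\nu_{ik}$ coefficient contributing to $\boldsymbol E_{4j-3}^{\text U}\otimes\boldsymbol S_{4j-3}$ of \cref{eq:E_alpha_fermionic}. Running through all four pairings produces the four superoperator pairs indexed by $\alpha\in\{4j-3,\dots,4j\}$, giving the first claim.

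Second, for $\boldsymbol D_{\text{SE}}^{\mp}$, each jump term and each anti-commutator in \cref{eq:D_SE_fermion_2} is expanded analogously. The jump piece $\hat a_i|_{\mathcal H}\hat\rho\hat c_k^{\dagger}|_{\mathcal H}=(\hat P_{\text e}\otimes\hat a_i)\hat\rho(\hat c_k^{\dagger}\otimes\hat 1_{\text s})$ yields, after moving the left $\hat P_{\text e}$ across $\hat\rho$, a contribution of the form $\widetilde{\boldsymbol c}_k\boldsymbol a_i$ acting on $\hat\rho$; the anti-commutator $\{\hat c_k^{\dagger}|_{\mathcal H}\hat a_i|_{\mathcal H},\hat\rho\}$ contributes $\boldsymbol c_k^{\dagger}\boldsymbol a_i+\widetilde{\boldsymbol c}_k^{\dagger}\widetilde{\boldsymbol a}_i$-like terms (with the extra $-\hat P_{\text e}$ factor built into $\boldsymbol c_k^{\dagger}$ in \cref{defn:fermionic_superoperators} supplying the requisite sign). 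Collecting coefficients and matching with \cref{eq:E_alpha_fermionic_non_unitary} yields the equivalence for $\boldsymbol D_{\text{SE}}^-$; the $\boldsymbol D_{\text{SE}}^+$ case is identical after swapping the roles of $\hat c_l\leftrightarrow\hat c_l^{\dagger}$ and restricting to $l\in\{N_1+1,\dots,N_{\text e}\}$.

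The only real obstacle is sign bookkeeping: each combination of (left vs.\ right action) $\times$ (creation vs.\ annihilation) $\times$ (system vs.\ environment factor) produces its own minus signs from the three sources $\{\hat P_{\text e},\hat c_k^{(\dagger)}\}=0$, $\hat P_{\text e}^{2}=\hat 1_{\text e}$, and the built-in signs of $\boldsymbol c_k^{\dagger}$, $\widetilde{\boldsymbol a}_j$, etc. The cleanest way to carry this out is to build a small lookup table recording, for each of the eight primitive actions $\hat c_k^{(\dagger)}\hat\rho$, $\hat\rho\hat c_k^{(\dagger)}$, $\hat a_i^{(\dagger)}\hat\rho$, $\hat\rho\hat a_i^{(\dagger)}$ (on $\hat\rho\in L^{(0)}(\mathcal H)$), the equivalent expression in terms of $\boldsymbol c_k^{(\dagger)},\widetilde{\boldsymbol c}_k^{(\dagger)},\boldsymbol a_i^{(\dagger)},\widetilde{\boldsymbol a}_i^{(\dagger)}$. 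Once this dictionary is in hand, the matching with \cref{eq:E_alpha_fermionic,eq:E_alpha_fermionic_non_unitary} reduces to routine substitution and completes the proof.
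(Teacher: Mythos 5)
Your proposal takes essentially the same route as the paper's proof: substitute the $\mathbb Z_2$-graded extensions \cref{eq:fermionic_extension}, treat left multiplications directly, and convert right multiplications by inserting $\hat P\hat\rho\hat P=\hat\rho$ on the physical subspace and anti-commuting the parity factors through before matching against \cref{defn:fermionic_superoperators} and \cref{eq:env_system_supero}; the dissipator terms are handled the same way, so the strategy is sound and matches the paper's argument. One bookkeeping slip in your illustrative example: on $L^{(0)}(\mathcal H)$ one has $\hat\rho\,\bigl((\hat P_{\text e}\hat c_k)\otimes\hat a_i^{\dagger}\bigr)=(\widetilde{\boldsymbol c}_k^{\dagger}\otimes\widetilde{\boldsymbol a}_i)\hat\rho$, so with the factor $+\mathrm i\nu_{ik}$ coming from $-\mathrm i[\hat H_{\text{SE}},\cdot]$ this term contributes to $\boldsymbol E_{4i}^{\text U}\otimes\boldsymbol S_{4i}$, not to $\boldsymbol E_{4i-3}^{\text U}\otimes\boldsymbol S_{4i-3}$ (the latter arises from the left multiplication $-\mathrm i\nu_{ik}(\boldsymbol c_k\otimes\boldsymbol a_i^{\dagger})\hat\rho$), and the superoperator pair you wrote has the daggers interchanged; the eight-entry dictionary you propose, once written out, corrects exactly this kind of dagger/index misassignment, so the slip does not affect the viability of the approach.
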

\begin{proof}       Using \cref{eq:fermionic_extension}, we have
$$
\hat H_{\text{SE}} = \sum_{i=1}^n\sum_{k=1}^{N_{\text e}} \nu_{ik}
    \left(\hat P_{\text e}\hat c_k\right)
    \otimes \hat a_i^\dagger + \nu_{ik}^*\left(\hat c_k^\dagger\hat P_{\text e}\right)\otimes \hat a_i.
$$
Then        $$\hat H_{\text{SE}}\hat\rho = 
        \sum_{i=1}^n\sum_{k=1}^{N_{\text e}} 
     (\nu_{ik}((\hat P_{\text e}\hat c_k) \otimes \hat a_i^\dagger)\hat\rho -
     \nu_{ik}^* ( (\hat P_{\text e}\hat c_k^\dagger)\otimes \hat a_i)\hat\rho)
        = \sum_{i=1}^n\sum_{k=1}^{N_{\text e}} 
      (\nu_{ik}\boldsymbol c_k\otimes \boldsymbol a_i^\dagger + \nu_{ik}^*\boldsymbol c_k^\dagger\otimes \boldsymbol a_i)\hat\rho.$$
      From this, one can find the values of $\boldsymbol E^U_{4i-3}$, $\boldsymbol E^U_{4i-2}$, $\boldsymbol S_{4i-3}$, $\boldsymbol S_{4i-2}$. 
      Multiplying $\hat H_{\text{SE}}$ from the right, we have
    $$
    \hat\rho\hat H_{\text{SE}} = \sum_{i=1}^n\sum_{k=1}^{N_{\text e}} \nu_{ik} \hat\rho((-\hat c_k\hat P_{\text e})\otimes \hat a_i^\dagger) + \nu_{ik}^*\hat\rho((\hat c_k^\dagger\hat P_{\text e})\otimes \hat a_i).
    $$
    As a result of the  fermionic superselection rule \cref{rem:SE_decomposition}, we only need to consider $\hat\rho$  in the physical space $\mathcal P(\mathcal H_{\text E}\otimes \mathcal H_{\text S}))$, namely, $[\hat P,\hat\rho]=0$. Since $[\hat P, \hat H_{\text{SE}}]=0$, we have $[\hat P,\hat\rho\hat H_{\text{SE}}] = [\hat P, \hat\rho]\hat H_{\text{SE}} + \hat \rho [\hat P,\hat H_{\text{SE}}] = 0$. In other words, 
    $\hat P \hat\rho\hat H_{\text{SE}} = \hat\rho\hat H_{\text{SE}}\hat P$. 
    Multiplying $\hat P$ from the right, we have  $\hat\rho \hat H_{\text{SE}} = \hat P \hat\rho \hat H_{\text{SE}}\hat P$, and therefore, 
    $$
    \begin{aligned}
        \hat\rho \hat H_{\text{SE}}
    & = \sum_{i=1}^n\sum_{k=1}^{N_{\text e}} \nu_{ik} (\hat P_{\text e}\otimes\hat P_{\text s})\hat\rho((-\hat c_k\hat P_{\text e})\otimes \hat a_i^\dagger)(\hat P_{\text e}\otimes\hat P_{\text s}) + \nu_{ik}^*(\hat P_{\text e}\otimes\hat P_{\text s})\hat\rho((\hat c_k^\dagger\hat P_{\text e})\otimes \hat a_i)(\hat P_{\text e}\otimes\hat P_{\text s}) \\
     & = \sum_{i=1}^n\sum_{k=1}^{N_{\text e}} \nu_{ik} (\hat P_{\text e}\otimes\hat P_{\text s})\hat\rho(\hat c_k\otimes (-\hat a_i^\dagger\hat P_{\text s})) + 
     \nu_{ik}^*(\hat P_{\text e}\otimes\hat P_{\text s})\hat\rho(\hat c_k^\dagger\otimes (\hat a_i \hat P_{\text s})) \\
     &= \sum_{i=1}^n\sum_{k=1}^{N_{\text e}}  (\nu_{ik}\widetilde{\boldsymbol c}_k^\dagger\otimes \widetilde{\boldsymbol a}_i + \nu_{ik}^*\widetilde{\boldsymbol c}_k\otimes \widetilde{\boldsymbol a}_i^\dagger)\hat\rho.
    \end{aligned}
    $$
    Combining the above equalities one can see $\boldsymbol{L}_{\text{SE}}^{\text U}$ in \cref{eq:L_SE_U_fermion} and \cref{eq:L_SE_U_fermion_O} are equivalent. 
    Furthermore,
     with \cref{defn:fermionic_superoperators}, we have
    $$
    \hat a_i|_{\mathcal H}\hat\rho \hat c_k^\dagger|_{\mathcal H} = (\hat P_{\text e}\otimes \hat a_i)\hat\rho (\hat c_k^\dagger \otimes \hat I_{\text s})  = (\widetilde{\boldsymbol c}_k \otimes \boldsymbol a_i)\hat\rho,\quad
    \hat a_i^\dagger|_{\mathcal H}\hat\rho \hat c_k|_{\mathcal H} = (\hat P_{\text e}\otimes \hat a_i^\dagger)\hat\rho (\hat c_k \otimes \hat I_{\text s})  = (\widetilde{\boldsymbol c}_k^\dagger \otimes \boldsymbol a_i^\dagger)\hat\rho. 
    $$
    And since $\hat\rho\in \mathcal P(\mathcal H_{\text E}\otimes \mathcal H_{\text S})$, we have
    $$
    \hat c_k^\dagger|_{\mathcal H}\hat\rho\hat a_i|_{\mathcal H} = (\hat c_k^\dagger \otimes \hat I_{\text s})\hat\rho (\hat P_{\text e}\otimes \hat a_i) = (\hat P_{\text e}\otimes \hat P_{\text s})(\hat c_k^\dagger \otimes \hat I_{\text s})\hat\rho (\hat P_{\text e}\otimes \hat a_i)(\hat P_{\text e}\otimes \hat P_{\text s}) = -(\boldsymbol c_k^\dagger \otimes \widetilde{\boldsymbol a}_i^\dagger)\hat\rho,
    $$
    $$
    \hat c_k|_{\mathcal H}\hat\rho\hat a_i^\dagger|_{\mathcal H} = (\hat c_k \otimes \hat I_{\text s})\hat\rho (\hat P_{\text e}\otimes \hat a_i^\dagger) = (\hat P_{\text e}\otimes \hat P_{\text s})(\hat c_k \otimes \hat I_{\text s})\hat\rho (\hat P_{\text e}\otimes \hat a_i^\dagger)(\hat P_{\text e}\otimes \hat P_{\text s}) = -(\boldsymbol c_k \otimes \widetilde{\boldsymbol a}_i)\hat\rho.
    $$
    Thus, $\boldsymbol D_{\text{SE}}^{\mp}$ in \cref{eq:D_SE_fermion} and \cref{eq:D_SE_fermion_2} are equivalent.
\end{proof}

Then
similar to \cref{lem:C_OO_unitary_spin_boson}, in the unitary fermionic impurity model, we have:
\begin{lem}
    Let us define $\mathcal C_{\boldsymbol O, \boldsymbol O'}(t)$ the same as in \cref{lem:C_OO_unitary_spin_boson}, where $\boldsymbol O,\boldsymbol O' = \boldsymbol c, \boldsymbol c^\dagger, \widetilde{\boldsymbol c},\widetilde{\boldsymbol c}^\dagger$. Then, in the unitary case, the nonzero components of $\mathcal C_{\boldsymbol O,\boldsymbol O'}(t)$ (for $t\geq 0$) are:
    \begin{equation}
      \begin{aligned}
        \mathcal C_{\boldsymbol c,\boldsymbol c^\dagger}(t) &=-\mathrm{C}_{\widetilde{\boldsymbol{c}}^{\dagger}, \boldsymbol{c}^{\dagger}}(t)= G^>(t),\quad 
        \mathcal C_{\boldsymbol c^\dagger, \boldsymbol c}(t) = C_{\widetilde{\boldsymbol c}, \boldsymbol{c}}(t)=(G^<(t))^{*},\\
        \mathcal C_{\widetilde{\boldsymbol c}^{\dagger}, \widetilde{\boldsymbol c}}(t) &=- \mathcal C_{\boldsymbol c, \widetilde{\boldsymbol c}}(t) = G^<(t),\quad
        \mathcal C_{\boldsymbol c^\dagger, \widetilde{\boldsymbol c}^{\dagger}}(t) = \mathcal C_{\widetilde{\boldsymbol c}, \widetilde{\boldsymbol{c}}^\dagger}(t) = (G^>(t))^{*}.
      \end{aligned}
    \end{equation}
    where we have defind in \cref{eq:Greens_unitary_fermionic} that,
 $$
        G^<(t) = \left(\mathrm{e}^{\beta H}+I\right)^{-1} \mathrm{e}^{-\mathrm{i} H t},\quad G^>(t) = \left(I+\mathrm{e}^{-\beta H}\right)^{-1} \mathrm{e}^{-\mathrm{i} H t}.
  $$ 
    \label{lem:C_OO_unitary_fermionic}
    \end{lem}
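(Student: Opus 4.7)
The strategy is to mirror the proof of \cref{lem:C_OO_unitary_spin_boson} for the bosonic case, replacing each bosonic superoperator by its fermionic counterpart while carefully tracking the minus signs introduced by the parity operator $\hat P_{\text e}$ that appears in \cref{defn:fermionic_superoperators}. First, I would evaluate each $(\mathrm C_{\boldsymbol O,\boldsymbol O'}(0))_{kl}$ by pushing the $\hat P_{\text e}$'s through, using the anti-commutation $\{\hat P_{\text e},\hat c_k^{(\dagger)}\}=0$, the identity $\hat P_{\text e}^2 = \hat 1_{\text e}$, and the fact that the Gibbs state is parity-even, $\hat P_{\text e}\hat\rho_{\text E}(0)=\hat\rho_{\text E}(0)\hat P_{\text e}=\hat\rho_{\text E}(0)$. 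Together with the cyclicity of the trace, this reduces every entry to one of the elementary two-point averages $\langle\hat c_k\hat c_l^\dagger\rangle_{\text e}=((I+\mathrm e^{-\beta H})^{-1})_{kl}$ and $\langle\hat c_l^\dagger\hat c_k\rangle_{\text e}=((\mathrm e^{\beta H}+I)^{-1})_{kl}$, or to zero (for the particle-number-nonconserving combinations $\langle\hat c_k\hat c_l\rangle_{\text e}=\langle\hat c_k^\dagger\hat c_l^\dagger\rangle_{\text e}=0$, which vanish because $\hat\rho_{\text E}(0)$ is $U(1)$-invariant).

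Next, I would differentiate in time, writing
\begin{equation*}
\frac{\mathrm d}{\mathrm dt}(\mathrm C_{\boldsymbol O,\boldsymbol O'}(t))_{kl}=\operatorname{tr}\bigl(\boldsymbol O_k\,\mathrm e^{\boldsymbol L_{\text e}t}[\boldsymbol L_{\text e},\boldsymbol O'_l]\,\hat\rho_{\text E}(0)\bigr),
\end{equation*}
and invoke the commutators of \cref{lem:commutator_L} applied in the fermionic incarnation (with $\mathbf c$ in place of $\mathbf b$). The ten components whose initial value is zero satisfy closed homogeneous linear matrix ODEs driven only by themselves, so they vanish identically for all $t$. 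For the six surviving entries, the ODEs decouple into blocks of the form $\frac{\mathrm d}{\mathrm dt}\mathrm C = \mp\mathrm i\mathrm C H^{(\mathrm T)}$, which integrate to $\mathrm e^{\mp\mathrm i H^{(\mathrm T)} t}$ multiplied by the corresponding initial value; this reproduces the claimed formulas $G^>(t),\ G^<(t)$ and their complex conjugates.

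The principal technical challenge is sign bookkeeping, in particular the extra minus in $\boldsymbol c_k^\dagger(\cdot)=-\hat P_{\text e}\hat c_k^\dagger(\cdot)$ and the signs collected when a $\hat P_{\text e}$ is commuted past an odd number of fermionic operators before being absorbed into $\hat\rho_{\text E}(0)$. These are precisely what produces the two sign flips in the conclusion that are absent from the bosonic result, namely $\mathrm C_{\boldsymbol c,\boldsymbol c^\dagger}(t)=-\mathrm C_{\widetilde{\boldsymbol c}^\dagger,\boldsymbol c^\dagger}(t)$ and $\mathrm C_{\widetilde{\boldsymbol c}^\dagger,\widetilde{\boldsymbol c}}(t)=-\mathrm C_{\boldsymbol c,\widetilde{\boldsymbol c}}(t)$. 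Once this sign accounting is settled at $t=0$, the linearity and homogeneity of the ODEs preserve the signs automatically, and the remainder of the argument is a direct transcription of the bosonic proof.
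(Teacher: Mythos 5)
Your proposal is correct and follows essentially the same route as the paper's proof: compute the $t=0$ values by moving the parity operators through with $\{\hat P_{\text e},\hat c_k^{(\dagger)}\}=0$, $\hat P_{\text e}^2=\hat 1_{\text e}$ and cyclicity of the trace (which is exactly where the two sign flips relative to the bosonic case arise), then reuse the commutators of \cref{lem:commutator_L} so that the components with vanishing initial data satisfy closed homogeneous ODEs and die, while the remaining ones obey the same matrix ODEs as in \cref{lem:C_OO_unitary_spin_boson} and integrate to $G^{\gtrless}(t)$ and their conjugates. The only slip is bookkeeping: eight (not ten) of the sixteen components vanish, which does not affect the argument.
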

    The proof of \cref{lem:C_OO_unitary_fermionic} is similar to that of \cref{lem:C_OO_unitary_spin_boson}, and we only emphasize the key difference:
    \begin{proof}
        Using \cref{lem:commutator_L}, similar to \cref{lem:C_OO_unitary_spin_boson}, the only nonzero components would be $\mathcal C_{\boldsymbol c,\boldsymbol c^\dagger}(t)$, $\mathcal C_{\boldsymbol c^\dagger, \boldsymbol c}(t)$, $\mathcal C_{\widetilde{\boldsymbol c}^{\dagger}, \widetilde{\boldsymbol c}}(t)$, $\mathcal C_{\boldsymbol c, \widetilde{\boldsymbol c}}(t)$, $\mathcal C_{\boldsymbol c^\dagger, \widetilde{\boldsymbol c}^{\dagger}}(t)$,  $\mathcal C_{\widetilde{\boldsymbol c}, \widetilde{\boldsymbol c}^\dagger}(t)$, $\mathcal C_{\widetilde{\boldsymbol c}^\dagger, \boldsymbol c^\dagger}(t)$, and $\mathcal C_{\widetilde{\boldsymbol c}, \boldsymbol c}(t)$. Their initial values, however, are different from those in \cref{lem:C_OO_unitary_spin_boson} because of the parity operator in the definition of the fermionic superoperators \cref{defn:fermionic_superoperators}. We have:
       $$
    \begin{aligned}
        (\mathcal C_{\boldsymbol c, \boldsymbol c^\dagger}(0))_{kl} &= \operatorname{tr}(-\hat P_{\text e} \hat c_k\hat P_{\text e}\hat c_l^\dagger\hat\rho_{\text E}(0))=\operatorname{tr}( \hat c_k\hat c_l^\dagger\hat\rho_{\text E}(0)) 
        =
        -\operatorname{tr}(-\hat P_{\text e}\hat P_{\text e}\hat c_l^\dagger\hat\rho_{\text E}(0)\hat c_k)= -(\mathcal C_{\widetilde{\boldsymbol c}^\dagger, \boldsymbol c^\dagger}(0))_{kl} ,\\ 
        (\mathcal C_{\boldsymbol c^\dagger, \boldsymbol c}(0) )_{kl}&= \operatorname{tr}(-\hat P_{\text e} \hat c_k^\dagger\hat P_{\text e}\hat c_l\hat\rho_{\text E}(0)) = \operatorname{tr}(\hat c_k^\dagger\hat c_l\hat\rho_{\text E}(0)) = \operatorname{tr}(\hat P_{\text e}\hat P_{\text e}\hat c_l\hat\rho_{\text E}(0)\hat c_k^\dagger)=(\mathcal C_{\widetilde{\boldsymbol c}, \boldsymbol c}(0))_{kl},\\
    (\mathcal C_{\widetilde{\boldsymbol c}, \widetilde{\boldsymbol c}^\dagger}(0))_{kl} &= \operatorname{tr}(\hat P_{\text e}\hat P_{\text e}\hat\rho_{\text E}(0)\hat c_l \hat c_k^\dagger)= \operatorname{tr}(\hat\rho_{\text E}(0)\hat c_l \hat c_k^\dagger)= \operatorname{tr}(-\hat P_{\text e}\hat c_k^\dagger\hat P_{\text e}\hat\rho_{\text E}(0)\hat c_l) = (\mathcal C_{\boldsymbol c^\dagger, \widetilde{\boldsymbol c}^\dagger}(0))_{kl} ,\\
    (\mathcal C_{\widetilde{\boldsymbol c}^\dagger, \widetilde{\boldsymbol c}}(0))_{kl} &=\operatorname{tr}(\hat P_{\text e}\hat P_{\text e}\hat\rho_{\text E}(0)\hat c_l^\dagger \hat c_k) = \operatorname{tr}(\hat\rho_{\text E}(0)\hat c_l^\dagger \hat c_k) = \operatorname{tr}(-\hat P_{\text e}\hat c_k\hat P_{\text e}\hat\rho_{\text E}(0)\hat c_l^\dagger) = -(\mathcal C_{\boldsymbol c, \widetilde{\boldsymbol c}}(0))_{kl} .
    \end{aligned}
    $$ 
    Here, we have repeatedly used that $\hat P_{\text e} \hat c_k^{\dagger} = - \hat c_k^{\dagger}\hat P_{\text e}$, and $\hat P_{\text e}^2 = \hat I_{\text e}$. In matrix form, the above becomes:
    $$
    \begin{aligned}
        \mathcal C_{\boldsymbol{c},\boldsymbol{c}^\dagger} (0)=-\mathcal C_{\widetilde{\boldsymbol{c}}^\dagger,\boldsymbol{c}^\dagger} (0) = (I+\mathrm e^{-\beta H})^{-1},\quad
    \mathcal C_{\boldsymbol c^\dagger, \boldsymbol c}(0) = \mathcal C_{\widetilde{\boldsymbol c},\boldsymbol c}(0) = \left(\mathrm{e}^{\beta H^{\mathrm T}}+I\right)^{-1},\\
    \mathcal C_{\widetilde{\boldsymbol c}^\dagger,\widetilde{\boldsymbol c}}(0)= -\mathcal C_{\boldsymbol c,\widetilde{\boldsymbol c}}(0)= \left(\mathrm e^{\beta H} + I\right)^{-1},\quad \mathcal C_{{\boldsymbol c}^\dagger, \widetilde{\boldsymbol c}^\dagger}(0) = \mathcal C_{\widetilde{\boldsymbol c}, \widetilde{\boldsymbol c}^\dagger}(0) = \left(I + \mathrm e^{-\beta H^{\mathrm T}}\right)^{-1}.
    \end{aligned}
    $$ 
    The equation of motion for these correlation functions is the same as in \cref{lem:C_OO_unitary_spin_boson}, and thus the lemma is proved.
    \end{proof}
    Now let us prove \cref{lem:corr_fermionic} in the unitary case.
    \begin{proof}[Proof of \cref{lem:corr_fermionic}, in the unitary case.]Recall that $\boldsymbol E_{\alpha} = \boldsymbol E_{\alpha}^{\text U}$ is defined in \cref{eq:E_alpha_fermionic}. Then
    we have the following nonzero components of the correlation functions (for $t\geq 0$):
        $$
            C_{4i-3, 4j-2}(t) = -\sum_{k,l=1}^{N_{\text e}}\nu_{ik}\nu_{jl}^*(\mathcal C_{\boldsymbol c, \boldsymbol c^\dagger}(t))_{kl} = -(\nu G^>(t)\nu^\dagger)_{ij}, 
        $$
         $$   C_{4i-3, 4j-1}(t) = \sum_{k,l=1}^{N_{\text e}}\nu_{ik}\nu_{jl}^*(\mathcal C_{\boldsymbol c, \widetilde{\boldsymbol c}}(t))_{kl} = 
        - (\nu G^<(t)\nu^\dagger)_{ij},$$
        $$C_{4i-2, 4j-3}(t) = -\sum_{k,l=1}^{N_{\text e}}\nu_{ik}^*\nu_{jl}(\mathcal C_{\boldsymbol c^\dagger, \boldsymbol c}(t))_{kl} = -((\nu G^<(t)\nu^\dagger)_{ij})^{*}, 
        $$
        $$
            C_{4i-2, 4j}(t) = \sum_{k,l=1}^{N_{\text e}}\nu_{ik}^*\nu_{jl}(\mathcal C_{\boldsymbol c^\dagger, \widetilde{\boldsymbol c}^\dagger}(t))_{kl} = ((\nu G^>(t)\nu^\dagger)_{ij})^{*}, 
        $$
        $$
        C_{4i-1, 4j-3}(t) = \sum_{k,l=1}^{N_{\text e}}\nu_{ik}^*\nu_{jl}(\mathcal C_{\widetilde{\boldsymbol c}, \boldsymbol c}(t))_{kl} =
            ((\nu G^<(t) \nu^\dagger)_{ij})^{*},
        $$
        $$ C_{4i-1, 4j}(t) = -\sum_{k,l=1}^{N_{\text e}}\nu_{ik}^*\nu_{jl}(\mathcal C_{\widetilde{\boldsymbol c}, \widetilde{\boldsymbol c}^\dagger}(t))_{kl} = -((\nu G^>(t) \nu^\dagger)_{ij})^{*},
        $$
        $$
        C_{4i, 4j-2}(t) = \sum_{k,l=1}^{N_{\text e}}\nu_{ik}\nu_{jl}^*(\mathcal C_{\widetilde{\boldsymbol c}^\dagger, \boldsymbol c^\dagger}(t))_{kl} = -(\nu G^>(t) \nu^\dagger)_{ij},
        $$
        $$ 
            C_{4i, 4j-1}(t) =  - \sum_{k,l=1}^{N_{\text e}}\nu_{ik}\nu_{jl}^*(\mathcal C_{\widetilde{\boldsymbol c}^\dagger, \widetilde{\boldsymbol c}}(t))_{kl} = -(\nu G^<(t) \nu^\dagger)_{ij}.
        $$
    Let us define 
    $$
    \Delta_{ij}^>(t) = (\nu G^>(t) \nu^\dagger)_{ij},\quad \Delta_{ij}^<(t) = (\nu G^<(t) \nu^\dagger)_{ij}.
    $$
    Thus,
    $$
  \begin{aligned}
   & C_{4i-3,4j-2}(t) = C_{4i,4j-2}(t) = -\Delta_{ij}^>(t),\quad &&
    C_{4i-3,4j-1}(t) = C_{4i,4j-1}(t) = -\Delta_{ij}^<(t),\\
  &  C_{4i-2,4j}(t) = -C_{4i-1,4j}(t) = (\Delta_{ij}^>(t))^*,\quad&&
    C_{4i-1,4j-3}(t) = -C_{4i-2,4j-3}(t) = (\Delta_{ij}^<(t))^*.
  \end{aligned}
    $$  
    \end{proof}
 
For the Lindblad and quasi-Lindblad fermionic impurity model, as a result of \cref{lem:commutator_L}, with $\boldsymbol L_{\text e}=\boldsymbol L_{\text e}^{\text U}+\boldsymbol D_{\text e}$ (see \cref{eq:L_SE_U_fermion}, \cref{eq:D_SE_fermion}), we have the following commutator result.
For $l=1,\ldots, N_1$, we have
    $$
\begin{aligned}
& {\left[\boldsymbol{c}_l, \boldsymbol{L}_{\mathrm{e}}\right]=\sum_{k=1}^{N_1}\left(-\mathrm{i} H_{l k}^--\Gamma_{l k}^-\right) \boldsymbol{c}_k, \quad\left[\boldsymbol{c}_l^{\dagger}, \boldsymbol{L}_{\mathrm{e}}\right]=\sum_{k=1}^{N_1}\left(\mathrm{i} H_{k l}^-+\Gamma_{k l}^-\right) \boldsymbol{c}_k^{\dagger}-2 \Gamma_{k l}^- \widetilde{\boldsymbol{c}}_k}, \\
& {\left[\widetilde{\boldsymbol{c}}_l, \boldsymbol{L}_{\mathrm{e}}\right]=\sum_{l=1}^{N_1}\left(\mathrm{i} H_{k l}^--\Gamma_{k l}^-\right) \widetilde{\boldsymbol{c}}_k, \quad\left[\widetilde{\boldsymbol{c}}_l^{\dagger}, \boldsymbol{L}_{\mathrm{e}}\right]=\sum_{k=1}^{N_1}\left(-\mathrm{i} H_{l k}^-+\Gamma_{l k}^-\right) \widetilde{\boldsymbol{c}}_k^{\dagger}+2 \Gamma_{l k}^- \boldsymbol{c}_k},
\end{aligned}
$$
and for $l=N_1+1,\ldots, N_{\text e}$, we have 
$$
\begin{aligned}
& \left[\boldsymbol{c}_l, \boldsymbol{L}_{\mathrm{e}}\right]=\sum_{k=N_1+1}^{N}\left(-\mathrm{i} H_{l k}^++\Gamma_{l k}^+\right) \boldsymbol{c}_k+2 \Gamma_{l k}^+ \widetilde{\boldsymbol{c}}_k^\dagger, \quad\left[\boldsymbol{c}_l^{\dagger}, \boldsymbol{L}_{\mathrm{e}}\right]=\sum_{k=N_1+1}^{N}\left(\mathrm{i} H_{k l}^+-\Gamma_{k l}^-\right) \boldsymbol{c}_k^{\dagger}, \\
& \left[\widetilde{\boldsymbol{c}}_l, \boldsymbol{L}_{\mathrm{e}}\right]=\sum_{k=N_1+1}^{N}\left(\mathrm{i} H_{k l}^++\Gamma_{k l}^+\right) \widetilde{\boldsymbol{c}}_k-2 \Gamma_{k l}^+{\boldsymbol{c}}_k^\dagger, \quad\left[\widetilde{\boldsymbol{c}}_l^{\dagger}, \boldsymbol{L}_{\mathrm{e}}\right]=\sum_{k=N_1+1}^{N}\left(-\mathrm{i} H_{l k}^+-\Gamma_{l k}^+\right) \widetilde{\boldsymbol{c}}_k^{\dagger}.
\end{aligned}
$$
 Now we are ready to calculate the BCFs for Lindblad  and quasi-Lindblad dynamics.
\begin{proof}[Proof of \cref{lem:corr_fermionic}, in the Lindblad case]
    Let us define $\mathcal C_{\boldsymbol O,\boldsymbol O'}(t)$ the same as in \cref{lem:C_OO_unitary_fermionic} for $\boldsymbol O,\boldsymbol O' = \boldsymbol c, \boldsymbol c^\dagger, \widetilde{\boldsymbol c},\widetilde{\boldsymbol c}^\dagger$. 
    Let us define $\mathcal C^{--}_{\boldsymbol O,\boldsymbol O'}(t)$,
    $\mathcal C^{-+}_{\boldsymbol O,\boldsymbol O'}(t)$, 
    $\mathcal C^{+-}_{\boldsymbol O,\boldsymbol O'}(t)$, and $\mathcal C^{++}_{\boldsymbol O,\boldsymbol O'}(t)$ as the subblocks of $\mathcal C_{\boldsymbol O,\boldsymbol O'}(t)$:
    $$
    \mathcal C_{\boldsymbol O,\boldsymbol O'}(t) = \begin{pmatrix}
        \mathcal C^{--}_{\boldsymbol O,\boldsymbol O'}(t) & \mathcal C^{-+}_{\boldsymbol O,\boldsymbol O'}(t)\\
        \mathcal C^{+-}_{\boldsymbol O,\boldsymbol O'}(t) & \mathcal C^{++}_{\boldsymbol O,\boldsymbol O'}(t) 
    \end{pmatrix}.
    $$
    $\mathcal C^{--}_{\boldsymbol O,\boldsymbol O'}$, $\mathcal C^{-+}_{\boldsymbol O,\boldsymbol O'}$, $\mathcal C^{+-}_{\boldsymbol O,\boldsymbol O'}$, and $\mathcal C^{++}_{\boldsymbol O,\boldsymbol O'}$ are of size $N_1\times N_1$, $N_1\times (N_{\text e}-N_1)$, $(N_{\text e}-N_1)\times N_1$, and $(N_{\text e}-N_1)\times (N_{\text e}-N_1)$, respectively.
    Since initial density operator is $\hat{\rho}_{\text E}^{(0)} = (\otimes_{k=1}^{N_1}|0\rangle_k\langle 0|)\otimes (\otimes_{k=N_1+1}^{N_{\text e}}|1\rangle_k\langle 1|)$, thus
    $$
    \langle \hat c_k\hat c_l^\dagger\rangle_{\text e} = \left\{
        \begin{array}{cc}
            \delta_{kl}, & 1\leq k,l\leq N_1,\\
            0, & \text{otherwise},
        \end{array}
    \right.,\quad 
    \langle \hat c_k^\dagger\hat c_l\rangle_{\text e} = \left\{
        \begin{array}{cc}
            \delta_{kl}, & N_1+1\leq k,l\leq N_{\text e},\\
           0, & \text{otherwise},
        \end{array}
    \right. .
    $$
    Thus
    $\mathcal C^{-+}_{\boldsymbol O,\boldsymbol O'}(0) = \mathcal C^{+-}_{\boldsymbol O,\boldsymbol O'}(0) = 0$ for all $\boldsymbol O,\boldsymbol O' = \boldsymbol c, \boldsymbol c^\dagger, \widetilde{\boldsymbol c},\widetilde{\boldsymbol c}^\dagger$. As a result, $\mathcal C^{-+}_{\boldsymbol O,\boldsymbol O'}(t) = \mathcal C^{+-}_{\boldsymbol O,\boldsymbol O'}(t) = 0$ for all $t\geq 0$.
   Since $\mathcal C_{\boldsymbol O,\boldsymbol O'}^{--}(t)$ has the same initial value as $\mathcal C_{\boldsymbol O,\boldsymbol O'}(t)$ in the proof of \cref{lem:Liouvillian_unitary_spin_boson}, then using the same argument, the only nonzero components are 
\begin{equation}
    \mathrm{C}_{\boldsymbol{c}, \boldsymbol{c}^{\dagger}}^{--}(t)=-\mathrm{C}_{\widetilde{\boldsymbol{c}}^{\dagger}, \boldsymbol{c}^{\dagger}}^{--}(t)=\mathrm{e}^{(-\mathrm{i} H^--\Gamma^-) t}, \quad \mathrm{C}^{--}_{\boldsymbol{c}^{\dagger}, \widetilde{\boldsymbol{c}}^{\dagger}}(t)=\mathrm{C}^{--}_{\widetilde{\boldsymbol{c}}, \widetilde{\boldsymbol{c}}^{\dagger}}(t)=\mathrm{e}^{\left(\mathrm{i} (H^-)^{\mathrm{T}}-(\Gamma^-)^{\mathrm{T}}\right) t}. \label{eq:C_OO_Fermion_lind_1}
\end{equation}
With a similar argument,for $\mathcal C_{\boldsymbol O,\boldsymbol O'}^{++}$, the only nonzero components are 
\begin{equation}
-\mathcal C_{\boldsymbol{c}, \widetilde{\boldsymbol c}}^{++}(t)=\mathrm{C}_{\widetilde{\boldsymbol{c}}^{\dagger}, \widetilde{\boldsymbol{c}}}^{++}(t)=\mathrm e^{(-\mathrm iH^+-\Gamma^+)t},\quad 
\mathcal C_{\boldsymbol{c}^{\dagger}, \boldsymbol{c}}^{++}(t)=\mathrm{C}_{\widetilde{\boldsymbol{c}}, \boldsymbol{c}}^{++}(t)=\mathrm e^{(\mathrm i(H^+)^{\text T}-(\Gamma^+)^{\text T})t},
 \label{eq:C_OO_Fermion_lind_2}
\end{equation}
Then let us define 
$\mathcal G^>(t) = \mathrm e^{(-\mathrm iH^--\Gamma^-)t}$, $\mathcal G^<(t) = \mathrm e^{(-\mathrm iH^+-\Gamma^+)t}$, and $\Delta^>(t) = \nu^- \mathcal G^>(t)(\nu^-)^\dagger$, $\Delta^<(t) = \nu^+\mathcal G^<(t)(\nu^+)^\dagger$, where $\nu^-,\nu^+$ are 
subblocks of matrix $\nu$ of size $n\times N_1$ and $n\times (N_{\text e}-N_1)$, respectively, i.e. $\nu = \begin{pmatrix}
    \nu^-,
    \nu^+
\end{pmatrix}$. Then we have:
$$
\begin{array}{ll}
C_{4 i-3,4 j-2}(t)=C_{4 i, 4 j-2}(t)=-\Delta_{i j}^{>}(t), & C_{4 i-3,4 j-1}(t)=C_{4 i, 4 j-1}(t)=-\Delta_{i j}^{<}(t) \\
C_{4 i-2,4 j}(t)=-C_{4 i-1,4 j}(t)=\left(\Delta_{ij}^{>}(t)\right)^*, & C_{4 i-1,4 j-3}(t)=-C_{4 i-2,4 j-3}(t)=\left(\Delta_{ij}^{<}(t)\right)^*.
\end{array}
$$

\end{proof}
\begin{proof}[Proof of \cref{lem:BCF_quasi_lind}, in the fermionic case]
    Let us define $\mathcal G^{>}(t)=\mathrm{e}^{\left(-\mathrm{i} H^{-}-\Gamma^{-}\right) t}, \mathcal G^{<}(t)=\mathrm{e}^{\left(-\mathrm{i} H^{+}-\Gamma^{+}\right) t}$, and  
    $$\Delta^{>}(t)=\left(\nu^{-}-\mathrm{i} M^{-}\right) \mathcal G^{>}(t)\left(\left(\nu^{-}\right)^{\dagger}-\mathrm{i}\left(M^{-}\right)^{\dagger}\right), \quad \Delta^{<}(t)=\left(\nu^{+}-\mathrm{i} M^{+}\right) \mathcal G^{<}(t)\left(\nu^{+}-\mathrm{i} M^{+}\right).$$
    Since $\boldsymbol L_{\text e}$ is the same in the Lindblad and quasi-Lindblad case, then the results for  $\mathcal C_{\boldsymbol O, \boldsymbol O'}(t)$ in the above proof still holds (\cref{eq:C_OO_Fermion_lind_1,eq:C_OO_Fermion_lind_2}), i.e.,
    the only nonzero components of $\mathcal C_{\boldsymbol O, \boldsymbol O'}(t)$ are
    $$
    \mathcal C_{\boldsymbol{c},\boldsymbol{c}^\dagger}^{--}(t) =  - \mathcal C^{--}_{\widetilde{\boldsymbol{c}}^\dagger,\boldsymbol{c}^\dagger}(t) = \mathcal G^>(t),\quad
    \mathcal C^{--}_{\boldsymbol{c}^\dagger,\widetilde{\boldsymbol{c}}^\dagger}(t) = \mathcal C^{--}_{\widetilde{\boldsymbol{c}},\widetilde{\boldsymbol{c}}^\dagger}(t) = (\mathcal G^>(t))^*,
    $$
    $$
    -\mathcal C_{\boldsymbol{c},\widetilde{\boldsymbol{c}}}^{++}(t) = \mathcal C^{++}_{\widetilde{\boldsymbol{c}}^\dagger,\widetilde{\boldsymbol{c}}}(t) = \mathcal G^<(t),\quad
    \mathcal C^{++}_{\boldsymbol{c}^\dagger,\boldsymbol{c}}(t) = \mathcal C^{++}_{\widetilde{\boldsymbol{c}}, \boldsymbol{c}}(t) = (\mathcal G^<(t))^*.
    $$
    Therefore,
    $$
    C_{4i-3,4j-2}(t) = \left((-\mathrm i\nu^--M^-)\mathcal C^{--}_{\boldsymbol c,\boldsymbol c^\dagger}(t)((-\mathrm i\nu^-)^\dagger - (M^-)^\dagger)\right)_{ij}= -\Delta_{ij}^>(t),
    $$
    $$
    \begin{aligned}
        C_{4i-3,4j-1}(t) &= \left((-\mathrm i\nu^+ + M^+)\mathcal C^{++}_{\boldsymbol c, \widetilde{\boldsymbol{c}}}(t)(\mathrm i(\nu^+)^\dagger+(M^+)^\dagger)+ 2M^+\mathcal C^{++}_{\widetilde{\boldsymbol c}^\dagger, \widetilde{\boldsymbol{c}}}(t)(\mathrm i(\nu^+)^\dagger+(M^+)^\dagger)\right)_{ij} \\
        & = -\left((\nu^+ -\mathrm i M^+)G^>(t)((\nu^+)^\dagger-\mathrm i(M^+)^\dagger)\right)_{ij} = -\Delta_{ij}^<(t),
    \end{aligned}
    $$
    $$
    C_{4i-2,4j-3}(t) = (-\mathrm i(\nu^+)^* +(M^+)^*)\mathcal C_{\boldsymbol c^\dagger, \boldsymbol c}^{++}(t)(-\mathrm i(\nu^+)^{\text T}+(M^+)^{\text T})_{ij} = -(\Delta_{ij}^<(t))^*,
    $$
    $$
    \begin{aligned}
        C_{4i-2, 4j}(t) &= \left((-\mathrm i(\nu^-)^* -(M^-)^* )\mathcal C_{\boldsymbol c^\dagger, \widetilde{\boldsymbol c}^\dagger}^{--}(t)(\mathrm i(\nu^-)^{\text T}-(M^-)^{\text T})+2(M^-)^*\mathcal C_{\widetilde{\boldsymbol{c}}, \widetilde{\boldsymbol c}^\dagger}^{--}(t)(\mathrm i(\nu^-)^{\text T}-(M^-)^{\text T})\right)_{ij} \\
        & = \left( ( (\nu^-)^* + \mathrm i (M^-)^*)(G^<(t))^* ((\nu^-)^\text{T} + \mathrm i (M^-)^\text{T})\right)_{ij} =(\Delta_{ij}^>(t))^*,
    \end{aligned}
    $$
    $$
    \begin{aligned}
        C_{4i-1,4j-3}(t) &= \left((\mathrm i(\nu^+)^* + (M^+)^*)\mathcal C_{\widetilde{\boldsymbol c}, \boldsymbol c}^{++}(t)(-\mathrm i (\nu^+)^{\text T}+(M^+)^{\text T})-2(M^+)^*\mathcal C_{\boldsymbol c^\dagger, \boldsymbol c}^{++}(t)(-\mathrm i (\nu^+)^{\text T}+(M^+)^{\text T})\right)_{ij} \\&= \left((\mathrm i(\nu^+)^* - (M^+)^*)\mathcal C_{\widetilde{\boldsymbol c}, \boldsymbol c}^{++}(t)(-\mathrm i (\nu^+)^{\text T}+(M^+)^{\text T})\right)_{ij} =  (\Delta_{ij}^<(t))^*, 
    \end{aligned}
    $$
    $$
    C_{4i-1,4j }(t) = \left((\mathrm i(\nu^-)^* - (M^-)^*)\mathcal C_{\widetilde{\boldsymbol c}, \widetilde{\boldsymbol{c}}^\dagger}^{--}(t)(\mathrm i(\nu^-)^{\text T}-(M^-)^{\text T})\right)_{ij} = - (\Delta_{ij}^>(t))^*,
    $$
    $$
    \begin{aligned}
        C_{4i,4j-2}(t) &= \left((\mathrm i\nu^- - M^-)\mathcal C_{\widetilde{\boldsymbol c}^\dagger, \boldsymbol c^\dagger}^{--}(t)(-\mathrm i(\nu^-)^\dagger - (M^-)^\dagger) - 2M^-\mathcal C_{\boldsymbol c, \boldsymbol c^\dagger}^{--}(t)(-\mathrm i(\nu^-)^\dagger - (M^-)^\dagger)\right)_{ij} \\&=\left((-\mathrm i\nu^- - M^-)G^<(t)(-\mathrm i(\nu^-)^\dagger - (M^-)^\dagger) \right)_{ij} 
        = -\Delta_{ij}^>(t) , 
    \end{aligned}
    $$
    and finally,
    $$
    C_{4i,4j-1}(t) = \left( (\mathrm i\nu^+ + M^+)\mathcal C_{\widetilde{\boldsymbol c}^\dagger, \widetilde{\boldsymbol c}}^{++}(t)(\mathrm i (\nu^+)^\dagger + (M^+)^\dagger) \right)_ {ij} = -\Delta_{ij}^<(t).
    $$
\end{proof}  

\end{document}